\newtheorem{thm}{Theorem}
\newtheorem{lm}{Lemma}
\newtheorem{cor}{Corollary}
\newtheorem{st}{Statement}
\newtheorem*{thm*}{Theorem}
\newtheorem*{lm*}{Lemma}
\newtheorem*{st*}{Statement}
\theoremstyle{definition}
\newtheorem{defin}{Definition}[section]
\theoremstyle{remark}
\newtheorem{ex}{Example}[section]
\newtheorem{rk}{Remark}[section]
\renewenvironment{proof}{\begin{addmargin}[1em]{0em}\begin{newproof}}{\end{newproof}\end{addmargin}\qed}
\def \bf#1 {\textbf{#1 }}
\def \rN {\mathbb{N}}
\def \rR {\mathbb{R}}
\def \rZ {\mathbb{Z}}
\def \sub {\subset}
\def \ifof {\emph{iff. }}
\def \NB {\noindent\textbf{NB! }}
\def \imp {\implies}
\def \b {\beta}
\def \a {\alpha}
\def \l {\lambda}
\def \k {\kappa}
\def \d {\partial}
\def \eps {\epsilon}
\def \R {\mathbb{R}}
\def \j {\textbf{j}}
\def \cN {\mathcal{N}}
\DeclareMathOperator{\rank}{rk}
\DeclareMathOperator{\tr}{tr}
\DeclareMathOperator{\dd}{d}
\DeclareMathOperator{\pp}{p}
\DeclareMathOperator{\low}{low}
\DeclareMathOperator{\DFS}{DFS}
\DeclareMathOperator{\vol}{vol}
\DeclareMathOperator{\len}{len}
\DeclareMathOperator{\dist}{dist}
\newcommand{\sumt}{\sum\limits}
\newcommand{\maxt}{\max\limits}
\newcommand{\mint}{\min\limits}
\begin{document}

\title{Introduction to graph theory and basic algorithms}
\author{Mikhail Tuzhilin%
  \thanks{Affiliation: Moscow State University, Electronic address: \texttt{mtu93@mail.ru}};
  Dong Zhang%
  \thanks{Affiliation: Peking University, Electronic address: \texttt{dongzhang@math.pku.edu.cn};}
  }
\date{}
\maketitle

This book collects the lectures about graph theory and its applications which were given to students of mathematical departments of Moscow State University and Peking University. Graph theory is a very wide field with a lot of applications in almost every scientific area: in many branches of mathematics, computer science, physics, chemistry, biology and also in psychology, arts, philosophy and many others. Nowadays, graph theory becomes especially more important because of the rapid development of molecular biology, neural networks and AI fields. One of the aims of writing this book was to give students thorough knowledge about graphs to understand modern scientific fields more deeply. Here we tried to give classical and modern theorems and algorithms in more understandable and simple way. We spent many time to rewrite them and close the gaps in several \textquote{simplest} well-known proofs to provide more precise and accurate material for students.

The book starts with the basic definitions and assumptions which are required for the further material and slowly increases the complexity. Note that the book's narrative is sequential: earlier theorems and definitions are used in later material. After theoretical parts usually goes the part with algorithms and examples. The last three sections are highly connected with modern fields such as Social Network Analysis and Spectral Graph Theory. 

The first author wrote Sections 1-14 and 15.1, and the second author, Dong Zhang, wrote Sections 15.2, 15.3 and 16. 
We would like to thank Sino-Russian mathematical center and especially professor Fan Huijun for assistance and attention to our work. We acknowledge support from National Key R and D Program of China (Grant No. 2020YFE0204200).

\tableofcontents
\newpage

\section{Basic definitions}

\begin{defin}
A set of pairs of any objects is called \bf{graph} $G = (V, E)$, where

\bf{V (vertices, or nodes, or points)} --- the set current of objects,

\bf{E (edges)} --- the set of pairs.
\end{defin}\noindent
\NB In this book the vertex set is considered as finite $V\sub\rN$. Also we denote by \bf{graph $(n,m)$} a graph with $n$ vertices and $m$ edges.

\begin{defin}
A graph is called \bf{(un)directed} \ifof the set of pairs is (un)directed set respectively.
\end{defin}

\begin{ex}
Consider a graph of the cube:\\ \par
$V = \{1, 2, 3, 4, 5, 6, 7, 8\},$\\ \par
$E = \bigl\{(1, 2), (1, 4), (1, 5), (2, 6), (2, 3), \\ 
(3, 7), (3, 4), (5, 6), (5, 8), (6, 7), (7, 8)\bigr\}.\\$

\begin{figure}[h]
\vspace{-125pt}
\hspace{260pt}
\begin{minipage}{8cm}
\hspace{50pt}
	\includegraphics[width=0.5\textwidth]{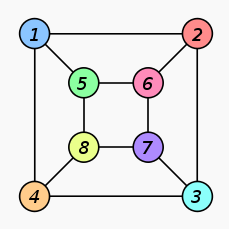}
	\caption{Cube graph representation.}
	\label{fig:sample}
 \end{minipage}
\end{figure}

\end{ex}

\begin{defin}
Two vertices $i, j$ in undirected graph connected with an edge are called \bf{adjacent} and are denoted by $i\sim j$.
\end{defin}

\begin{defin}
    An edge that consists of the same elements is called  \bf{loop}. 
\end{defin}

\begin{defin}
    A subset of edges that consists of the same elements is called \bf{multiple edges (or multi-edge)}.
\end{defin}

\begin{ex}~\par
    $(1, 1)$ --- loop, \par
    $\{(1, 2), (1, 2), (1, 2)\}$ --- multiple edges.
\end{ex}
\NB Now and further we will consider only \bf{simple graphs} --- graphs without loops and multiple edges unless otherwise specified.\\

\NB There is important difference between simple directed graph and simple undirected graph. Be aware of the case $\{(1, 2), (2, 1)\}$. 

\begin{defin}
    Directed graph is called \bf{oriented graph} if there are no two-side edges between any two vertices of the graph.
\end{defin}

\begin{defin}
    \bf{Subgraph} is the part of the graph that is a graph.
\end{defin}

\NB In the definition of graph we don't include \emph{isolated vertices}. This definition can be expanded by adding the set of vertices to edges, but this or similar definitions will give rise many problems further so we omit this case.

\begin{defin}
    \bf{Induced subgraph} on the current subset vertices of a graph is the maximal by inclusion of edges subgraph whose set of vertices is the current set.
\end{defin}

\begin{defin}
    Two graphs are called \bf{isomorphic} if one can re-number the vertices of one graph to obtain another.  
\end{defin}

\begin{figure}[!htbp]
    \centering
	\includegraphics[width=0.8\textwidth]{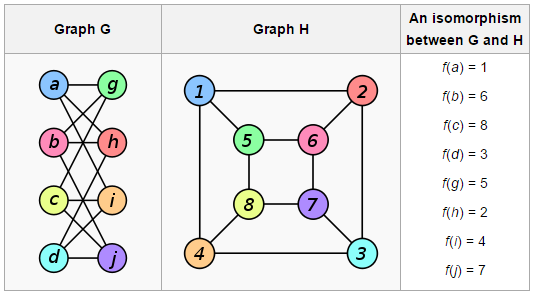}
	\caption{Example of isomorphism.}
\end{figure}

\begin{figure}[!htbp]
    \centering
	\includegraphics[width=0.8\textwidth]{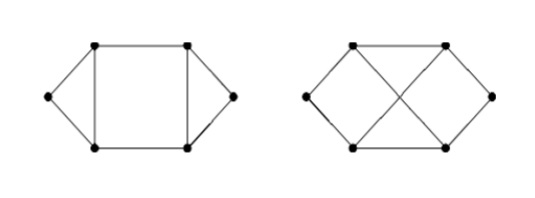}
	\caption{Example of non-isomorphic graphs.}
 \label{fig:nonisog}
\end{figure}

\begin{rk}
To prove that two graphs in figure~\ref{fig:nonisog} are non-isomorphic compare the structures of their induced subgraphs. 
\end{rk}

\begin{figure}[!htbp]
    \centering
	\includegraphics[width=0.8\textwidth]{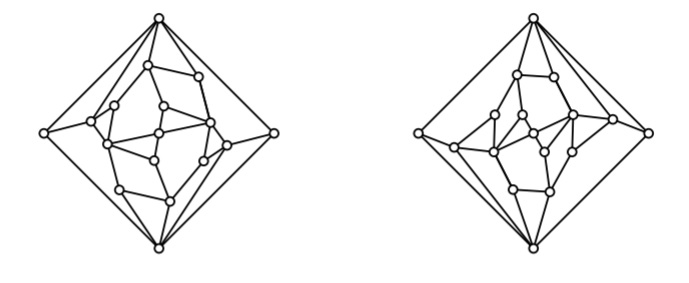}
	\caption{Are these graphs isomorphic or not?}
 \label{fig:isoprob}
\end{figure}

It is simple to prove the graph isomorphism in special cases like in figure~\ref{fig:nonisog} but in general this proof is NP-hardness problem. Try to solve the problem in figure~\ref{fig:isoprob}.\\

\NB Isomorphic graphs are also called the same graphs.

\begin{defin}
    \bf{Degree of vertex} $\mathbf{v_i}\ (\deg{v_i})$ is the number of edges which are coming out from the vertex $v_i$.
\end{defin}

\begin{defin}
    A graph is called \bf{regular} \ifof all vertices have equal degrees.
\end{defin}

In different books and articles definition of degree may vary: it can be defined by the number of in-edges and also by total edges (sum of in- and out-edges), but here we use the out-edges definition. 

\begin{defin}
    A vertex is called \bf{even (odd)} if it has even (odd) degree.
\end{defin}

\begin{lm} (Handshaking lemma). 
$$\sum_{v_i\in V}{\deg{v_i}} = 2 \|E\|.$$
\end{lm}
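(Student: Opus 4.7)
The plan is to establish the identity by a standard double-counting argument. I introduce the set of incidence pairs
$$S = \{(v, e) : v \in V,\ e \in E,\ v \in e\},$$
i.e.\ pairs consisting of a vertex together with an edge to which it belongs, and evaluate $|S|$ in two different ways.

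First I would sum over vertices. By the definition of degree, for each $v_i \in V$ the number of edges emanating from $v_i$ is exactly $\deg v_i$, so grouping the elements of $S$ by their first coordinate yields $|S| = \sum_{v_i \in V} \deg v_i$, which is the left-hand side of the claimed identity.

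Next I would sum over edges. For each $e \in E$ I count the number of vertices $v$ with $v \in e$. Because we are working under the standing convention that graphs are \emph{simple} (no loops, no multi-edges), every edge is an unordered pair of two \emph{distinct} vertices, hence contributes exactly $2$ to $|S|$. Summing over all edges gives $|S| = 2\|E\|$. Equating the two expressions for $|S|$ completes the proof.

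The argument is almost entirely bookkeeping; the only conceptual point is the convention that forces every edge to have exactly two endpoints. This is the main (mild) obstacle: if one dropped the simple-graph assumption and allowed loops, one would have to redefine degree so that a loop at $v$ contributes $2$ to $\deg v$, otherwise the edge-side count would no longer give $2\|E\|$. Under the simple-graph hypothesis already in force in this section, no such adjustment is needed.
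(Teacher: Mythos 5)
Your double-counting argument over incidence pairs is correct and is essentially the same idea as the paper's one-line proof, which simply observes that every edge is counted twice in $\sum_{v_i\in V}\deg v_i$; you have just written it out more formally. The remark about loops is a reasonable aside but not needed under the simple-graph convention in force.
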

\begin{proof}
    Note that in $\sum_{v_i\in V}{\deg{v_i}}$ every edge occurs 2 times.
\end{proof}

\begin{ex}
    Consider the bit representation of natural numbers \\
    from 0 to $2^n-1$. Let's Define the graph of n-dimensional cube by \\
    following:
    \begin{enumerate}
        \item $V = \{\text{i for i from 0 to } 2^n-1\}$,
        \item  $E = \{\text{(i,j) iff. i and j differ only in one bit}\}.$
    \end{enumerate}
\emph{How many edges have n-dimensional cube for any n?}
\end{ex}

The simplest cases $n = 1, 2, 3$ are well-known: it is segment, \\
square and cube with 1, 4 and 12 edges respectively.\\
For another cases let's use Handshaking lemma: \\\par
$2\|E\| = \sum_{v_i\in V}{\deg{v_i}} = \|V\| \deg{v_i} = n\,2^n$.\\\\
Thus $\|E\| = n\,2^{n-1}$.

\begin{figure}[h]
\vspace{-245pt}
\hspace{273pt}
\begin{minipage}{8cm}
\hspace{60pt}
	\includegraphics[width=0.5\textwidth]{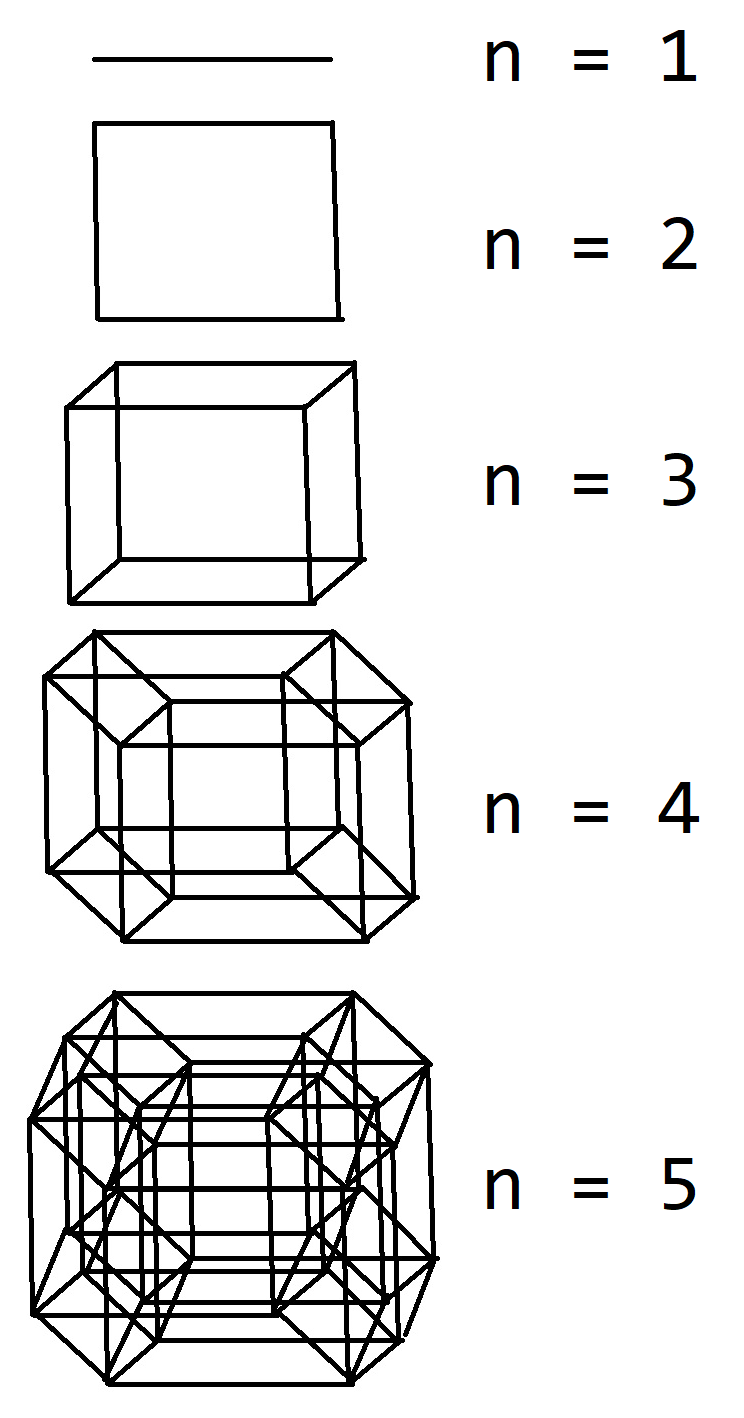}
	\caption{Graphs of hypercubes.}
	\label{fig:sample1}
 \end{minipage}
\end{figure}

\begin{defin}
    \bf{Path} is the sequence of edges in which ending and beginning vertices of consecutive edges are coincide.    
\end{defin}

\begin{defin}
    Graph is called \bf{weighted} if each edge $e_{ij}$ corresponds to a number $w_{ij}\in \rR$.
\end{defin}

One can consider edges with zero weight but this correspondence may lead to many problems in further definitions and algorithms e.g. adjacency matrix, incidence matrix and so on thus we suppose that $w_{ij}\not = 0$ for $i,j\in V$. Also sometimes we will denote a vertex $v_i$ by $i$ for short.


\section{Two matrices corresponded to a graph}

\subsection{Adjacency matrix}

\begin{defin}
    \bf{Adjacency matrix} $A(G)$ for unweighted directed graph $G$ is defined by following:
    \begin{equation*}
  A(G) = \{a_{ij}\}:=
    \begin{cases}
      1 & \text{if there exists edge from vertex $v_i$ to vertex $v_j$,}\\
      0 & \text{otherwise.}
    \end{cases}      
    \end{equation*}
\end{defin} 

\NB For weighted directed graph adjacency matrix (or \emph{weight adjacency matrix}) can be defined in the same way by replacing 1's by a weight $w_{ij}$ for each edge $e_{ij}$.\\ 

\begin{figure}[H]
\vspace{-12pt}
    \centering
	\includegraphics[width=0.83\textwidth]{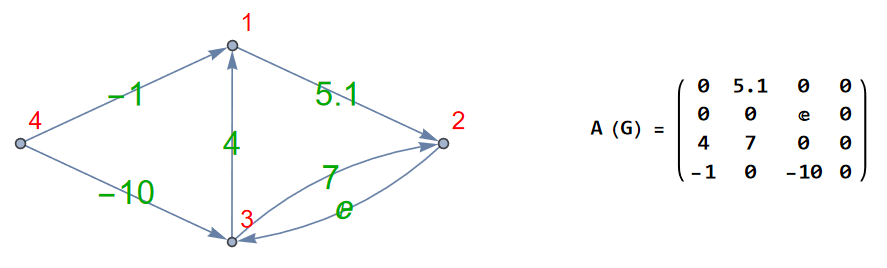}
	\caption{Example of a directed graph and its weight adjacency matrix.}
 \label{isoprob}
\end{figure}

\noindent
\emph{Properties of adjacency matrix:}
\begin{enumerate}
    \item Adjacency matrix is symmetrical for undirected graph,
    \item Adjacency matrix diagonal elements equal to 0 (no loops),
    \item Adjacency matrix always square matrix,
    \item\label{AGdeg} For unweighted graph adjacency matrix consists of 0 and 1 and
    $$\deg{v_i} = \{\text{the sum elements in $i$ row of adjacency matrix}\},$$
    \item  \textquote{Adjacence} correspondence between weighted directed graphs and matrices with properties 2-3 is one-to-one if 0 weights are excluded.
\end{enumerate}

\begin{thm}
    The number of paths from vertex $v_i$ to $v_j$ of length $k$ is equal to $(A(G)^k)_{ij}$ for any unweighted graph G.
\end{thm}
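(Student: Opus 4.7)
The plan is to prove the statement by induction on the length $k$, exploiting the definition of matrix multiplication together with the fact that a path of length $k+1$ is a path of length $k$ followed by a single edge. Note that under the paper's definition, a \textbf{path} is just a sequence of edges whose consecutive endpoints match, so repetitions of vertices and edges are allowed; this is exactly what makes the matrix-power interpretation work, and I would mention it explicitly up front.

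For the base case $k=1$, I would simply observe that $(A(G)^1)_{ij} = a_{ij}$, which by the definition of the adjacency matrix equals $1$ if there is an edge from $v_i$ to $v_j$ (i.e.\ exactly one path of length $1$) and $0$ otherwise. For the inductive step, assume the claim holds for $k$ and compute
$$(A(G)^{k+1})_{ij} = \sum_{m} (A(G)^k)_{im}\, a_{mj}.$$
By the inductive hypothesis, $(A(G)^k)_{im}$ is the number of paths of length $k$ from $v_i$ to $v_m$, and $a_{mj}$ is the number of edges from $v_m$ to $v_j$ (either $0$ or $1$). Thus each summand counts the paths of length $k+1$ from $v_i$ to $v_j$ whose penultimate vertex is $v_m$.

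The main step is then a bijective argument: every path of length $k+1$ from $v_i$ to $v_j$ decomposes uniquely as a path of length $k$ from $v_i$ to some intermediate vertex $v_m$ together with an edge $(v_m, v_j)$, and conversely each such pair yields a valid path of length $k+1$. Summing over all possible $v_m$ therefore gives the total count, matching the formula above.

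The only real obstacle I anticipate is conceptual rather than technical: making sure the reader agrees that \emph{paths} here are walks (allowing repeats), since the standard combinatorial convention often reserves \enquote{path} for a simple path, and the theorem would be false under that stricter reading. I would therefore open the proof by reiterating the definition from the earlier section so the inductive bookkeeping is unambiguous, and then carry out the induction as sketched.
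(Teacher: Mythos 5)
Your proof is correct and follows essentially the same route as the paper's: induction on $k$, with the base case read off from the definition of the adjacency matrix and the inductive step obtained by expanding $(A^{k+1})_{ij}=\sum_m (A^k)_{im}a_{mj}$ and decomposing each length-$(k+1)$ path as a length-$k$ path followed by a final edge. Your explicit remark that the paper's notion of \emph{path} is a walk (repeated vertices and edges allowed) is a worthwhile clarification that the paper leaves implicit.
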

\begin{proof}
Let's consider the induction by the length $k$:
\begin{enumerate}
    \item The number of paths from vertex $v_i$ to $v_j$ of the length 1 corresponds to existence of the edge $e_{ij}$.
    \begin{equation*}
  \{a_{rj}\} =
    \begin{cases}
      1 & \text{if there is edge from vertex $v_r$ to $v_j$,}\\
      0 & \text{otherwise.}
    \end{cases}       
    \end{equation*}
    \item 
    Suppose the induction statement holds for any given $k$, then
    \begin{gather*}
    (A^{k+1})_{ij} = (A^{k})_{ir} a_{rj} =
    \sum_{v_r\in V}{\{
    \text{Number of paths from vertex $v_i$ to $v_r$ with length $k$\}*}}\\
    \text{\qquad\qquad\qquad\qquad\qquad\qquad\qquad*\{1 if there is edge from $v_r$ to $v_j$, 0 --- overwise\} = }\\
    \hspace{-120pt}
    \text{= \{Number of paths from vertex $v_i$ to $v_j$ of length $k+1$\}.}
    \end{gather*}
\end{enumerate}

\end{proof}

\subsection{Incidence matrix}

Consider unweighted directed graph $G$ with some numeration of edges.

\begin{defin}
    \bf{Incidence matrix} for unweighted directed graph $G$ and its edges numeration is defined by following:
    \begin{equation*}
    B(G) = \{b_{ij}\} =
    \begin{cases}
      1 & \text{if edge $j$ starts from vertex $v_i$,}\\
      -1 & \text{if edge $j$ ends in vertex $v_i$,}\\
      0 & \text{otherwise.}
    \end{cases}.       
    \end{equation*}
\end{defin}

\NB For unweighted undirected graph incidence matrix can be defined equivalently, but instead of all -1's will be 1's.\\ 

\NB To generalize incidence matrix definition for weighted graphs (or weighted incidence matrix) multiply each column by weight of corresponding edge.

\begin{figure}[H]
\vspace{-12pt}
    \centering
	\includegraphics[width=0.85\textwidth]{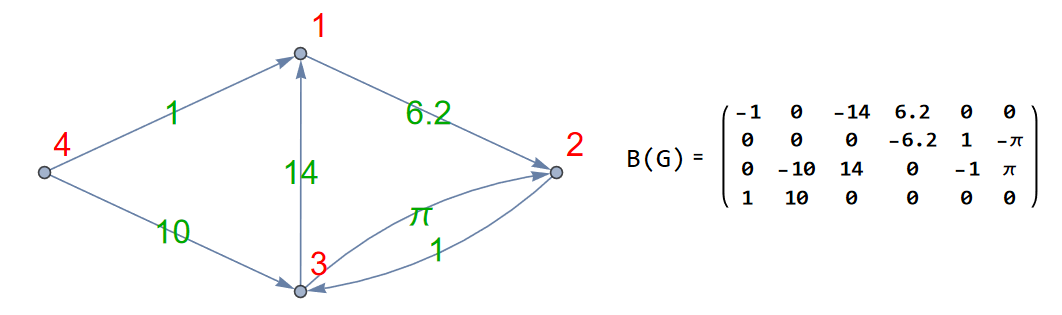}
	\caption{Example of graph and its weight incidence matrix.}
 \label{isoprob2}
\end{figure}

\noindent
\emph{Properties of incidence matrix:}
\begin{enumerate}
    \item In each column of incidence matrix there are only two non-zero elements,
    \item Incidence matrix is rectangular matrix with dimensions $\|V\|\times \|E\|$,
    \item The sum of elements in each incidence matrix column equals to 0 for directed graphs,\label{prop:incsum}
    \item For unweighted undirected graph incidence matrix consists of 0 and 1 and 
    $$\deg{v_i} = \{\text{the sum of elements in $i$ row of incidence matrix}\},$$\label{prop:incdeg}
    \item\textquote{Incidence} correspondence between weighted directed graphs and matrices with properties 2-4 is one-to-one up to column permutations if non-positive weights are excluded. 
\end{enumerate}


\section{Connectivity and trees}

\begin{defin}
    Graph is called \bf{connected} \ifof there exists path between any two vertices.
\end{defin}

\begin{defin}
    \bf{Connected component of a graph} is connected subgraph that is not part of any larger connected subgraph.
\end{defin}

For directed graph there exist several connectivity definitions: \emph{weakly connectivity} (if replacing of directed edges to undirected edges produces a connected undirected graph), \emph{semiconnectivity} (if there exists one-side path between any two vertices) and \emph{strong connectivity} (if there exists path between any two vertices from one to another and vise versa). For each of these connectivity one can define corresponded connected components, but we will consider properties and prove theorems only for undirected case here and in the next section. \\

Let's consider some simplest necessary and sufficient conditions of connectivity. 

\begin{lm}[Odd vertices lemma]
    The number of odd vertices in each connected component of undirected graph is even. 
\end{lm}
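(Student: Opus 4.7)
The plan is to reduce this immediately to the Handshaking lemma applied componentwise. First I would note that any connected component $C$ of an undirected graph $G$ is itself an undirected graph (a sub-graph in the sense already defined), and the degree of a vertex $v$ within $C$ coincides with its degree in $G$, because every edge incident to $v$ must have its other endpoint in the same component as $v$. This observation is what allows us to apply the Handshaking lemma to $C$ in isolation.

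Next I would invoke the Handshaking lemma on $C$: the sum $\sum_{v \in V(C)} \deg v = 2\|E(C)\|$ is even. Then I would split the vertex set of $C$ into the set $V_e$ of even vertices and the set $V_o$ of odd vertices, writing
\[
\sum_{v \in V_e} \deg v \;+\; \sum_{v \in V_o} \deg v \;=\; 2\|E(C)\|.
\]
The first summand is a sum of even numbers and is therefore even; the right-hand side is even; hence $\sum_{v \in V_o} \deg v$ is even as well. Since each term in this last sum is odd, the number of terms $|V_o|$ must be even, which is exactly the claim.

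There is really no hard step here. The only conceptual point that needs care is the very first one: verifying that restricting to a connected component does not change vertex degrees, so that the Handshaking lemma can legitimately be applied to the component on its own rather than to the whole graph. Once that is observed, the rest is just a parity argument on a sum of integers.
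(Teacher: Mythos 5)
Your proof is correct and follows the same route as the paper's one-line proof (``Use Handshaking lemma for each connected component''), simply spelling out the parity argument and the observation that degrees are unchanged when restricting to a component. Nothing further is needed.
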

\begin{proof}
    Use Handshaking lemma for each connected component.
\end{proof}

\begin{thm} (Connectivity necessary condition).
    Consider an undirected graph $G(n, m)$. If $m > C_{n-1}^2$, then graph G is connected.
\end{thm}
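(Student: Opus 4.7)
The plan is to argue by contrapositive: I will assume $G$ is disconnected and show that then $k \leq \binom{n-1}{2}$, which contradicts the hypothesis.

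First I would pick any connected component $H$ of $G$, say with $|V(H)| = a$ where $1 \leq a \leq n-1$ (since $G$ is disconnected, $a < n$). Partitioning the vertex set into $V(H)$ and its complement $V \setminus V(H)$, observe that no edge of $G$ crosses between these two parts (otherwise $H$ would not be a maximal connected subgraph). Hence every edge of $G$ lies either inside $V(H)$ or inside $V \setminus V(H)$, giving the bound
$$k \leq \binom{a}{2} + \binom{n-a}{2}.$$

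The key combinatorial step is then to show that for every integer $a$ with $1 \leq a \leq n-1$ one has
$$\binom{a}{2} + \binom{n-a}{2} \leq \binom{n-1}{2}.$$
I would prove this by a simple extremal argument: the function $f(a) = \binom{a}{2} + \binom{n-a}{2}$ is symmetric under $a \mapsto n-a$, so I may assume $a \leq n/2$. A direct computation shows that replacing $a$ by $a-1$ changes $f$ by $(n-a) - (a-1) = n-2a+1 \geq 1 > 0$ in the right direction, so $f$ strictly decreases as $a$ moves from $1$ toward $n/2$. Therefore the maximum over $\{1,\dots,n-1\}$ is attained at the endpoint $a=1$, where $f(1) = 0 + \binom{n-1}{2} = \binom{n-1}{2}$.

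Combining the two inequalities yields $k \leq \binom{n-1}{2}$, contradicting the assumption $k > \binom{n-1}{2}$. The only non-trivial step is the monotonicity claim for $f(a)$, and this is really just a one-line telescoping calculation rather than a conceptual obstacle; I expect the whole proof to be short.
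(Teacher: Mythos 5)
Your proof is correct and follows essentially the same route as the paper: assume disconnection, split the vertex set along a connected component, bound the edge count by $\binom{a}{2}+\binom{n-a}{2}$, and show this is at most $\binom{n-1}{2}$. The only cosmetic difference is in verifying that last inequality — the paper treats the one-vertex component case separately and then expands and factors the quadratic as $(k-1)(k-n+1)\le 0$, while you use a discrete monotonicity argument on $f(a)$; both are one-line computations.
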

\begin{proof}
    Assume the contrary: there exist greater or equal than two connected components. Denote one component by $G_1$ and all the rest by $G_2$. Hence, $G = G_1\sqcup G_2$.\\
    Consider two cases:
    \begin{enumerate}
        \item Let $\|V(G_1)\| = n-1, \|V(G_2)\| = 1 \imp \|E(G)\| = \|E(G_1)\| \leq C_{n-1}^2$ --- the maximum number of edges in undirected graph with $n-1$ vertices.\\ 
        The same for $\|V(G_1)\| = 1, \|V(G_2)\| = n-1$.
        \item Let $2 \leq \|V(G_1)\| = m \leq n-2 \imp $
        $$\|E(G)\| = \|E(G_1)\| + \|E(G_2)\|\leq C_m^2 + C_{n-m}^2 = \frac{2m^2+n^2-2nm-n}{2} \vee C_{n-1}^2 = \frac{n^2-3 n+2}{2}$$
        $$n-1+m^2-nm\vee 0$$
        $$(m-1)(m-n+1) \leq 0$$
        This gives a contradiction, since $\|E(G)\| > C_{n-1}^2$.
    \end{enumerate}
\end{proof}

\begin{defin}
    \bf{Cycle} is a path consisted from distinct edges where the first and last vertices coincide. 
\end{defin}

\begin{defin}
    \bf{Simple path} is a path where any edges and vertices are distinct except the first and the last vertices.
\end{defin}

\begin{defin}
    \bf{Simple cycle} is a simple path where the first and last vertices coincide.
\end{defin}

\begin{lm}\label{lm:cycle}
    A graph contains a cycle iff. it contains a simple cycle. 
\end{lm}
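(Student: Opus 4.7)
The direction \emph{simple cycle $\Rightarrow$ cycle} is immediate from the definitions: a simple cycle is a simple path with coinciding endpoints, and a simple path by definition has distinct edges, which is exactly what a cycle requires.

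For the converse, my plan is to take an arbitrary cycle and repeatedly shorten it until the result is simple. Concretely, suppose $G$ contains a cycle $C = v_0, e_1, v_1, e_2, \ldots, e_m, v_m$ with $v_0 = v_m$ and all edges $e_1, \ldots, e_m$ distinct. Among all cycles in $G$, choose one of minimum length $m$; call it $C$. I claim this minimal $C$ must in fact be a simple cycle, which proves the lemma.

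Suppose for contradiction that $C$ is not simple. Then there exist indices $0 \le i < j \le m$ with $(i,j) \ne (0,m)$ such that $v_i = v_j$. The segment $v_i, e_{i+1}, v_{i+1}, \ldots, e_j, v_j$ is itself a closed walk with distinct edges (since it is a sub-sequence of the edges of $C$, which were already distinct), and its length $j - i$ is strictly less than $m$ because $(i,j) \ne (0,m)$. Moreover $j - i \ge 1$, and in fact the number of edges is at least the minimum length of any cycle in $G$, giving a cycle strictly shorter than $C$. This contradicts the minimality of $C$, so $C$ must be simple.

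The only mild subtlety to be careful about is ensuring that the shorter closed walk actually qualifies as a \emph{cycle} in the sense of the definition above, i.e.\ that it has at least one edge and all its edges are distinct; both follow automatically because we cut out a contiguous, nonempty sub-sequence of the edges of $C$. I expect no other obstacle; the argument is essentially the standard ``minimum counterexample / shortcut'' trick, and it uses only the definitions of cycle, simple path, and simple cycle given in the text.
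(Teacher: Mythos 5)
Your proof is correct. It differs from the paper's in strategy, though both hinge on the same core observation that a repeated vertex in a cycle lets you cut out a contiguous closed sub-walk with distinct edges. The paper argues constructively: walk along the given cycle, marking visited vertices, and stop at the \emph{first} moment a visited vertex recurs; the segment between the two occurrences of that vertex is then simple by construction (no earlier repetition can occur inside it), so a simple cycle is exhibited directly. You instead take a cycle of \emph{minimum length} and derive a contradiction from any repeated interior vertex via the shortcut. The extremal argument is arguably cleaner to write rigorously, since you never have to justify why the extracted segment is itself simple --- minimality does that for you --- whereas the paper's first-repetition argument needs (and implicitly uses) the observation that no vertex repeats strictly before the stopping time. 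The paper's version has the advantage of being an explicit algorithm, a template it reuses immediately afterwards in the two-leafs lemma. Your handling of the edge cases ($j-i\ge 1$, $(i,j)\ne(0,m)$, distinctness of the inherited edges) is exactly the bookkeeping needed, so there is no gap.
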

\begin{proof}
\begin{enumerate}
    \item If a graph contains a simple cycle then it obviously contains a cycle.
    \item Consider a graph that contains a cycle. Let's construct a simple cycle: start from any vertex of the cycle and walk through the cycle and delete visited edges. Since the cycle is finite, some visited vertex will be reached after several steps. The cycle corresponded to this walk to the first visited vertex is simple cycle.
\end{enumerate}
\end{proof}

\begin{defin}
    \bf{Tree} is undirected connected acyclic (without cycles) graph.
\end{defin}

Another definition of tree holds from the lemma~\ref{lm:cycle}: \emph{tree is undirected connected graph without simple cycles}. \\

\begin{defin}
    \bf{Polytree} is an oriented graph whose underling undirected graph is a tree.
\end{defin}

\begin{defin}
    \bf{Tree leaf} is a vertex of degree 1.
\end{defin}

\begin{defin}
    \bf{Tree root} is any highlighted vertex of the graph.
\end{defin}

\NB The root is any highlighted vertex of the graph, however the leafs, in general, are not highlighted as roots. 

\begin{defin}
    \bf{Arborescence} of a directed graph is the polytree that consist of all vertices of the graph that can be reached from the root.
\end{defin}

\begin{lm}[Two leafs lemma]\label{lm:2leafs}
    Every tree contains two leafs.
\end{lm}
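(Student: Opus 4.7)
The plan is to use the classical \emph{longest path} argument, which avoids having to first establish the edge-count formula for trees.

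First I would dispose of the trivial case: a tree on a single vertex has no edges and the statement is vacuous, so I assume the tree $T$ has at least two vertices. Since $T$ is connected with $n\geq 2$, it contains at least one edge, and hence the set of simple paths in $T$ is nonempty and finite (the vertex set is finite). Pick a simple path $P = v_0 v_1 \cdots v_\ell$ of maximum length in $T$, and claim that both endpoints $v_0$ and $v_\ell$ are leafs.

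The core argument is symmetric, so I would just treat $v_0$. Suppose for contradiction that $\deg(v_0)\geq 2$. Besides the neighbor $v_1$ on $P$, there exists another neighbor $w$ of $v_0$. I split into two cases. If $w\notin\{v_1,\ldots,v_\ell\}$, then $w\, v_0 v_1 \cdots v_\ell$ is a simple path of length $\ell+1$, contradicting maximality of $P$. If instead $w=v_i$ for some $i\geq 2$, then the edges of the sub-path $v_0 v_1 \cdots v_i$ together with the edge $v_i v_0$ form a simple cycle in $T$, contradicting the acyclicity of the tree. Either case yields a contradiction, so $\deg(v_0)=1$; by the same reasoning $\deg(v_\ell)=1$, and since $\ell\geq 1$ these two leafs are distinct.

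The only subtle point is ensuring that in the second case one actually gets a cycle in the sense of the paper's definition (distinct edges, first vertex equal to last), but this is immediate: the edges $v_0v_1,v_1v_2,\ldots,v_{i-1}v_i$ along $P$ are pairwise distinct by definition of a simple path, and the closing edge $v_iv_0$ is distinct from all of them because, being on $P$, its appearance among $\{v_0v_1,\ldots,v_{i-1}v_i\}$ would mean $v_0=v_{i-1}$ or $v_0=v_{i+1}$ (impossible since $P$ is simple and $i\geq 2$). By Lemma~\ref{lm:cycle} this cycle (or directly the simple cycle we have just exhibited) contradicts the tree hypothesis, and the proof is complete. I do not expect any genuine obstacle here; the only thing to be careful about is the case distinction on the position of the extra neighbor $w$.
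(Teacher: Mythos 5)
Your proof is correct and is essentially the same argument as the paper's: both exploit finiteness and acyclicity to show that a path that cannot be continued must terminate in a degree-one vertex. The only packaging difference is that the paper runs a greedy walk (deleting visited edges) and restarts it from the first leaf found, whereas you take a single maximum-length simple path and read off both leafs as its two endpoints — which is a somewhat cleaner way to get distinctness of the two leafs in one stroke.
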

\begin{proof}
    The method is the same as in previous lemma. Let's start walking from any vertex and deleting visited edges. Since the graph is finite, a dead end will be reached after several steps. The end corresponds to first leaf. Starting this procedure again from the leaf will provide the second.
\end{proof}

\begin{lm}[Uniqueness path lemma]\label{lm:upath}
    There exists unique path between any two vertices of tree. 
\end{lm}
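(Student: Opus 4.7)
The plan is to prove existence and uniqueness separately, with uniqueness being the substantive part.

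For existence, since the tree is connected by definition, at least one path between any two vertices $u$ and $v$ is guaranteed, and by the same trimming argument used in Lemma \ref{lm:cycle} (walking along the path and deleting revisits) we can extract a simple path.

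For uniqueness, I would argue by contradiction. Assume there are two distinct simple paths $P_1$ and $P_2$ from $u$ to $v$. The goal is to extract a cycle, which — via Lemma \ref{lm:cycle} — yields a simple cycle and contradicts the tree's acyclicity. Concretely, I would traverse both paths starting at $u$ and let $a$ be the last vertex at which the two paths still agree before they first diverge (possibly $a=u$). Then, continuing along $P_1$ from $a$, I would let $b$ be the first vertex encountered that also appears on the tail of $P_2$ beyond $a$; such $b$ exists because both paths reach $v$. The $a$-to-$b$ segment of $P_1$ together with the reversal of the $a$-to-$b$ segment of $P_2$ forms a closed walk.

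The main obstacle is making sure this closed walk really is a cycle, i.e., uses distinct edges. This is why $a$ and $b$ are chosen as the \emph{first} divergence and \emph{first} re-convergence: the two segments meet only at the endpoints $a$ and $b$, so they can share no common edge, and within each segment edges are distinct because $P_1$ and $P_2$ were simple. Hence the concatenation is a genuine cycle, Lemma \ref{lm:cycle} promotes it to a simple cycle, and this contradicts the acyclicity of a tree. Therefore $P_1 = P_2$.
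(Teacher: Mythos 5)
Your proof is correct and follows exactly the route the paper intends — its own "proof" is only the hint "assume the contrary and find a cycle," and your argument carries that out in full, with the right care taken (choosing the first divergence point $a$ and first reconvergence point $b$) to ensure the closed walk has distinct edges before invoking Lemma~\ref{lm:cycle}.
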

\begin{proof}
    Hint: assume the contrary and find cycle.
\end{proof}


\section{Spanning trees and algorithms}

\subsection{Spanning trees of unweighted graphs}

\begin{defin}
    \bf{Spanning tree} of undirected graph is a tree that contains all vertices of a graph.
\end{defin}

\NB Spanning trees exist only for undirected connected graphs. For not connected graph there is another definition --- \emph{spanning forest} --- the collection of spanning trees corresponded to each connected component.\\\\
Consider the base algorithms for constructing spanning trees:
\begin{enumerate}
    \item \bf{Depth-first search (DFS)} is the algorithm for searching elements in tree-structured data. Here we discuss it from constructing spanning trees point of view. 

    \emph{Description:}
    \begin{enumerate}
    \item Start from the root. The root is assigned first number.
    \item Walk through non-assigned vertices and assign visited vertices by consecutive numbers (from second) as long as you can continue. \label{enum:step12}
    \item  If it is dead end, return to the previous vertex and do the step~\ref{enum:step12}.
    \end{enumerate}
    The assigned numbers form \bf{depth-first order}.\\\\
    To construct spanning tree, you should save edges on the step~\ref{enum:step12} from previous assigned vertex to the new one that has to be assigned (see figure~\ref{fig:dbalg}).

    \item \bf{Breadth-first search (BFS)} is also the algorithm for searching elements in tree-structured data. Here we discuss it from constructing spanning trees point of view.

    \emph{Description:}
    \begin{enumerate}
    \item Start from the root. The root is assigned first number.
    \item Assign all non-assigned adjacent vertices by consecutive numbers (from second). \label{enum:step22}
    \item  Go to the next (up to assigning) not visited vertex and do the step~\ref{enum:step22}.
    \end{enumerate}
    The assigned numbers form \bf{breadth-first order}.\\\\
    To construct spanning tree, you should save edges on the step~\ref{enum:step22} from the current vertex to new non-assigned vertices which have to be assigned (see figure~\ref{fig:dbalg}).
    
\end{enumerate}

\NB It is easy to see, that the complexity of these algorithms equals to $\mathbf{O\bigl(\|V\|+\|E\|\bigr)}$.

\begin{figure}[H]
\vspace{-12pt}
    \centering
	\includegraphics[width=1\textwidth]{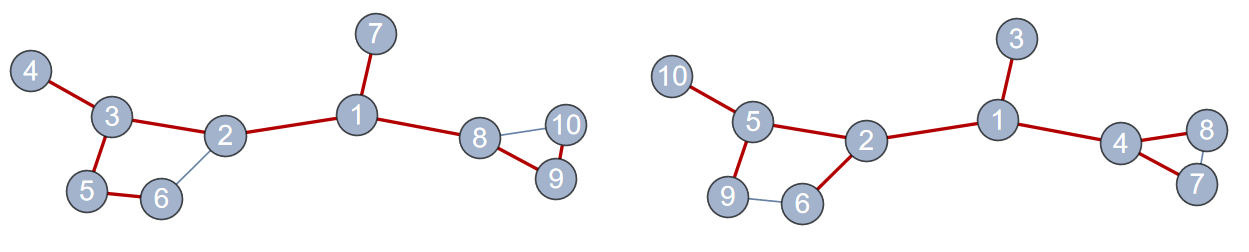}
	\caption{Depth-first search (left) and breadth-first search (right) algorithms, orders and corresponding spanning trees (red).}
 \label{fig:dbalg}
\end{figure}

In general situation depth-first search and breadth-first search algorithms produce different spanning trees. Spanning tree from depth-first search algorithm is called \emph{Trémaux tree}, while from breadth-first search just breadth-first tree. Also there is a freedom to choose a walking direction and thus different directions produce different spanning trees as well.

For not connected graphs these algorithms also can be used. In this case they will construct spanning trees for the connected component of the root. By changing root between components one can construct spanning trees for each component so-called \emph{spanning forest}.\\

\NB These algorithms can by applied for checking connectivity condition of the graph.

\begin{thm}[Necessary and sufficient tree condition 1]\label{thm:nsctree1}
    A graph with $n$ vertices is tree iff. it is connected and consists of $n-1$ edges.
\end{thm}
\begin{proof}\ 
    \begin{enumerate}
        \item \emph{Necessarity.} Consider the induction by the number of vertices: 
        \begin{enumerate}
            \item For graph $(1, 0)$ holds.
            \item Let for every tree with $n$ vertices the condition holds. Consider tree $T$ with $n+1$ vertices. This tree contains two leafs by two leafs lemma~\ref{lm:2leafs}. By deleting one of these leafs the graph will become a tree $T'$ with $n$ vertices and edges decreased by 1. The tree $T'$ consists of $n-1$ edges by inductive hypothesis, therefore $T$ consists of $n$ edges.
        \end{enumerate}
        \item \emph{Sufficiency.} Construct the spanning tree of the graph by one of algorithms above. The spanning tree consists of $n-1$ edges.
    \end{enumerate}
\end{proof}

\begin{thm}[Necessary and sufficient tree condition 2]\label{thm:nsctree2}
    A graph with $n$ vertices is tree iff. it is acyclic and consists of $n-1$ edges.
\end{thm}
\begin{proof}\ 
    \begin{enumerate}
        \item \emph{Necessarity.} Holds from theorem~\ref{thm:nsctree1}.
        \item \emph{Sufficiency.} Let the number of connected components be $k$.
        Construct the spanning tree for every component. Each component is tree, therefore the total number of edges in the graph equals $n-k$ ($-1$ for every component)$\imp k = 1$ and the graph is connected.
    \end{enumerate}
\end{proof}

\NB By these theorems, the complexity of depth-first search and breadth-first search algorithms for connected graphs can be rewritten as $\mathbf{O\bigl(\|V\|+\|E\|\bigr)} = \mathbf{O\bigl(\|E\|\bigr)}$.

\subsection{Minimal spanning trees of weighted graphs}

For weighted graph there exists another type of spanning tree:
\begin{defin}
    \bf{Minimal spanning tree} is the spanning tree with minimum total edge weight.
\end{defin}

Consider two basic algorithms for constructing minimal spanning tree. These algorithms belong to large area called \emph{greedy algorithms} (in which making the locally optimal choice at each stage leads to  globally optimal solution).

\begin{enumerate}
    \item \bf{Prim's algorithm.} \\
    \emph{Description:}
    \begin{enumerate}
        \item Start from arbitrarily vertex. This vertex is the tree $T$.
        \item Let the $K(T)$ --- edges $\not\in T$ adjacent to any vertex $\in T$.
        Add to the tree $T$ new edge with minimum weight in $K(T)$ that is not produce the cycle in $T$ and vertices incident to this edge.\label{enum:p2}
        \item Do \ref{enum:p2} step $\|V\|-1$ times.
    \end{enumerate}
    
    \item \bf{Kruskal's algorithm.} \\
    \emph{Description:}
    \begin{enumerate}
        \item Start from arbitrarily vertex. This vertex is the tree $T$.
        \item Let the $K(T)$ --- all edges $\not\in T$.
        Add to the tree $T$ new edge with minimum weight in $K(T)$ that is not produce the cycle in $T$ and vertices incident to this edge.\label{enum:k2}
        \item Do \ref{enum:k2} step $\|V\|-1$ times.
    \end{enumerate}
\end{enumerate}

\NB The Prim's and Kruskal's algorithms complexity is $\mathbf{O\bigl(\|E\|\,log(\|V\|)\bigr)}$.\\

Minimal spanning trees from Prim's and Kruskal's algorithms depend of the first selected vertex and also of edges from the equal weights subset.  

\begin{rk}
    The proof of correctness Prim's and Kruskal's algorithms holds from theorem~\ref{thm:nsctree1} and theorem~\ref{thm:nsctree2} respectively or from greedy property. 
\end{rk}

Each of these algorithms has its pros and cons: Prim's algorithm produces connected graph for each step but binary heap and adjacency list should be used to reach $O\Bigl(\|E\|\,\log\bigl(\|V\|\bigr)\Bigr)$ complexity. Kruskal's algorithm reach this complexity by using simpler structures. 


\section{Laplacian matrix and Kirchhoff's theorem}

As in previous two sections let's consider undirected unweighted graph $G$.

\begin{defin}
    \bf{Laplacian matrix (also called Kirchhoff matrix)} for undirected unweighted graph $G$ is defined by following:
    \begin{equation*}
    L(G) = \{l_{ij}\} =
    \begin{cases}
      \deg{v_i} & \text{if $i = j$,}\\
      -1 & \text{if there exists edge from vertex $v_i$ to vertex $v_j$,}\\
      0 & \text{otherwise.}
    \end{cases}       
    \end{equation*}
\end{defin}

Or equivalently:
$$L(G) = \mathrm{Diag}\{\deg{v_i}\,|\,v_i\in V\} - A(G).$$

\begin{figure}[H]
\vspace{-12pt}
    \centering
	\includegraphics[width=0.9\textwidth]{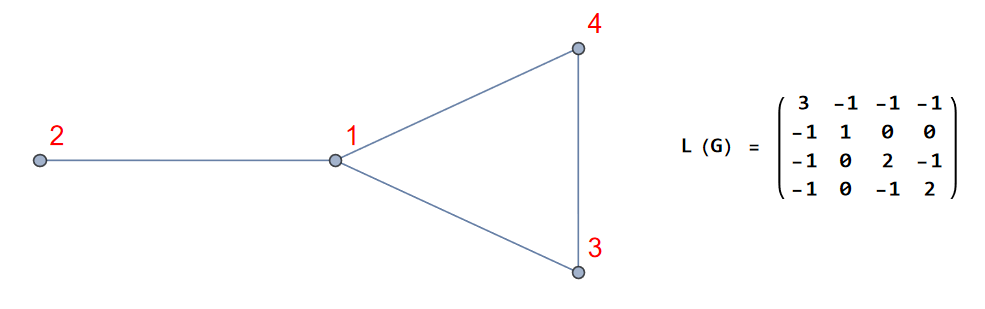}
	\caption{Example of a graph and its laplacian matrix.}
 \label{isoprob1}
\end{figure}

\noindent
\emph{Properties of Laplacian matrix:}
\begin{enumerate}
    \item Laplacian matrix non-diagonal part consists of 0 and -1 elements, 
    \item Laplacian matrix is symmetrical matrix,
    \item Laplacian matrix is square matrix,
    \item The sum of elements in each Laplacian matrix column (and row) equals to 0,\label{prop:sum} 
    \item Laplacian matrix is degenerate matrix,
    \item  \textquote{Laplacian} correspondence between undirected unweighted graphs and matrices with properties 1-4 is one-to-one.
\end{enumerate}

\begin{lm}[Laplacian matrix algebraic complements lemma]\label{lm:lapcomp}
    Algebraic complements of all Laplacian matrix elements are equal.
\end{lm}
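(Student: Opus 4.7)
The plan is to translate the claim about cofactors $C_{ij}$ into a statement about the adjugate matrix $\mathrm{adj}(L)$, whose $(i,j)$-entry equals $C_{ji}$, and then to exploit the Cramer-type identity $L\cdot\mathrm{adj}(L)=\det(L)\,I_n$.

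First I would observe that property~\ref{prop:sum} gives $L\mathbf{1}=0$, where $\mathbf{1}$ denotes the all-ones column vector. In particular $\det(L)=0$, so $L\cdot\mathrm{adj}(L)=0$, meaning every column of $\mathrm{adj}(L)$ lies in $\ker L$. There are two sub-cases. If $\rank{L}\leq n-2$, then every $(n-1)\times(n-1)$ minor of $L$ vanishes, $\mathrm{adj}(L)$ is the zero matrix, and every cofactor equals $0$, so the lemma holds trivially. I may therefore assume $\rank{L}=n-1$, in which case $\ker L$ is one-dimensional and spanned by $\mathbf{1}$.

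Under this assumption, the $j$-th column of $\mathrm{adj}(L)$, namely $(C_{j1},C_{j2},\ldots,C_{jn})^T$, must be a scalar multiple of $\mathbf{1}$; write it as $\alpha_j\mathbf{1}$, so that $C_{jk}=\alpha_j$ for every $k$. It remains to show that $\alpha_j$ is independent of $j$. Here I would use symmetry of $L$ (property 2 of Laplacian matrices): since $L=L^T$, also $\mathrm{adj}(L)=\mathrm{adj}(L^T)=(\mathrm{adj}(L))^T$, which on the level of entries reads $C_{ji}=C_{ij}$. Combining this with the column-constancy established above yields $\alpha_j=C_{ji}=C_{ij}=\alpha_i$ for all $i,j$, so all $\alpha_j$ coincide and every cofactor of $L$ equals this common value.

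The only subtle point in the plan is the case split on $\rank{L}$: one should not silently assume $\rank{L}=n-1$, since that equality is equivalent to connectedness of the underlying graph and is essentially the content of Kirchhoff's theorem in one direction. By explicitly disposing of $\rank{L}\leq n-2$ (where the statement collapses to the trivial assertion $0=0=\cdots=0$), the argument avoids having to prove the rank formula separately, and the remaining case is a two-line consequence of $L\cdot\mathrm{adj}(L)=0$ together with the symmetry of $L$.
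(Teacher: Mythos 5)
Your proposal is correct and follows essentially the same route as the paper's proof: split on whether $\rank{L}$ is $n-1$ or smaller, use $L\cdot\mathrm{adj}(L)=\det(L)I=0$ together with the one-dimensional kernel spanned by the all-ones vector to show each column of the adjugate is constant, and finish with the symmetry of $L$. Your write-up is in fact slightly more explicit than the paper's at the final step, spelling out $C_{ji}=C_{ij}$ where the paper only says ``the symmetry property implies the assertion.''
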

\begin{proof}
    Denote the number of vertices of a graph $G$ by $n$ and the algebraic complement to the element $l_{ij}$ of Laplacian matrix by $L_{ij}$. Consider two cases:
    \begin{enumerate}
        \item Let $\rank\bigl(L(G)\bigr) < n-1 \imp L_{ij} = 0, \text{for any } i,j$.
        \item $\rank\bigl(L(G)\bigr) = n-1$. Let's denote the column of all 1's by $\j$, the matrix of complements $\{L_{ij}\}$ by $M$ and the columns of $M^T$ by $\{m_j\}$. Hence, $L\j = 0$ due to Laplacian matrix property~\ref{prop:sum} and
        $$L M^T = \det(L)\cdot I = 0 \imp L m_j = 0, \text{for } j = 1, 2,...,n.$$\ \ \ \ \ 
        Consider a linear equation $L x = 0$. Since the $\rank\bigl(L(G)\bigr) = n-1$, solutions of the equation form a one-dimensional linear space. The vector $\j$ is a solution of the equation, hence all other solutions ($m_1, m_2, ..., m_n$) are proportional to $\j$ and therefore consist of equal elements. From the symmetry property of the Laplacian matrix the proposition holds.
    \end{enumerate} 
\end{proof}

\begin{lm}[Laplacian and incidence matrix lemma]\label{lm:lapinc}
    $L = B\,B^T$, where $B$ is the incidence matrix for any orientation of the graph and numeration of the edges.
\end{lm}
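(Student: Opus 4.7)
The plan is a direct entry-by-entry computation of the product $B B^T$, showing it matches the three cases in the definition of $L(G)$. Since the claim asserts the identity holds for \emph{any} orientation and \emph{any} numeration of edges, I would first observe that both choices are essentially harmless: reordering the edges permutes the columns of $B$, and flipping the orientation of edge $j$ multiplies column $j$ by $-1$. Neither operation changes $B B^T = \sum_{j} (\text{col}_j\, B)(\text{col}_j\, B)^T$, since each column only appears paired with itself. So it suffices to verify the identity for one fixed orientation.

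Next I would write
\[
(BB^T)_{ik} \;=\; \sum_{j=1}^{\|E\|} b_{ij}\, b_{kj}
\]
and split into the cases $i=k$ and $i\neq k$. For $i=k$, each $b_{ij}\in\{-1,0,+1\}$, so $b_{ij}^2$ is $1$ exactly when edge $j$ is incident to $v_i$ and $0$ otherwise; summing over $j$ gives $\deg v_i$, matching the diagonal entry of $L$. For $i\neq k$, the product $b_{ij}b_{kj}$ is nonzero only for edges $j$ incident to both $v_i$ and $v_k$. Because we work with simple graphs, there is at most one such edge; if it exists, one of the two endpoints is the tail (contributing $+1$) and the other is the head (contributing $-1$), so the product is $-1$. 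This produces exactly the off-diagonal rule $l_{ik} = -1$ if $v_i v_k \in E$ and $0$ otherwise.

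There is no real obstacle here — the argument is routine bookkeeping. The one subtlety worth flagging is the cancellation in the off-diagonal case: it works precisely because the $\pm 1$ convention in the incidence matrix is antisymmetric between the two endpoints of a directed edge. This is also why the identity is $L = BB^T$ rather than needing absolute values, and it is what makes the result independent of the chosen orientation.
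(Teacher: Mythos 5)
Your proof is correct and follows essentially the same route as the paper: a direct entry-by-entry computation of $B\,B^T$, with the diagonal entries summing squares of $\pm1$ entries to give $\deg v_i$ and the off-diagonal entries picking up exactly one $(-1)\cdot(+1)$ product per edge between $v_i$ and $v_k$. Your preliminary remark that reordering columns and flipping edge orientations leaves $B\,B^T$ unchanged is a nice explicit justification of the "any orientation and numeration" clause, which the paper leaves implicit.
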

\begin{proof}
    Consider any orientation and numeration of the edges of undirected graph and let $B$ be its corresponding incidence matrix. Denote the row $i$ of the matrix $B$ by $b_i$ and the number of vertices of the graph by $n$.
    $$B\, B^T = \{m_{ij}\} = \{b_i\cdot b_j\}.$$
    \begin{enumerate}
        \item For $i = j$, $m_{ii} = b_i^2 = \{\text{the number of non-zero elements in } b_i\} =$\\$= \{\text{the sum of elements in $i$ row in the incidence matrix for undirected graph}\} = \deg{v_i}$, due to incidence matrix property~\ref{prop:incdeg}.
        \item For $i \neq j$, 
    \begin{equation*}
    b_{ik}\ b_{jk} =
    \begin{cases}
      -1 & \text{if $v_i$ adjacent to $v_j$ by the edge $k$,}\\
      0 & \text{otherwise.}
    \end{cases}.       
    \end{equation*} 
    Since the graph has no loops,
        $$m_{ij} =  b_i\cdot b_j =\sum_{k=1}^n {b_{ik}b_{jk}} = l_{ij}.$$
    \end{enumerate}
\end{proof}

\begin{lm}[Graph $(n, n-1)$ lemma]\label{lm:nnm1}
    Let $B(G)$ --- the incidence matrix for some orientation and numeration edges of the graph $G(n, n-1)$ and $D(B)$ be maximal principal minor of the matrix $B$. It follows, that
    \begin{enumerate}
        \item $D(B) = \pm1$, if $G$ is connected (tree),
        \item $D(B) = 0$, if $G$ is not connected (not tree).
    \end{enumerate}
\end{lm}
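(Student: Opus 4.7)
The plan is to split according to the two cases of the statement, which cover all graphs $G(n,n-1)$ by Theorem~\ref{thm:nsctree2}: either $G$ is a tree, or $G$ is disconnected (and necessarily has a cycle). Note that, by the column-sum-zero property of the incidence matrix, the sum of all rows of $B$ is zero, so any row is $(-1)$ times the sum of the others. Therefore deleting any row produces square $(n-1)\times(n-1)$ submatrices whose determinants all agree up to sign, so we are free to delete whichever row is most convenient.

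For the disconnected case, I would argue directly by a rank bound. Suppose $G$ has connected components $C_1,\dots,C_k$ with $k\geq 2$. For each component $C_s$, the sum of the rows of $B$ indexed by vertices of $C_s$ is zero: edges lying inside $C_s$ contribute $+1$ and $-1$ once each, and edges outside $C_s$ contribute $0$ in those rows. These $k$ relations are linearly independent (their supports are disjoint), so $\operatorname{rank}(B)\leq n-k\leq n-2$. Hence every $(n-1)\times(n-1)$ minor of $B$ vanishes, giving $D(B)=0$.

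For the tree case, I would use induction on $n$. The base case $n=1$ is trivial (the only minor is the empty determinant, equal to $1$). For the induction step, apply the Two Leafs Lemma~\ref{lm:2leafs} to find a leaf $v$ with unique incident edge $e$. In $B$, the column corresponding to $e$ has exactly two nonzero entries $\pm 1$, one in the row of $v$ and one in the row of its neighbour $u$. Choose to form $D(B)$ by deleting the row of $u$; then the column of $e$ in the resulting $(n-1)\times(n-1)$ matrix has a single nonzero entry $\pm 1$ in the row of $v$. Expanding along this column produces $\pm 1$ times the determinant of the $(n-2)\times(n-2)$ matrix obtained by deleting the row of $v$ and the column of $e$, which is precisely an $(n-2)\times(n-2)$ minor of the incidence matrix of the subtree $G-v$ on $n-1$ vertices and $n-2$ edges. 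By the inductive hypothesis this equals $\pm 1$, so $D(B)=\pm 1$.

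The main obstacle is bookkeeping: one must know that ``the'' maximum minor is well-defined (up to sign) so that the freedom to delete a convenient row in the induction is legitimate, and one must check that $G-v$ is still a tree (which is immediate since removing a leaf cannot create cycles or disconnect a tree). Both of these follow from facts already in the text, so the argument is essentially a clean induction plus the rank observation.
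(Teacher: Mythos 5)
Your proof is correct and follows essentially the same route as the paper: for the tree case the paper repeatedly renumbers a leaf and its edge to bring $B$ to lower-triangular form with $\pm1$ on the diagonal, which is exactly your leaf-peeling induction unrolled into an explicit triangularization, and for the disconnected case the paper exhibits the same linear dependence (the rows of a component not containing the deleted vertex sum to zero) that underlies your rank bound. Your preliminary observation that all the $(n-1)\times(n-1)$ minors obtained by deleting a row agree up to sign is a nice touch that makes the paper's ``without loss of generality, delete the row of $v_n$'' fully rigorous.
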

\begin{proof}
\begin{enumerate}
    \item Let the graph $G$ be connected. The graph $G$ is tree by the theorem~\ref{thm:nsctree1} and therefore it has at least two leafs by lemma~\ref{lm:2leafs}.

    For the graph $G(n, n-1)$ the matrix $B(G)$ is also $n\times (n-1)$ matrix, hence the maximal principal minor is the determinant of matrix $B(G)$ without some row. Suppose that this row corresponds to last vertex $v_n$ without loss of generality (if it is not re-number vertices). 
    \begin{enumerate}
        \item\label{lm:nn-1} Consider the leaf $\neq v_n$ and the edge adjacent to the leaf. Re-number vertices and edges such that this leaf becomes first vertex $v_1$ and its adjacent edge becomes the first edge $e_1$. This re-numbering procedure corresponds to row and column permutations in the matrix $B(G)$ and in the determinant $D(B)$, therefore $D(B)$ is not changing. By this re-numbering the first row of the matrix $B(G)$ becomes equal to $(\pm 1, 0, 0, ... , 0)$.
        \item Consider the induced subgraph on $V(G)\setminus \{v_1\}$ vertices and do the same procedure as in~\ref{lm:nn-1} for this subgraph. Thus, the corresponding vertex and the edge become the second vertex and the second edge and the second row of the matrix $B(G)$ become to look like $(*, \pm 1, 0, 0, ... , 0)$, where \textquote{*} means some number. Let's continue this procedure.
        \item \ \\ In the end, the matrix $B(G)$ become to look like
        $\begin{pmatrix}
        \pm 1 & 0 & 0 &... & 0 \\
        * & \pm 1 & 0 &... & 0 \\
        * & * & \pm 1 &... & 0 \\
        & & ... & & \\
        * & * & * &... & \pm 1 \\
        * & * & * &... & *
        \end{pmatrix}$, and $D(B)$ will be the determinant of the upper part of the matrix. Therefore, $D(B) = \pm 1$.
    \end{enumerate}
    \item  Suppose the graph $G$ is not connected. Denote by $G_1$ the connected component without vertex $v_n$ and $G_2$ all the rest. 
    
    Consider the sub-matrix $K\sub B(G)$ corresponding to vertices of $G_1$. The incidence matrix $B(G_1)$ is the sub-matrix of the matrix $K$. The sum in each column of the matrix $K$ equals to 0. Indeed, if edge belongs to $G_1$ then the sum equals to 0 by property~\ref{prop:incsum} of incidence matrix $B(G_1)$ and if edge belongs to $G_2$ then all elements in corresponding column equals 0. Hence, the sum of rows of the matrix $K$ equals 0. Since $K$ is the part of the matrix corresponding to $D(B)$, the determinant $D(B) = 0$.
\end{enumerate}
\end{proof}

\begin{thm}[Kirchhoff matrix tree theorem]
    The number of trees in undirected unweighted connected graph $G$ equals to an algebraic complement of any Laplacian matrix element. 
\end{thm}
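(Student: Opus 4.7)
The plan is to reduce the statement to a single principal minor of $L$ via Lemma~\ref{lm:lapcomp}, then expand that minor using $L=BB^T$ (Lemma~\ref{lm:lapinc}) together with the Cauchy--Binet formula, and finally identify each surviving term as a spanning tree via Lemma~\ref{lm:nnm1}.

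First, by Lemma~\ref{lm:lapcomp} all algebraic complements of $L$ are equal, so it suffices to compute the one obtained by deleting the last row and column, namely $L_{nn}=\det\tilde L$, where $\tilde L$ is the $(n-1)\times(n-1)$ minor. Fix any orientation of $G$ and any numbering $e_1,\dots,e_m$ of the edges; let $B$ be the corresponding incidence matrix and $\tilde B$ the matrix obtained from $B$ by striking out its last row. By Lemma~\ref{lm:lapinc}, $L = BB^T$, and striking row~$n$ on both sides gives $\tilde L = \tilde B\,\tilde B^T$. Note $\tilde B$ is an $(n-1)\times m$ matrix.

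Next I would invoke the Cauchy--Binet formula, which expresses $\det(\tilde B\,\tilde B^T)$ as a sum over all size-$(n-1)$ subsets $S\subset\{1,\dots,m\}$ of the squared minors $\det(\tilde B_S)^2$, where $\tilde B_S$ is the $(n-1)\times(n-1)$ submatrix of $\tilde B$ on the columns indexed by~$S$. Each such $S$ corresponds to a choice of $n-1$ edges, i.e.\ a subgraph $G_S=(V,S)$ on all $n$ vertices with exactly $n-1$ edges. Moreover $\tilde B_S$ is precisely the matrix obtained from the incidence matrix $B(G_S)$ (for the inherited orientation and numbering) by deleting row~$n$, so $\det(\tilde B_S) = \pm D(B(G_S))$ in the notation of Lemma~\ref{lm:nnm1}.

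Now apply Lemma~\ref{lm:nnm1} term by term: if $G_S$ is connected then, being an $(n,n-1)$-graph, it is a spanning tree of $G$, and $\det(\tilde B_S)^2 = 1$; if $G_S$ is disconnected, $\det(\tilde B_S)^2 = 0$. Summing, we obtain
\[
L_{nn} \;=\; \det\tilde L \;=\; \sum_{\substack{S\subset E\\|S|=n-1}} \det(\tilde B_S)^2 \;=\; \#\{\text{spanning trees of }G\},
\]
which, combined with Lemma~\ref{lm:lapcomp}, proves the theorem. The main obstacle, and the only nontrivial input not already set up in the excerpt, is the Cauchy--Binet formula; everything else is bookkeeping plus a direct application of Lemmas~\ref{lm:lapcomp}, \ref{lm:lapinc}, and~\ref{lm:nnm1}.
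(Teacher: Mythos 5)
Your proposal is correct and follows essentially the same route as the paper: reduce to $L_{nn}$ via the equal-complements lemma, write $L_{nn}=\hat B\hat B^T$ using $L=BB^T$, expand by Cauchy--Binet into a sum of squared $(n-1)\times(n-1)$ minors, and identify the nonzero terms with spanning trees via the $(n,n-1)$-graph lemma. The only difference is presentational: the paper sketches a proof of Cauchy--Binet inline, whereas you cite it as an external input.
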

\begin{proof}
    From Lemma~\ref{lm:lapinc}: $L(G) = B\,B^T$. Let's denote the number of vertices of graph $G$ by $n$. Algebraic complements of all  Laplacian matrix elements are equal by the lemma~\ref{lm:lapcomp}. Consider the algebraic complement $L_{nn}$ of the last bottom-right element $l_{nn}$. Let's denote $\hat{B}$ the matrix $B(G)$ without last row. Therefore, $L_{nn} = \hat{B}\,\hat{B}^T$.

    Since the graph is connected, the number of edges is $\geq n-1$ and hence the principal minor $D(\hat{B})$ of the matrix $\hat{B}$ has $(n-1)\times(n-1)$ dimensions. Let $T(B)$ be the sub-matrix of $B$ consisted of the columns corresponding to $D(\hat{B})$ (see Figure~\ref{fig:lbb}). The sub-matrix $T(B)$ is obtained by adding to $D(\hat{B})$ last row and hence has $n\times(n-1)$ dimensions. Therefore, the sub-matrix $T(B)$ corresponds to a subgraph $\hat{G} (n, n-1)\sub G$. Moreover, this matrix equals to incidence matrix of the subgraph $B(\hat{G})$ (to prove this part more accurately, one should either consider the definition of subgraph with isolated vertices or prove this part in terms of sub-matrices). By the lemma~\ref{lm:nnm1}: 
    \begin{equation*}
    D\bigl(T(B)\bigr) = D\bigl(B(\hat{G})\bigr) = 
    \begin{cases}
      \pm 1 & \text{if $\hat{G}$ is tree,}\\
      0 & \text{if it is not,}
    \end{cases},       
    \end{equation*} 
    and it is also equal to $D(\hat{B})$.

\begin{figure}[H]
\vspace{-12pt}
    \centering
	\includegraphics[width=0.7\textwidth]{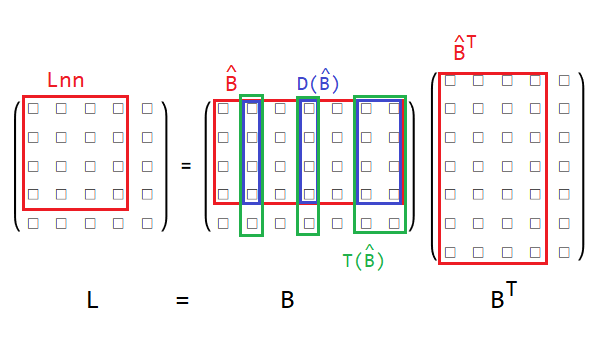}
	\caption{Designations in the Kirchhoff matrix theorem proof.}
\label{fig:lbb}
\end{figure}
    
    The last step is to use a formula from algebra:
\begin{lm*}[Binet-Cauchy formula]
    Consider the matrix equation:
    $$\begin{matrix}
        P & \times & Q & = & C \\
        s\times t & &t\times s & & s\times s
    \end{matrix}$$
    for two rectangular matrices $P$ and $Q$ of $s\times t$ and $t\times s$ dimensions and square $s\times s$ matrix $C$ for any $s, t\in \rN$.
    
    Let the columns of matrix $P$ be called \textquote{corresponded} to rows of matrix $Q$ iff. they are consist of the same sets of indexes. Denote the minors consisted of the corresponding columns and rows by $P_{i_1 i_2 ... i_s}$ and $Q_{i_1 i_2 ... i_s}$ respectively, for indexes $i_1, i_2, ... , i_s \in \{1...t\}$ with repetitions.
    
    It follows that
    $$\det{C} = \sum_{i_1, i_2, ... , i_s = 1}^t {P_{i_1 i_2 ... i_s}\,Q_{i_1 i_2 ... i_s}}$$
\end{lm*}
\begin{proof}
    The complete proof won't be provided here, just the main ideas:
    \begin{enumerate}
        \item For case $s = t$, it is well-known determinant of two matrices product: $\det{(P\,Q)} = \det{P}\det{Q}$.
        \item For case $s>t$, the determinants $P_{i_1 i_2 ... i_s}$ and  $Q_{i_1 i_2 ... i_s} = 0$ for any $i_1 i_2 ... i_s$ because of repetitions and $\rank(P\,Q)\leq\rank(P)\leq t\imp \det{C} = 0$.
        \item For case $s<t$, it follows from two algebraic facts:
        \begin{enumerate}
            \item The coefficient of $z^{t-k}$ in the polynomial $\det{(z I_t + C)}$ is the sum of the $k\times k$ principal minors of $C$ for any $k = 1,2,..,t$, where $I_t$ is the identity matrix of $t\times t$ dimensions. 
            \item If $s\leq t$, then $\det{(z I_t + Q\,P)} = z^{t-s}\det{(z I_s + P\,Q)}$, for matrices $P$ and $Q$ from the statement.
        \end{enumerate}
    \end{enumerate}
\end{proof}

Using the Binet-Cauchy formula for matrices $L_{nn} = \hat{B}\,\hat{B}^T$:
$$
L_{nn} = \sum_{i_1, i_2, ..., i_{n-1}} {\hat{B}_{i_1 i_2 ...  i_{n-1} }\,\hat{B}_{i_1 i_2 ...  i_{n-1} }^T} = \sum_{i_1, i_2, ...,  i_{n-1}} {D_{i_1 i_2 ...  i_{n-1}}^2(\hat{B})} = \sum_{\hat{G}(n,n-1)\sub G} {D^2\bigl(B(\hat{G})\bigr)},
$$
and by the lemma~\ref{lm:nnm1} it equals to the number of spanning trees of $G$.  

\end{proof}

\begin{defin}
    \textbf{The complete graph} $K_n$ is the graph where any two vertices are connected with an edge.
\end{defin}

\begin{thm}[Cayley's formula]\label{thm:cayley}
    The number of spanning trees of the complete graph $K_n$ equals to $n^{n-2}$.
\end{thm}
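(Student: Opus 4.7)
The plan is to apply the Kirchhoff matrix tree theorem directly to $K_n$ and reduce the statement to a single explicit determinant. First I would write down the Laplacian: every vertex of $K_n$ has degree $n-1$ and every pair of distinct vertices is joined by an edge, so
\[ L(K_n) = n I_n - J_n, \]
where $I_n$ is the identity matrix and $J_n$ the all-ones $n\times n$ matrix.

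By Kirchhoff's theorem together with Lemma~\ref{lm:lapcomp}, the number of spanning trees equals any algebraic complement of $L(K_n)$; I would pick $L_{nn}$, which is the determinant of the $(n-1)\times(n-1)$ principal submatrix $M := n I_{n-1} - J_{n-1}$ obtained by deleting the last row and the last column. Thus the whole problem reduces to computing $\det M$ and showing that it equals $n^{n-2}$.

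The cleanest route I would take is spectral. The matrix $J_{n-1}$ is of rank one: its only non-zero eigenvalue is $n-1$ (attained on the all-ones vector), and the eigenvalue $0$ has multiplicity $n-2$. Consequently, $M = n I_{n-1} - J_{n-1}$ has eigenvalues $n-(n-1)=1$ (once) and $n$ (with multiplicity $n-2$), whence $\det M = 1 \cdot n^{n-2} = n^{n-2}$. An equivalent derivation is by elementary row operations: each row of $M$ sums to $1$, so adding all rows into the first yields a first row of $1$'s, after which one can triangularise and read off the same product. No step really poses an obstacle here: the heavy lifting has already been carried out inside the proof of Kirchhoff's theorem, and what remains is the linear-algebra exercise of diagonalising $nI_{n-1}-J_{n-1}$; the only point requiring care is to observe that $J_{n-1}$ is rank-one so that its spectrum is transparent.
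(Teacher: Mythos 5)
Your proposal is correct and follows essentially the same route as the paper: apply Kirchhoff's matrix tree theorem and evaluate the cofactor $L_{nn}=\det(nI_{n-1}-J_{n-1})$. The paper computes this determinant by the row operations you mention as the alternative (adding all rows to the first, then clearing the $-1$'s), while you diagonalise the rank-one perturbation spectrally — a cosmetic difference only.
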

\begin{proof}
    The Laplacian matrix for the complete graph $K_n$ is
    \begin{equation*}
        L(G) = 
        \begin{pmatrix}
            n-1 & -1 & -1 & ... & -1 \\
            -1 & n-1 & -1 & ... & -1 \\
             & & & ... & \\
             -1 & -1 & ... & -1 & n-1
        \end{pmatrix}, 
    \end{equation*} of dimensions $n\times n$.

    The algebraic complement to the last element $l_{nn}$ is the determinant of the matrix of dimensions $(n-1)\times (n-1)$. Let's find this algebraic complement.
    \begin{equation*}
        L_{nn} = \det
        \begin{pmatrix}
            n-1 & -1 & -1 & ... & -1 \\
            -1 & n-1 & -1 & ... & -1 \\
             & & & ... & \\
             -1 & -1 & ... & -1 & n-1
        \end{pmatrix} \stackrel{\text{adding all rows to first}}{=} \det
        \begin{pmatrix}
            1 & 1 & 1 & ... & 1 \\
            -1 & n-1 & -1 & ... & -1 \\
             & & & ... & \\
             -1 & -1 & ... & -1 & n-1
        \end{pmatrix}=
    \end{equation*}
    \begin{equation*}
    \stackrel{\text{adding first row to others}}{=} \det
        \begin{pmatrix}
            1 & 1 & 1 & ... & 1 \\
            0 & n & 0 & ... & 0 \\
             & & & ... & \\
             0 & 0 & ... & 0 & n
        \end{pmatrix} = n^{n-2}.
    \end{equation*}
\end{proof}


\section{Bipartite graph}

\begin{defin}
    Connected undirected graph is called \bf{bipartite} \ifof one can divide the vertices into two groups such that any two vertices from one group are not adjacent.
\end{defin}

Consider the \bf{bipartite property testing algorithm}:
\begin{enumerate}
    \item Choose any vertex.
    \item Start depth-first or breadth-first algorithms and divide vertices to 0 or 1 groups by putting to them corresponded marks:
    \begin{enumerate}
        \item For depth-first: sequentially alternate marks for depth-first walk.
        \item For breadth-first: put the same marks if the vertices are on the same breadth and change marks otherwise.
    \end{enumerate}
    \item If any two vertices from one group are not adjacent, the graph is bipartite and not bipartite otherwise.
\end{enumerate}

\NB The complexity of this algorithm is the same as for depth-first and breadth-first algorithms for connected graphs and equals to $\mathbf{O\bigl(\|E\|\bigr)}$.

\begin{ex}
    Consider the result of the testing algorithm:
    \begin{figure}[H]\label{fig:bipart}
    \centering
	\includegraphics[width=1\textwidth]{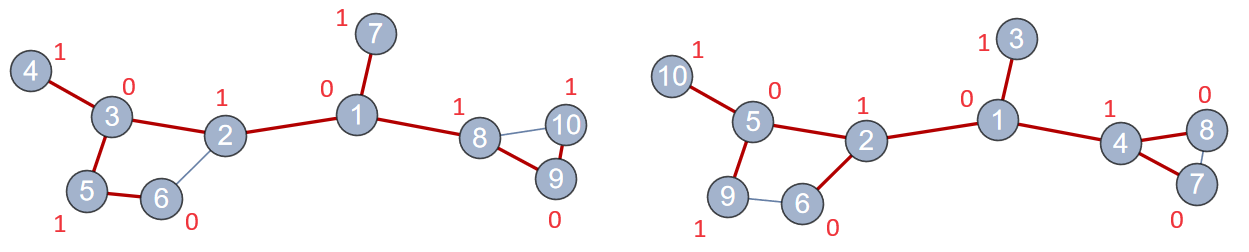}
	\caption{Group division (red marks) by depth-first algorithm (left) and breadth-first algorithm (right).}
\end{figure}
This graph is not bipartite because after the algorithm there are adjacent vertices from the same group.
\end{ex}

\begin{thm}[Necessary and sufficient bipartite condition] Connected undirected graph is bipartite iff. it doesn't contain odd length cycles. 
\end{thm}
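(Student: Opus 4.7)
The plan is to handle the two directions separately. The forward direction (bipartite $\imp$ no odd cycles) is the routine one: if the vertex set splits as $V = V_0 \sqcup V_1$ with edges only between the parts, then any cycle alternates between $V_0$ and $V_1$ step by step, so returning to the starting vertex forces the number of edges traversed to be even.

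For the reverse direction I would make precise the bipartite testing algorithm stated just before the theorem. Pick any vertex $v_0$ (legal since the graph is connected), run breadth-first search from $v_0$, and color each vertex $v$ by the parity of $\mathrm{dist}(v_0,v)$, i.e., place $v$ in $V_0$ or $V_1$ according as this distance is even or odd. The claim to verify is that this partition is a valid bipartition: no edge has both endpoints in the same class.

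To prove this claim I would argue by contradiction. Suppose an edge $(u,v)$ has $u$ and $v$ in the same class, so $\mathrm{dist}(v_0,u)$ and $\mathrm{dist}(v_0,v)$ share the same parity. Fix shortest paths $P_u$ from $v_0$ to $u$ and $P_v$ from $v_0$ to $v$ and form the closed walk $W$ obtained by going along $P_u$, then across the edge $(u,v)$, then back along $P_v$ in reverse. Its length is $|P_u|+1+|P_v|$, which is odd. From $W$ I then extract a simple cycle of odd length, contradicting the hypothesis and finishing the proof.

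The main obstacle is that last extraction step, since a closed walk need not itself be a simple cycle. I would dispose of it by induction on the length of the closed walk: if $W$ is already a simple cycle we are done, otherwise some internal vertex $w$ is repeated and splits $W$ into two strictly shorter closed sub-walks whose lengths sum to an odd number, so at least one sub-walk has odd length and by induction contains an odd simple cycle. (A cleaner variant avoids this step by working in the BFS tree: since $(u,v)$ is an edge of the graph, the BFS depths of $u$ and $v$ differ by at most one, and equal parity forces them to be equal; then the two tree paths from $u$ and $v$ up to their lowest common ancestor $w$, together with the edge $(u,v)$, form a simple cycle of length $2\,\mathrm{dist}(w,u)+1$.) Either variant closes the argument and establishes the equivalence.
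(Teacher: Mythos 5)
Your proof is correct and follows essentially the same route as the paper: the forward direction is the same alternation argument, and the reverse direction is the paper's argument of two-coloring via a BFS/DFS spanning tree and showing that a monochromatic non-tree edge closes an odd cycle with the (unique) tree path. Your parenthetical ``cleaner variant'' (tree paths to the lowest common ancestor plus the edge $(u,v)$) is precisely the paper's use of the uniqueness-of-path lemma, while your main route via extracting an odd simple cycle from an odd closed walk is only a minor variation.
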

\begin{proof}\ 
    \begin{enumerate}
        \item \emph{Necessarity.} Assume the contrary: the graph contains odd length cycle. Let's start to divide vertices from this cycle to two groups by the rule: two vertices from one group should not be adjacent. By this rule vertices will sequentially alternate to each other. Since the length of cycle is odd, the first and the last vertices will be from the same group. This gives a contradiction with bipartite condition.
        \item \emph{Sufficiency.} Consider bipartite property testing algorithm with spanning tree $T$ construction. It is easy to see that any tree is a bipartite graph. Let's start to add remaining edges. Denote first edge by $(v, w)$. 
        
        Assume that vertices $v$ and $w$ corresponds to the same group by the testing algorithm. By the lemma~\ref{lm:upath} there exists unique path from $v$ to $w$ in $T$. Since the marks alternate to each other along this path by the algorithm, this path with the edge $(v, w)$ form an odd length cycle. This gives a contradiction. Hence, all remaining edges connect vertices from different groups.
    \end{enumerate}
\end{proof}

\begin{defin}
    Bipartite graph is called \bf{complete bipartite graph} $K_{n,m}$ \ifof any two vertices from different groups are connected with an edge.
\end{defin}

\begin{thm}[Cayley's formula for bipartite graphs]
    The number of spanning trees of the complete bipartite graph $K_{n, m}$ equals to $n^{m-1} m^{n-1}$.
\end{thm}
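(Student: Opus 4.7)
My plan is to apply the Kirchhoff matrix tree theorem. Order the $n+m$ vertices so that the first $n$ lie in one part and the last $m$ in the other. Then every vertex of the first part has degree $m$, every vertex of the second part has degree $n$, and two vertices are adjacent iff they lie in different parts, so the Laplacian takes the block form
$$L(K_{n,m}) = \begin{pmatrix} m I_n & -J_{n\times m} \\ -J_{m\times n} & n I_m \end{pmatrix},$$
where $J_{p\times q}$ is the $p\times q$ all-ones matrix. By Lemma~\ref{lm:lapcomp} all algebraic complements are equal, so it suffices to compute the one corresponding to the last diagonal entry (i.e.\ removing a vertex of the $m$-part). The resulting $(n+m-1)\times(n+m-1)$ matrix $\widetilde{L}$ has the same block shape as $L(K_{n,m})$, except its bottom-right block is $n I_{m-1}$ and its off-diagonal blocks are $-J_{n\times(m-1)}$ and $-J_{(m-1)\times n}$.

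Next I would evaluate $\det\widetilde{L}$ by a block row reduction: since $mI_n$ is invertible, adding $\tfrac{1}{m}J_{(m-1)\times n}$ times the top band of rows to the bottom band zeros out the lower-left block without changing the determinant. Using the identity $J_{(m-1)\times n}J_{n\times(m-1)} = n J_{(m-1)\times(m-1)}$, the matrix becomes block upper triangular with diagonal blocks $mI_n$ and $nI_{m-1}-\tfrac{n}{m}J_{(m-1)\times(m-1)}$, so
$$\det\widetilde{L} \;=\; m^n\cdot n^{m-1}\cdot \det\Bigl(I_{m-1} - \tfrac{1}{m}J_{(m-1)\times(m-1)}\Bigr).$$
To finish, observe that $J_{(m-1)\times(m-1)}$ has eigenvalue $m-1$ on the all-ones vector and eigenvalue $0$ with multiplicity $m-2$ on zero-sum vectors, so $I_{m-1}-\tfrac{1}{m}J_{(m-1)\times(m-1)}$ has eigenvalues $\tfrac{1}{m}$ (once) and $1$ (multiplicity $m-2$), giving determinant $\tfrac{1}{m}$. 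Multiplying yields the cofactor $m^{n-1}n^{m-1}$, which by Kirchhoff's theorem equals the number of spanning trees.

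The main obstacle is justifying the block row-reduction step, since no Schur-complement machinery has been developed in the text. I would spell this out as a sequence of ordinary row operations on $\widetilde{L}$ (adding suitable multiples of the first $n$ rows to each of the last $m-1$ rows to clear the lower-left $-J$ block), in the same spirit as the row-and-column reduction used to evaluate the $K_n$ determinant just above; alternatively one can invoke the Binet--Cauchy lemma already stated in the previous proof. Everything else is a short eigenvalue computation for a rank-one perturbation of the identity, again echoing the trick used in the proof of Cayley's formula for $K_n$.
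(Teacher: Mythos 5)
Your proposal is correct: you apply Kirchhoff's matrix tree theorem to the block Laplacian of $K_{n,m}$ and evaluate the cofactor obtained by deleting a vertex of the $m$-part, exactly as the paper does, and your arithmetic checks out ($m^n\cdot n^{m-1}\cdot\frac{1}{m}=m^{n-1}n^{m-1}$). The only divergence is in how the $(n+m-1)\times(n+m-1)$ determinant is computed. You use a block elimination (a Schur-complement step) followed by an eigenvalue analysis of the rank-one matrix $J_{(m-1)\times(m-1)}$, and you rightly flag that the block step needs to be unpacked into ordinary row operations since the text has no block-determinant machinery. The paper avoids all of this with two elementary moves: adding all rows to the first row turns it into $(1,\dots,1,0,\dots,0)$ (using that the deleted row of the full Laplacian had $-1$ in the first $n$ columns and $0$ in the remaining ones), and then adding that row to each of the last $m-1$ rows clears their leading $-1$'s, leaving a matrix whose lower-left block vanishes and whose diagonal reads $1,m,\dots,m,n,\dots,n$, so the determinant is read off as $m^{n-1}n^{m-1}$. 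Your route is heavier but self-contained once the row operations are spelled out; the paper's is shorter and stays closer to the computation it just performed for $K_n$. One small caveat: Binet--Cauchy is not the right tool for justifying the block reduction, so if you write this up, stick with your plan of explicit row operations.
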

\begin{proof}
    The Laplacian matrix for the complete bipartite graph consists of 4 blocks:
    \begin{equation*}
        L(G) = 
        \begin{pmatrix}
            m & 0 & 0 & ... & 0 & -1 & -1 & ... & -1 & -1 \\
            0 & m & 0 & ... & 0 & -1 & -1 & ... & -1 & -1 \\
             &  & & ... & & & & ... & & \\
             0 & 0 & ... & 0 & m & -1 & -1 & ... & -1 & -1 \\
             -1 & -1 & ... & -1 & -1 & n & 0 & 0 & ... & 0 \\
             -1 & -1 & ... & -1 & -1 & 0 & n & 0 & ... & 0 \\
             &  & ... & & & & & & ... & \\
             -1 & -1 & ... & -1 & -1 & 0 & 0 & ... & 0 & n  \\
        \end{pmatrix}. 
    \end{equation*} 
    
    The first upper-left corner with $m$'s has $n\times n$ dimensions and the last lower-right corner with $n$'s has $m\times m$ dimensions.

    The algebraic complement to the last element $l_{nn}$ is the determinant of the matrix of dimensions $(n+m-1)\times (n+m-1)$. Let's find this algebraic complement.
    \begin{equation*}
        L_{nn} = \det
        \begin{pmatrix}
            m & 0 & 0 & ... & 0 & -1 & -1 & ... & -1 & -1 \\
            0 & m & 0 & ... & 0 & -1 & -1 & ... & -1 & -1 \\
             &  & & ... & & & & ... & & \\
             0 & 0 & ... & 0 & m & -1 & -1 & ... & -1 & -1 \\
             -1 & -1 & ... & -1 & -1 & n & 0 & 0 & ... & 0 \\
             -1 & -1 & ... & -1 & -1 & 0 & n & 0 & ... & 0 \\
             &  & ... & & & & & & ... & \\
             -1 & -1 & ... & -1 & -1 & 0 & 0 & ... & 0 & n  \\
        \end{pmatrix} \stackrel{\text{adding all rows to first}}{=}
    \end{equation*}
    \begin{equation*}
     = \det
        \begin{pmatrix}
            1 & 1 & 1 & ... & 1 & 0 & 0 & ... & 0 & 0 \\
            0 & m & 0 & ... & 0 & -1 & -1 & ... & -1 & -1 \\
             &  & & ... & & & & ... & & \\
             0 & 0 & ... & 0 & m & -1 & -1 & ... & -1 & -1 \\
             -1 & -1 & ... & -1 & -1 & n & 0 & 0 & ... & 0 \\
             -1 & -1 & ... & -1 & -1 & 0 & n & 0 & ... & 0 \\
             &  & ... & & & & & & ... & \\
             -1 & -1 & ... & -1 & -1 & 0 & 0 & ... & 0 & n  \\
        \end{pmatrix} \stackrel{\text{adding first row to the last $m-1$'s}}{=}
    \end{equation*}
     \begin{equation*}
     = \det
        \begin{pmatrix}
            1 & 1 & 1 & ... & 1 & 0 & 0 & ... & 0 & 0 \\
            0 & m & 0 & ... & 0 & -1 & -1 & ... & -1 & -1 \\
             &  & & ... & & & & ... & & \\
             0 & 0 & ... & 0 & m & -1 & -1 & ... & -1 & -1 \\
             0 & 0 & ... & 0 & 0 & n & 0 & 0 & ... & 0 \\
             0 & 0 & ... & 0 & 0 & 0 & n & 0 & ... & 0 \\
             &  & ... & & & & & & ... & \\
             0 & 0 & ... & 0 & 0 & 0 & 0 & ... & 0 & n  \\
        \end{pmatrix} = m^{n-1} n^{m-1}.
    \end{equation*}
\end{proof}

There is also more general definition of complete $k$-partite graph. We will consider such graphs in the subsection~\ref{sc:nls}. For these graphs it is easy to reformulate and proove this Cayley's theorem by analogy for more general case.


\section{Shortest path problems}

In this section there will be observed the basic algorithms of searching shortest path between vertices for weighted directed or undirected graphs.

\begin{defin}
    A cycle in weighted graph is called \bf{negative cycle} if the sum of weights along the cycle is negative.
\end{defin}

If the path between two vertices intersects negative cycle, the shortest path between them is not exist. Therefore, for shortest path algorithms we consider only graphs without negative cycles unless otherwise specified. 

These algorithms are very similar for both directed and undirected cases, so let's give combined definition:

\begin{defin}
    Spanning tree of weighted (directed) undirected graph $G$ is called \bf{(shortest path arborescence) shortest path tree} if the path from the root to any other vertex along the (poly)tree is the shortest path in $G$.
\end{defin}

\newpage
\noindent
Consider basic algorithms:
\begin{enumerate}
    \item \bf{Dijkstra's algorithm}.
    
    This algorithm finds the shortest path tree (shortest path arborescence) with selected vertex as a root for (directed) undirected graphs with positive weights.
    
    \emph{Description:} 
    \begin{enumerate}
        \item Initialization: mark all vertices unvisited, initialize two arrays: $\dd[v]$ --- array of minimum distances and $\pp[v]$ --- array of predecessors for shortest paths from the root to vertices $v$ and assign $\dd[v]$ by zero for the selected vertex and infinities for all the rest.
        \item\label{dijkstra} Find the unvisited vertex $v$ with the smallest value in $d$. For each its unvisided neighbour $w$ if the sum of $\dd[v]$ plus weight of the edge $(v, w)$ is less than value $\dd[w]$, change $\dd[w]$ to the sum and change the predecessor $\pp[w]$ for $v$. After checking all unvisited neighbours mark $v$ as visited.
        \item  If there exist unvisited vertices in the graph go to~\ref{dijkstra}. 
    \end{enumerate}

Edges (predecessors and corresponded vertices) form shortest path tree (see figure~\ref{fig:dij}).\\

\NB The complexity of Dijkstra's algorithm equals to $\mathbf{O\bigl(\|E\|+\|V\|^2\bigr) = O\bigl(\|V\|^2\bigr)}$ in the simplest realization. Using binary search tree or binary heap it equals to $\mathbf{O\Bigl(\bigl(\|E\|+\|V\|\bigr)\,log\,\|V\|\Bigr)}$.

\begin{figure}[H]
    \centering
	\includegraphics[width=0.9\textwidth]{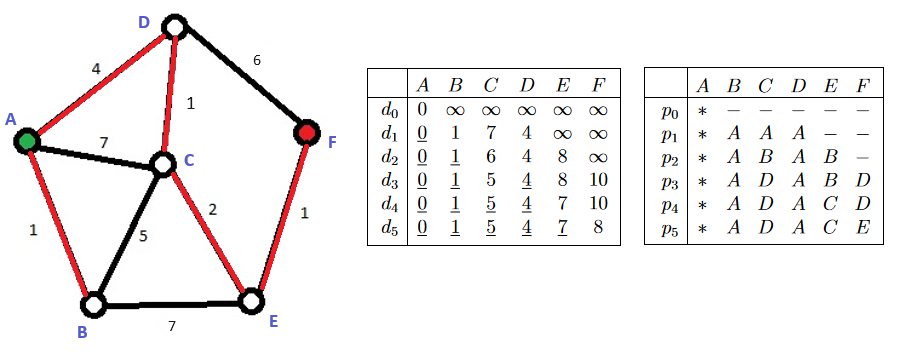}
	\caption{Shortest path tree (left, red) and sequential steps of Dijkstra's algorithm (right). The root vertex of the shortest path tree is denoted by \textquote{*}. Underlined numbers corresponded to visited vertices.}\label{fig:dij}
\end{figure}

\item \bf{Bellman–Ford algorithm.}

    This algorithm is similar to Dijkstra's algorithm. It finds the (shortest path arborescence) shortest path tree with selected vertex as a root for any (directed) undirected graphs.

    \emph{Description:} 
    \begin{enumerate}
        \item Initialize two arrays: $\dd[v]$ --- array of minimum distances and $\pp[v]$ --- array of predecessors for shortest paths from the root to vertices $v$ and assign $\dd[v]$ by zero for the selected vertex and infinities for all the rest.
        \item\label{bellford} For each edge $(v, w)$ if the sum of $\dd[v]$ plus weight of the edge $(v, w)$ is less than value $\dd[w]$, change $\dd[w]$ to the sum and change the predecessor $\pp[w]$ for $v$.
        \item  Do case~\ref{bellford} $\|V\|-1$ times. 
    \end{enumerate}

    Edges (predecessors and corresponded vertices) form shortest path tree similar to Dijkstra's algorithm (see figure~\ref{fig:bf}).\\

\NB The complexity of Bellman–Ford algorithm equals to $\mathbf{O\bigl(\|V\|\,\|E\|\bigr)}$.

\begin{figure}[H]
\vspace{-12pt}
    \centering
	\includegraphics[width=0.9\textwidth]{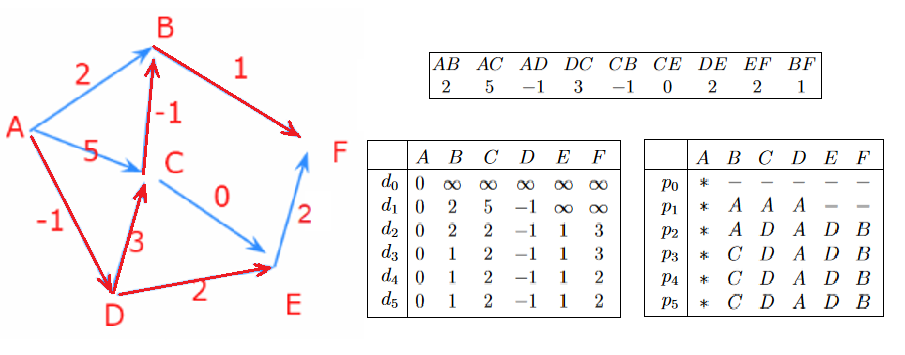}
	\caption{Shortest path arborescence (left, red), the edge set of the graph (right-top) and sequential steps of Bellman–Ford algorithm (right-bottom). The root vertex of the shortest path arborescence is denoted by \textquote{*}.}\label{fig:bf}
\end{figure}

\NB Dijkstra's and Bellman–Ford algorithms are also used for searching shortest path between two selected vertices for directed and undirected weighted graphs (walk along predecessors in reverse order to find the shortest path). Also, Dijkstra's algorithm can be stopped earlier when the final vertex was visited and Bellman–Ford algorithm when the minimum distances array was not changed. \\

Bellman–Ford algorithm also \bf{finds negative cycle} in a graph. If the graph contains negative cycle, the shortest path between any two vertices of the cycle doesn't exist and if one runs the algorithm one more time ($\|V\|$ times in total) the minimum distances array will change. Let's consider this algorithm (see figure~\ref{fig:bfc} for example):

\begin{enumerate}
    \item Do case~\ref{bellford} one more time.
    \item Find the vertex $v$, where $\dd[v]$ was changed.
    \item The negative cycle is obtained by saving vertices from this vertex $v$ along precursors array at previous $\|V\|$-1 step (in reverse order). 
\end{enumerate}

\begin{figure}[H]
\vspace{-12pt}
    \centering
	\includegraphics[width=\textwidth]{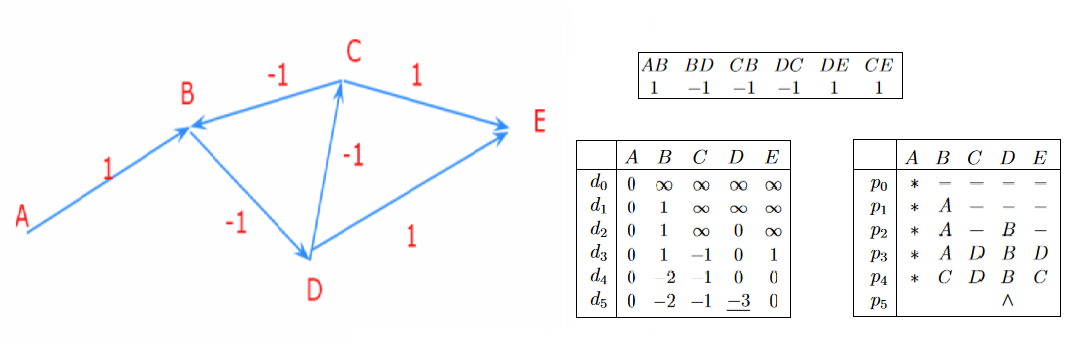}
	\caption{Sequential steps of Bellman–Ford algorithm (right) for the graph (left). The starting vertex is denoted by \textquote{*}. Column (vertex) $D$ with underlined number corresponds to the value that have changed at the additional $\|V\|$ step. The negative cycle $DCBD$ is constructed by walking along precursors set (in reverse order) from this vertex.}\label{fig:bfc}
\end{figure}

\item \bf{Floyd–Warshall algorithm.}

     This algorithm finds the shortest path matrix $\dd[v][w]$ (shortest path between any two vertices) for a graph.

    \emph{Description:} 
    \begin{enumerate}
        \item Initialize shortest path matrix $\dd[v][w]$: for every edge $(v, w)$ the element $\dd[v][w]$ is equal to the weight of $(v, w)$ and $\infty$ otherwise.
        \item\label{flowar} For each vertex $w$ check for each pair of vertices $v, u$ if the sum of $\dd[v][w]$ and $\dd[w][u]$ is less than $\dd[v][u]$. If it is then change $\dd[v][u]$ for this sum. 
    \end{enumerate}

\NB The complexity of Floyd–Warshall algorithm equals to $\mathbf{O\bigl(\|V\|^3\bigr)}$.

\item \bf{Johnson's algorithm.}

     This algorithm finds the shortest path matrix for a graph by using Bellman–Ford shortest path tree and Dijkstra's algorithm for weighted undirected connected graph and for directed graph if there exist paths from the root to any other vertex (spanning arborescence, see figure~\ref{fig:john} for example).

    \emph{Description:} 
    \begin{enumerate}
        \item Use Bellman–Ford algorithm, calculate the minimum distances array $\dd[v]$ and check if the graph contains negative cycle.
        \item\label{john} If the graph doesn't contain negative cycle, correct all weights using Bellman–Ford's values: change the weight of each edge $(v, w)$ to the sum of the weight and $\dd[v]-\dd[w]$.   
        \item For each vertex use Dijkstra's algorithm and calculate  shortest path matrix.
        \item Correct shortest path matrix by reverse transformation: change each $\dd[v][w]$ to the difference between $\dd[v][w]$ and $\dd[v]-\dd[w]$.
    \end{enumerate}

\NB The complexity of Johnson's algorithm equals to $\mathbf{O\Bigl(\|V\|\,\|E\|+ \bigl(\|E\|+\|V\|\bigr)\|V\|\,log\|V\|\Bigr)} = \mathbf{ O\Bigl(\bigl(\|E\|+\|V\|\bigr)\|V\|\,log\|V\|\Bigr)}$.

After changing the weights of the graph at the step~\ref{john} all weights become non-negative with zero-weights along Bellman–Ford's arborescence. Using Dijkstra's algorithm for each vertex it gives zero shortest path along the arborescence.

\begin{figure}[H]
\vspace{-12pt}
    \centering
	\includegraphics[width=0.9\textwidth]{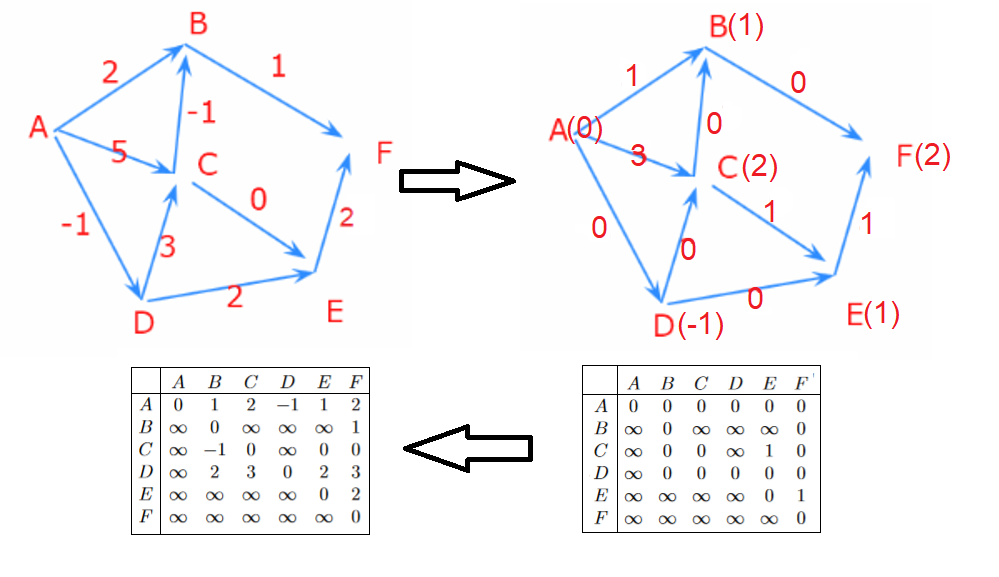}
	\caption{First two steps of Johnson's algorithm (top). Minimum distances lengths $\dd[v]$ for the first step are written near vertices in brackets. Weights of edges were changed at the second step. Shortest path matrix for the graph with new weights by using Dijkstra's algorithm (right-bottom). Result shortest path matrix (left-bottom).}\label{fig:john}
\end{figure}

\end{enumerate}


\section{Planar graph and polyhedrons}
\subsection{Planar graph}

\begin{defin}
    A graph is called \bf{planar} iff. there exists representation of the graph without intersections of edges.
\end{defin}

\begin{ex}
    \begin{enumerate}
        \item Every tree is planar.
        \item $K_2, K_3, K_4$ are planar.
        \item $K_5$ is not planar.
        \item $K_{1,2}, K_{1,3}, K_{2,2}, K_{2,3}$ are planar.
        \item $K_{3,3}$ is not planar.
    \end{enumerate}
\end{ex}
\begin{proof}
    \begin{enumerate}
        \item It is easy to see that tree is planar (use depth-first or breadth-first orders to prove it).
        \item Just try to draw these graphs on plane without intersections.
        \item\label{proof:k5} To prove that $K_5$ is not planar let's use classical theorem:
        \begin{thm*}[Jordan curve]
            Any simple closed curve $C$ divides the plane for two disjoint open subsets and any curve starting from a point of the first subset and ending in a point of the second subset intersect $C$.
        \end{thm*}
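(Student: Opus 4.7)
The Jordan curve theorem is a classical result of point-set topology whose full proof for arbitrary continuous simple closed curves is notoriously subtle (Jordan's 1887 argument had gaps; the first rigorous proof is due to Veblen). Given the graph-theoretic context here, my plan is to first reduce to polygonal simple closed curves, since the planar embeddings we care about can always be drawn with piecewise linear edges, and then sketch how to pass to general continuous curves by polygonal approximation.

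For the polygonal case, let $C$ be a simple closed polygon and fix a direction $\vec{d}$ which is not parallel to any edge of $C$ and such that no line in direction $\vec{d}$ meets two distinct vertices of $C$ simultaneously (a generic condition). For any $p\in\rR^2\setminus C$ I define $\chi(p)\in\{0,1\}$ to be the parity of the number of intersections of the ray from $p$ in direction $\vec{d}$ with $C$. A direct case analysis shows that $\chi$ is locally constant on $\rR^2\setminus C$ and flips exactly when $p$ crosses an edge of $C$. Points far in the direction $-\vec{d}$ give $\chi=0$, whereas a point placed immediately on the opposite side of some edge gives $\chi=1$; thus $\chi$ assumes both values, and any continuous path joining a $\chi=0$ point to a $\chi=1$ point must meet $C$. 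This already yields that $C$ disconnects the plane into at least two open pieces, and gives the crossing-property in the statement. To get exactly two components, I would show that any two points with the same $\chi$ can be joined by a polygonal path that avoids $C$, by pushing the path locally off any edge it tries to cross.

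To remove the polygonal assumption, I would approximate the given continuous simple closed curve uniformly by polygonal simple closed curves $C_n$ and argue that the inside/outside structure passes to the limit. The delicate point is preserving injectivity (no self-crossings) in the approximants, which one obtains from uniform continuity of the parametrization together with a compactness-based lower bound on the distance between non-adjacent arcs.

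The \emph{main obstacle} is the combinatorial care needed in the polygonal parity argument: handling rays that graze a vertex of $C$ (resolved by perturbing $\vec{d}$), and rigorously verifying that $\chi$ detects exactly two components rather than merely witnessing two values of an a priori more complicated invariant. The extension from polygons to general curves is in fact the technical heart of the full theorem, which is why in a graph-theory text such as this one the result is typically stated as a geometric black box and invoked without proof, exactly as the author does here.
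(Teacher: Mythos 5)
The paper does not prove this theorem at all: it is stated as an unnumbered classical result and used as a black box in the argument that $K_5$ is not planar, so there is no ``paper proof'' to compare against. Judged on its own terms, your sketch of the polygonal case is the standard and correct one: the parity of ray crossings in a generic direction is locally constant off $C$, flips exactly when an edge is crossed, takes both values, and hence separates the plane; joining same-parity points by polygonal paths that slide along edges of $C$ then gives exactly two components. This polygonal version is in fact all the book ever needs, since the planar representations used for $K_5$, $K_{3,3}$, faces, and the Pontryagin--Kuratowski argument can always be taken piecewise linear.

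The one step of your proposal that does not hold up as written is the reduction of the general continuous case to the polygonal case by uniform approximation. Even granting that you can approximate a simple closed curve uniformly by simple closed polygons $C_n$, the inside/outside structure does not obviously ``pass to the limit'': the interiors $int(C_n)$ can oscillate, a point off $C$ may lie inside infinitely many $C_n$ and outside infinitely many others, and nothing in uniform convergence alone prevents the complement of the limit curve from having more than two components or from failing the separation property. This is precisely why the rigorous proofs of the full theorem go through different machinery (Brouwer's fixed point theorem, the non-planarity of $K_{3,3}$ itself, or homology) rather than a limiting argument. So if you intend a complete proof you must either restrict the statement to polygonal (or piecewise-linear) curves, which suffices for everything in this text, or replace the approximation step by one of the genuinely topological arguments; as a sketch acknowledging that the theorem is being used as a black box, what you wrote is a fair and accurate account of where the difficulty lies.
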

        These subsets are denoted by $int(C)$ --- inner subset and $out(C)$ --- outer subset. 
        
        Consider the contrary. The cycle $C = v_1v_2v_3v_1$ is closed curve thus it divides the plane by two parts. Let's suppose without loss of generality that $v_4\in int(C)$. Consider three cycles $C_1 = v_2v_3v_4v_2, C_2 = v_3v_4v_1v_3, C_3 = v_1v_2v_4v_1$. Since $v_i\in out(C_i)$, the assumption that $v_5 \in int(C_i)$ gives contradiction (the edge $(v_5, v_i)$ should intersect $C_i$ by Jordan curve theorem) and therefore, $v_5 \in out(C_i)$ for $i = 1,2,3$. Hence, $v_5 \in out(C) = \bigcap\limits^3_{i = 1} out(C_i)$ and it gives contradiction because $v_4 \in int(C)$.
        \item $K_{1,2}, K_{1,3}$ are trees. For $K_{2,2}, K_{2,3}$ just draw this graph on plane without intersections.
        \item Use the same procedure like in~\ref{proof:k5}.
    \end{enumerate}
\end{proof}

\begin{defin}
    \bf{Inner face} of the planar graph is the inner $int(C)$ part of the plane bounded by a simple cycle $C$ of graph.
\end{defin} 

\begin{defin}
    \bf{Outer face} of the planar graph is the common outer part $out(G)$ for all simple cycles $C_i$ of graph $G: out(G) = \bigcap\limits_{i} out(C_i)$.
\end{defin}

\emph{Notations:} The set of all faces (inner and outer) is denoted by $F$.\\

\NB Any tree has only one outer face and no inner faces.

\begin{thm}[Euler]
    For any planar connected graph holds
    $$\|V\|-\|E\|+\|F\| = 2.$$
\end{thm}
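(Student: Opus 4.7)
The plan is to prove the formula by induction on the number of edges $\|E\|$, using the characterization of trees from Theorem~\ref{thm:nsctree1} together with the Jordan curve theorem cited in the preceding proof. The induction hypothesis will be that every planar connected graph with fewer than $\|E\|$ edges satisfies $\|V\|-\|E\|+\|F\|=2$, given any fixed planar drawing.

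For the base of the induction, I would handle the case where $G$ is acyclic. Then $G$ is a tree, so by Theorem~\ref{thm:nsctree1} it has $\|E\| = \|V\|-1$. Since a tree has no simple cycles, the remark preceding the theorem already tells us $\|F\| = 1$ (only the outer face exists); a careful justification would note that no Jordan curve is realised by the drawing because every closed walk must traverse at least one edge twice. Plugging in gives $\|V\|-(\|V\|-1)+1 = 2$ as required. This also establishes the smallest nontrivial case $\|V\|=1, \|E\|=0, \|F\|=1$.

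For the inductive step, suppose $G$ contains a cycle. By Lemma~\ref{lm:cycle}, $G$ contains a simple cycle $C$; pick any edge $e = (u,v)$ lying on $C$ and let $G' = G - e$. I would verify three things: first, $G'$ is still connected, because any path in $G$ that used $e$ can be rerouted through the rest of $C$; second, $\|E(G')\| = \|E\|-1$; third, and this is the crucial geometric point, $\|F(G')\| = \|F\|-1$. The inductive hypothesis applied to $G'$ then yields $\|V\|-(\|E\|-1)+(\|F\|-1) = 2$, which rearranges to the desired identity.

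The main obstacle is justifying the face count $\|F(G')\| = \|F\|-1$, which is where the Jordan curve theorem enters exactly as in the earlier planarity arguments. The idea is that the simple cycle $C$ containing $e$ is a simple closed curve in the plane, so it separates the plane into $\mathrm{int}(C)$ and $\mathrm{out}(C)$; the edge $e$ is a common boundary arc of two distinct faces $f_1 \subset \mathrm{int}(C)$ and $f_2 \subset \mathrm{out}(C)$ of $G$, and removing $e$ exactly fuses $f_1$ and $f_2$ into a single face of $G'$ while leaving all other faces unchanged. The delicate part is arguing that $f_1$ and $f_2$ really are distinct faces of $G$ (not the same face reached from two sides), and here one invokes Jordan's theorem: a curve from an interior point of $f_1$ to an interior point of $f_2$ crossing only through $e$ must cross $C$, so $f_1$ and $f_2$ cannot coincide. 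With this geometric lemma in hand, the algebra closes the induction.
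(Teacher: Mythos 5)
Your proposal is correct and follows essentially the same route as the paper: an induction that deletes an edge lying on a cycle (merging two faces into one) and bottoms out at the tree case, where Theorem~\ref{thm:nsctree1} gives $\|E\|=\|V\|-1$ and $\|F\|=1$. The only differences are cosmetic — you induct on $\|E\|$ where the paper inducts on $\|F\|$ — and your version actually supplies the Jordan-curve justification that the deleted edge borders two \emph{distinct} faces (one in $int(C)$, one in $out(C)$), a point the paper's proof asserts without argument.
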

\begin{proof}
    Use the induction by number of faces $k = \|F\|$:
    \begin{enumerate}
        \item If graph contains only one face ($k = 1$), this face should be outer face. If there are no inner faces then there are no cycles in the graph and therefore graph is tree. For tree $\|V\| = n, \|E\| = n-1$ and the condition holds.
        \item Let for any planar connected graph with $k$ faces condition holds. Consider a planar connected graph with $k+1$ faces. Since the graph is planar, there exists an edge (without intersections with any other edges) that separates the outer face and some inner face. If one delete this edge the number of faces will decrease at 1 and hence the formula will be satisfied. Therefore, the condition holds for the initial graph also.
    \end{enumerate}
\end{proof}

\begin{cor}
    Let $c$ be the number of connected components of a planar graph. 
    Then the formula holds:
    $$\|V\|-\|E\|+\|F\| = 1+c.$$
\end{cor}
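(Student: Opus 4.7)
The plan is to reduce the corollary to the connected case (Euler's theorem, just proved) by applying that theorem to each connected component separately and then summing up the equations, taking care with how faces are counted when the components are placed together in a single planar drawing.

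Concretely, let $G_1, \dots, G_c$ be the connected components of the planar graph $G$, with vertex sets $V_i$, edge sets $E_i$, and face sets $F_i$ (where each $G_i$ is drawn in the plane as inherited from the planar embedding of $G$). Each $G_i$ is a planar connected graph, so by Euler's theorem
$$\|V_i\| - \|E_i\| + \|F_i\| = 2 \qquad \text{for } i = 1, \dots, c.$$
Summing over $i$ and using that the $V_i$ partition $V$ and the $E_i$ partition $E$, I get
$$\|V\| - \|E\| + \sum_{i=1}^{c}\|F_i\| = 2c.$$

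The key step is then to relate $\sum_i \|F_i\|$ to $\|F\|$. Each $G_i$, viewed on its own, contributes its inner faces together with its own outer face. When we superimpose all components into the single planar drawing of $G$, the inner faces of the individual components remain inner faces of $G$, but the $c$ separate outer faces merge into the single outer face $out(G) = \bigcap_i out(G_i)$. Therefore
$$\sum_{i=1}^{c} \|F_i\| = \|F\| + (c-1),$$
since we have overcounted the outer face $c-1$ times. Substituting this into the summed Euler identity gives $\|V\| - \|E\| + \|F\| = 2c - (c-1) = c+1$, which is the desired formula.

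The only genuinely delicate point is the face-counting step; everything else is bookkeeping. A fully rigorous justification of $\sum_i \|F_i\| = \|F\| + (c-1)$ would require a mild argument that in any planar embedding of a disconnected graph the unbounded region of the plane lies in $out(G_i)$ for every $i$ (so the outer faces do genuinely coincide), and that inner faces of different components remain distinct inner faces of $G$ (they cannot merge, because the boundary of an inner face of $G_i$ is a simple cycle in $G_i$ which still bounds the same region in $G$). These facts follow from the Jordan curve theorem used earlier, so no new machinery is needed.
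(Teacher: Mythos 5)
Your proof is correct and follows exactly the route the paper's hint intends: apply Euler's theorem to each connected component and observe that the $c$ separate outer faces collapse into the single outer face of $G$, giving $\sum_{i}\|F_i\| = \|F\| + (c-1)$. One small caveat on your justification of that identity: when one component is drawn inside an inner face of another, that inner face does not literally \emph{remain} a face of $G$ (it gets subdivided by the nested component), but each inner face of each $G_i$ still corresponds to exactly one bounded face of $G$, so the count --- and hence the proof --- goes through unchanged.
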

\begin{proof}
    Hint: for each connected component $H:\|V(H)\|-\|E(H)\|+\|F(H)\| = \text{outer face}+1$. 
\end{proof}

\begin{cor}
    For a connected planar graph $\|E\|\leq3\|V\|-6$.   
\end{cor}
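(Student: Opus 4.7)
The plan is to use Euler's formula together with a double-counting argument that bounds the number of faces in terms of the number of edges. I will assume $\|V\|\geq 3$ (the inequality is vacuous or fails trivially for $\|V\|\leq 2$, and the author presumably has the non-degenerate case in mind); this restriction is what makes every face be bounded by at least three edge-incidences, since the graph is simple (no loops, no multi-edges).

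First I would define the boundary degree $\deg(f)$ of a face $f$ as the number of edge-sides encountered when walking around its boundary, with bridges contributing twice because the same face lies on both sides. The two key observations are:
\begin{enumerate}
\item Each edge contributes to exactly two face-sides, so $\sum_{f\in F}\deg(f)=2\|E\|$.
\item For a simple connected planar graph on at least three vertices, $\deg(f)\geq 3$ for every face $f$: a face bounded by a simple cycle has boundary length $\geq 3$ because simple graphs have no loops or multi-edges, and a face meeting a bridge picks up an extra $+2$ from that bridge which only helps.
\end{enumerate}
Combining these gives $2\|E\|=\sum_{f}\deg(f)\geq 3\|F\|$, i.e.\ $\|F\|\leq \tfrac{2}{3}\|E\|$.

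Next I would plug into Euler's formula, which gives $\|F\|=2-\|V\|+\|E\|$ for a connected planar graph, and substitute into the inequality above:
$$2-\|V\|+\|E\|\;\leq\;\tfrac{2}{3}\|E\|.$$
Multiplying by $3$ and rearranging yields $\|E\|\leq 3\|V\|-6$, which is the desired bound.

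The only delicate point is observation (2): justifying $\deg(f)\geq 3$ when the graph has leaves or bridges or is a tree. I would handle this by treating the boundary of each face as a closed walk in the planar embedding rather than as a simple cycle, and noting that in a simple graph on $\|V\|\geq 3$ vertices this walk cannot have length $1$ (no loops) or $2$ (no multi-edges, and traversing a single edge out-and-back gives length $2$ only if that edge is the whole graph, which requires $\|V\|=2$). Once that case analysis is done, the rest is routine arithmetic.
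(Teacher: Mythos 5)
Your proof is correct and follows essentially the same route as the paper: establish $3\|F\|\le 2\|E\|$ by double-counting edge--face incidences and then substitute into Euler's formula. You are in fact somewhat more careful than the paper, which glosses over bridges, trees, and the implicit hypothesis $\|V\|\ge 3$ that your boundary-walk argument handles explicitly.
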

\begin{proof}
    Consider some representation of our planar graph $G$ (e.g. use Auslander-Parter algorithm in subsection~\ref{subsc:planartest}). For this representation every face corresponds to a simple cycle and consists of $\geq 3$ edges and every edge belongs to a boundary of two different faces. Hence, for the graph $G: 2\|E\| \geq 3\|F\|$. By the Euler's theorem:
    $$3\|F\| = 6-3\|V\|+3\|E\|\leq 2\|E\|\imp\|E\|\leq3\|V\|-6.$$
\end{proof}

\begin{cor}
   Every connected planar graph has a vertex of degree at most five.   
\end{cor}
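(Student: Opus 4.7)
The plan is to argue by contradiction, combining the edge bound $\|E\|\leq 3\|V\|-6$ from the previous corollary with the Handshaking lemma. First I would dispose of the trivial small cases: if $\|V\|\leq 6$, then in a simple graph every vertex has degree at most $\|V\|-1\leq 5$, so the conclusion is immediate, and in particular the previous corollary $\|E\|\leq 3\|V\|-6$ can be safely invoked for the remaining range $\|V\|\geq 7$ (where it gives a nontrivial bound).

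Next, suppose for contradiction that every vertex has $\deg v_i\geq 6$. Summing over all vertices and applying the Handshaking lemma gives
$$2\|E\| \;=\; \sum_{v_i\in V}\deg v_i \;\geq\; 6\|V\|,$$
so $\|E\|\geq 3\|V\|$. On the other hand, the previous corollary yields $\|E\|\leq 3\|V\|-6$. Combining these inequalities produces $3\|V\|\leq 3\|V\|-6$, i.e.\ $0\leq -6$, which is the desired contradiction.

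There is no real obstacle here beyond bookkeeping; the only subtlety is making sure the edge-bound corollary is applied in a regime where it is valid (which is why the small-$\|V\|$ cases are peeled off first). The core of the argument is just the clean pigeonhole-style step: ``average degree is less than $6$, so some vertex has degree at most $5$,'' made rigorous via the Handshaking lemma and the Euler-formula consequence $\|E\|\leq 3\|V\|-6$.
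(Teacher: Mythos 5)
Your proof is correct and follows essentially the same route as the paper: assume all degrees are at least six, apply the Handshaking lemma to get $2\|E\|\geq 6\|V\|$, and contradict the bound $\|E\|\leq 3\|V\|-6$ from the preceding corollary. Your extra peeling off of the small cases $\|V\|\leq 6$ is a harmless (and arguably prudent) refinement, but otherwise the argument is identical.
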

\begin{proof}
    Assume the contrary: all vertices have degrees greater or equal than six. By Handshaking lemma $6\|V\|-12\geq 2\|E\| = \sum_{v_i\in V}{\deg{v_i}}\geq6\|V\|$ holds contradiction.
\end{proof}

Let's consider some operations of a planar graphs which preserve the planarity:

\begin{lm}\label{lm:plan1}
    Every subgraph of a planar graph is planar.
\end{lm}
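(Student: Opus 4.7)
The plan is to argue directly from the definition of planarity: a graph is planar precisely when it admits a drawing in the plane with no edge crossings. The key idea is that any planar drawing of the ambient graph already contains, as a sub-drawing, a planar drawing of every sub-graph.

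First I would fix a planar representation of the given planar graph $G$, i.e.\ an assignment of points in the plane to the vertices of $G$ and of non-self-intersecting curves to the edges of $G$, such that two distinct edge-curves meet only at a shared endpoint. Then, for an arbitrary sub-graph $H\sub G$, I would consider the restriction of this drawing to $V(H)$ and $E(H)$: keep the same point for every vertex $v\in V(H)$ and the same curve for every edge $e\in E(H)$, and discard the rest.

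Next I would verify that this restricted picture is a valid planar representation of $H$. The endpoints of each kept curve are vertices of $H$ by the definition of sub-graph, so each edge of $H$ is drawn between its correct endpoints. Since no two edges of $G$ cross in the original drawing, in particular no two edges of $E(H)\sub E(G)$ cross, so the restricted drawing has no edge crossings. Hence $H$ admits a planar representation and is planar by definition.

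There is essentially no obstacle here; the only subtlety is being careful that the notion of sub-graph used in the paper allows one to legitimately \emph{drop} vertices and edges while keeping the rest of the drawing unchanged, which is immediate from the sub-graph definition given earlier. I would therefore keep the write-up short, just one short paragraph pointing to the planar embedding of $G$ and observing that removing vertices and edges cannot create crossings.
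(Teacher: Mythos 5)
Your argument is correct and is exactly the paper's intended proof: the paper simply says the lemma ``holds from the definition,'' and your restriction-of-the-embedding argument is the spelled-out version of that. No gap.
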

\begin{proof}
    Holds from the definition.
\end{proof}

\begin{defin}
    \bf{Subdivision} of an edge is an operation when the edge is replaced by a path of the length two: the internal vertex is added to the graph. 
\end{defin}

\begin{lm}\label{lm:plan2}
    Every subdivision of the non-planar graph is non-planar.
\end{lm}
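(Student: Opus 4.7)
The plan is to argue by contraposition: I will show that if the subdivision $G'$ is planar, then the original graph $G$ must also be planar, contradicting the hypothesis.

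Let $G$ be non-planar and let $G'$ be the graph obtained by subdividing a chosen edge $e = (u, v) \in E(G)$, so that $G'$ contains a fresh vertex $w$ of degree exactly $2$ whose only neighbours are $u$ and $v$, with $e$ replaced by the two edges $(u, w)$ and $(w, v)$. Assume for contradiction that $G'$ admits a planar drawing. In this drawing, the edges $(u, w)$ and $(w, v)$ are two Jordan arcs meeting only at the point $w$, and no other edge of $G'$ passes through $w$ because $\deg w = 2$; hence their concatenation is a single Jordan arc from $u$ to $v$ whose interior does not meet any other edge of $G'$.

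From this I would construct a planar drawing of $G$ directly: keep every vertex of $G$ at the position it occupied in the drawing of $G'$, keep every edge of $E(G') \setminus \{(u,w),(w,v)\}$ drawn exactly as it is, forget the point $w$, and draw the edge $e = (u,v)$ of $G$ along the concatenated arc just described. Only a small neighbourhood of the now-absent vertex $w$ is affected, and the concatenated arc inherits the non-crossing property from the two original arcs, so no new crossings are created. The result is a planar drawing of $G$, contradicting the non-planarity of $G$.

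The only point that needs careful justification is this "edge smoothing" step: that the union of the two arcs meeting at a degree-$2$ vertex can indeed be used as a single edge in a valid planar drawing. This hinges entirely on $w$ being a new vertex of degree exactly $2$ — a property built into the definition of subdivision — so nothing else is drawn at $w$ and the two half-edges join cleanly into one simple curve. Beyond that observation, everything reduces to routine bookkeeping that the modified drawing still satisfies the definition of a planar representation of $G$.
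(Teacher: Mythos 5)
Your argument is correct and is exactly the paper's intended approach: the paper's entire proof is the hint ``consider the inverse operation,'' i.e.\ smooth out the degree-2 subdivision vertex to recover a planar drawing of the original graph, which is precisely what you carry out in detail.
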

\begin{proof}
    Just consider the inverse operation.
\end{proof}

By these two lemma's we can easy prove the necessary condition of one of the most important theorem about planar graphs:

\begin{thm*}[Pontryagin-Kuratowski]
    A connected graph is planar iff. it doesn't contain subdivisions of complete graphs $K_5$ and $K_{3,3}$.
\end{thm*}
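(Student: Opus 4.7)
The plan is to treat the two directions very asymmetrically. The necessity (only if) direction is short: by Lemma~\ref{lm:plan1} every subgraph of a planar graph is planar, and by Lemma~\ref{lm:plan2} any subdivision of a non-planar graph is non-planar, i.e.\ a graph is planar only if each of its subdivisions is. So if a planar $G$ contained a subgraph $H$ which is a subdivision of $K_5$ or $K_{3,3}$, then $H$ would be planar, and undoing the subdivisions in $H$ (repeatedly contracting degree-$2$ internal vertices) would yield a planar embedding of $K_5$ or $K_{3,3}$, contradicting the two non-planarity examples already established in the text.

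For sufficiency I would argue by contrapositive and take $G$ to be an edge-minimal non-planar graph containing no subdivision of $K_5$ or $K_{3,3}$. The first step is to reduce to the $3$-connected case. If $G$ has a cut-vertex $v$, then $G$ is the union of two proper subgraphs sharing only $v$; by minimality each is planar, and one can glue their embeddings at $v$ to embed $G$, a contradiction. A similar but slightly more delicate argument handles a $2$-vertex cut $\{u,v\}$: one adds the edge $uv$ to each side (if missing), uses minimality to embed each augmented side so that $uv$ lies on the outer face, and glues; the augmentation never creates a new Kuratowski subdivision because any such subdivision using the new edge would already give one in $G$ through the other side. Thus the minimal counterexample is $3$-connected.

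The main work is then the $3$-connected case, and this is where I expect the real obstacle. Following Thomassen's approach, I would pick an edge $e=xy$ of $G$ and contract it to obtain $G/e$. A lemma (Thomassen's contraction lemma) says that a $3$-connected graph on at least five vertices always has an edge whose contraction is still $3$-connected; apply this. By minimality $G/e$ is planar, and being $3$-connected it has an essentially unique combinatorial embedding in which every face is bounded by a cycle. Now one tries to re-expand the contracted vertex $z=\{x,y\}$ back into $x$ and $y$ inside the face structure: the neighbours of $z$ lie on a cycle $C$ bounding some face containing $z$, and one must split them between $x$ and $y$. Either the split can be carried out planarly, and then $G$ itself is planar (contradicting non-planarity), or the obstruction to the split — three neighbours of $x$ interleaved with three neighbours of $y$ on $C$, or analogous configurations — directly exhibits a subdivision of $K_{3,3}$ or $K_5$ in $G$, contradicting the assumption.

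The hard part is exactly this last case analysis: enumerating the ways the neighbours of $x$ and $y$ can be distributed around the facial cycle $C$ and, in each obstructed configuration, explicitly producing the vertices and internally disjoint paths forming the $K_5$- or $K_{3,3}$-subdivision. The $3$-connectivity of $G$ is used crucially here to guarantee that the needed connecting paths through the rest of $G$ actually exist. I would spend the bulk of the proof writing out these configurations carefully, since the rest of the argument is structural and comparatively routine.
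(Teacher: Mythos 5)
Your necessity argument and your reduction to the $3$-connected case coincide with the paper's: Lemmas~\ref{lm:plan1} and~\ref{lm:plan2} give necessity, and Lemmas~\ref{lm:plan2conn} and~\ref{lm:k5k33} carry out exactly the cut-vertex and $2$-cut gluing you describe (using Corollary~\ref{cor:outface} to put the shared vertices on the outer face). From there the two proofs genuinely diverge. The paper \emph{deletes} an edge $(u,v)$ from the $3$-connected minimal counterexample, chooses a cycle $C$ through $u$ and $v$ in the planar $G\setminus(u,v)$ maximizing the edges in $int(C)$, and classifies the inner and outer bridges of $C$ by their skew relations to $(u,v)$, extracting a $K_{3,3}$- or $K_5$-subdivision from each obstructed configuration. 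You instead \emph{contract} an edge (Thomassen's method), re-embed, and re-expand. Both routes work; the contraction route gives a cleaner induction and never has to reason about a globally maximal cycle, while the paper's deletion route stays entirely within subgraphs and subdivisions and never touches minors.

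That last point is where your sketch has a concrete gap. Your $G$ is edge-minimal among non-planar graphs with no Kuratowski subdivision, and you invoke minimality for $G/e$ --- but $G/e$ is a minor, not a subgraph, so before concluding ``by minimality $G/e$ is planar'' you must show that $G/e$ still contains no subdivision of $K_5$ or $K_{3,3}$; equivalently, that a Kuratowski subdivision in $G/e$ lifts to one in $G$. This is true but not automatic: if the contracted vertex $z$ is a branch vertex of a $K_5$-subdivision in $G/e$ and, upon expansion, each of $x$ and $y$ inherits exactly two of the four branch paths, then $G$ contains only a $K_{3,3}$-subdivision, not a $K_5$-subdivision, so the lemma must be stated and proved in this mixed form. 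Without it the induction does not close. The remaining deferred item --- enumerating how the neighbours of $x$ and $y$ interleave on the facial cycle $C$ and exhibiting a Kuratowski subdivision in each bad configuration, using $3$-connectivity to supply the connecting paths --- is sound in outline and is precisely the analogue of the paper's bridge case analysis, but as written it is a plan rather than a proof.
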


We will provide the full proof of this theorem in section~\ref{subsec:planconn}.

\subsection{Polyhedrons}

\begin{defin}
    \bf{Polyhedron} is a set of polygons such that
    \begin{enumerate}
        \item Each side of a polygon should be the side of one another polygon,
        \item\label{poly:ev} The intersection of any two polygons can be just either edge of vertex,
        \item There exists path through polygons between any two polygons.
    \end{enumerate}
\end{defin}\noindent
Let's introduce another theorem like Jordan curve theorem for hyper-surfaces in $\rR^n$:

\begin{thm*}[Jordan-Brouwer]
Any connected compact simple hypersurface $S^{n-1}$ divides $\rR^n$ for two disjoint open subsets and any curve starting from point of the first subset and ending in a point of the second subset intersect $S^{n-1}$.
\end{thm*}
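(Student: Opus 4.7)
The plan is to follow the standard algebraic-topology route via Alexander duality, since for $n \geq 2$ no purely elementary argument (analogous to the $n=2$ Jordan curve case) is known that avoids some homological machinery. First, I would pass from $\rR^n$ to $S^n$ by one-point compactification, so that the complement $\rR^n \setminus S^{n-1}$ becomes $S^n \setminus (S^{n-1} \cup \{\infty\})$. Since $\{\infty\}$ is a single point, adding or removing it does not change the number of path components of the open complement (it only determines which component is unbounded in $\rR^n$).

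Second, I would invoke Alexander duality: for a compact subset $K \subset S^n$ that is homeomorphic to the standard $S^{n-1}$, one has the isomorphism $\tilde{H}_0(S^n \setminus K;\rZ) \cong \check{H}^{n-1}(K;\rZ) \cong \rZ$. Because $\tilde{H}_0$ counts the number of connected components minus one, the complement $S^n \setminus S^{n-1}$ has exactly two components, and transferring back to $\rR^n$ gives the desired splitting into two disjoint open sets $U_1, U_2$.

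Third, the curve-crossing statement is then a short topological consequence: if $\gamma \colon [0,1] \to \rR^n$ is continuous with $\gamma(0) \in U_1$ and $\gamma(1) \in U_2$, and if $\gamma([0,1]) \cap S^{n-1} = \emptyset$, then $\gamma([0,1])$ would be a connected subset of $\rR^n \setminus S^{n-1}$ meeting both components, contradicting that $U_1$ and $U_2$ are distinct connected components. So $\gamma$ must meet $S^{n-1}$.

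The main obstacle is, of course, Alexander duality itself, whose proof requires singular (or \v{C}ech) cohomology and is well outside the scope of a graph theory text; this is why the result is stated as a cited classical theorem rather than proved in full. A more hands-on alternative I would consider, still nontrivial but avoiding duality, is a degree-theoretic argument: for each $p \in \rR^n \setminus S^{n-1}$ define the Brouwer degree of the map $x \mapsto (x-p)/\lvert x-p \rvert$ from $S^{n-1}$ to itself, verify that this degree is locally constant in $p$, takes exactly two distinct values (corresponding to the two components), and that along a curve crossing $S^{n-1}$ transversally the degree jumps, forcing the curve to meet the hypersurface. Either way, the heavy lifting sits in the global topological invariant, not in the combinatorial or metric structure of $\rR^n$.
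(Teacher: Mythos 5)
The paper does not prove this statement at all: it is introduced with ``Let's introduce another theorem like Jordan curve theorem'' and used purely as a cited classical black box (no \verb|proof| environment follows it), so there is no in-paper argument to compare yours against. Your outline is the standard and correct route: one-point compactify, apply Alexander duality to get $\tilde{H}_0(S^n\setminus K)\cong\check{H}^{n-1}(K)\cong\rZ$ and hence exactly two complementary components, and then derive the curve-crossing claim from connectedness of $\gamma([0,1])$ --- that last step is complete and airtight as written. One small caution: you compute $\check{H}^{n-1}(K)$ under the assumption that $K$ is homeomorphic to the standard $S^{n-1}$, whereas the theorem as stated speaks of a general connected compact hypersurface; for that generality you should run Alexander duality with $\rZ_2$ coefficients (so that $\check{H}^{n-1}(K;\rZ_2)\cong\rZ_2$ for any compact connected $(n-1)$-manifold, orientable or not) to still conclude there are exactly two components. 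With that adjustment, and granting Alexander duality itself, your sketch is a correct proof; the degree-theoretic alternative you mention is likewise standard and would serve equally well.
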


We can consider each polyhedron $P$ like 2-dimensional surface in $\rR^3$ and thus it divides the 3-dimensional space by two disjoint parts inner $int(P)$ and outer $out(P)$. The boundary of polyhedron is denoted by $\d(P)$.

\begin{defin}
    Polyhedron is said to be \bf{convex} \ifof the edge from any point of polyhedron to any inner point doesn't intersect the polyhedron.  
\end{defin}

This definition can be reformulate as any point of polyhedron is visible from any inner point.

\begin{thm}[Euler for convex polyhedrons]\label{thm:eulerpoly}
    For convex polyhedrons the following holds:
    $$\|V\|-\|E\|+\|F\| = 2,$$
    where $V, E$ and $F$ are vertices, edges and faces of polyhedron respectively. 
\end{thm}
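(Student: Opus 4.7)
The plan is to reduce the polyhedral Euler formula to the planar-graph version already proved (Euler's theorem). The key tool is a central projection that turns the $2$-dimensional surface $\d(P)$ of a convex polyhedron $P$ into a planar connected graph with the same number of vertices and edges, and whose faces correspond bijectively to the faces of $P$.

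First I would show that the $1$-skeleton of $P$ — the graph $G$ whose vertices and edges are the vertices and edges of the polyhedron — is connected. This follows from part~3 of the polyhedron definition together with part~1: any two polygonal faces are joined by a chain of faces sharing edges, and within each polygon the bounding vertices and edges form a cycle, so one can walk along edges of $G$ between any two vertices. Hence $G$ is a connected graph to which the planar Euler theorem may be applied once planarity is established.

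Next I would construct a planar drawing of $G$ by Schlegel-style projection. Pick any face $F_0$ of $P$ and a point $p$ just outside $P$, very close to the plane of $F_0$ but on the outer side, so that $p$ lies in $\mathrm{out}(P)$ and sees every point of $\d(P)\setminus F_0$ through the open face $F_0$. Centrally project $\d(P)\setminus F_0$ from $p$ onto the plane of $F_0$. By convexity, the ray from $p$ through any point of $\d(P)\setminus F_0$ meets $\d(P)$ exactly once outside $F_0$ (this uses the definition of convexity: the segment from $p$ to an inner point crosses $\d(P)$ once, and any other crossing would contradict convexity). Therefore the projection is injective on $\d(P)\setminus F_0$, and it maps vertices to distinct points and edges to non-crossing straight segments inside $F_0$. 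The result is a planar embedding of $G$ in which the image of $F_0$ becomes the outer face, while every other face of $P$ becomes a bounded (inner) face.

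The main obstacle is precisely this geometric step: making sure the projection is well-defined and injective, and that faces of $P$ correspond to faces of the planar drawing. This is where convexity is used essentially; without it, the projection can fold over itself, and indeed the theorem fails for non-convex polyhedra (e.g.\ polyhedra with handles). Once this correspondence is set up, counting gives $\|V(G)\|=\|V\|$, $\|E(G)\|=\|E\|$, and $\|F(G)\|=\|F\|$ (the outer face of $G$ corresponds to $F_0$, the other $\|F\|-1$ faces to the remaining faces of $P$). Applying the planar Euler theorem to the connected planar graph $G$ yields $\|V\|-\|E\|+\|F\|=2$, as required.
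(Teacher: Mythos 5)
Your proof is correct and follows the same overall strategy as the paper — reduce to the already-proved planar Euler theorem by exhibiting a planar embedding of the polyhedron's $1$-skeleton in which one chosen face becomes the outer face and all counts of vertices, edges and faces are preserved — but the geometric construction you use is different. The paper performs the reduction in two steps: first a radial projection from an interior point maps $\partial(P)$ bijectively onto a sphere (this is where convexity enters), and then a stereographic projection from a point inside a chosen face $F_i$ carries the spherical graph to the plane with $F_i$ going to the outer face. You instead use a single Schlegel-type central projection from a point $p$ placed just outside the chosen face $F_0$, arguing via convexity that each ray from $p$ meets $\partial(P)\setminus F_0$ exactly once. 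Both constructions are standard and both carry the same burden of rigor (your injectivity claim for the central projection versus the paper's ``it is easy to see that $G''$ is planar''). The paper's two-step route has the side benefit that the sphere-plus-stereographic-projection machinery is reused immediately afterwards to prove the corollary that any face of a planar graph can be made the outer face; your one-step projection is more direct but would need a separate argument (or the same sphere trick) to recover that corollary. Your preliminary observation that the $1$-skeleton is connected, deduced from conditions 1 and 3 of the polyhedron definition, is a point the paper glosses over and is a welcome addition.
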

\begin{proof}
    By the convex definition there exists the bijection from polyhedron to a sphere around this polyhedron with the center in inner point. By this bijection the graph of polyhedron $G(P)$ is mapped to the graph on the sphere $G'$ with the same number of vertices, edges and faces. Consider the stereographic projection from the sphere with the center inside some face $F_i$ of graph $G'$ to the 2-dimensional plane. This graph $G'$ maps bijectively (because of stereographic projection property, see figure~\ref{fig:eulerpol}) to the graph on the plane $G''$ such that inner faces of $G'$ except $F_i$ maps to inner faces of $G''$ and $F_i$ maps to the outer face of $G''$. It is easy to see that the graph $G''$ is planar (consider contrary and use bijections properties) with the same number of vertices, edges and faces. Therefore by using Euler theorem for the graph $G''$ the condition holds.    
\end{proof}

\begin{figure}[H]
\vspace{-12pt}
    \centering
	\includegraphics[width=0.8\textwidth]{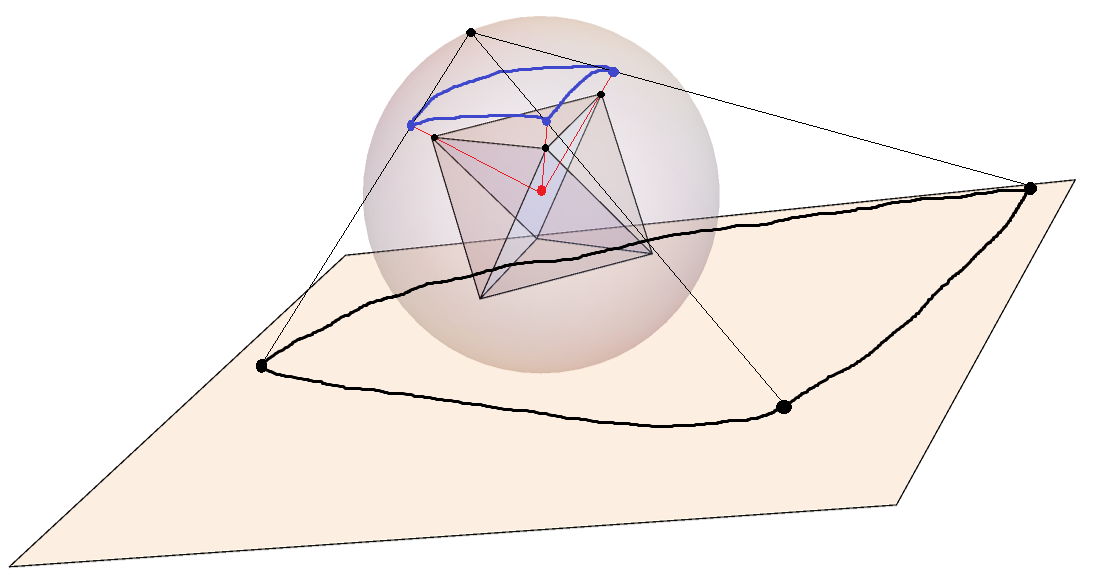}
	\caption{Two bijections from Euler's theorem about convex polyhedrons.}\label{fig:eulerpol}
\end{figure}

\begin{cor}\label{cor:outface}
    For any face of a planar graph there exists the representation where this face is outer face.    
\end{cor}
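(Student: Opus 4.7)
The plan is to reuse the stereographic projection machinery from the proof of Theorem~\ref{thm:eulerpoly}. Let $G$ be a planar graph equipped with some planar embedding in which $F$ is the chosen face (possibly an inner face); the goal is to produce a new planar embedding of $G$ in which $F$ plays the role of the outer face.

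First I would lift the given planar embedding to the sphere $S^{2}$ by the inverse stereographic projection, picking the pole so that it does not lie on any edge or vertex of $G$. Since inverse stereographic projection is a homeomorphism from the plane onto $S^{2}$ minus one point, the resulting picture $G'$ on the sphere has the same vertex-edge-face combinatorics as the original embedding, with the key observation that on $S^{2}$ there is no distinguished face: the old outer face is now simply one of the regions cut out by $G'$, on equal footing with the images of the former inner faces. In particular, the image $F'$ of $F$ is just an open region of $S^{2}$ bounded by edges of $G'$.

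Next I would choose any point $p$ in the interior of $F'$, avoiding all vertices and edges of $G'$, and project $S^{2}$ stereographically from $p$ back to the plane. This is again a homeomorphism from $S^{2}\setminus\{p\}$ onto the plane, so the image $G''$ is a planar embedding of $G$ with the same combinatorial face structure as $G'$ (hence as the original embedding). Because $p$ lies in the interior of $F'$ and is sent to infinity, the face $F'$ maps onto the unique unbounded region of the plane; every other face of $G'$ maps to a bounded region. Thus $F$ has become the outer face of $G''$, as required.

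The only delicate step is verifying that stereographic projection genuinely preserves planarity (no new edge crossings appear) and the combinatorics of faces. This is standard: stereographic projection is a homeomorphism on its domain, so it carries a planar embedding to a planar embedding and sends the boundary of a region to the boundary of its image. The small technical care is just in choosing the two base points (first the pole, then $p$) to miss the $1$-skeleton of $G$ on the sphere, which is possible since that $1$-skeleton is a closed set of measure zero on $S^{2}$.
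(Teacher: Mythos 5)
Your proof is correct and follows essentially the same route as the paper: lift the embedding to the sphere by inverse stereographic projection, then project back from a point inside the chosen face so that it becomes the unbounded face. The extra care you take in choosing the projection centers off the $1$-skeleton is a welcome but minor refinement of the paper's argument.
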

\begin{proof}
    Let's use the stereographic projection like in the theroem~\ref{thm:eulerpoly} two times. First, let's project the planar graph from the plane to sphere using inverse stereographic projection with any center. Second, choose the center of second stereographic projection inside the chosen face. By the composition of these projections this face becomes outer face.
\end{proof}

Since the Euler theorem holds for polyhedrons, upper estimation $\|E\|\leq3\|V\|-6$ also holds. Let's give lower estimation for polyhedrons. Since the intersection of any two polygons can be just either edge or vertex, any vertex degree must be $\geq 3$. Therefore, by using Handshaking lemma $2\|E\|\geq 3\|V\|$ and thus, $1.5\|V\|\leq\|E\|\leq3\|V\|-6$ holds for polyhedrons.

\begin{defin}
    \bf{Platonic solid} is a convex polyhedron, that consists of the same regular polygons and the vertices have the same degrees. 
\end{defin}

\begin{ex}
Let's $n$ be the number of vertices of polygons of a platonic solid and $deg$ be degree of any vertex. The following holds for platonic solids:
$$n\|F\| = deg \|V\| = 2\|E\|.$$

Let's multiply Euler formula on $2n$ and use the equation before.
$$2\,n\|V\|-2\,n\|E\|+2\,n\|F\| = 4\,n,$$
$$2\,n\|V\|-n\,deg\|V\|+2\,deg\|V\| = 4\,n,$$
$$\|V\|\bigl(2\,n-deg\,(n-2)\bigr) = 4\,n.$$

The number $n>0$ and $\|V\|>0$, thus $2\,n-deg\,(n-2)>0$ and therefore, $(n-2)\,deg<2\,n$.

Let's use this equation and this estimation to classify all platonic solids (see figure~\ref{fig:platonic}). Let's also note that the minimum number of $n = 3$ and the minimum degree $deg = 3$:
\begin{enumerate}
    \item $n = 3\imp deg<6,$
    \begin{enumerate}
        \item $deg  = 3\imp \|V\| = 4.$ This platonic solid is called \emph{Tetrahedron}.
        \item $deg  = 4\imp \|V\| = 6.$ This platonic solid is called \emph{Octahedron}.
        \item $deg  = 5\imp \|V\| = 12.$ This platonic solid is called \emph{Icosahedron}.
    \end{enumerate}
        \item $n = 4\imp 2\,deg<8\imp deg<4,$
        \begin{enumerate}
            \item $deg  = 3\imp \|V\| = 8.$ This platonic solid is called \emph{Cube}.
        \end{enumerate}
    \item $n = 5\imp 3\,deg<10\imp deg<4,$
        \begin{enumerate}
            \item $deg  = 3\imp \|V\| = 20.$ This platonic solid is called \emph{Dodecahedron}.
        \end{enumerate}
    \item $n \geq 6,$ Let's use the estimation again:
    $$0>n\,deg-2\,deg-2\,n= n\,(deg-2)-2\,deg\geq 4\,deg-12.$$
    This gives the contradiction, since $deg\geq 3$. 
    \end{enumerate}
\end{ex}

Using Handshaking lemma and Euler theorem~\ref{thm:eulerpoly} the number of edges and faces can be found. As a result the full classification is obtained:
\begin{center}
\begin{tabular}{|c|c c c c c|}
    \hline
    Name & n & deg & $\|V\|$ & $\|E\|$ & $\|F\|$ \\
    \hline
    Tetrahedron & 3 & 3 & 4 & 6 & 4 \\
    Octahedron & 3 & 4 & 6 & 12 & 8 \\
    Icosahedron & 3 & 5 & 12 & 30 & 20 \\
    Cube & 4 & 3 & 8 & 12 & 6 \\
    Dodecahedron & 5 & 3 & 20 & 30 & 12 \\
    \hline
\end{tabular}
\end{center}

\begin{figure}[H]
    \centering
    \vspace{-10pt}
	\includegraphics[width=0.77\textwidth]{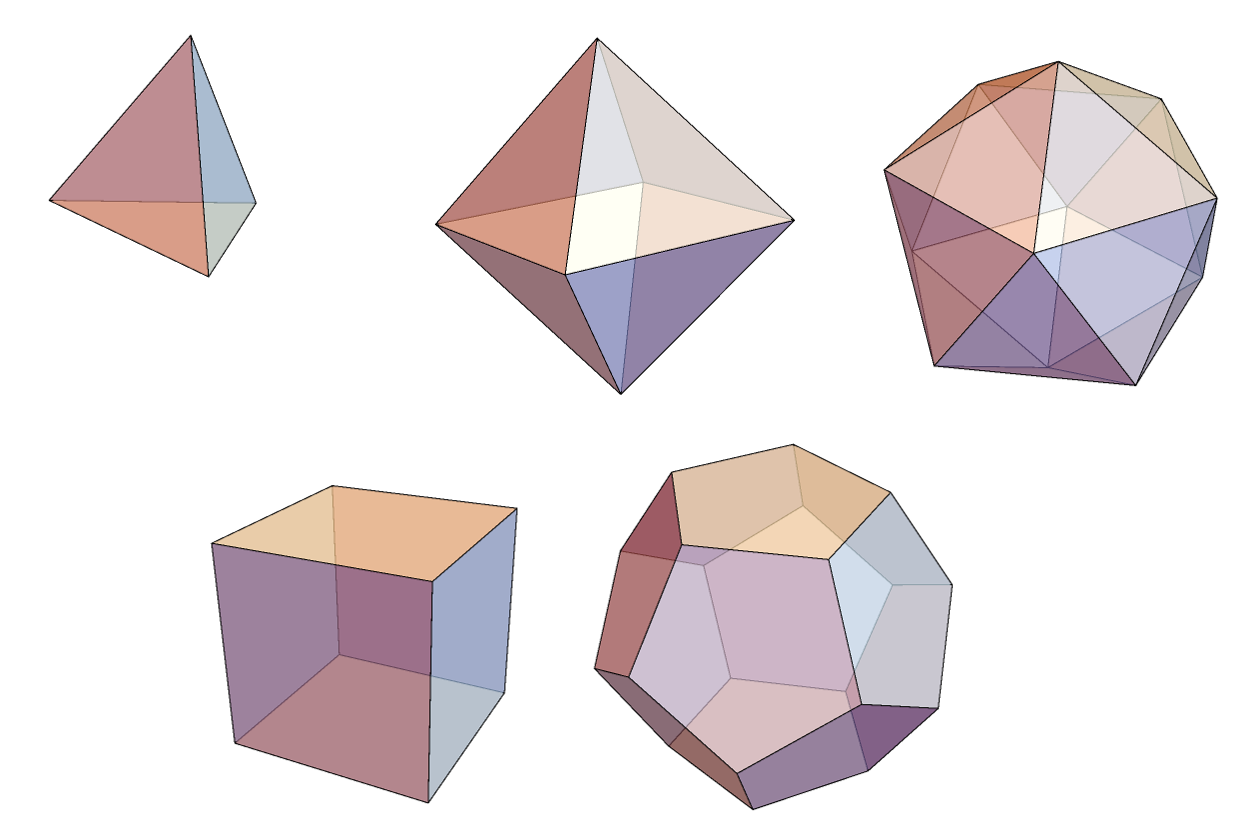}
	\caption{All platonic solids: Tetrahedron, Octahedron, Icosahedron (top), Cube, Dodecahedron (bottom).\label{fig:platonic}}
\end{figure}


\section{$K$-connectivity}

\subsection{$\k$-vertex and $\l$-edge connectivity}

\begin{defin}
    A set of vertices (edges) is called \bf{$k$-vertex (edge) cut} if a graph becomes not connected after the deletion of this set.
\end{defin}

\begin{defin}
    A graph is called \bf{$\k$-vertex (edge) connected} \ifof it doesn't have any $k-1$-vertex (edge) cuts. 
\end{defin}

\NB $\k$-vertex connectivity is also shortly called \emph{$\k$-connectivity}.

\begin{ex}
    \begin{enumerate}
        \item We assume that \emph{empty graph} and the graph with only one vertex are not connected. 
        \item 1-connected graph is just connected.
        \item If a graph is $k$-vertex (edge) connected then it is also $k-1$-vertex (edge) connected, $k-2$-vertex (edge) connected and so on.
        \item The complete graph $K_n$ is $n-1$-vertex and edge connected.
        \item The complete bipartite graph $K_{n, m}$ is $\min(n, m)$-vertex and edge connected.
    \end{enumerate}
\end{ex}

\begin{defin}
    The paths between two vertices are called \bf{k-vertex (edge) independent} \ifof there exist $k$ paths between these vertices which consist of disjoint sets of vertices (edges).   
\end{defin}

Consider another example:
\begin{st}\label{st:kdp1}
     Paths between any two vertices of complete graph $K_{d+1}$ are $d$-vertex (edge) independent.
\end{st}
\begin{proof}
    Consider two vertices $v_i, v_j$. Let's show $d$-vertex (edge) independent paths:
    \begin{enumerate}
        \item One path: $v_iv_j$.
        \item $d-1$ paths: $v_iv_kv_j$, for $v_k\in V\setminus\{v_i, v_j\}$.
    \end{enumerate}
\end{proof}

\begin{thm*}[Menger]
     A graph is $\k$-vertex (edge) connected iff. for any two vertices there exist $\k$-vertex (edge) independent paths.
\end{thm*}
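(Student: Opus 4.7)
The plan is to establish both implications; the easy direction (existence of $\k$ independent paths implies $\k$-connectivity) is a short counting argument, while the converse, which extracts $\k$ disjoint paths from $\k$-connectivity, is the substantive content.

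\textbf{Sufficiency.} Suppose every pair of vertices $u, v$ admits $\k$ pairwise internally vertex-disjoint (respectively edge-disjoint) paths. Take any set $S$ with $|S| < \k$ consisting of vertices (respectively edges), and any $u, v$ not in $S$. The $\k$ disjoint paths share at most $|S| < \k$ elements with $S$, so at least one path survives in $G - S$. Hence $G - S$ is connected, which shows $G$ admits no $(\k-1)$-cut, i.e.\ $G$ is $\k$-vertex (edge) connected.

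\textbf{Necessity, reformulation.} Assume $G$ is $\k$-vertex-connected; fix $u, v \in V$. I would reformulate the goal as the max-min identity: the maximum number of pairwise internally vertex-disjoint $u$-$v$ paths equals the minimum size of a $u$-$v$ vertex separator $S \subseteq V \setminus \{u, v\}$. The $\leq$ direction is the sufficiency argument above. The $\geq$ direction, combined with the fact that $\k$-connectivity forces every such separator to have size $\geq \k$ (when $uv \notin E$), yields the desired $\k$ paths. If $uv \in E$, the edge itself is one path and we apply the identity to $G - uv$ to produce the remaining $\k - 1$ internally disjoint paths.

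\textbf{Induction.} The max-min identity I would prove by induction on $|E(G)|$. Base case: $G$ is small enough that $G$ is essentially a subgraph of $K_{\k+1}$, where Statement~\ref{st:kdp1} supplies the required paths explicitly. Inductive step: pick an edge $e = xy$ with $\{x, y\} \neq \{u, v\}$. Either (i) $G - e$ still has minimum separator size $\geq \k$, so induction on $G - e$ gives $\k$ paths that also lie in $G$; or (ii) deleting $e$ strictly decreases the minimum separator size, in which case a minimum separator of $G - e$ augmented by one endpoint of $e$ yields a separator of $G$ of size $< \k$, contradicting $\k$-connectivity. A parallel branch considers the contraction $G / e$, applies induction there, and lifts the disjoint paths back to $G$.

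\textbf{Main obstacle and edge version.} The main technical difficulty is the lifting step: when disjoint paths are produced in $G - e$, $G / e$, or in reduced graphs obtained via a separator, they must be translated back to $G$ while preserving pairwise internal disjointness, which typically requires an exchange argument around $e$ and careful treatment of separators that contain neighbors of $u$ or $v$. For the edge-disjoint version, I would reduce to the vertex-disjoint version by passing to the \emph{subdivision} of $G$ (replace every edge by a path of length two through a new degree-two vertex): internally vertex-disjoint $u$-$v$ paths in the subdivision are in bijection with edge-disjoint $u$-$v$ paths in $G$, and the minimum vertex separator in the subdivision coincides with the minimum edge separator in $G$. An alternative unified proof via max-flow min-cut exists but is not pursued here, since the flow framework has not yet been developed in the text.
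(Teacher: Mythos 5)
Your sufficiency argument is fine, and your overall plan for the vertex case (prove the local max--min identity by induction on the number of edges) is the same strategy the paper uses. But two concrete steps in your proposal are broken. First, in the inductive dichotomy, case (ii) does not produce a contradiction: deleting a single edge $e=xy$ can lower the minimum $u$--$v$ separator size by at most one, so if $G-e$ has a separator $S$ with $|S|=\k-1$, then $S\cup\{x\}$ and $S\cup\{y\}$ are separators of $G$ of size exactly $\k$ --- perfectly consistent with $\k$-connectivity. This is precisely the hard case, not a contradiction. The paper's proof handles it by applying the induction hypothesis twice, to get a $\k$-fan of disjoint paths from $u$ to the separator $S\cup\{x\}$ and another from $S\cup\{y\}$ to $v$, and then gluing the two fans along $S$ (together with the edge $e$); that gluing argument is the substance of the proof and is absent from your sketch, which instead dismisses the case with a false claim.

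Second, your reduction of the edge version to the vertex version via subdivision is incorrect. In the subdivision the original vertices of $G$ survive, so internally vertex-disjoint $u$--$v$ paths there are forbidden from sharing any original internal vertex, whereas edge-disjoint paths in $G$ may freely share vertices; the claimed bijection fails, and so does the claimed equality of cut sizes. Concretely, take two copies of $K_5$ glued at a single vertex $w$: the minimum $u$--$v$ edge cut (for $u,v$ in different copies) is $4$, but deleting the single vertex $w$ disconnects the subdivision, so its minimum vertex separator is $1$. The standard reduction goes through the line-graph-type auxiliary graph (vertices of the auxiliary graph are the edges of $G$ plus $u,v$), not the subdivision. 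The paper avoids this issue entirely by proving the edge version directly from the Ford--Fulkerson max-flow min-cut theorem together with integrality and flow decomposition --- machinery it has already developed --- which you explicitly chose not to use; if you forgo flows, you need either a correct auxiliary-graph reduction or a separate induction for the edge case.
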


We will prove this theorem in subsection~\ref{subsc:menger}. Let's consider the corollary of this theorem:

\begin{cor}\label{cor:3connminus}
    If a graph $G$ is 3-connected then $G\setminus(u,v)$ is 2-connected for any edge $(u,v)$.
\end{cor}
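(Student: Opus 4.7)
The plan is to apply Menger's theorem in both directions. Fix two arbitrary vertices $x,y$ of $G\setminus(u,v)$; since this graph has the same vertex set as $G$, I can view $x,y$ as vertices of $G$ too. Using 3-connectivity of $G$ together with Menger's theorem, I obtain three pairwise vertex-independent paths $P_1,P_2,P_3$ from $x$ to $y$ in $G$. I will argue that at most one of these $P_i$ traverses the edge $(u,v)$; the other two are then paths in $G\setminus(u,v)$, and they remain vertex-independent there because deleting an edge does not alter the vertex set of any path. Menger's theorem, applied in the other direction, will then yield that $G\setminus(u,v)$ is $2$-vertex-connected.

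The remaining task is to justify that at most one $P_i$ uses the edge $(u,v)$. I would use the defining property that vertex-independent paths share no vertices other than their common endpoints $x,y$. Suppose for contradiction that two of the $P_i$ both contain the edge $(u,v)$. Then both paths pass through both $u$ and $v$. A brief case split on the intersection $\{u,v\}\cap\{x,y\}$ covers the situation: if $\{u,v\}\cap\{x,y\}=\emptyset$, then $u$ and $v$ are internal vertices in both paths, contradicting vertex-independence; if exactly one of $u,v$ coincides with one of $x,y$ (say $u=x$ and $v\ne y$), then $v$ remains an internal vertex of both paths, again contradicting independence; and if $\{u,v\}=\{x,y\}$, then the edge $(u,v)$ is itself the entire path of length one from $x$ to $y$, which can appear as at most one of the three independent paths.

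The main potential obstacle is merely bookkeeping in the corner cases where $(u,v)$ is incident with $x$ or $y$; there is no deep difficulty, because once vertex-independence is correctly unpacked the conclusion is immediate. I do not expect to need any properties of $G\setminus(u,v)$ beyond the fact that it has the same vertex set as $G$ and differs from $G$ only by the absence of a single edge, so the transfer of paths goes through verbatim.
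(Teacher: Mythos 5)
Your proof is correct and follows exactly the route the paper intends: the corollary is stated there as an immediate consequence of Menger's theorem with no written proof, and your argument (three internally disjoint $x$--$y$ paths in $G$, at most one of which can use the edge $(u,v)$, leaving two vertex-independent paths in $G\setminus(u,v)$, then Menger in the converse direction) is the standard way to fill in that gap. The case analysis on $\{u,v\}\cap\{x,y\}$ is handled correctly, so nothing is missing.
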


We will use this corollary later. Now let's consider the relation between vertex and edge connectivity:

\begin{lm}\label{lm:kld}
    Let $\k(G)$, $\l(G)$ are numbers of vertex and edge connectivity for a graph $G$ respectively. Let $d(G)$ be the minimum degree. The following holds: $$\k(G)\leq\l(G)\leq d(G).$$  
\end{lm}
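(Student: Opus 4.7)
The plan is to prove the two inequalities $\l(G) \leq d(G)$ and $\k(G) \leq \l(G)$ separately. The first is almost immediate: I would pick a vertex $v$ attaining the minimum degree, and observe that the $d(G)$ edges incident to $v$ form an edge cut isolating $v$ (for $\|V\|\geq 2$), so $\l(G) \leq d(G)$.

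For $\k(G) \leq \l(G)$, the idea is to start from a minimum edge cut $F = [S, \bar{S}]$ of size $\l(G)$ (where $(S, \bar{S})$ partitions $V$ into the two sides remaining after removal of $F$) and extract a vertex cut of size at most $\l(G)$ from it. I would split on whether every vertex of $S$ is joined to every vertex of $\bar{S}$. If so, the cut contains exactly $|S|\cdot|\bar{S}|$ edges, so $\l(G) = |S|\cdot|\bar{S}| \geq |S|+|\bar{S}|-1 = n-1$; combined with the trivial bound $\k(G) \leq n-1$, this case is settled.

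Otherwise there exist non-adjacent $x \in S$, $y \in \bar{S}$, and I would set
$$T = (N(x) \cap \bar{S}) \cup \{v \in S \setminus \{x\} : v \text{ has a neighbour in } \bar{S}\}.$$
The main obstacle is to verify that (a) $T$ is an $x$--$y$ separator in $G$, and (b) $|T| \leq \l(G)$. For (a), any $x$--$y$ path must cross $F$ through some edge $uw$ with $u \in S$, $w \in \bar{S}$: if $u = x$ then $w \in N(x) \cap \bar{S} \subseteq T$, otherwise $u \in S \setminus \{x\}$ has a neighbour in $\bar{S}$ and so $u \in T$; also $x \notin T$ by construction and $y \notin T$ because $xy \notin E$. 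For (b), I would construct an injection $T \hookrightarrow F$ by mapping each $v \in N(x) \cap \bar{S}$ to the edge $xv$ and each $v \in T \cap (S \setminus \{x\})$ to a chosen cut-edge incident to $v$; the images are distinct because edges of the first type contain $x$ as an endpoint while edges of the second type do not. Together (a) and (b) give $\k(G) \leq |T| \leq \l(G)$.
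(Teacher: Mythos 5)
Your proof is correct, but the second inequality is handled by a genuinely different route than the paper's. For $\lambda(G)\leq d(G)$ you and the paper do the same thing: delete the edges at a minimum-degree vertex. For $\kappa(G)\leq\lambda(G)$, however, the paper gives a one-line deduction from Menger's theorem (any $\kappa(G)$ internally vertex-disjoint paths are also edge-disjoint, so the edge version of Menger forces $\lambda(G)\geq\kappa(G)$), whereas you give the classical Whitney-style direct construction: starting from a minimum edge cut $[S,\bar S]$, either every vertex of $S$ sees every vertex of $\bar S$ (so $\lambda(G)=|S|\cdot|\bar S|\geq n-1\geq\kappa(G)$), or a separator $T$ of size at most $\lambda(G)$ is extracted via an explicit injection of $T$ into the cut edges. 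Your case analysis, the verification that $T$ separates the non-adjacent pair $x,y$ with $x,y\notin T$, and the injectivity argument (first-type edges contain $x$, second-type edges do not) are all sound. What each approach buys: the paper's argument is shorter but leans on Menger's theorem, which in this book is only proved much later (via max-flow/min-cut in the flows section), so as written it is a forward reference; your argument is longer but entirely elementary and self-contained, and could be placed at this point in the text without any dependence on later machinery. The only cosmetic caveat is that the degenerate cases (one-vertex or disconnected $G$, where $\kappa=\lambda=0$ under the book's conventions) should be dismissed at the outset, as you implicitly do with the remark ``for $\|V\|\geq 2$.''
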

\begin{proof}\ 
    \begin{enumerate}
        \item Case $\k(G)\leq\l(G)$. Since $\k(G)$-vertex independent paths are also edge independent, this estimation holds by the Menger's theorem.   
        \item Case $\l(G)\leq d(G)$. By deleting $d(G)$ edges at the vertex with minimum degree the graph becomes not connected. 
    \end{enumerate}
\end{proof}

\begin{thm}[($\k$, $\l$, d)-graph]
    For any $\k,\l,d \in \rN: \k\leq \l\leq d, $ there exists a graph with $\k$ --- vertex connectivity, $\l$ --- edge connectivity and $d$ --- minimum degree number.  
\end{thm}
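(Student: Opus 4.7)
My plan is to exhibit $G$ as two copies of $K_{d+1}$ joined by a carefully arranged bundle of cross-edges. Let $H_1, H_2$ be vertex-disjoint copies of $K_{d+1}$, fix $A = \{u_1,\dots,u_\k\}\sub V(H_1)$ and $\{v_1,\dots,v_\k\}\sub V(H_2)$, and put in the $\k$ matching edges $u_iv_i$ plus $\l-\k$ further edges, each joining some $u_i$ to some vertex of $V(H_2)$. This fits as a simple graph because the total capacity $|A|\cdot|V(H_2)|=\k(d+1)$ comfortably exceeds $\l$ (using $\l\leq d$ and $\k\geq 1$). The intuition is that $A$ will be the witnessing vertex cut of size $\k$, the whole bundle of cross-edges will be the witnessing edge cut of size $\l$, and the embedded $\k$-matching will block any smaller vertex cut.

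To verify the minimum degree equals $d$, note that $V(H_1)\setminus A$ is nonempty (since $\k\leq d$) and every vertex there has degree exactly $d$; every other vertex already has degree $\geq d$ from its $K_{d+1}$. For the edge connectivity, the $\l$ cross-edges already form an edge cut, so $\l(G)\leq\l$. Conversely, for any edge cut $(S,\bar S)$ that splits $V(H_i)$ into parts of sizes $k$ and $d+1-k$ with $1\leq k\leq d$, the cut contains at least $k(d+1-k)\geq d\geq\l$ edges internal to that $H_i$; so the only candidate cut smaller than $d$ is the one that keeps each $H_i$ intact on one side, and its size is exactly $\l$.

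For the vertex connectivity: $A$ is a vertex cut, since after its removal no cross-edge survives and $V(H_1)\setminus A$ is separated from $V(H_2)$, so $\k(G)\leq\k$. For the reverse, let $C$ be any vertex cut. If $V(H_i)\sub C$ for some $i$ then $|C|\geq d+1>\k$; otherwise $V(H_1)\setminus C$ and $V(H_2)\setminus C$ are nonempty cliques, and for them to lie in different components of $G\setminus C$ the set $C$ must contain at least one endpoint of every cross-edge. In particular $C$ covers the $\k$ pairwise vertex-disjoint matching edges $u_iv_i$, forcing $|C|\geq\k$.

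The crux is this last lower bound on $\k(G)$, and the whole purpose of deliberately embedding a $\k$-matching into the cross-edges. One needs the cross-edge bundle to be simultaneously \emph{concentrated} in $H_1$ (all endpoints inside a set of size $\k$) so that $A$ itself is a vertex cut, and \emph{spread out} in its bipartite structure so that its minimum vertex cover is $\geq\k$. The hypotheses $\k\leq\l\leq d$ are exactly what permit both requirements to be met at once: $\l\geq\k$ provides enough edges to host the matching, while $\l\leq d$ ensures $\l$ edges can be packed between $A$ and $V(H_2)$ without saturating anything beyond $H_1$'s own degree budget.
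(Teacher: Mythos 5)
Your construction is essentially the paper's: two copies of $K_{d+1}$ joined by $\l$ cross-edges whose endpoints in one copy are concentrated on a set $A$ of $\k$ vertices while still containing a matching of size $\k$. (The paper realizes the same structure by marking $\l$ vertices in one copy and $\k$ in the other and requiring every marked vertex to meet an added edge; each marked vertex on the $\l$-side then has exactly one cross-edge, which forces a $\k$-matching implicitly.) Where you genuinely diverge is in the verification. The paper exhibits, for every pair of vertices, $\k$ vertex-independent and $\l$ edge-independent paths and then invokes Menger's theorem --- a result it only proves much later via max-flow. You instead bound every cut directly: the inequality $k(d+1-k)\geq d\geq\l$ disposes of any edge cut that splits one of the cliques, and the embedded $\k$-matching forces any vertex cut separating the two cliques to have size at least $\k$, since a matching of size $\k$ admits no vertex cover of size less than $\k$. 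This is correct and arguably cleaner: it is self-contained and elementary, avoids the forward dependence on Menger, and makes the upper bounds explicit ($A$ is a vertex cut of size $\k$, the cross-edge bundle is an edge cut of size $\l$), which the paper leaves implicit. The only step I would spell out a little more is why $G\setminus C$ can be disconnected only when every cross-edge loses an endpoint to $C$: every vertex lies in one of the two cliques and each nonempty clique remnant is connected, so a single surviving cross-edge reconnects everything.
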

\begin{proof}
    Consider two copies $G_1, G_2$ of complete graph $K_{d+1}$. Let's mark $\l$ vertices in the first graph $G_1$ and $\k$ vertices in the second $G_2$. Add $\l$ edges between all marked vertices such that each marked vertex should be adjacent to some added edge (see figure~\ref{fig:kld} for example). Denote the constructed graph by $G$.
    
\begin{figure}[H]
\vspace{-12pt}
    \centering
	\includegraphics[width=0.5\textwidth]{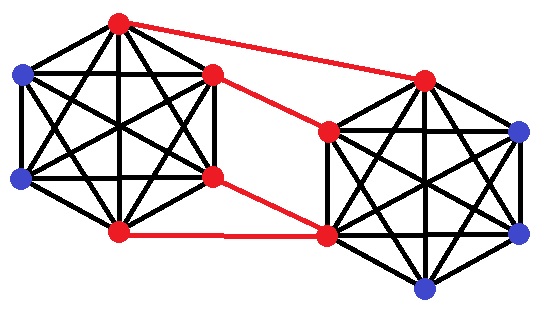}
	\caption{Example of (3, 4, 5)-graph. The red vertices are marked vertices, the red edges are added edges.}\label{fig:kld}
\end{figure}

Let's prove, that the graph $G$ is a graph with $\k$-vertex connectivity, $\l$-edge connectivity and minimum degree $d$.

\begin{enumerate}
    \item Since $\k\leq\l<d+1$, there exists unmarked vertex, that remains the same as for complete graph $K_{d+1}$. Hence the minimum degree of graph $G$ equals $d$.
    \item Let's prove that the collection of paths between any two vertices of the graph $G$ is $\k$-vertex and $\l$-edge independent. If these vertices belong to the same complete subgraph $K_{d+1}$ then the condition holds from the statement~\ref{st:kdp1}. If considered vertices belong to the different complete subgraphs $K_{d+1}$ then let's consider following paths: first vertex, any marked vertex from the same subgraph, adjacent marked vertex from the second subgraph and the second vertex. It is easy to see that they are vertex and edge independent and the number of paths equals to $\k$ and $\l$ for vertex and edge independence respectively.

    By using Menger's theorem the proof ends.
\end{enumerate}
\end{proof}

\subsection{Bridge detecting algorithm}\label{subsc:bridge}

\begin{defin}
    An edge of a connected graph is called \bf{bridge} \ifof the graph becomes not connected after the deletion of this edge.
\end{defin} 

\NB The bridges can be only in 1-edge connected graphs.\\\\
Let's consider the bridge detecting algorithm:

\bf{Tarjan's algorithm.}

For any vertex $v$ let's denote by $\DFS[v]$ --- the order of the vertex $v$, by $a_v$ --- ancestors of $v$, by $d_v$ --- descendants of $v$ corresponding to depth first search algorithm.

\emph{Description:}
\begin{enumerate}
    \item Use depth first search algorithm to define DFS order $\DFS[v]$ and construct the corresponding spanning tree of the graph.
    \item For each vertex $v$ in the reverse DFS order define:
    \begin{equation}\label{eq:lowv}
  \low[v] := \min
    \begin{cases}
      \DFS[v], & \\
      \DFS[a_v] & \text{if edge $(v, a_v)$ doesn't belong to DFS tree,} \\
      \low[d_v] & \text{if edge $(v, d_v)$ belongs to DFS tree.}
    \end{cases}       
    \end{equation}
    \item The edge $(v, d_v)$ is a bridge iff. $\DFS[v] < \low[d_v]$.
\end{enumerate}

\NB The complexity of this algorithm is the same as for depth-first search algorithm and equals to $\mathbf{O\bigl(\|E\|\bigr)}$ for connected graphs.\\

\NB If one add orientation of the current graph by following: orient DFS tree as Tremaux arborescence and orient other edges from descendants to ancestors, the $\low[v]$ will correspond to the lowest number in DFS order that can be reached from the vertex $v$.\\

Finally, Let's prove:
\begin{lm}
    Let the function $\low[v]$ is defined by equation~\ref{eq:lowv} for any vertex of a connected graph $G$. The edge $(v, d_v)$ is a bridge in $G$ iff. $\DFS[v] < \low[d_v]$ for a descendant $d_v$ of vertex $v$.
\end{lm}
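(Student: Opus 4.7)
The plan rests on two foundational facts about depth-first search on undirected graphs. First, every non-tree edge is a \emph{back edge}: for any edge $\{u,w\}$ with $DFS[u]<DFS[w]$, while the recursive call DFS$(u)$ is still active it must examine the neighbor $w$, and at that point $w$ is either unvisited (so $\{u,w\}$ becomes a tree edge) or already visited during DFS$(u)$ itself (so $w$ is a descendant of $u$). Thus every non-tree edge connects a vertex to one of its DFS-ancestors. Second, by induction in reverse DFS order on equation~\eqref{eq:lowv} I would show that
\[
low[v]=\min\{DFS[w]:w\in T_v\ \text{or}\ w\ \text{is reached by a single back edge from some}\ u\in T_v\},
\]
where $T_v$ is the DFS subtree rooted at $v$; the three branches of the recursion correspond respectively to $w=v$, $w$ reached from $v$ by a direct back edge, and $w$ reached from $v$ via a child's subtree.

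With these in hand, I would dispatch both directions quickly. For the $(\Leftarrow)$ direction, assume $DFS[v]<low[d_v]$ with $(v,d_v)$ a tree edge (so $d_v$ is a child of $v$). Any edge leaving $T_{d_v}$ is either $(v,d_v)$ itself or, by the no-cross-edges property, a back edge from some $u\in T_{d_v}$ to an ancestor of $u$ outside $T_{d_v}$, which is necessarily $v$ or an ancestor of $v$, hence a vertex with DFS number at most $DFS[v]$; the combinatorial characterization of $low$ rules out such a back edge, so deleting $(v,d_v)$ isolates $T_{d_v}$ and $(v,d_v)$ is a bridge. For the $(\Rightarrow)$ direction I would argue the contrapositive: if $low[d_v]\leq DFS[v]$, pick $u\in T_{d_v}$ and a back edge $\{u,w\}$ with $DFS[w]\leq DFS[v]$; then $w$ equals $v$ or is an ancestor of $v$, so the tree path $d_v\to\cdots\to u$, the back edge $\{u,w\}$, and the tree path $w\to\cdots\to v$ concatenate to a $(d_v,v)$-walk avoiding the edge $(v,d_v)$, showing it is not a bridge.

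The main obstacle is the no-cross-edges property used in the first paragraph; once it is in place, every other step is bookkeeping with the $low$ recursion. A secondary subtlety deserving one line is the case where $(v,d_v)$ is itself a back edge rather than a tree edge: bridges must be tree edges, so it is not a bridge, and the back edge $\{d_v,v\}$ appears directly in the recursion for $low[d_v]$ (as a non-tree edge from $d_v$ to its ancestor $v$), forcing $low[d_v]\leq DFS[v]$, so the biconditional reduces to false $\Leftrightarrow$ false and holds trivially.
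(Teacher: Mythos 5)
Your proof is correct and follows essentially the same route as the paper's: both arguments rest on the fact that every non-tree edge of an undirected DFS joins a vertex to one of its ancestors, together with an induction in reverse DFS order showing that $low[d_v]$ records the smallest DFS number reachable from the subtree below $d_v$ by at most one back edge. The only difference is organizational --- you isolate that closed-form characterization of $low$ and the no-cross-edge property as explicit preliminary facts, whereas the paper folds them directly into the two directions of the equivalence.
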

\begin{proof}
    Let's denote the set of ancestors of $v$ by $A_v$ and the set of descendants of $v$ by $D_v$. First let's note that the edge $(v, d_v)$ is a bridge in $G$ iff. this edge divides vertices by two disjoint sets $A_v\bigcup\{v\}$ and $D_v$ such that there is no edges between these sets except $(v, d)$. Also the following holds:
    $$\DFS[a_v] < \DFS[v] < \DFS[d_v],\ \forall a_v \in A_v \text{ and } \forall d_v \in D_v,$$
    $$\low[v] \leq \low[d_v],\ \forall d_v \in D_v.$$
    \begin{enumerate}
        \item \emph{Necessarity.} Let's prove that $\DFS[v] < \low[d_v]$ for any descendant $d_v$ of vertex $v$ by induction corresponding to inverse DFS order:
        \begin{enumerate}
            \item Consider the descendant $d_v$ with the highest $n = \DFS[d_v]$. This vertex has no descendants and only can be adjacent to descendants of $v$. Since $\DFS[v] < \DFS[d],\  \forall d \in D_v \imp \DFS[v] < \DFS[a_{d_v}] \imp \DFS[v] < \low[d_v]$.
            \item Let we prove that $\DFS[v] < \low[d_v]$ for any $d_v:\DFS[d_v]\geq k$. Let's prove it for $d_v: \DFS[d_v] = k-1$. For this $d_v: \DFS[v] < \DFS[d_v].$ Since the vertex $d_v$ can only be adjacent to descendants of $v$ and $\DFS[v] < \DFS[d],\ \forall d \in D_v\imp \DFS[v] < \DFS[a_{d_v}]$. By the induction: $\DFS[v] < \low[d_{d_v}]$. Therefore, $\DFS[v] < \low[d_v]$.
        \end{enumerate}
        \item \emph{Sufficiency.} Assume the contrary: the edge $(v, d_v)$ is not a bridge. Thus, there exists an edge $(a_v, d_{1v})$ between two vertices $a_v\in A_v$ and $d_{1v}\in D_v: \text{$d_{1v}$ is descendant of $d_v$}$. Thus, $\low[d_v]\leq \low[d_{1v}] \leq \DFS[a_v] < \DFS[v]$ implies contradiction.
    \end{enumerate}
\end{proof}

\begin{cor}
    Let the function $\low[v]$ is defined by equation~\ref{eq:lowv} for any vertex of a connected graph $G$. The edge $(v, d_v)$ is a 1-cut vertex in $G$ iff. $\DFS[v] \leq \low[d_v]$ for a descendant $d_v$ of vertex $v$.
\end{cor}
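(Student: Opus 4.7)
The plan is to mirror the proof of the preceding bridge lemma almost step for step, exploiting the fact that the only formal change in the criterion is $<$ being replaced by $\leq$. The conceptual content of this swap is that a back edge from the subtree of $d_v$ landing exactly at $v$ (so that $low[d_v] = DFS[v]$) is enough to stop $(v, d_v)$ from being a bridge, because after deleting only the edge the vertex $v$ remains reachable; but such a back edge disappears the moment $v$ itself is removed, so it does nothing to prevent $v$ from being a cut vertex.

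First I would re-use the tree decomposition already implicit in the bridge proof. Recall that a DFS on an undirected graph produces no cross edges: every non-tree edge is a back edge from a vertex $x$ to a strict ancestor of $x$. Consequently, after removing $v$, each child $d_v$ of $v$ sits in its own subtree $D_{d_v}$, and $D_{d_v}$ can be reconnected to the set $A_v$ of ancestors of $v$ only by a back edge going from some descendant of $d_v$ to a strict ancestor of $v$ --- equivalently, only when $low[d_v] < DFS[v]$.

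From this single observation both directions follow. For sufficiency, if some child $d_v$ has $low[d_v] \geq DFS[v]$ then no such back edge exists, the subtree rooted at $d_v$ is cut off from the rest of the graph after deleting $v$, and $v$ is a cut vertex. For necessity, suppose $v$ is a cut vertex; then at least one component of $G \setminus v$ must lie entirely inside the subtree of some child $d_v$ of $v$, and by the same observation this forces every back edge out of $D_{d_v}$ to land at $v$ or below, i.e., $low[d_v] \geq DFS[v]$. The inductive bookkeeping --- that $low[d_v]$ really does capture the minimum $DFS$ number reachable from $D_{d_v}$ along tree edges followed by at most one back edge --- can be lifted verbatim from the necessity part of the bridge lemma.

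The main obstacle, and the reason this needs its own statement rather than a one-line appeal to the bridge lemma, is the DFS root: the root has no ancestors, so the ``back edge to a strict ancestor of $v$'' escape route is automatically unavailable, and the criterion $low[d_v] \geq DFS[v]$ is met trivially by every child. Hence for the root one must separately argue that it is a cut vertex iff it has at least two children in the DFS tree, which I would dispatch in a short preliminary case split before invoking the main dichotomy above. The remaining delicate point is simply tracking the equality $low[d_v] = DFS[v]$ carefully throughout, which is precisely where the bridge and cut-vertex characterizations part ways.
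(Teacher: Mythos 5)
Your proposal takes essentially the same approach as the paper: the paper's entire proof of this corollary is the single sentence that the argument is the same as for the preceding bridge lemma, which is exactly the mirroring you carry out (with the $<$ versus $\leq$ distinction tracked correctly). Your separate treatment of the DFS root is a genuine and necessary refinement rather than a digression --- as literally stated the criterion is vacuously satisfied at the root, which is a cut vertex iff it has at least two DFS-tree children, a case the paper's statement and one-line proof silently omit.
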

\begin{proof}
    The proof is the same as in the previous lemma.
\end{proof}

\NB By this corollary Tarjan's algorithm is also can be used for detecting 1-cut vertices and thus by deleting them one can find all 2-connected components of a graph.


\section{2-connectivity}

First let's prove a theorem about outer face of 2-connected planar graph.

\begin{thm}
      Outer face of 2-connected planar graph $G$ is a simple cycle. 
\end{thm}
\begin{proof}
    Consider any planar representation of $G$. Let's denote the outer face of $G$ by $C_{out}$. Let's prove that this $C_{out}$ contains simple cycle $C$. Indeed, if one can started a walk by edges of $C_{out}$ this walk ends only in a visited vertex (and thus contains simple cycle) otherwise there previous vertex of this walk will be 1-cut vertex. 
    
    Now let's prove that $C = C_{out}$. Assume the contrary: there exists a vertex $v\in out(C)$. Consider any vertex $w\in C$. By Mengers theorem~\ref{thm:menger} there exist two vertex independent paths $P_1$ and $P_2$ from $w$ to $v$. Consider two points $x$ and $y$ --- the last points through $P_1$ and $P_2$ respectively such that $x,y\in C$. Let's denote two vertex independent paths in the subgraph $C$ from $w$ to $x$ and from $w$ to $y$ by $Q_{wx}$ and $Q_{wy}$ respectively. Let's denote a parts of paths $P_1$ and $P_2$ from $x$ to $v$ and from $y$ to $v$ by $P_{xv}$ and $P_{yv}$ respectively. The cycles $C_1 = Q_{wx}\bigcup Q_{wy}\bigcup P_{xv}\bigcup P_{yv}$ and $C_2 = \bigl(C\setminus (Q_{wx}\bigcup Q_{wy})\bigr)\bigcup P_{xv}\bigcup P_{yv}$ are simple cycles by construction. Since the graph $G$ is planar and $C$ is a simple cycle, either $int(C_1)\sub int(C_2)$ or $int(C_2)\sub int(C_1)$. Thus there exist an edge $e\in C\sub C_{out}$ such that the curve corresponding to the edge $e$ is a part of $int(C_1)\bigcap int(C_2)$ in the representation of $G$. Therefore, this curve does not intersect $out(C_1)\bigcap out(C_2)$ and it gives contradiction with the definition of $C_{out}$.
\end{proof}

Let's consider examples of graphs with 2-connectivity property.

\subsection{Eulerian graph}


Consider first 2-edge connected graph.

\begin{defin}
    \bf{Eulerian path} is a path that walks through every edge of the graph only one time.
\end{defin}

\begin{defin}
    \bf{Eulerian cycle} is an Eulerian path where the first and the last vertices coincide.
\end{defin}

\begin{defin}
    An undirected graph is called \bf{Eulerian} \ifof there exists an Eulerian cycle in the graph.
\end{defin}

\begin{lm}
    An Eulerian graph is 2-edge connected.
\end{lm}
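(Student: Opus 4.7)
The plan is to argue by contradiction: assume an Eulerian graph $G$ has a bridge $e=(u,v)$, and show that no closed walk can both traverse $e$ and return to its starting vertex while using each edge exactly once.

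First I would note that a graph $G$ admitting an Eulerian cycle $C$ is automatically connected (any two vertices incident to edges are joined along $C$; we may assume $G$ has no isolated vertices, since otherwise the very definition of the Eulerian cycle visiting ``every edge'' does not reach them, and such vertices would in any case make the graph disconnected). So 1-edge-connectivity is immediate, and the real content is that no single edge is a cut, i.e.\ the graph has no bridge.

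Next, suppose for contradiction that $e=(u,v)$ is a bridge of $G$. Deletion of $e$ splits $G$ into two connected components $A\ni u$ and $B\ni v$, with no other edge joining $A$ to $B$. Let $C=w_0 w_1 \cdots w_m$ with $w_0=w_m$ be an Eulerian cycle. Since $C$ uses every edge exactly once, it uses $e$ exactly once; say the edge $e$ is traversed at step $i$, so $\{w_{i-1},w_i\}=\{u,v\}$. Up to relabelling, assume $w_0\in A$. Then the prefix $w_0,\dots,w_{i-1}$ lies in $A$ (any transition from $A$ to $B$ or back must use $e$, which is still unused), and the suffix $w_i,\dots,w_m$ starts in $B$. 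But $w_m=w_0\in A$, so somewhere in the suffix the walk must cross from $B$ back to $A$; the only edge available is $e$, which has already been used. This is the desired contradiction.

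The step I expect to require the most care is the bookkeeping that isolates the single crossing of the bridge — making it precise that, because $e$ is a bridge and is used exactly once in $C$, the cycle cannot return to its starting side. Once that is pinned down, the contradiction is immediate and the lemma follows; no appeal to the Menger machinery of the preceding section is needed, only the definition of bridge and the fact that an Eulerian cycle uses each edge exactly once.
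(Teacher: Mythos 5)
Your proof is correct and follows essentially the same route as the paper's: assume a bridge $e$, observe that the Eulerian cycle must cross between the two components of $G\setminus e$ exactly once via $e$, and conclude it cannot return to its starting component. You have simply spelled out the crossing bookkeeping that the paper's one-line argument leaves implicit.
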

\begin{proof}
    Assume the contrary: there exists a bridge $e$ in a graph $G$ such that $G\setminus e$ has at least two connected components. Hence, if there exists Eulerian cycle it should starting and therefore ending in the same component. This holds the contradiction. 
\end{proof}

\begin{thm}[Necessary and sufficient condition for Eulerian graph]\label{thm:ec}
    An undirected graph is Eulerian iff. it is connected and degrees of all vertices are even.
\end{thm}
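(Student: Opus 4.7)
The plan is to prove both directions separately, with necessity being essentially a bookkeeping argument and sufficiency being the real content.

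For necessity, suppose $G$ is Eulerian with Eulerian cycle $C$. Connectivity already follows from the previous lemma (in fact $2$-edge connectivity), so it remains to show every vertex has even degree. The idea is that $C$ traverses every edge exactly once, and each time it passes through a vertex $v$ it uses one edge to enter and one edge to leave. Pairing the edges at $v$ according to their consecutive appearances in $C$ (including the pairing of the last-in with first-out edge at the starting vertex, since the cycle closes) partitions the edges incident to $v$ into pairs, so $\deg v$ is even.

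For sufficiency, I would use induction on $\|E\|$. The base case is trivial (a single vertex with no edges, or the smallest nontrivial case). For the inductive step, assume $G$ is connected with all even degrees. First I would show $G$ contains a cycle: start at any vertex (which has degree $\geq 2$ since the degree is even and positive) and walk along unused edges; since at every intermediate vertex the number of used edges so far is even, there is always an unused edge to leave by, so the walk cannot get stuck until it revisits a previously visited vertex, at which point Lemma~\ref{lm:cycle} produces a simple cycle $C$. Now delete the edges of $C$ from $G$ to obtain a graph $G'$; removing a cycle changes every vertex degree by an even number, so $G'$ still has all even degrees, though it may be disconnected. Apply the inductive hypothesis to each connected component of $G'$ that still has edges: each such component $H$ is connected with all even degrees and strictly fewer edges, so it admits an Eulerian cycle $C_H$.

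The final step is to splice everything together. Because the original graph $G$ is connected, each component $H$ of $G'$ with edges must share at least one vertex $v_H$ with the cycle $C$ (otherwise the edges of $H$ would be separated from $C$ in $G$). Now construct the Eulerian cycle of $G$ by traversing $C$, and whenever we first reach a vertex $v_H$ that lies in some as-yet-unvisited component $H$, detour through the Eulerian cycle $C_H$ starting and ending at $v_H$, then resume $C$. Since every edge of $G$ belongs either to $C$ or to exactly one $C_H$, the resulting walk uses each edge exactly once and returns to its start. The main obstacle, and the step that requires care, is the splicing argument: one must verify that every edge-containing component of $G'$ really does meet $C$ (this uses connectivity of $G$ in an essential way) and that the detour construction can be carried out simultaneously for all such components without conflict, which is where a clean inductive formulation pays off.
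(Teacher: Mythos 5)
Your proof is correct and follows essentially the same route as the paper's: necessity via the in/out pairing (parity) argument at each vertex, and sufficiency via decomposing the edge set into edge-disjoint cycles and splicing them together, with connectivity guaranteeing that the pieces share vertices. The only difference is organizational --- you phrase the splicing as an induction on $\|E\|$ over the components of $G\setminus C$, while the paper extracts the full cycle decomposition first and merges it with an explicit stack-based procedure --- but the underlying argument is the same.
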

\begin{proof}\ 
    \begin{enumerate}
        \item \emph{Necessarity.} The Eulerian graph is connected. Assume the contrary: there exists a vertex with odd degree. Let's delete the edges of the current graph one by one through the Eulerian cycle. Since we go in to the vertex by one edge and go out from this vertex by another edge, each time after visiting a vertex the degree of this vertex decreased by 2, and hence the parity of this vertex degree remains the same (note that it also holds to the starting vertex because of the cycle). Therefore, after the deleting all edges corresponding to Eulerian cycle (all edges in the graph) the parity of all vertices remains the same. It holds the contradiction with existence the vertex with odd degree.
        \item \emph{Sufficiency.} Let's start any walk in our graph $G$ with deleting visiting edges. This walk ends in our starting vertex (the first and last vertices in the walk have odd degrees every time and if vertex has odd degree there exists a unvisited edge adjacent to this vertex). Therefore, when the walk ends, this walk will form a cycle. Let's denote the set of edges of this cycle by $C_1$. 

        Consider the graph $G\setminus C_1$. Do the same procedure and find the set of edges $C_2$ of a new cycle in $G\setminus C_1$ and so on. Let's denote the set of this sets by $C = \{C_i\}$. Since all vertices have even degree, the set $C$ contains every edge of $G$. Let's construct the Eulerian cycle by using a stack $S$ and the set $C$: 
        \begin{enumerate}
            \item Start from any vertex of $C_1$. Take away $C_1$ from $C$ and add it to the stack $S$.
            \item\label{enum:ec1} Go through an edge corresponding to this cycle in the top of the stack $S$. If there is no edges corresponding to this cycle, delete the cycle from $S$ and do \ref{enum:ec1} again.
            \item If there exists an edge corresponding to a cycle $C_i$ from $C$, take away $C_i$ from $C$, add it to the stack $S$ and go to \ref{enum:ec1}.  
        \end{enumerate}
        This algorithm produces Eulerian cycle in the graph $G$.
    \end{enumerate}
\end{proof}

\begin{thm}[Eulerian path existence]
    An Eulerian path exists in the connected graph iff. there exist at most two vertices with odd degrees.
\end{thm}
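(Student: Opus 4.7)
The plan is to reduce both directions to Theorem~\ref{thm:ec} (the Eulerian cycle criterion), using the Odd Vertices Lemma to restrict the possible number of odd vertices.

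For \emph{necessity}, I would argue exactly as in the necessity half of Theorem~\ref{thm:ec}. Suppose an Eulerian path $P$ exists; connectivity of the graph is immediate since $P$ traverses every edge. Now walk along $P$ and delete edges as they are used, tracking the parity of each vertex. Every time the walk passes through an internal vertex, it enters along one edge and leaves along another, decreasing that vertex's remaining degree by $2$ and preserving its parity. The only vertices whose parity can be altered by a single incident edge are the starting vertex (initial edge) and the ending vertex (final edge). Hence at most two vertices can have odd degree, and they must be the endpoints of $P$.

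For \emph{sufficiency}, let $G$ be connected with at most two odd vertices. By the Odd Vertices Lemma the number of odd vertices is even, so it is either $0$ or $2$. In the case $0$, Theorem~\ref{thm:ec} directly gives an Eulerian cycle, which is in particular an Eulerian path. In the case $2$, let $u$ and $v$ be the odd-degree vertices. I would augment $G$ by adding an auxiliary vertex $w$ together with the two edges $(w,u)$ and $(w,v)$; call the resulting graph $G'$. This auxiliary construction (rather than simply joining $u$ to $v$) is used to stay inside the class of simple graphs even when $(u,v)$ is already present. The graph $G'$ is still connected, and every vertex has even degree: $u$ and $v$ gained one incident edge, and $w$ has degree $2$. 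By Theorem~\ref{thm:ec}, $G'$ admits an Eulerian cycle $C$.

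To finish, I would rotate $C$ so that it starts with the edge $(u,w)$ and ends with $(w,v)$, which is possible because these are the only two edges of $C$ incident to $w$. Deleting those two edges and the vertex $w$ from $C$ yields a walk in $G$ from $u$ to $v$ that uses every edge of $G$ exactly once, i.e.\ an Eulerian path. The only conceptual obstacle is the technicality in the $2$-odd-vertex case, where one must be careful not to violate simplicity when linking $u$ and $v$; introducing the intermediate vertex $w$ cleanly avoids this issue and makes the reduction to Theorem~\ref{thm:ec} work without any further case analysis.
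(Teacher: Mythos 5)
Your proof is correct and follows the same overall strategy as the paper --- reduce to the Eulerian cycle criterion of Theorem~\ref{thm:ec} --- but the reduction gadget in the two-odd-vertex case is genuinely different. The paper splits into cases on whether the edge $(u,v)$ is already present: if it is, delete it and find an Eulerian cycle in the remainder; if not, add it, find an Eulerian cycle, and remove it again. Your auxiliary-vertex construction (adjoining $w$ with edges $(w,u)$ and $(w,v)$) handles both situations uniformly and, more importantly, avoids a subtle defect in the paper's first case: after deleting an existing edge $(u,v)$ the graph may become disconnected (when $(u,v)$ is a bridge), so Theorem~\ref{thm:ec} does not directly apply there and one would have to patch the argument by gluing Eulerian cycles of the two components along $(u,v)$. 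Your $G'$ is always connected and simple, so the appeal to Theorem~\ref{thm:ec} is clean. You also spell out the necessity direction (parity is preserved at every internal passage of the walk, so only the two endpoints can be odd), which the paper's proof leaves implicit. The only cosmetic point is the phrase about rotating $C$ to ``start with $(u,w)$ and end with $(w,v)$''; what you mean is that, since $\deg w = 2$, the cycle traverses $w$ exactly once via these two edges consecutively, so it may be written as $w, u, \dots, v, w$, and deleting the first and last edges leaves the desired $u$--$v$ Eulerian path.
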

\begin{proof}\ 
    \begin{enumerate}
        \item Assume that there are no odd vertices, then there exists Eulerian cycle by the theorem~\ref{thm:ec}.
        \item Assume that there are only two vertices with odd degrees $u$ and $v$. 
        \begin{enumerate}
            \item If there exists an edge $(u, v)$ then after deleting this edge the graph will have no odd vertices and contains Eulerian cycle by the theorem~\ref{thm:ec}. This Eulerian cycle with the edge $(u, v)$ forms the Eulerian path.
            \item If there are no edges between $u$ and $v$, add the edge $(u, v)$ to the graph. Therefore, the graph with $(u, v)$ will have no odd vertices and contains Eulerian cycle by the theorem~\ref{thm:ec}. This Eulerian cycle without the edge $(u, v)$ forms the Eulerian path.
        \end{enumerate} 
    \end{enumerate}
\end{proof}

\NB If there are only two vertices with odd degrees in the graph, an Eulerian path will starts and ends in these two vertices.\\

Consider basic \bf{Eulerian cycle searching algorithms} in connected graphs:

\begin{enumerate}
    \item \bf{Fleury's algorithm.}\\
    This algorithm used the algorithm of testing whether the graph will become not connected after the deleting current edge.\\
    \emph{Description:}
    \begin{enumerate}
        \item\label{enum:ec4} Walk in the graph with deleting visiting edges which keep the graph connected.
        \item If deleting all adjacent edges provided the graph to be not connected (all edges are bridges) go through one of these edges and delete. Do~\ref{enum:ec4} again. 
    \end{enumerate}

    \NB By using Tarjan’s algorithm for detecting bridges the complexity of Fleury's algorithm will be $\mathbf{O\bigl(\|E\|^2\bigr)}$.
    
    \item \bf{Using list structure.}\\
    This algorithm is based on the algorithm introduced in the theorem~\ref{thm:ec}.\\
    \emph{Description:}
    \begin{enumerate}
        \item Construct the set $C$ from the theorem~\ref{thm:ec} and mark all edges of the graph corresponding to cycles $C_i$.
        \item Add vertices of $C_1$ to the list $L$ and start from the top.
        \item\label{enum:ec2} Go to the next vertex in $L$. If there exist an edge $e$ adjacent to this vertex not in the list $L$ and corresponded to a cycle $C_i$, add all vertices of $C_i$ starting with this edge at the current position of the list $L$. Do~\ref{enum:ec2} again. 
    \end{enumerate}
    In the end the list will contain Eulerian cycle.
    
    \NB The complexity of this algorithm is $\mathbf{O\bigl(\|E\|\,\|V\|\bigr)}$.

    \item \bf{Using two stacks.}\\
    This algorithm used two stack structures $S_1$ and $S_2$.\\
    \emph{Description:}
    \begin{enumerate}
        \item\label{enum:ec3} Walk in the graph with deleting visiting edges and add consecutive vertices corresponded to these edges to the stack $S_1$ until it is possible.
        \item Transfer the vertex $v$ in top of the stack $S_1$ to the stack $S_2$. Do~\ref{enum:ec3} again starting with the new vertex in the top of the stack $S_1$.
    \end{enumerate}
    In the end the stack $S_2$ will contain Eulerian cycle.
    
    \NB The complexity of this algorithm is $\mathbf{O\bigl(\|E\|\,\|V\|\bigr)}$.
\end{enumerate}


\subsection{Hamiltonian graph}

An example of 2-vertex connected graph is a Hamiltonian graph.

\begin{defin}
    \bf{Hamiltoinian cycle} is a simple cycle that consist of all vertices of a graph.
\end{defin}

\begin{defin}
    A graph is called \bf{Hamiltonian} \ifof there exists a Hamiltoinian cycle in the graph. 
\end{defin}

\begin{lm}
    Planar graph $G$ is Hamiltonian iff. there exist a representation and a simple cycle $C$ such that the parts of the dual graph $D(G)$ which belong to $int(C)$ and $out(C)$ are trees.
\end{lm}
\begin{proof}
    We will prove this lemma in the subsection~\ref{subsec:dp}. 
\end{proof}

In the general case it is not very easy to give necessary and sufficient conditions for Hamiltoinian graphs, here we introduce the simplest ones:

\begin{lm}[Necessary Hamiltonian condition]
    A Hamiltonian undirected graph is 2-connected.
\end{lm}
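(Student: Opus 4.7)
The plan is to work directly from the definitions. Assume $G$ has a Hamiltonian cycle $C = v_1 v_2 \cdots v_n v_1$, where $n = \|V\|$. Since $C$ is a simple cycle, it requires $n \geq 3$, so the question of $2$-connectivity is nontrivial. I will show $G$ has no $1$-vertex cut, which is equivalent to being $2$-vertex connected by the definition given in the excerpt.

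First I would pick an arbitrary vertex $w \in V$ and argue that $G \setminus \{w\}$ is still connected. Without loss of generality say $w = v_k$ for some index $k$. Deleting $v_k$ from the cycle $C$ removes the two edges $(v_{k-1}, v_k)$ and $(v_k, v_{k+1})$ but leaves the path
$$P \;=\; v_{k+1}\, v_{k+2}\, \cdots\, v_n\, v_1\, v_2\, \cdots\, v_{k-1},$$
which is a Hamiltonian path on the remaining $n-1$ vertices and lies entirely in $G \setminus \{w\}$. Hence any two vertices of $G \setminus \{w\}$ are connected by a sub-path of $P$, so $G \setminus \{w\}$ is connected. Since $w$ was arbitrary, $G$ has no $1$-vertex cut, so $G$ is $2$-connected.

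Alternatively, one can finish via Menger's theorem (which is stated, though not yet proved, in the earlier subsection): for any two vertices $u, v \in V$ the cycle $C$ itself provides the two required vertex-independent paths, namely the two arcs of $C$ from $u$ to $v$, whose only common vertices are the endpoints $u$ and $v$. Either route gives the conclusion.

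The only subtlety I anticipate is the small-order edge case: if one allowed $n = 2$, the ``cycle'' $v_1 v_2 v_1$ would be a multi-edge rather than a simple cycle, and the notion of $2$-connectivity would degenerate. The definition of simple cycle in the excerpt rules this out by requiring distinct edges, so the argument applies to all graphs that legitimately carry a Hamiltonian cycle, and no further case analysis is needed.
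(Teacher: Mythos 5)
Your proof is correct and rests on the same core observation as the paper's: removing one vertex from a Hamiltonian cycle leaves a spanning path of the remaining graph, so no single vertex can disconnect it. The paper phrases this as a contradiction with an assumed cut vertex, while you give the direct version (plus an optional Menger's-theorem finish), but the argument is essentially the same.
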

\begin{proof}
    Assume the contrary: there exists a vertex $v$ in a graph $G$ such that $G\setminus v$ has at least two connected components. Hence, if there exists Hamiltonian cycle it should starting and therefore ending in the same component. This holds the contradiction. 
\end{proof}

\NB In general situation the construction of Hamiltonian cycle is NP-hardness problem. Thus knowledge of the Hamiltonian cycle in a graph is used in cryptography namely in the zero-knowledge protocol. 

\begin{lm}
    Graphs of platonic solids are Hamiltonian.
\end{lm}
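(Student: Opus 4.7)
The plan is a case analysis over the five platonic solids (tetrahedron, octahedron, cube, icosahedron, dodecahedron), exhibiting a Hamiltonian cycle in each. No uniform sufficient condition of the Dirac/Ore type applies: the dodecahedron has $\|V\|=20$ but is only $3$-regular, so one cannot avoid treating at least some of the cases individually.

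For the three small cases the constructions are immediate. The tetrahedron is the complete graph $K_4$, so any cyclic ordering of its four vertices is a Hamiltonian cycle. The octahedron is the complete tripartite graph $K_{2,2,2}$ on three pairs of opposite vertices $\{u_1,u_2\}, \{v_1,v_2\}, \{w_1,w_2\}$; the cycle $u_1 v_1 w_1 u_2 v_2 w_2 u_1$ uses only edges between distinct pairs and is therefore in the graph. For the cube, identify vertices with strings in $\{0,1\}^3$, where adjacency is Hamming distance one, and take the Gray code cycle $000,001,011,010,110,111,101,100,000$; each consecutive pair differs in exactly one bit.

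The work is in the icosahedron and dodecahedron. For each of these I would read off an explicit cycle from the Schlegel diagram already present in Figure~\ref{fig:platonic}. For the icosahedron: pick an apex, walk the upper $5$-ring of neighbours, drop on one edge to the lower $5$-ring, walk the lower ring, and close through the antipodal vertex, checking the $12$ adjacencies against the figure. For the dodecahedron -- historically the motivating example, since Hamilton's icosian game is played on exactly this graph -- I would mark the standard cycle on the planar diagram: the outer pentagon, one spoke inward, a zig-zag traversal of the ten equatorial vertices alternating between the two pentagonal rings, then the inner pentagon, closing back to the start.

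The main obstacle is not mathematical depth but verification: the dodecahedron satisfies no elementary sufficient condition for Hamiltonicity, so one must actually produce a $20$-cycle and check each of its edges against the graph. The cleanest presentation is to display the cycles as highlighted edges on the Schlegel diagrams rather than to transcribe long vertex sequences, which is why the argument is best given side-by-side with Figure~\ref{fig:platonic}.
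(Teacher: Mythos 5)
Your approach is exactly the paper's: the paper's entire proof is to exhibit a Hamiltonian cycle in each of the five solids by pointing at Figure~\ref{fig:ham}, and your explicit cycles for $K_4$, $K_{2,2,2}$ and the Gray-code cycle on the cube all check out. One small caveat: your verbal route for the icosahedron cannot close literally as stated, since the antipodal vertex is not adjacent to the apex, so the cycle you read off the Schlegel diagram must weave between the two $5$-rings rather than traverse each ring whole --- but since you defer the actual verification to the figure, this matches the paper's argument in substance.
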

\begin{proof}
    The proof is in figure~\ref{fig:ham}.
\end{proof}

\begin{figure}[H]
    \centering
    \vspace{-10pt}
	\includegraphics[width=0.8\textwidth]{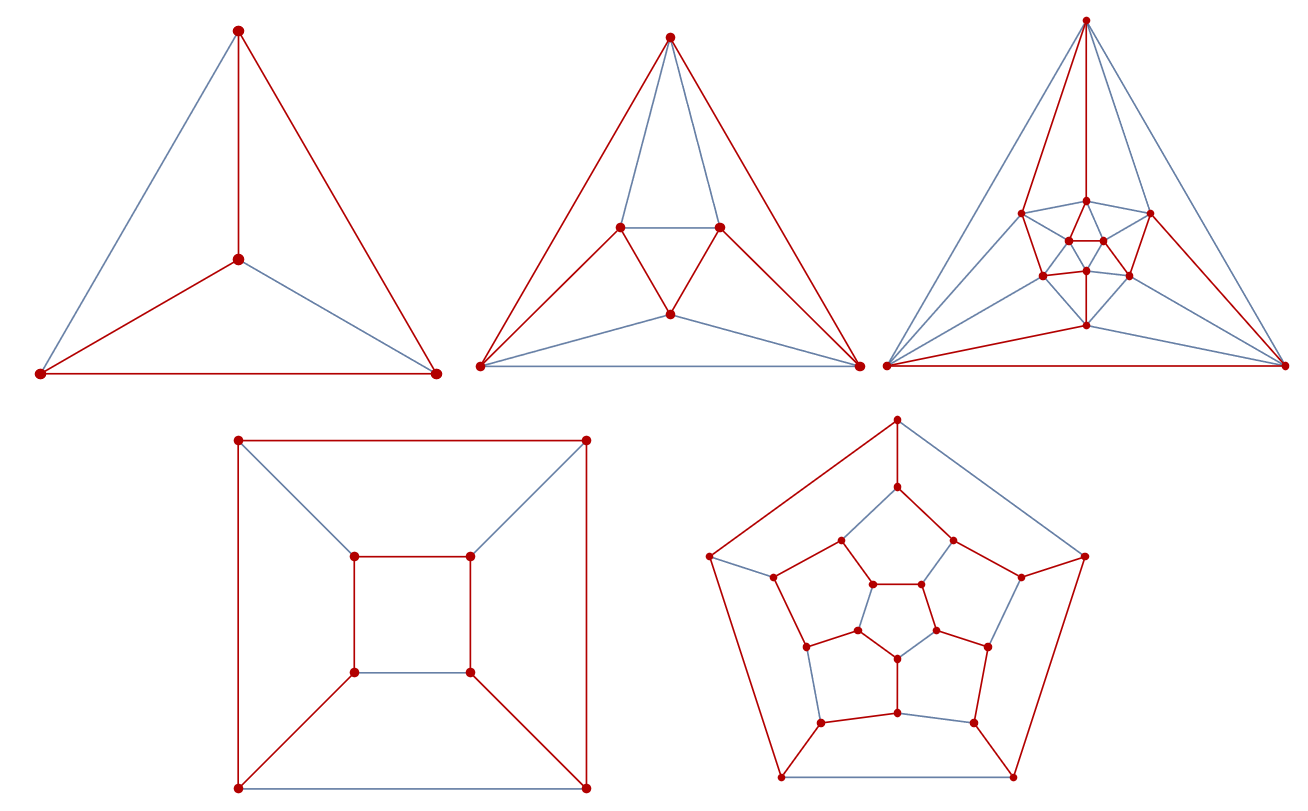}
	\caption{Hamiltonian cycles in planar representations of Platonic solids graphs.}\label{fig:ham}
\end{figure}

Let's introduce several sufficient conditions of Hamiltonian graphs:

\begin{thm}[Ore]\label{thm:ore} 
    Let's $\|V(G)\| = n$, then
    $$\deg(u)+\deg(v)\geq n,\,\text{for all not adjacent } u,v\in V(G)\imp G \text{ is Hamiltonian.}$$
\end{thm}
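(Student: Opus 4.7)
The plan is to argue by contradiction and reduce to an edge-maximal counterexample. Assume $G$ satisfies the degree condition but is not Hamiltonian; then among all graphs on $V(G)$ containing $G$ as a spanning subgraph and still failing to be Hamiltonian, I would pick one with the maximum number of edges, call it $G'$. Adding edges can only increase vertex degrees, so $G'$ still satisfies $\deg(u)+\deg(v)\geq n$ for every non-adjacent pair. Since $K_n$ with $n\geq 3$ is obviously Hamiltonian, $G'$ is not complete, so there exist non-adjacent vertices $u,v\in V(G')$. By maximality, $G'+uv$ is Hamiltonian, and any Hamiltonian cycle of $G'+uv$ must use the edge $uv$; deleting that edge leaves a Hamiltonian path $u=v_1, v_2, \ldots, v_n=v$ in $G'$.

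Next comes the rotation/pigeonhole step, which is the core of the argument. Define
$$S=\{\,i\in\{2,\ldots,n\}:uv_i\in E(G')\,\},\qquad T=\{\,i\in\{2,\ldots,n\}:v_{i-1}v\in E(G')\,\}.$$
Every neighbour of $u$ lies among $v_2,\ldots,v_n$ and every neighbour of $v$ lies among $v_1,\ldots,v_{n-1}$, so $|S|=\deg(u)$ and $|T|=\deg(v)$. Since $u,v$ are not adjacent, the hypothesis gives $|S|+|T|\geq n$, but $S,T\subseteq\{2,\ldots,n\}$, a set of size $n-1$. Hence $S\cap T\neq\varnothing$: there is an index $i$ with both $uv_i\in E(G')$ and $v_{i-1}v\in E(G')$.

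Finally, I would exhibit the contradictory Hamiltonian cycle: traverse $v_1,v_i,v_{i+1},\ldots,v_n$, then jump across the edge $v_nv_{i-1}$ and walk backwards $v_{i-1},v_{i-2},\ldots,v_2$, closing with $v_2v_1$. This uses the Hamiltonian path edges (except $v_{i-1}v_i$), together with the two edges $uv_i$ and $vv_{i-1}$ located by the pigeonhole argument, and visits every vertex exactly once, so it is a Hamiltonian cycle in $G'$. This contradicts the choice of $G'$ as non-Hamiltonian.

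The main obstacle is really bookkeeping around the rotation: one must be careful that $1\notin S$ and $1\notin T$ (no loops, and $u,v$ are non-adjacent so $n\notin S$ and $1\notin T$ cause no trouble), so that both sets genuinely sit inside the $(n-1)$-element index set $\{2,\ldots,n\}$, and that the closing cycle is well defined when $i=2$ or $i=n$. Once these boundary cases are checked the pigeonhole gives the contradiction cleanly, and no further structural information about $G$ is needed.
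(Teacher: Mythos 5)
Your proof is correct and follows essentially the same route as the paper's: pass to an edge-maximal non-Hamiltonian supergraph $G'$, extract a Hamiltonian path from $u$ to $v$ out of the forced Hamiltonian cycle of $G'\cup(u,v)$, and apply the pigeonhole/rotation argument to two sets of size $\deg(u)$ and $\deg(v)$ inside an $(n-1)$-element set to find the crossing pair of edges. The only differences are notational (your index sets $S,T$ versus the paper's vertex sets $U',V'$) and that you write out the resulting Hamiltonian cycle explicitly, which the paper delegates to a figure.
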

\begin{proof}
    Let's assume the contrary: there exists not Hamiltonian graph $G$ that is satisfied the condition. By the adding edges to the graph $G$ one can obtain a Hamiltonian graph. Let $G'$ be the boundary case: $G'$ is not Hamiltonian graph and $G'\bigcup (u,v)$ --- Hamiltonian. The condition of this theorem also holds for the graph $G'$, and the edge $(u,v)$ is the edge of Hamiltonian cycle in $G'\bigcup (u,v)$ (otherwise $G'$ is Hamiltonian).
    Let $w_n$ be the next vertex for $w$ by the Hamiltonian cycle (for some orientation of the cycle). Let $V'\sub V(G')$ be the set of vertices $w\in V'$ such that $(v, w)$ is an edge in $G'$ and $U'\sub V(G')$ be the set of vertices $w\in U'$ such that $(u, w_n)$ is an edge in $G'$.
    $$\|U'\| = \deg(u),\ \|V'\| = \deg(v),$$
    $$U'\cup V'\sub V(G')\setminus\{v\}\imp \|U'\cup V'\|\leq n-1,$$
    $$\|U'\cup V'\| = \|U'\|+\|V'\|+\|U'\cap V'\|,$$
    $$\|U'\|+\|V'\| = \deg(u)+\deg(v)\geq n\imp \|U'\cap V'\| \geq 1.$$
    Thus there exists a vertex $w$ such that the edges $(v,w)$ and $(u, w_n)$ are in the graph $G$. This holds contradiction, see figure~\ref{fig:hamc}.
    
\end{proof}

\begin{figure}[H]
    \centering
    \vspace{-10pt}
	\includegraphics[width=0.3\textwidth]{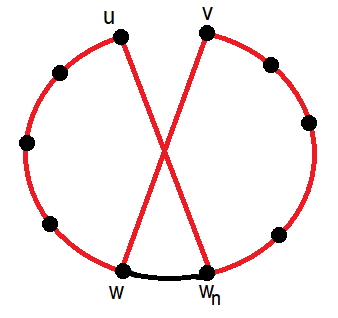}
	\caption{Hamiltonian cycle in the case $\|U'\cap V'\| \geq 1$ in the Ore's theorem~\ref{thm:ore}.}\label{fig:hamc}
\end{figure}

\NB This theorem has been proved for undirected graphs but it can be proved also for directed case in the same way.

\begin{thm}[Dirac]
    Let's $\|V(G)\| = n$, then
    $$\deg(u)\geq \frac n 2,\,\forall u\in V(G)\imp G \text{ is Hamiltonian.}$$
\end{thm}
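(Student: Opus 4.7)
The plan is to derive Dirac's theorem as an immediate corollary of Ore's theorem (Theorem~\ref{thm:ore}), which has just been established. The key observation is that Dirac's hypothesis $\deg(u)\geq n/2$ for every vertex is a uniform bound across all vertices, whereas Ore's hypothesis only requires a sum-of-degrees bound for \emph{non-adjacent} pairs. The implication between the two is direct.

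First I would fix any two non-adjacent vertices $u, v \in V(G)$ and simply add the two inequalities supplied by Dirac's hypothesis:
\[
\deg(u) + \deg(v) \;\geq\; \frac{n}{2} + \frac{n}{2} \;=\; n.
\]
Since this holds for every non-adjacent pair (and in fact for every pair, adjacent or not), the hypothesis of Ore's theorem is satisfied. Applying Theorem~\ref{thm:ore} then gives that $G$ is Hamiltonian, completing the proof.

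The only minor points to note are the edge cases. For $n \leq 2$ the statement about Hamiltonian cycles is degenerate (a Hamiltonian cycle requires at least three vertices by the definition of a simple cycle), so the theorem is understood for $n \geq 3$. Also, if $G$ has no non-adjacent pair at all, then $G = K_n$, which is trivially Hamiltonian, so Ore's theorem is applied vacuously in that case. There is no substantial obstacle in this proof — the entire work was already absorbed by establishing Ore's theorem, and Dirac's version is essentially a corollary obtained by taking the \emph{minimum} degree bound and doubling it.
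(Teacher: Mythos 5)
Your proposal is correct and matches the paper's approach exactly: the paper derives Dirac's theorem directly from Ore's theorem (Theorem~\ref{thm:ore}) by the same observation that $\deg(u)+\deg(v)\geq n/2+n/2=n$ for any pair of non-adjacent vertices. Your handling of the edge cases ($n\leq 2$ and $G=K_n$) is a welcome extra precision that the paper omits.
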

\begin{proof}
    It simply holds from previous theorem.
\end{proof}

\NB For directed graph this theorem also the same.\\

To prove the next Bondy–Chvátal theorem first several definitions and lemmas are required.\\ \noindent
Let's denote the number of vertices of our graph $G$ by $n$.
\begin{defin}
    The sequence of degrees $d_i = \deg(v_i)$ is called \bf{ordered sequence of degrees} \ifof $d_1\leq d_2 \leq ... \leq d_n$ for some re-numeration of the vertices.
\end{defin}

\begin{defin}
    An ordered sequence of degrees $d_i$ is \bf{majorized} by an ordered sequence of degrees $d'_i$ \ifof $d_i\leq d'_i$ for $i = 1,2,...,n$
\end{defin}

From this point let consider the numeration of $G$ such that the sequence of degrees $G$ is ordered.

\begin{st}[Majorized sequance]
    Let a graph $G'$ be obtained from a graph $G$ by adding one edge. Then the ordered sequence of degrees $G'$ majorizes the ordered sequence of degrees $G$.
\end{st}
\begin{proof}
    Let's the added edge be $(v_i, v_j), i<j$. Then new degrees of vertices $v_i$ and $v_j$ will be $d_i+1$ and $d_j+1$. Let's prove that after changing $d_i$ to $d_i+1$ one can re-numerate vertices such that new ordered sequence $d'_i$ will majorize previous one. 

    Let the number $p\in \rZ^+$ is defined as following: $d_i = d_{i+1} = ... = d_{i+p} < d_{i+p+1}$. 
    
     Now let's produce the re-numeration:
    $$d'_k = d_k,\ k = 1, 2, ..., \widehat{i+p}, ... n \quad \text{ and } \quad d'_{i+p} = d_{i+p}+1.$$
    
    The proof ends by doing the same procedure for $d_j$. 
\end{proof}

\begin{lm}[Upper estimation]
    $$d_k\leq k\iff \|{v\in V(G): \deg(v)\leq k}\|\geq k.$$
\end{lm}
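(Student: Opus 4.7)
The plan is to unpack both sides as statements about how many of the ordered degrees $d_1 \leq d_2 \leq \cdots \leq d_n$ lie in the interval $[0,k]$; they turn out to be equivalent just from monotonicity of the sequence. So the proof is a direct two-direction argument with no induction or extra ingredients.

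For the forward direction, I would assume $d_k \leq k$. Since the sequence is ordered, $d_i \leq d_k \leq k$ for every $i \in \{1, \ldots, k\}$, so the $k$ vertices $v_1, \ldots, v_k$ (under the re-numeration that produces the ordered sequence) all satisfy $\deg(v_i) \leq k$. Hence $\|\{v \in V(G) : \deg(v) \leq k\}\| \geq k$.

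For the reverse direction, I would assume the set $S = \{v \in V(G) : \deg(v) \leq k\}$ satisfies $|S| \geq k$ and argue by contradiction. If $d_k > k$, then by monotonicity $d_j > k$ for all $j \geq k$, so only the vertices $v_1, \ldots, v_{k-1}$ could possibly lie in $S$. This gives $|S| \leq k-1$, contradicting $|S| \geq k$. Therefore $d_k \leq k$.

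There is no real obstacle here: the only subtlety is being careful that the re-numeration used to produce the ordered sequence is the same one used to identify which vertices have degree $\leq k$, but this is automatic since $\deg(v_i) = d_i$ under that numeration. The lemma is essentially just the observation that ``the $k$-th smallest entry is $\leq k$'' is the same as ``at least $k$ entries are $\leq k$.''
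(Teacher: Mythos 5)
Your proof is correct and follows essentially the same route as the paper: both directions rest solely on the monotonicity of the ordered degree sequence, with the forward direction identical and the reverse direction differing only in that you argue by contrapositive where the paper argues directly. No gap.
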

\begin{proof}
    \begin{enumerate}
        \item $(\Rightarrow).$
        $$d_k\leq k\Rightarrow d_1\leq d_2\leq ... \leq d_k\leq k\Rightarrow \|{v\in V(G): \deg(v)\leq k}\|\geq k.$$
        \item $(\Leftarrow).$
        $$\|{v\in V(G): \deg(v)\leq k}\| = k+p,\ p\geq 0\Rightarrow d_1\leq d_2\leq ... \leq d_k\leq ... \leq d_{k+p}\leq k\Rightarrow d_k\leq k.$$
    \end{enumerate}
\end{proof}

The same poof for
\begin{lm}[Lower estimation]
    $$d_{n-k}\geq n-k\iff \|{v\in V(G): \deg(v)\geq n-k}\|\geq k+1.$$
\end{lm}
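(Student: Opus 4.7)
The plan is to mirror the argument used for the Upper estimation lemma, exploiting the fact that the sequence $d_1\leq d_2\leq\ldots\leq d_n$ is nondecreasing. The key observation is that the indices $n-k, n-k+1, \ldots, n$ form exactly $k+1$ consecutive positions at the top of the ordered sequence, so a lower bound on $d_{n-k}$ transfers immediately to a lower bound on all later degrees.

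For the forward direction $(\Rightarrow)$, I would suppose $d_{n-k}\geq n-k$ and then use monotonicity to conclude $d_{n-k}\leq d_{n-k+1}\leq\ldots\leq d_n$, so every one of these $k+1$ degrees is at least $n-k$. This produces $k+1$ vertices with degree $\geq n-k$, which is the desired cardinality bound.

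For the reverse direction $(\Leftarrow)$, I would write $\|\{v\in V(G):\deg(v)\geq n-k\}\|=k+1+p$ with $p\geq 0$. Since the sequence is ordered, the $k+1+p$ largest entries must be precisely $d_{n-k-p},d_{n-k-p+1},\ldots,d_n$, each of which is $\geq n-k$. In particular $d_{n-k}\geq d_{n-k-p}\geq n-k$, again using monotonicity.

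I do not expect any real obstacle here: the content is purely an index-counting fact about an ordered sequence, and it reduces to the same one-line bookkeeping as in the Upper estimation lemma (the author indeed writes ``the same proof for''). The only thing to be careful about is indexing at the boundaries --- checking that $n-k\geq 1$ is implicit in the statement, and that the case $p=0$ in the reverse direction is handled by the trivial chain $d_{n-k}\geq n-k$ directly rather than via a strictly smaller index.
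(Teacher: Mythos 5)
Your proposal is correct and follows essentially the same argument as the paper: the forward direction uses the monotone chain $n-k\leq d_{n-k}\leq d_{n-k+1}\leq\ldots\leq d_n$ to exhibit $k+1$ vertices of degree $\geq n-k$, and the reverse direction writes the count as $k+1+p$ and reads off $d_{n-k}\geq d_{n-k-p}\geq n-k$ from the ordering. No gaps.
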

\begin{proof}
    \begin{enumerate}
        \item $(\Rightarrow).$
        $$d_{n-k}\geq n-k\Rightarrow n-k\leq d_{n-k}\leq d_{n-k+1} ... \leq d_n\Rightarrow \|{v\in V(G): \deg(v)\geq n-k}\|\geq k+1.$$
        \item $(\Leftarrow).$
        
        $$\hspace{-180pt}\|{v\in V(G): \deg(v)\geq n-k}\|\geq k+p+1,\ p\geq 0\Rightarrow$$
        $$\qquad\qquad\qquad\qquad\Rightarrow n-k\leq d_{n-k-p}\leq d_{n-k-p+1} ... \leq d_{n-k}\leq ... \leq d_n\Rightarrow d_{n-k}\geq n-k.$$
    \end{enumerate}
\end{proof}

\begin{lm}[Implication and majorized sequence]
    If the following implication holds for the sequence of degrees $d_i$:
    $$d_k\leq k<\frac n 2\imp d_{n-k}\geq n-k,$$
    then it also holds for a majorized sequence of degrees $d'_i$.
\end{lm}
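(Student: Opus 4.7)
The plan is to unwind the definitions: majorization here is the pointwise inequality $d_i \leq d'_i$ for every index $i$, so the implication for $d'$ will reduce to the implication for $d$ by a two-step sandwich.

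Concretely, I would fix some $k$ with $k < n/2$ and assume the hypothesis $d'_k \leq k$ in order to derive $d'_{n-k} \geq n-k$. The first step is to transfer the hypothesis from $d'$ down to $d$: since $d_k \leq d'_k \leq k$, the original implication (applied to $d$) yields $d_{n-k} \geq n-k$. The second step is to push the conclusion back up from $d$ to $d'$: by majorization $d'_{n-k} \geq d_{n-k}$, hence $d'_{n-k} \geq n-k$, which is exactly what we need.

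I do not expect any obstacle here. The only thing worth flagging is that the direction of both inequalities in the implication cooperates with majorization: the hypothesis $d_k \leq k$ is an upper bound on the $k$-th entry (so a smaller sequence makes it \emph{easier} to satisfy), while the conclusion $d_{n-k} \geq n-k$ is a lower bound on the $(n-k)$-th entry (so a larger sequence makes it \emph{easier} to satisfy). Majorization gives us exactly the inequalities in the right direction at both ends of the implication, which is what makes the argument go through in one line in each direction.
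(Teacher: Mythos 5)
Your proof is correct and is essentially the same argument as the paper's: the paper splits into the case $d'_k>k$ (implication vacuously true) and the case $d_k\leq d'_k\leq k$, which is exactly your two-step sandwich. No issues.
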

\begin{proof}
    \begin{enumerate}
        \item If $d_k\leq k < d'_k$ then implication holds by the false first argument,
        \item If $d_k\leq d'_k\leq k$ then $n-k\leq d_{n-k}\leq d'_{n-k}$ by the majorization property.
    \end{enumerate}
\end{proof}

\begin{thm}[Bondy–Chvátal]
    If for an ordered sequence of degrees of a connected graph $G$ the implication holds: 
    $$d_k\leq k<\frac n 2\imp d_{n-k}\geq n-k,$$ 
    then $G$ is Hamiltonian.
\end{thm}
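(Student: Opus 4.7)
The plan is to argue by contradiction, combining an Ore-style edge-addition trick with the three preceding lemmas (Majorized sequence, Upper/Lower estimation, and Implication and majorized sequence).

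First I would assume $G$ is not Hamiltonian and pass to an \emph{edge-maximal} non-Hamiltonian supergraph $G'$ on the same vertex set (keep adding edges as long as no Hamiltonian cycle appears). By the Majorized Sequence statement, every edge addition only majorizes the ordered degree sequence, and by the Implication-and-majorized-sequence lemma the Chvátal implication is preserved, so $G'$ still satisfies the hypothesis. Next I would show that $G'$ is \emph{Ore-closed}: if $u,v$ were non-adjacent in $G'$ with $\deg(u)+\deg(v)\geq n$, then $G'+(u,v)$ is Hamiltonian by maximality, so $G'$ contains a Hamiltonian $u$--$v$ path, and the exact pigeonhole argument already used in the proof of Ore's theorem~\ref{thm:ore} produces a Hamiltonian cycle in $G'$ itself, a contradiction.

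If $G'$ were complete it would be Hamiltonian (assuming $n\geq 3$), so some non-adjacent pair exists in $G'$. Among such pairs pick $(u,v)$ with $\deg_{G'}(u)+\deg_{G'}(v)$ maximum and $\deg(u)\leq \deg(v)$; set $k=\deg(u)$. From Ore-closedness $\deg(u)+\deg(v)\leq n-1$, and hence $2k\leq n-1$, giving $k<n/2$. The $n-1-\deg(v)\geq k$ non-neighbours of $v$ (excluding $v$ itself) are each non-adjacent to $v$, so by the maximality of the pair $(u,v)$ every such non-neighbour $w$ satisfies $\deg(w)\leq \deg(u)=k$. Thus at least $k$ vertices have degree $\leq k$, and by the Upper estimation lemma $d_k\leq k$.

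Since $k<n/2$, the Chvátal hypothesis applied to $G'$ yields $d_{n-k}\geq n-k$, and by the Lower estimation lemma there are at least $k+1$ vertices of degree $\geq n-k$. The finishing move is to observe that each such high-degree vertex $w$ must be $u$ or a neighbour of $u$ in $G'$: if instead $w\neq u$ and $w\not\sim u$, then $(u,w)$ is non-adjacent and the same maximality gives $\deg(w)\leq \deg(v)\leq n-1-k$, contradicting $\deg(w)\geq n-k$. Because $\deg(u)=k<n-k$, the vertex $u$ itself is not among the high-degree vertices, so all $k+1$ of them are genuine neighbours of $u$, forcing $\deg(u)\geq k+1$ and contradicting $\deg(u)=k$.

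I expect the main obstacle to be the clean execution of the closure/maximality step: one has to be careful that the degree-sequence condition is preserved by the edge additions (which is exactly the content of the earlier two lemmas, so this part is essentially bookkeeping), and that the maximality of $\deg(u)+\deg(v)$ is used \emph{symmetrically}, first to bound the non-neighbours of $v$ from above (producing $d_k\leq k$) and then to bound the non-neighbours of $u$ from above among high-degree vertices (producing the final contradiction). The rest is application of the named lemmas and the Ore-style Hamiltonian-path pigeonhole already rehearsed in the paper.
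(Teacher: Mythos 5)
Your proposal is correct and follows essentially the same route as the paper: pass to an edge-maximal non-Hamiltonian supergraph, pick a non-adjacent pair maximizing the degree sum, derive $d_k\leq k$ with $k<n/2$ from the low-degree non-neighbours of $v$, apply the hypothesis, and obtain a contradiction from the $k+1$ vertices of degree at least $n-k$ versus $\deg(u)=k$. The only differences are presentational: you state the Ore-closedness bound $\deg(u)+\deg(v)\leq n-1$ as a separate step (the paper buries it in the $\|U'\cap V'\|\geq 1$ case of its $U'$, $V'$ analysis) and you explicitly invoke the majorization lemmas to transfer the hypothesis to $G'$, which the paper leaves implicit.
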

\begin{proof}
    Assume the contrary: there exists a graph $G$ that satisfies the condition of the theorem. Let $G'$ be the maximal non-Hamiltonian graph by adding the edges to the graph $G$. Thus $G'$ becomes Hamiltonian by adding any edge. Let's $u$ and $v$ be two not adjacent vertices such that $\deg(u)+\deg(v)$ is maximum in $G'$ and let $\deg(u)\leq\deg(v)$ without loss of generality.
    
    The graph $G'\bigcup(u,v)$ is Hamiltonian. Let like in the Ore's theorem~\ref{thm:ore} $w_n$ be the next vertex for $w$ by the Hamiltonian cycle. Let $V'\sub V(G')$ be the set of vertices $w\in V'$ such that $(v, w)$ is an edge in $G'$ and $U'\sub V(G')$ be the set of vertices $w\in U'$ such that $(u, w_n)$ is an edge in $G'$. Then,
    $$\|U'\| = \deg(u),\ \|V'\| = \deg(v),$$
    $$U'\cup V'\sub V(G')\setminus\{v\}\imp \|U'\cup V'\|\leq n-1,$$
    \begin{enumerate}
        \item If $\|U'\cap V'\| \geq 1$, then it holds the contradiction like in the Ore's theorem~\ref{thm:ore} by figure~\ref{fig:hamc}.
        \item If $\|U'\cap V'\| = 0$. Let's denote $\deg(u)$ by $k$. 
        $$2\,k = 2\deg(u)\leq \deg(u)+\deg(v) = \|U'\cup V'\|\leq n-1 < n \imp k < \frac n 2.$$
        Then $\forall w\in U'$:
        \begin{enumerate}
            \item $w$ doesn't adjacent to $v$,
            \item $\deg(w)\leq \deg(u) = k\ \bigl(\text{otherwise }\deg(w)+\deg(v) > \deg(u)+\deg(v)\bigr).$
        \end{enumerate} 
        Therefore, 
        $$\|z\in V(G):\deg(z)\leq k\|\geq \|U'\| = \deg(u) = k\overset{\text{upper estimation}}{\imp}d_k\leq k < \frac n 2\overset{\text{cond. of theorem}}{\imp}$$
        $$d_{n-k}\geq n-k \overset{\text{lower estimation}}{\imp} \|z\in V(G):\deg(z)\geq n-k\|\geq k+1,$$
        and thus there exists a vertex $z$ not adjacent to $u$ with $\deg(w)\geq n-k$. For this vertex $\deg(z)+\deg(u) \geq n > \deg(v)+\deg(u)$. This holds a contradiction.
    \end{enumerate}
\end{proof}

\begin{thm}[Whitney]
    Any planar 4-connected graph is Hamiltonian.
\end{thm}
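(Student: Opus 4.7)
The plan is to derive Whitney's theorem as a corollary of a substantially stronger statement, the \emph{Tutte cycle theorem}: every $3$-connected planar graph $G$, together with a prescribed face $F$, admits a simple cycle $C$ such that each connected component $K$ of $G\setminus V(C)$ has at most three neighbors on $C$, and each such component whose closure meets $\d F$ has at most two neighbors on $C$. Such a cycle is called a \emph{Tutte cycle}. With this theorem in hand the deduction of Whitney's statement is essentially free.

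The easy half is extracting a Hamiltonian cycle from a Tutte cycle when $G$ is $4$-connected. If $C$ were a Tutte cycle but not Hamiltonian, some connected component $K$ of $G\setminus V(C)$ would have at most three neighbors on $C$; those at most three neighbors would then form a vertex cut of size at most $3$ separating $V(K)$ from the rest of $G$. By Menger's theorem no such cut can exist in a $4$-connected graph, so $K$ must be empty and $V(C)=V(G)$.

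The genuine work is proving the Tutte cycle theorem itself, which I would attack by strong induction on $\|V(G)\|+\|E(G)\|$, further strengthening the claim to prescribe two vertices $u,v\in\d F$ and to demand a \emph{Tutte path} from $u$ to $v$ whose bridges satisfy the analogous attachment conditions; the Tutte cycle is then obtained by closing such a path with an edge of $\d F$. The inductive step splits into two cases. If $G$ admits a $2$-vertex cut $\{x,y\}$, split $G$ along this cut, add a virtual edge $xy$ to each piece, apply induction to each piece with the appropriate outer face, and glue the resulting Tutte paths across $\{x,y\}$. If $G$ is already $3$-connected, pick an edge $e$ of $\d F$ near the target vertex, contract or delete it, use corollary~\ref{cor:3connminus} together with Menger's theorem to re-establish enough connectivity to apply induction, and reconstruct a Tutte path in $G$ by re-expanding $e$ and carefully tracking how the bridges of the induction-produced path attach near $e$.

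The principal obstacle is the bridge bookkeeping at this inductive step. After re-expansion one has to verify that no bridge of the newly built path acquires a fourth attachment, and that no bridge meeting $\d F$ acquires a third attachment. Planarity enters essentially here through the cyclic ordering of edges around each vertex and the fact that distinct bridges of a cycle in a plane graph occupy disjoint face regions, so that their attachment sets interleave in a controlled way. Once this planar bridge analysis is carried out, the Tutte cycle theorem falls out and Whitney's theorem follows immediately from the short argument above.
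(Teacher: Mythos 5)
Your deduction of Whitney's theorem from a Tutte-cycle theorem is exactly the paper's approach: the paper invokes Thomassen's theorem (Theorem~\ref{thm:thomas}, stated there without proof) and concludes, just as you do, that in a $4$-connected planar graph a non-Hamiltonian Tutte cycle would leave a nonempty $C$-component with at most three attachments, contradicting $4$-connectivity via Menger. Like the paper, you leave the hard ingredient (the Tutte/Thomassen cycle theorem itself) as an unproven black box --- your induction is only a sketch --- so the two arguments coincide in substance and in their level of completeness.
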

\begin{proof}
    We will prove this theorem later in the subsection~\ref{subsc:4conn} 
\end{proof}


\section{Planarity and $\k$-connectivity}\label{subsec:planconn}


\subsection{Planarity testing algorithms}\label{subsc:planartest}

By using Tarjan's algorithm we can delete all 1-cut vertices in a graph and consider just 2-connected components. Therefore, let's consider in this section a 2-connected graph without loss of generality. 

\begin{defin}
    \bf{$C$-component} of the graph $G$ corresponded to a simple cycle $C$ is either edge that doesn't belong to the cycle $C$ but connected vertices of $C$ or connected component of the graph $G\setminus C$ with all attachments from this subgraph to the cycle $C$.
\end{defin}

\begin{defin}
    Two $C$-components are \bf{skew} iff. they contains simple paths with starting points $x_1, x_2\in C$ and ending points $y_1, y_2 \in C$ such that these vertices are in the order $x_1, x_2, y_1, y_2$ corresponded to some orientation of the cycle $C$. 
\end{defin}

\begin{defin}
    \bf{Interplacement graph} with respect to cycle $C$ is the graph where vertices correspond to $C$-components of the cycle $C$ and edges correspond to the skew relationship between $C$-components (see figure~\ref{fig:apalg1} for example).
\end{defin}

\NB Interpacement graph can have isolated vertices.

Here we will consider the bipartitness property for not connected graphs: not connected graph is said to be bipartite if each connected component of the graph is bipartite (see figure~\ref{fig:apalg1}).

\begin{figure}[H]
    \centering
	\includegraphics[width=0.6\textwidth]{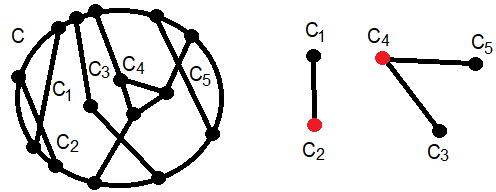}
	\caption{The simple cycle $C$, five $C$-components $C_1, C_2, ... , C_5$ (left) and bipartite interplacement graph corresponded to the cycle $C$ (right).}\label{fig:apalg1}
\end{figure}

Consider a basic algorithm for testing planarity property for 2-connected graphs:

\begin{enumerate}
    \item  \bf{Auslander-Parter algorithm}.\\
    This algorithm tested planarity property of a 2-connected graph $G$ by checking interplacement graphs for each cycle $C$ for bipartitness.\\
    \emph{Description:}
    \begin{enumerate}
        \item\label{enum:apalg0} Find any simple cycle $C$, e.g. by adding an edge to spanning tree.
        \item\label{enum:apalg1} Find all connected components of the graph $G\setminus C$ and add to them edges which doesn't belong to the cycle $C$ but connected vertices of $C$. These will be all $C$-components.
        \item\label{enum:apalg2} Construct interplacement graph and check it for bipartitness. If interplacement graph is not bipartite, then the graph $G$ is not planar.
        \item\label{enum:apalg3} For each $C$-component $C'$ construct a new cycle in the subgraph $G' = C\bigcup C'$ by changing the path through $C$ between any two consecutive (corresponding to some direction on $C$) attachments of $C'$ to the path between them in $C'$ (see figure~\ref{fig:apalg2}) and go recursively to the step~\ref{enum:apalg1} for the graph $G'$.
        \item The recursion terminates when the cycle $C$ has just one $C$-component in $G'$ and it is a path. 
    \end{enumerate}

\begin{figure}[H]
\vspace{-15pt}
    \centering
	\includegraphics[width=0.58\textwidth]{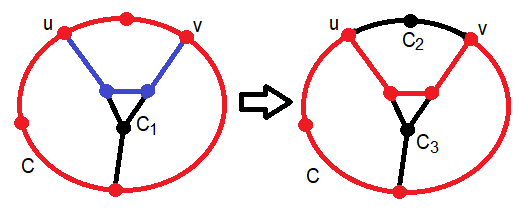}
	\caption{A simple cycle $C$ (left, red), $C$-component $C_1$ (left, black and blue) and a path between consecutive attachments of $C_1$ in $C$-component $C_1$ (left, blue). New cycle $C$ (right, red) and new $C$-components $C_2, C_3$ (right, black) corresponded to the step~\ref{enum:apalg3} of Auslander-Parter algorithm.}\label{fig:apalg2}
\end{figure}

\begin{lm}\label{lm:ausparter}
    A 2-connected graph is planar iff. for each simple cycle $C$:
    \begin{enumerate}
        \item the interplacement graph is bipartite,
        \item for each $C$-component $C'$ the subgraph $C\bigcup C'$ is planar.
    \end{enumerate}
\end{lm}
\begin{proof}
    If all conditions held then the planar representation can be constructed by reverse steps of recursion in Auslander-Parter algorithm.
\end{proof}

\NB If a graph is planar (if algorithm doesn't terminate earlier) the number of vertices which belong to $int(C)$ is decreasing after each step of the recursion. Hence the depth of recursion is $O\bigl(\|V\|\bigr)$. Since the graph is planar, for this graph $\|E\|\leq 3 \|V\|-6$ and thus, the number of $C$-components (and thus vetrices of the interplacement graph) is $O\bigl(\|E\|\bigr) = O\bigl(\|V\|\bigr)$. Therefore, the complexity of bipartite testing algorithm for the interplacement graph is $O\bigl(\|V\|^2\bigr)$. Thus, the complexity of Auslander-Parter algorithm is $\mathbf{O\bigl(\|V\|^3\bigr)}$.

\end{enumerate}

\subsection{3-connectivity}\label{subsc:3conn}

\begin{lm}\label{lm:plan2conn}
    The minimal non-planar subgraph of a non-planar graph is 2-connected.
\end{lm}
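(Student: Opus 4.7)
The plan is to interpret ``minimum'' as edge-minimal: a non-planar sub-graph $H \subseteq G$ such that every proper sub-graph of $H$ is planar. Under this reading, the assertion has a short contradiction-based proof, and the two obstacles to 2-connectivity (being disconnected, or having a cut vertex) can be ruled out separately.

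First, I would rule out $H$ being disconnected. If $H = H_1 \sqcup H_2$ with both parts non-empty, then a planar embedding of each $H_i$ yields a planar embedding of the disjoint union (draw them in disjoint discs of the plane), so planarity of both $H_i$ would force planarity of $H$. Hence at least one $H_i$ is non-planar. Since each $H_i$ is a proper sub-graph of $H$ (by lemma~\ref{lm:plan1} still a sub-graph), this contradicts minimality of $H$.

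The main step is ruling out a cut vertex. Suppose $H$ is connected but has a cut vertex $v$; then $H$ decomposes as $H = H_1 \cup H_2$ where $V(H_1)\cap V(H_2)=\{v\}$ and both $H_i$ have at least one edge, hence are proper sub-graphs of $H$. By minimality each $H_i$ is planar. The key tool is corollary~\ref{cor:outface}: for each $H_i$ I can choose a planar representation in which some face incident to $v$ is the outer face. Then $v$ sits on the unbounded face of each drawing, and I can glue the two drawings along $v$ (translate and, if necessary, scale one of them so that the two copies of $v$ coincide and the rest of $H_2$ lies in a small disc of the outer face of $H_1$ not meeting $H_1$). This yields a planar embedding of $H$, contradicting that $H$ is non-planar.

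The only delicate step is the gluing argument, where one must make sure that after placing $H_2$ into the outer face of $H_1$ the two drawings are disjoint away from $v$; this is where corollary~\ref{cor:outface} does the real work by letting me move a face touching $v$ to the unbounded region, so that a neighbourhood of $v$ in each drawing is free to accommodate the other. Once that is done, both obstructions (disconnection and cut vertex) have been eliminated, so $H$ is 2-connected, completing the proof.
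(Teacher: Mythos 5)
Your proposal is correct and follows essentially the same route as the paper: rule out disconnection by combining planar embeddings of the components, then rule out a cut vertex $v$ by splitting into two proper (hence planar) pieces meeting only at $v$, applying corollary~\ref{cor:outface} to put $v$ on the outer face of each drawing, and gluing. No substantive difference from the paper's argument.
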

\begin{proof}
Let's denote this subgraph by $G$.
\begin{enumerate}
    \item Let's prove that $G$ is 1-connected or just connected. Assume the contrary. Then, this subgraph $G$ has at least two connected components. Since $G$ is minimal and non-planar, each component should be planar and therefore, subgraph $G$ is also planar. This holds contradiction.
    \item Let's prove that $G$ is 2-connected. Assume the contrary. Then, there exists a vertex $v$, such that by deleting $v$ graph becomes not connected and thus has at least two connected components. Let's denote one of the components by $G_1$ and by $G_2$ all the rest. Since the subgraph $G$ is minimal and non-planar, the induced subgraphs on $V(G_1)\bigcup \{v\}$ vertices and on $V(G_2)\bigcup \{v\}$ vertices are planar. By using corollary~\ref{cor:outface} representations of these subgraphs can be modified such that the vertex $v$ would belong to the boundary of outer faces of these subgraphs. Therefore, the union of representations of these subgraphs will be planar and it is representation of the whole graph $G$. This holds contradiction.   
\end{enumerate}
\end{proof}

\begin{lm}\label{lm:k5k33}
    If the minimal non-planar subgraph of a non-planar graph doesn't contain subdivisions of complete graphs $K_5$ and $K_{3,3}$ then this subgraph is 3-connected.
\end{lm}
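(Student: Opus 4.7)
The plan is to argue by contradiction, splitting along a hypothetical $2$-vertex cut and patching planar drawings of the two sides together. Since we already know from Lemma~\ref{lm:plan2conn} that the minimum non-planar sub-graph $G$ is $2$-connected, the only thing left is to rule out a $2$-vertex cut.

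Suppose for contradiction that $\{u,v\}$ is a $2$-vertex cut of $G$. Let $H_1,H_2,\ldots$ be the connected components of $G\setminus\{u,v\}$, and partition them into two non-empty groups; let $G_1$ be the maximum sub-graph on the first group together with $\{u,v\}$, and $G_2$ the maximum sub-graph on the second group together with $\{u,v\}$. Because $G$ is $2$-connected, in each $G_i$ there is a path $P_i$ from $u$ to $v$ whose internal vertices lie strictly inside that side. Let $G_i^+$ denote $G_i$ with the edge $(u,v)$ added (if it is not already present).

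The first key step is to show that both $G_1^+$ and $G_2^+$ are planar. Each is a proper sub-graph of $G$ in vertex count, so by the minimality of $G$ as a non-planar graph, the only obstruction could be the newly added edge $(u,v)$. If $G_1^+$ were non-planar, by hypothesis on $G$ and Pontryagin--Kuratowski (which we may use here since we are told $G$ contains no subdivision of $K_5$ or $K_{3,3}$, and the minimality context forces any Kuratowski subdivision inside $G_1^+$ to use the added edge), every Kuratowski subdivision inside $G_1^+$ must traverse $(u,v)$. But then replacing $(u,v)$ in such a subdivision by the path $P_2\subset G_2$ produces a $K_5$- or $K_{3,3}$-subdivision inside $G$ itself, contradicting the assumption. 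The same argument works for $G_2^+$.

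The second key step is the gluing. By Corollary~\ref{cor:outface}, we may draw $G_1^+$ in the plane so that the edge $(u,v)$ lies on the boundary of the outer face; do the same for $G_2^+$, with the outer face of that drawing replacing an inner face at the edge $(u,v)$. Identifying the two copies of the edge $(u,v)$ produces a planar representation of $G_1\cup G_2=G$ (the extra edge $(u,v)$, if it was not already in $G$, can simply be erased from the drawing). This contradicts the non-planarity of $G$, so no $2$-vertex cut exists and $G$ is $3$-connected.

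The main obstacle I expect is the first step: carefully justifying that non-planarity of $G_i^+$ forces the forbidden subdivision to ride along the added edge $(u,v)$, and then rerouting it through a path on the opposite side without reintroducing the very sub-structures we are excluding. One must also be slightly careful that $P_2$ is internally disjoint from $G_1$ (which is immediate since its internal vertices lie in $G_2\setminus\{u,v\}$) so that the rerouted object is genuinely a subdivision and not a degenerate figure.
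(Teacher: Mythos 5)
Your argument is correct and is essentially the paper's own proof: the same $2$-cut decomposition, the same addition of the edge $(u,v)$ to each side, the same rerouting of a Kuratowski subdivision through a $u$--$v$ path on the opposite side, and the same gluing of planar pieces via Corollary~\ref{cor:outface}. The only cosmetic difference is that you merge the components into two sides and reach the final contradiction through planarity of $G$, whereas the paper keeps the components $M_i$ separate and reaches it through a forbidden subdivision in $G$; both sub-arguments appear in both versions.
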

\begin{proof}
    Let's assume the contrary. Denote the minimal non-planar subgraph by $G$. By lemma~\ref{lm:plan2conn} the subgraph $G$ is 2-connected, then, there exist two vertices $u$ and $v$, such that by deleting them $G$ becomes not connected. Let's denote the connected components of the $G\setminus\{u, v\}$ by $H_1, H_2, ..., H_k$. 
    
    Let $M_i$ be an induced subgraph on vertices $V(H_i)\bigcup\{u,v\}$. Let's prove that each $M_i$ is also connected. Assume the contrary: $M_i$ is not connected. Thus, $H_i = M_i\setminus\{u,v\}$ has connections to only one vertex. Then, the graph $G$ becomes not connected after deleting this vertex. This contradicts to the 2-connectivity of $G$. Therefore, each $M_i$ has connections to both vertices $u$ and $v$ and thus, connected.
    
    Assume that all $M_i+(u,v)$ (either $M_i$ with the edge $(u,v)$ or just $M_i$ if $u$ and $v$ are adjacent) are planar, then one can combine them for a whole planar representation by putting them inside each other (see figure~\ref{fig:k5k33}). Therefore, $\bigcup_{i = 1}^k \bigl(M_i+(u,v)\bigr) = G+(u,v)$ is planar. This contradict with non-planarity of $G$. Thus, there exists non-planar $M_j+(u,v)$.

\begin{figure}[H]
    \centering
	\includegraphics[width=0.2\textwidth]{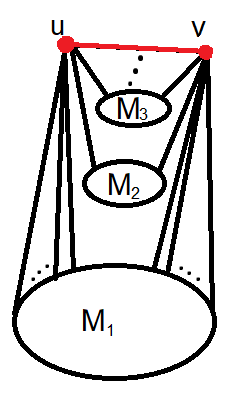}
	\caption{Planar representation of $\bigcup_{i = 1}^k \bigl(M_i+(u,v)\bigr)$ for lemma~\ref{lm:k5k33}.\label{fig:k5k33}}
\end{figure}

    Since $G$ is the minimal non-planar subgraph that contains no subdivisions of graphs $K_5$ and $K_{3,3}$, the graph $M_j+(u,v)$ should contain subdivisions of graphs $K_5$ or $K_{3,3}$. Since all $M_i$ are connected, for any $k\neq j$ there exists path $P_{uv}$ between $u$ and $v$ in $M_k$ and therefore, the graph $M_j\bigcup P_{uv}\sub G$ contains subdivisions of graphs $K_5$ or $K_{3,3}$. This holds contradiction since $G$ doesn't contain subdivisions of graphs $K_5$ and $K_{3,3}$.
\end{proof}

Now let's prove the Pontryagin-Kuratowski theorem:

\begin{thm}[Pontryagin-Kuratowski]\label{thm:ponkur}
    A connected graph is planar iff. it doesn't contain subdivisions of complete graphs $K_5$ and $K_{3,3}$.
\end{thm}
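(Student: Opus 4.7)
The necessity direction is immediate: $K_5$ and $K_{3,3}$ are non-planar (shown earlier), and by Lemma~\ref{lm:plan1} and Lemma~\ref{lm:plan2} both taking sub-graphs and taking subdivisions preserve non-planarity, so any graph containing a subdivision of $K_5$ or $K_{3,3}$ is non-planar. So I only need to carry out the sufficiency direction.

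For sufficiency, I would argue by contradiction via a minimum counterexample. Let $H$ be a non-planar graph that contains no subdivision of $K_5$ or $K_{3,3}$ and has the smallest possible number of edges. By Lemma~\ref{lm:plan2conn} and Lemma~\ref{lm:k5k33} $H$ is 3-connected. Pick any edge $e = (u,v) \in E(H)$. By minimality of $H$, the graph $G := H\setminus e$ has a planar embedding; by Corollary~\ref{cor:3connminus} it is also 2-connected. Fix such a planar embedding of $G$. Since $G$ is 2-connected there exist two internally disjoint $u$-$v$ paths (Menger), whose union is a cycle $C$ through $u$ and $v$; I choose such a $C$ so that the number of faces contained in $int(C)$ is maximal.

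In the embedding each $C$-component of $G$ lies entirely in $int(C)$ or entirely in $out(C)$. If some face of $int(C)$ (or some face of $out(C)$) has both $u$ and $v$ on its boundary, then the chord $e$ can be drawn in that face without crossings, giving a planar embedding of $H$ and contradicting non-planarity. Hence no face touches both $u$ and $v$, which is only possible if on \emph{each} of the two sides of $C$ there is a $C$-component whose attachments meet both open $u$-$v$ arcs of $C$; by the maximal choice of $C$, two such components, one inside and one outside, are skew.

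At this point the plan is a case analysis on the attachment patterns of the inside component $A$ and the outside component $B$, using that $H$ is 3-connected to guarantee that $A$ and $B$ each have at least three attachment vertices on $C$. In each case I exhibit five ``branch'' vertices among $u,v$ and chosen attachments of $A, B$, joined pairwise by internally disjoint paths built from arcs of $C$, from $A$ and $B$, and from the chord $e$, producing a subdivision of $K_5$ or $K_{3,3}$ in $H$. This contradicts the choice of $H$ and completes the proof. The main obstacle is precisely this last combinatorial step: the skewness of $A$ and $B$ only gives a pair of crossing path-pairs, so to recover a full $K_5$ or $K_{3,3}$ one has to split into sub-cases (two attachments of one component and two of the other on each arc; three on one arc, two on the other; a shared attachment equal to $u$ or $v$; etc.) and in each sub-case carefully select three internally disjoint paths inside $A$ and inside $B$ that hit the right attachment triples, which is where the 3-connectivity of $H$ and the maximality of $C$ are used in an essential way.
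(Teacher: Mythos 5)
Your overall strategy is the same as the paper's: minimal non-planar counterexample, 3-connectivity via Lemmas~\ref{lm:plan2conn} and~\ref{lm:k5k33}, deletion of an edge $(u,v)$, a cycle $C$ through $u$ and $v$ chosen to maximize its interior, and then an analysis of the $C$-components inside and outside $C$ to extract a Kuratowski subdivision. The necessity direction and the setup up to the choice of $C$ are correct and match the paper.

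The genuine gap is that the decisive step is announced rather than proved. You write that the conclusion follows from ``a case analysis on the attachment patterns'' and yourself identify this as ``the main obstacle,'' but you never carry it out; for this theorem that case analysis \emph{is} the proof, since everything before it is routine reduction. Moreover, the configuration you propose to analyze is not yet the right one. You assert that the maximality of $C$ forces an inner component and an outer component that are skew \emph{with each other}, but what the argument actually requires (and what the paper establishes in its numbered steps before the case split) is a more refined structure: all vertices of $G$ lie in $int(C)$, so every outer bridge is a single edge; at least one inner bridge $B_1$ and every outer bridge $B_2$ are skew with the pair $(u,v)$; no inner $C$-component is skew with any outer bridge (else a $K_{3,3}$ appears immediately); no two inner $C$-components are skew with each other; and there must exist a bridge $B_3$ joining $u$ to $v$ directly. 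Only after all of these facts are in hand does the final enumeration of the relative positions of $B_1$, $B_2$, $B_3$ (the paper's cases 1ab, 2ab, 3a, 3b) produce a subdivision of $K_{3,3}$ in five of the six cases and of $K_5$ in the remaining one. Without establishing these intermediate facts and then exhibiting the branch vertices and internally disjoint paths in each case, the sufficiency direction is not proved.
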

\begin{proof}
\emph{Necessarity} holds from lemmas~\ref{lm:plan1} and~\ref{lm:plan2}.\\
\emph{Sufficiency.} Assume the contrary: there exists non-planar graph that doesn't contain subdivisions of complete graphs $K_5$ and $K_{3,3}$. Let's consider it's minimal non-planar subgraph $G$. By lemma~\ref{lm:k5k33} this subgraph $G$ is 3-connected.

Consider two adjacent vertices $u$ and $v$. Since $G$ is minimal non-planar subgraph, $G' = G\setminus(u,v)$ is planar. By corollary~\ref{cor:3connminus} the subgraph $G'$ is at least 2-connected and by Menger's theorem there exist two vertex independent paths from $u$ to $v$. Thus, there exists a simple cycle in planar representation of $G'$ that contains vertices $u$ and $v$. Let $C$ be the simple cycle that contains $u$ and $v$ in planar representation of $G'$ with the maximum edges in $int(C)$.

If there exists a vertex that belongs to the $out(C)$ one can construct the cycle that contains more edges in inner part than in $int(C)$ (see figure~\ref{fig:widercycle}). Thus all vertices of the planar representation $G'$ belong to $int(C)$. 

\begin{figure}[H]
\vspace{-12pt}
    \centering
	\includegraphics[width=0.3\textwidth]{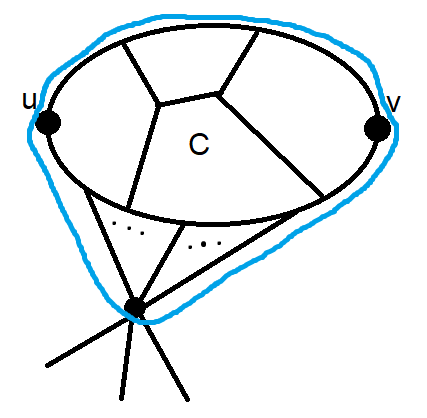}
	\caption{A cycle (blue) with more edges in inner part than in $int(C)$.}\label{fig:widercycle}
\end{figure}

Let's give some additional definitions which are necessary for the proof: let's denote by \emph{bridge of cycle $C$} in planar representation of $G'$ a simple path starting and ending in vertices of $C$ that doesn't contain another vertices and edges of $C$ (this \textquote{bridge of cycle} definition is not general definition in graph theory we use this definition only in this theorem. The general another definition of a bridge will be introduced in section~\ref{subsc:bridge}.). If bridge laying in $out(C)$ then it is called \emph{outer bridge}, if it is in $int(C)$ then it is called \emph{inner} bridge. Since the subgraph $G'$ is connected and all vertices except $V(C)$ belong to $int(C)$, all connected components of $G'\setminus C$ are inner $C$-components.

Now all preparations are done. Let's prove this theorem step by step: 
\begin{enumerate}
    \item If there is no inner (outer) bridges in $G'$ then there exists planar representation of $G$. This holds a contradiction, thus \emph{there exist at least one inner and one outer bridge}.
    \item If all inner bridges in $G'$ are not skew with $(u,v)$ then also there exists planar representation of $G$. This holds a contradiction, thus \emph{at least one inner bridge is skew with $(u,v)$}. Let's denote this inner bridge by $\mathbf{B_1}$.
    \item Since there is no vertices belongs to $out(C)$, \emph{all outer bridges are edges}.
    \item If there exists an outer bridge in $G'$ that is not skew with $(u,v)$ one can extend the cycle $C$ such that it will contain more edges (like in figure~\ref{fig:widercycle}). This holds a contradiction, thus \emph{all outer bridges are skew with $(u,v)$}. Let's denote an outer bridge by $\mathbf{B_2}$.
    \item \emph{Any two inner $C$-components are not skew} otherwise they intersect and form one whole $C$-component.
    \item\label{enum:case1} If a $C$-component is skew with $(u,v)$ and outer bridge then $G$ contains a subdivision of $K_{3,3}$ (see figure~\ref{fig:pk2}). This holds a contradiction, thus \emph{either $C$-component skew with an outer bridge or $C$-component skew with $(u,v)$}.
    \item\label{enum:case2} If there doesn't exist bridges from $u$ to $v$, then, let's divide $C$-components of the graph $G'$ by two groups: the first is $(u,v)$ and all inner $C$-components which are not skew with $(u,v)$ and the second --- outer bridges and inner $C$-components which are skew with $(u,v)$. The interplacement graph of these $C$-components (including the edge $(u,v)$) is bipartite by this division and the property~\ref{enum:case1} and since all $C$-components are planar ($G'$ is planar) then $G$ is planar by the lemma~\ref{lm:ausparter}. This holds a contradiction, thus \emph{there exists a bridge from $u$ to $v$}. Let's denote it by $\mathbf{B_3}$.
\end{enumerate}

Since the subgraph $G'$ is planar, the bridge $B_1$ should intersect with $B_3$ in some points. Consider the cases of relative positions of $B_1$, $B_2$ and $B_3$:
\begin{enumerate}
    \item The bridges $B_1$ and $B_2$ don't intersect.
    \item The bridges $B_1$ and $B_2$ intersect in 1 point.
    \item The bridges $B_1$ and $B_2$ intersect at least in 2 points.
\end{enumerate}
And
\begin{enumerate}[label=\alph*.]
    \item Bridges $B_1$ and $B_3$ intersect in 1 point. 
    \item Bridges $B_1$ and $B_3$ intersect at least in 2 points.
\end{enumerate}

For cases 1ab, 2ab and 3b the subgraph $G$ contains a subdivision of $K_{3,3}$ (see figure~\ref{fig:pk2}). For the case 3a it is easy to see that the subgraph $G$ contains a subdivision of $K_5$. This holds the contradiction and ends the proof.

\begin{figure}[H]
\vspace{-12pt}
    \centering
	\includegraphics[width=0.8\textwidth]{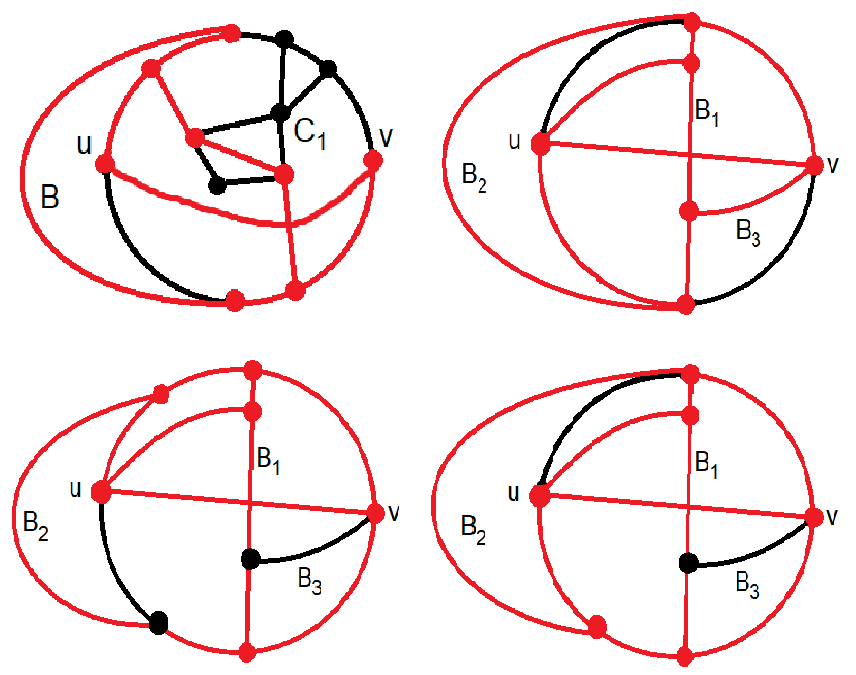}
	\caption{The subdivisions of $K_{3,3}$ in the Pontryagin-Kuratowski theorem for the case~\ref{enum:case1} (left, top), for the cases 1ab (right, top), 2ab (left, bottom), 3b (right, bottom) marked as red.}\label{fig:pk2}
\end{figure}

\end{proof}

Another result about a connection between planarity and connectiveness is
\begin{thm*}[Steinitz]
    A connected graph is planar and 3-connected iff. it is graph of a convex polyhedron. 
\end{thm*}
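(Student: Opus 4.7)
The plan is to handle the two directions separately, with the first being essentially a repackaging of machinery already developed above and the second requiring genuinely new ideas.

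For the easy direction (any graph of a convex polyhedron is planar and $3$-connected), planarity is obtained exactly as in the proof of Theorem~\ref{thm:eulerpoly}: the radial bijection from the polyhedron to a surrounding sphere followed by stereographic projection through an interior point of a chosen face produces a planar representation of the edge graph. Three-connectivity is more subtle. I would proceed via Menger's theorem: it suffices to exhibit, for any two distinct vertices $u,v$, three vertex-disjoint $u$-$v$ paths in the edge graph. Geometrically, one takes three planes through $u$ and $v$ in sufficiently general position; each such plane cuts the boundary of the polyhedron in a simple closed curve passing through $u$ and $v$, and by a slight perturbation this curve may be routed along edges, so that the two arcs between $u$ and $v$ become genuine paths in the edge graph. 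Choosing the planes so that the three resulting curves share no interior vertex yields three vertex-disjoint paths as required.

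The hard direction is the real content of Steinitz's theorem: every planar $3$-connected graph $G$ is realizable as the edge graph of a convex polyhedron in $\rR^3$. The classical plan has two stages. First, produce a strictly convex straight-line embedding of $G$ in the plane via Tutte's barycentric (rubber-band) embedding: fix one face as the outer face (allowed by Corollary~\ref{cor:outface}), pin its vertices to the corners of a convex polygon, and require every interior vertex to sit at the centroid of its neighbors. This gives a linear system whose unique solution, by Tutte's theorem, is a planar straight-line drawing in which every bounded face is a strictly convex polygon and no two faces overlap; it is exactly here that $3$-connectivity enters, preventing degenerate or collapsing faces. Second, apply the Maxwell--Cremona correspondence to lift this planar drawing to $3$-space: construct a system of equilibrium stresses of a single sign on the interior edges (which exists precisely because the drawing came from a Tutte embedding), and extract from it a piecewise-linear height function turning each face of the planar drawing into a genuine planar face of a convex polytope. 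The vertical projection of this polytope's $1$-skeleton reproduces the planar embedding, so its edge graph is isomorphic to $G$.

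The main obstacle is the lifting stage. Tutte's theorem itself reduces to a clean linear-algebra fact (nondegeneracy of a Laplacian-type operator on the interior vertices), but the Maxwell--Cremona step requires the notion of equilibrium stress on a planar framework and a careful verification that the assembled height function is globally \emph{convex} rather than merely piecewise linear. The critical point where $3$-connectivity is used twice -- once to force strict convexity of the Tutte faces, and then again, via the resulting nonvanishing single-signed stress, to ensure the lift is strictly convex -- is precisely what causes the theorem to fail for planar graphs that are only $2$-connected, and carrying out that verification rigorously is the core difficulty that would occupy most of a full proof.
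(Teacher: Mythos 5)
Your plan for the hard direction (planar $3$-connected $\Rightarrow$ convex polyhedron) is a genuinely different route from the paper's. The paper realizes the graph by induction along a $\Delta Y$ reduction sequence: it first shows (via the grid-minor lemma and the $\Delta Y$-reducibility machinery) that every simple $3$-connected planar graph is $\Delta Y$ reducible to $K_4$, realizes $K_4$ as the tetrahedron, and then reverses each reduction step geometrically --- a $\Delta$-to-$Y$ step is undone by truncating a corner of the polyhedron with a plane, and a $Y$-to-$\Delta$ step by compressing the triangular face and deforming so that the three adjacent face planes meet in a point. Your Tutte-embedding-plus-Maxwell--Cremona-lifting scheme is the other classical proof; it buys an explicit coordinate realization and avoids the long combinatorial reduction, whereas the paper's approach stays entirely within the elementary polyhedral operations already developed in its duality section. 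However, as written your proposal is a roadmap rather than a proof: Tutte's spring-embedding theorem, the existence of a single-signed equilibrium stress, and above all the global convexity of the lifted height function are all asserted, and you yourself flag the last of these as ``the core difficulty that would occupy most of a full proof.'' That is precisely the content of the theorem in this direction, so the proposal has a genuine gap there.

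For the converse direction, your planarity argument coincides with the paper's (radial projection to a sphere followed by stereographic projection, as in the Euler theorem for convex polyhedra). Your $3$-connectivity argument, though, differs from the paper's and is not yet sound as stated. The paper deletes two vertices $u,v$ and shows directly that any path can be rerouted around faces so as to avoid them (pushing each offending edge or vertex of the path onto the rest of the boundary of an adjacent face). You instead propose cutting the boundary surface with three planes through $u$ and $v$ and ``perturbing'' the resulting closed curves onto the $1$-skeleton. A generic plane section runs through the interiors of faces, not along edges, and the step of homotoping three such curves onto the edge graph while keeping them internally vertex-disjoint is exactly the nontrivial combinatorial content you would need to supply; without it the three vertex-independent paths required by Menger's theorem have not been produced. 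The paper's rerouting argument (or a Balinski-style linear-functional argument) would need to replace this step.
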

We will prove this theorem in subsection~\ref{subsec:st}.

\subsection{4-connectivity}\label{subsc:4conn}



First let's introduce famous Thomassen theorem for $C$-components

\begin{thm}[Thomassen]\label{thm:thomas}
    Let $C_{out}$ be a simple cycle corresponded to outer face of 2-connected planar graph $G$. Consider a vertex $v$ and an edge $e$ of $C_{out}$ and any another vertex $u$. Then there exists simple cycle $C$  that contains $u, v$ and the edge $e$ such that
    \begin{enumerate}
        \item Each $C$-component has at most 3 vertices of attachments,
        \item Each $C$-component containing an edge of $C_{out}$ has at most 2 vertices of attachments.
    \end{enumerate}
\end{thm}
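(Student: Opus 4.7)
The plan is to prove this by strong induction on $|V(G)|$. The base case is a triangle ($|V(G)| = 3$) where $C = C_{out}$ trivially satisfies both attachment conditions, since then every $C$-component is either empty or a single edge. Throughout the inductive step, I fix the outer cycle $C_{out}$, the prescribed edge $e = (a,b) \in C_{out}$, the vertex $v \in C_{out}$, and the target vertex $u$. The strategy is always to find a structure inside $G$ (a chord of $C_{out}$, or a bridge of $C_{out}$ containing $u$) that lets me apply the induction hypothesis to a strictly smaller 2-connected planar graph, and then glue the resulting cycle back together.

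The cleanest case is when $C_{out}$ has a chord $f = (x,y)$. Such a chord splits $G$ along $f$ into two 2-connected planar subgraphs $G_1$ and $G_2$, each having an outer cycle consisting of an arc of $C_{out}$ together with $f$. Suppose (reindexing if necessary) $e$ lies on the outer cycle of $G_1$. If $u \in V(G_1)$, I apply induction to $G_1$ with the same $e$, a suitable $v' \in \{v,x,y\} \cap V(G_1)$, and $u$, obtaining a cycle $C_1$; then $C = C_1$ already works in $G$ because $G_2$ (minus $f$) collapses into at most one $C$-component attaching only at $x,y$. If instead $u \in V(G_2)$, I apply induction to $G_2$ using $f$ as the designated edge to obtain a cycle $C_2$ through $u$, and I apply induction to $G_1$ using $e$ and one of $\{x,y\}$ as $v'$; splicing these along $f$ yields the desired $C$ in $G$. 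One has to verify that the $C$-components of the combined cycle split neatly into $C_i$-components of $G_i$ plus possibly the single edge $f$, which preserves the bounds on attachments.

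If $C_{out}$ has no chord, the analysis divides according to whether $u \in C_{out}$. If $u \in C_{out}$, then $C = C_{out}$ already contains $u$, $v$, and $e$; every $C$-component is an interior bridge of $C_{out}$, all of whose attachments lie in $V(C_{out})$, so the only thing to rule out is a bridge with $\geq 4$ attachments — and such a bridge can be partly absorbed into the cycle by rerouting through one of its internal paths, using 2-connectivity of the bridge to pick two consecutive attachments and replace the arc of $C_{out}$ between them by a path inside the bridge; after this rerouting the old bridge becomes several smaller pieces each with fewer attachments, and we recurse. If $u \notin C_{out}$, there is a unique interior bridge $B$ of $C_{out}$ containing $u$, with attachments $x_1,\dots,x_k$ on $C_{out}$ (with $k\geq 2$ by 2-connectivity); I apply the induction hypothesis to the 2-connected planar graph $B \cup \{x_ix_{i+1}\text{-arcs of }C_{out}\}$, choosing its outer face appropriately so that $e$ and $v$ lie on it, to obtain a cycle through $u$ which can then be extended to cover $e$ via the arc of $C_{out}$ on the opposite side.

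The main obstacle will be the bookkeeping in the no-chord case when $u \notin C_{out}$: one must choose the auxiliary outer cycle for $B$ so that the induced $C$-components in $G$ inherit the correct attachment counts. A $C$-component that contains an edge of $C_{out}$ is forced to be one of the small bridges trapped between the rerouted cycle and the outer boundary, and arguing that it has at most $2$ attachments requires exploiting that $C_{out}$ is chordless in the current subcase. The step where a bridge with too many attachments is split by rerouting is the real heart of the proof and is what makes the attachment bound drop monotonically so that induction closes.
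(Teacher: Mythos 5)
The paper itself gives no proof of this theorem: it is stated and then used to derive Whitney's corollary, with Tutte (1956) and Thomassen (1983) cited in the bibliography as the sources. So there is nothing in the paper to compare your argument against, and your sketch has to stand on its own. It does not: it reproduces the correct general skeleton (induction, splitting along a chord of $C_{out}$ or along the bridge containing $u$), but the two places where the theorem is actually hard are exactly the places you leave as assertions.

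The first genuine gap is the splice in the chord case. When $u\in V(G_2)$ you obtain $C_2 = f\cup P_2$ with $P_2$ an $x$--$y$ path through $u$, and to glue you need a cycle in $G_1$ that contains $e$, $v$ \emph{and the edge $f$}, so that it decomposes as $f$ plus an $x$--$y$ path through $e$ and $v$ which can be concatenated with $P_2$. The inductive hypothesis lets you prescribe one edge and two vertices; prescribing $x$ and $y$ as vertices does not force $C_1$ to use the edge $f$, and even if $C_1$ happens to pass through both $x$ and $y$, the edge $e$ and the vertex $v$ may lie on different $x$--$y$ arcs of $C_1$, so no single arc can be spliced with $P_2$ while keeping both. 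This is precisely why Tutte's and Thomassen's proofs carry a \emph{stronger} inductive statement about paths (a path between two prescribed vertices of the outer cycle through a prescribed edge, with the same bridge/attachment conditions): the cycle statement is not self-reproducing under your decomposition. The second gap is the chordless case: the ``rerouting'' of an interior bridge with four or more attachments is asserted, not proved. Replacing an arc of $C_{out}$ by a path through the bridge can delete $v$ or $e$ from the cycle, creates a new $C$-component consisting of that arc together with the residue of the bridge whose attachment count you do not control (and which contains edges of $C_{out}$, so it must satisfy the stricter bound of at most $2$ attachments), and ``we recurse'' is not backed by any decreasing quantity. You correctly identify this bookkeeping as ``the real heart of the proof''; since it is not carried out, the proposal is an outline of the known strategy rather than a proof.
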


The proof for this theorem is complicated, thus we just give a reference~\cite{Thomassen} and simply prove the theorem about 4-connected graphs


\begin{cor}(Whitney).
    Any planar 4-connected graph is Hamiltonian.
\end{cor}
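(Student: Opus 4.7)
\emph{The plan is to apply Thomassen's theorem (Theorem~\ref{thm:thomas}) once, to a carefully chosen triple $(u,v,e)$, and then rule out every non-trivial $C$-component by invoking $4$-connectivity.} Any $4$-connected graph is in particular $2$-connected, so $G$ admits a planar embedding in which the outer face is bounded by a cycle $C_{out}$, and Theorem~\ref{thm:thomas} is applicable. Note moreover that any planar $4$-connected graph satisfies $|V(G)| \geq 6$: the minimum degree is at least $4$, whence $4|V(G)| \leq 2|E(G)| \leq 6|V(G)|-12$.

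The key is to pick $u$, $v$, $e$ so that every cycle through them is forced to contain at least four distinct vertices. I would choose any edge $e=v_1v_2\in E(C_{out})$, then any $v\in V(C_{out})\setminus\{v_1,v_2\}$ (non-empty since $|V(C_{out})|\geq 3$), and finally any $u\in V(G)\setminus\{v,v_1,v_2\}$ (non-empty since $|V(G)|\geq 6$). Theorem~\ref{thm:thomas} then supplies a simple cycle $C$ through $u$, $v$ and the edge $e$ such that every $C$-component has at most $3$ attachment vertices. Because $u,v,v_1,v_2$ are four pairwise distinct vertices of $C$, we have $|V(C)|\geq 4$.

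I claim that this $C$ is Hamiltonian. Suppose, for contradiction, that it is not; then some $C$-component $H$ contains an internal vertex $w\notin V(C)$, and its attachment set $A\subseteq V(C)$ satisfies $|A|\leq 3$. By the definition of a $C$-component, $H^\circ:=V(H)\setminus A$ is a connected component of $G\setminus V(C)$, and every edge from $H^\circ$ to $V(G)\setminus V(H^\circ)$ is incident to some vertex of $A$. Hence in $G-A$ the vertex $w$ is disconnected from every vertex of $V(C)\setminus A$. Since $|V(C)|\geq 4>3\geq |A|$, the set $V(C)\setminus A$ is non-empty, so $G-A$ is disconnected and $A$ is a vertex cut of $G$ of size at most $3$, contradicting $4$-connectivity. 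Therefore $V(C)=V(G)$ and $C$ is the desired Hamiltonian cycle.

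The main subtlety to watch out for is the degenerate case $A=V(C)$: there $G-A$ need not be disconnected, and $4$-connectivity yields no contradiction. The careful choice of $(u,v,e)$ above is precisely designed to sidestep this pitfall, since forcing the four pairwise distinct ``mandatory'' vertices $u,v,v_1,v_2$ onto every admissible cycle guarantees $|V(C)|\geq 4>|A|$ and hence $A\subsetneq V(C)$.
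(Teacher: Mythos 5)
Your proposal is correct and follows essentially the same route as the paper: apply Thomassen's theorem once to a suitably chosen $v$, $e$ on the outer cycle and a further vertex $u$, and then observe that any $C$-component with an internal vertex would have an attachment set of size at most $3$ forming a vertex cut, contradicting $4$-connectivity. Your additional care in forcing four distinct mandatory vertices onto $C$ (so that $V(C)\setminus A\neq\varnothing$ and the attachment set is genuinely a cut) is a point the paper's one-line argument leaves implicit, but the underlying idea is identical.
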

\begin{proof}
    Since the outer face should contain at least 3 vertices, let's choose an edge $e\in C_{out}$ not adjacent to $v\in C_{out}$ and $u\notin C_{out}$. The simple cycle $C$ corresponded to Thomassen theorem~\ref{thm:thomas} should be Hamiltonian cycle (otherwise there exists $C$-component with $\|V(C)\|\geq 1$ that has at most 3 attachments and it holds a contradiction with 4-connectivity). 
\end{proof}


\section{Duality}

\subsection{Dual graphs}\label{subsec:dp}

\begin{defin}
    \bf{Dual graph} $D(G)$ corresponded to a planar representation of a graph $G$ is the graph constructed as following: in each face $F_u$ of the graph $G$ add corresponded vertex $u$ for dual graph, if two faces $F_u$ and $F_v$ adjacent to each other, then add the edge $(u,v)$.
\end{defin}

The dual graphs are defined only for planar graphs, thus we omit the planar properties in this subsection for short.\\   

\NB Dual graph can contain loops and multi-edges (see figure~\ref{fig:duals}).

\NB For isomorphic graphs dual graphs can be non-isomorphic (see figure~\ref{fig:duals} left).

\NB Non-isomorphic graphs can have isomorphic dual graphs (see figure~\ref{fig:duals} right).\\

We see that in the case then dual graph is not simple there are many non-trivial non-intuitive difficult things. In this case 1-cycles (loops) and 2-cycles (multi-edges) can be considered as simple-cycles and thus consider planar representations for not only simple graphs.

\begin{figure}[H]
\vspace{-12pt}
    \centering
	\includegraphics[width=1.0\textwidth]{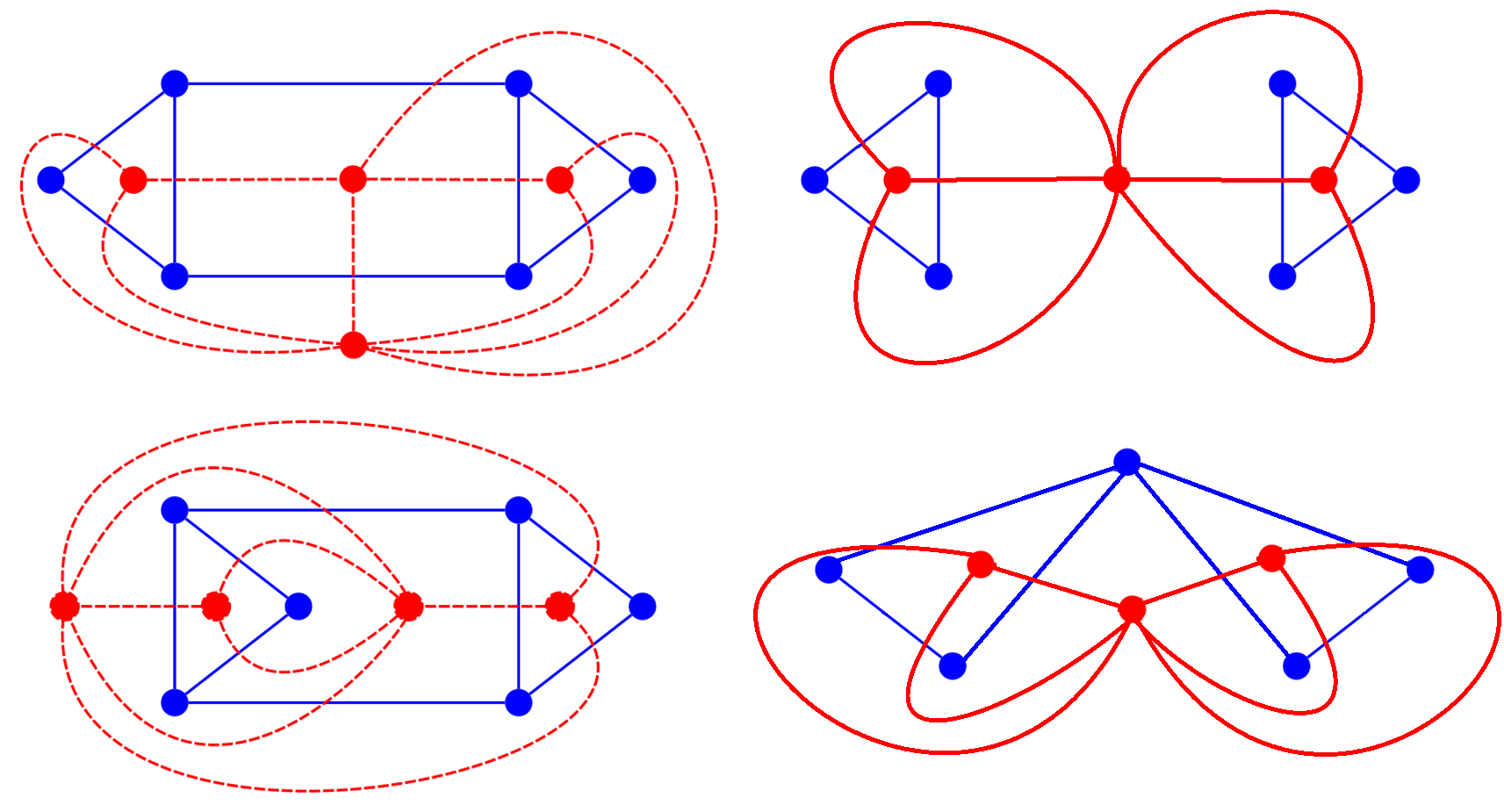}
	\caption{Not isomorphic dual graphs for two different planar representations of one simple graph (left). Isomorphic dual graphs for non-isomorphic simple graphs (right).}\label{fig:duals}
\end{figure}

\begin{lm}
    Dual graph is connected.
\end{lm}
\begin{proof}
    For any two vertices in dual graph $D(G)$ there exists a curve started from the first point and ended in the second point that doesn't contain vertices of $G$. The sequence of faces and edges traversed by this curve corresponds to the path between these two points in $D(G)$.
\end{proof}

\begin{lm}\label{lm:dualpl}
    Dual graph is planar.
\end{lm}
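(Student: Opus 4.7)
The plan is to exhibit an explicit planar embedding of $D(G)$ obtained from the given planar representation of $G$, by placing each dual vertex inside its corresponding face and drawing each dual edge as a short arc crossing only its corresponding primal edge. The three ingredients are: choosing representative points, constructing the dual arcs locally, and arguing that these arcs can be drawn disjointly.

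First I would fix the given planar representation of $G$ and, for each face $F_u$ of $G$, pick a point $p_u$ in the interior of $F_u$ to serve as the embedded location of the dual vertex $u$. Next, for each primal edge $e$ of $G$, which lies on the boundary between faces $F_u$ and $F_v$ (with $u=v$ allowed, producing a loop), pick an interior point $m_e$ on $e$ and draw the dual edge as a simple curve $\gamma_e$ that goes from $p_u$ to $m_e$ inside $F_u\cup\{m_e\}$, then from $m_e$ to $p_v$ inside $\{m_e\}\cup F_v$. By construction $\gamma_e$ meets the drawing of $G$ only at the single point $m_e$ and lies inside $\overline{F_u}\cup\overline{F_v}$.

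The key verification is that two different dual arcs $\gamma_e$ and $\gamma_{e'}$ do not cross. Because each primal edge is crossed by exactly one dual arc (at its own midpoint $m_e\ne m_{e'}$), crossings can only occur inside some face $F_u$. Inside a given face $F_u$ the pieces of dual arcs are of the form \textquote{straight segment from $p_u$ to $m_e$} for the various edges $e$ bounding $F_u$; since $F_u$ is an open region of the plane homeomorphic to an open disk (or to a disk with punctures for the outer face, treated similarly after a stereographic change of viewpoint as in Corollary~\ref{cor:outface}), one can select these arcs as disjoint simple curves emanating from $p_u$ in the cyclic order in which the edges of $G$ appear around $F_u$. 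This is the step I expect to require the most care, and it is a standard topological fact which I would invoke via the Jordan curve theorem and the disk model of each face.

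Finally, combining all dual vertices and all dual arcs gives a drawing of $D(G)$ in the plane with no two edges crossing, so $D(G)$ is planar by definition. The loops arising when $e$ bounds the same face $F_u$ on both sides, and the multi-edges arising when two primal edges share the same pair of adjacent faces, cause no trouble: the loop is drawn as a closed curve in $\overline{F_u}$ crossing only $e$, and multi-edges are drawn as parallel arcs through distinct midpoints $m_e\ne m_{e'}$ and can be separated inside each of the two bounding faces just as above.
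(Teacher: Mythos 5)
Your proposal is correct and follows essentially the same route as the paper: place each dual vertex in the interior of its face, route each dual edge through an interior point of the corresponding primal edge so that it stays inside the union of the two incident faces, and verify that distinct dual arcs cannot cross. If anything, your handling of the case where two dual arcs meet inside a common face (separating them in the cyclic order of the boundary edges around $p_u$) is more explicit than the paper's argument, which only invokes the fact that distinct faces intersect along edges rather than open sets.
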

\begin{proof}
    Any edge $(u,v)$ of a dual graph corresponds to the adjacency property of two faces $F_u$ and $F_v$ and thus can be matched with the edge of the graph $G$ that belongs to $F_u\bigcap F_v$. Therefore, one can redraw this representation of the dual graph such that $u\in int(F_u)$ and $v\in int(F_v)$ and $(u,v)\sub F_u\bigcup F_v$ and moreover, there exists small neighborhood $U(u,v)\sub F_u\bigcup F_v$. 

    Assume the contrary, that dual graph is not planar. Then, for any representations of the dual graph there exist two edges $e_1$ and $e_2$ which intersect. Hence there exists neighborhood of $e_1$ and $e_2$ such that  $U(e_1)\sub F_{u_1}\bigcup F_{v_1}$ and $U(e_2)\sub F_{u_2}\bigcup F_{v_2}$ respectively. Thus, there exists common open set $W = U(e_1)\bigcap U(e_2)$ such that $W\sub (F_{u_1}\bigcup F_{v_1})\bigcap(F_{u_2}\bigcup F_{v_2})$. It holds contradiction since faces of a planar representation can intersect only by edges but not by open sets.
\end{proof}

\begin{lm}
    Planar graph $G$ is Hamiltonian iff. there exist a representation and a simple cycle $C$ such that the parts of the dual graph $D(G)$ without the vertex corresponded to the outer face which belong to $int(C)$ and $out(C)$ are trees.
\end{lm}
\begin{proof}
    The existence of vertex of the graph $G$ that belongs to $int(C)$ or $out(C)$ is equivalent to the existence of a cycle in the dual graph $D(G)$.
\end{proof}

\begin{lm}
    Let's $G$ be connected graph. Dual graph $D(G)$ is simple iff. $G$ is 3-edge connected. 
\end{lm}
\begin{proof}
    A loop in the dual graph $D(G)$ corresponds to a bridge in the graph $G$. Therefore, the absence of loops corresponded to 2-edge connectivity of the graph $G$. Multi-edges in $D(G)$ correspond to adjacent faces, which have several common adjacent edges. Thus, the additional absence of multi-edges is corresponded to 3-edge connectivity of the graph $G$.
\end{proof}

\begin{cor}
    Let's $G$ be connected graph. Dual graph $D(G)$ has no loops iff. $G$ is 2-edge connected (or equivalently $G$ has no bridges).
\end{cor}

\begin{lm}[Squared duality]\label{lm:dualsquare}
    For 2-edge connected graph $G$ with no loops holds
    $$D(D(G)) = G.$$ 
\end{lm}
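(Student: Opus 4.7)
The plan is to fix a planar representation of $G$ and construct $D(G)$ geometrically in the standard way: place each dual vertex in the interior of its corresponding face of $G$, and draw each dual edge as a curve crossing exactly its primal edge at an interior point, lying otherwise in the union of the two incident faces. By Lemma~\ref{lm:dualpl} this yields a planar representation of $D(G)$; moreover the construction provides a canonical bijection $E(G) \leftrightarrow E(D(G))$ and a canonical bijection $F(G) \leftrightarrow V(D(G))$ as immediate consequences of the definition.

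The heart of the argument is to exhibit a bijection $\phi : V(G) \to F(D(G))$. For each vertex $v \in V(G)$ I would consider the cyclic sequence of primal edges incident to $v$ together with the faces between consecutive such edges; this produces a closed curve in the drawing of $D(G)$ --- a cycle of dual vertices joined by the dual edges of the primal edges at $v$ --- lying in a small neighbourhood of $v$. By the Jordan curve theorem this curve bounds a region containing $v$, which lies in a single face of $D(G)$ that I denote $\phi(v)$; choosing the neighbourhoods pairwise disjoint gives injectivity. For surjectivity I would appeal to Euler's formula: since $G$ is connected,
\[
\|F(D(G))\| = 2 - \|V(D(G))\| + \|E(D(G))\| = 2 - \|F(G)\| + \|E(G)\| = \|V(G)\|,
\]
so $\phi$ is an injection between finite sets of equal size and thus a bijection.

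To conclude, the double dual $D(D(G))$ has one vertex for each face of $D(G)$, hence one vertex per vertex of $G$ via $\phi^{-1}$, and one edge for each edge of $D(G)$, hence one edge per edge of $G$. The final compatibility check is that whenever $e = (u,v) \in E(G)$ and $e^* \in E(D(G))$ is its dual, the dual edge of $e^*$ in $D(D(G))$ joins the two faces of $D(G)$ on its two sides; by construction $e^*$ crosses $e$ transversally, so these two faces are precisely $\phi(u)$ and $\phi(v)$, and the boundary curves built around $u$ and $v$ both contain $e^*$ as a common arc. Identifying $\phi(v)$ with $v$ then yields $D(D(G)) = G$.

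The main obstacle is the bijectivity of $\phi$, which ultimately reduces to the Euler count $\|F(D(G))\| = \|V(G)\|$ and therefore depends crucially on the hypothesis that $G$ is connected: if $G$ had $c > 1$ components, the corollary of Euler's theorem would give $\|F(D(G))\| = \|V(G)\| - c + 1 < \|V(G)\|$, so some face of $D(G)$ would contain several primal vertices and the double-dual identity would break down. Verifying rigorously that the local Jordan-curve neighbourhoods around distinct primal vertices can be chosen disjoint --- and that each primal vertex in fact lies in the specific face built around it rather than in some larger face --- is the piece of the argument where the plane topology has to be handled with care.
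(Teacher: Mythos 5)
Your construction is essentially the paper's: you match each primal vertex $u$ with the closed walk of dual vertices and dual edges running through the faces and edges around $u$, observe that this curve bounds a face of $D(G)$ containing $u$, and then check that the edge correspondence is compatible. The one genuine difference is how surjectivity of the map $V(G)\to F(D(G))$ is obtained. The paper argues geometrically in the reverse direction: for a dual vertex $v$ sitting in a face $F_v$ of $G$ with boundary cycle $u_1\dots u_k$, the faces of $D(G)$ incident to $v$ are exactly the ones matched to $u_1,\dots,u_k$, so no face of $D(G)$ is left unaccounted for. You instead invoke Euler's formula to count $\|F(D(G))\|=2-\|F(G)\|+\|E(G)\|=\|V(G)\|$ and conclude that an injection between equinumerous finite sets is a bijection. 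Both are valid; your counting argument is shorter and makes the role of connectedness completely explicit (as you note, $c>1$ components would break the count), while the paper's direct argument gives slightly more information, namely which vertices of $G$ correspond to the faces around a given dual vertex, which is what lets it conclude the edge-level identification $D(D(e))=e$ locally. One small caution shared by both proofs: when $u$ is a cut vertex or is incident to a bridge, the dual walk around $u$ revisits dual vertices, so it is a closed walk rather than a simple cycle and the Jordan curve theorem must be applied with a little more care; your closing remark about handling the plane topology carefully is exactly where that issue lives.
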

\begin{proof}
    First, let's provide the matching between vertices of the graph $G$ and faces of dual graph $D(G)$. Consider a vertex $u$ of the graph $G$. Denote adjacent edges by $(u, u_i)$ for $i = 1, 2, ... ,k$ and the faces which contain $(u, u_i), (u, u_{i+1})$ for $i = 1, 2, ... ,k-1$ and $(u, u_k), (u, u_1)$ by $F_1, F_2, ..., F_k$ sequentially. These faces correspond to vertices $v_1, v_2, ..., v_k$ of the dual graph respectively. Since the graph is 2-edge connected with no loops, dual graph $D(G)$ also has no loops. Therefore, $F_i\not = F_{i+1}$ for $i = 1, 2, ... ,k-1$ and $F_k\not = F_1$. Thus, each pair $F_i, F_{i+1}$ and $F_k, F_1$ intersect for $i = 1, 2, ..., k-1$ by the edge $(u, u_{i+1})$ and $(u, u_1)$ respectively and thus there is matching between edges $(u, u_{i+1})$ and $(v_i, v_{i+1})$ and $(u, u_1)$ and $(v_k, v_1)$. This matching continues to a matching of the vertex $u$ and a simple cycle $v_1 v_2 ... v_k$ in the dual graph (see figure~\ref{fig:dualmatch1}). Thus, we matched every vertex with a simple cycle (or face).

\begin{figure}[H]
    \centering
	\includegraphics[width=0.7\textwidth]{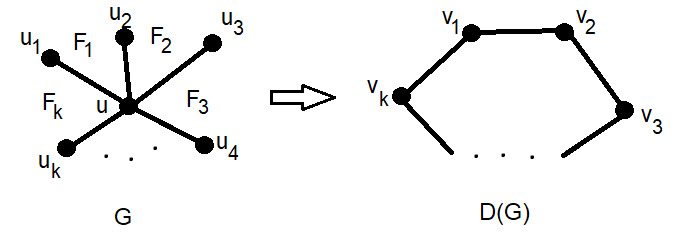}
	\caption{Matching between multi-angle and simple cycle for the graph $G$ and the dual graph $D(G)$ in lemma~\ref{lm:dualsquare}.}\label{fig:dualmatch1}
\end{figure}
    
    Now let's consider a vertex $v$ of the dual graph $D(G)$. This vertex corresponds to some face $F_v$ of the graph $G$. The boundary of the face $F_v$ is a simple cycle (since there are no loops in $G$). Let's denote this cycle by $u_1 u_2 ... u_k$. For each vertex $u_i$ of the graph $G$ there exists a matching face $F_i$ of the dual graph such that these faces adjacent sequentially to each other due to the matching property of edges of $D(G)$ and $G$ (each edge $(u_i, u_{i+1})$ corresponded to edge $F_i\bigcap F_{i+1}$). Since all faces $F_i$ have sequentially intersections in $int(F_v)$ and there are no another edges which belong to $int(F_v)$, the intersection $\bigcap_{i = 1}^k F_i = v$ (see figure~\ref{fig:dualmatch2}).  

\begin{figure}[H]
    \centering
	\includegraphics[width=0.7\textwidth]{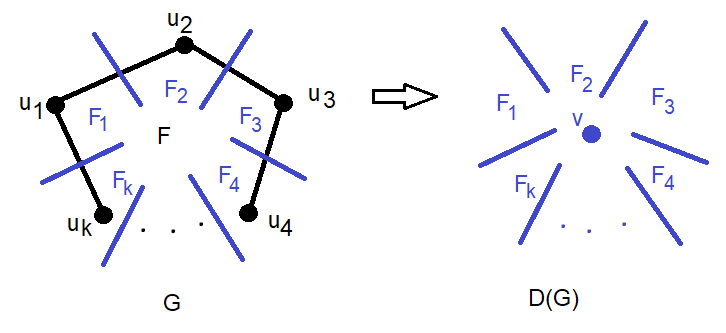}
	\caption{Matching between multi-angle and simple cycle for graph $D(G)$ (blue) and graph $G$ (black) in lemma~\ref{lm:dualsquare}.}\label{fig:dualmatch2}
\end{figure}

    Therefore, any multi-angle of the graph $G$ (the set of faces which contain correspondent vertex) matches bijectively with a face of the graph $D(G)$ and vice versa. Thus, for every vertex $D\bigl(D(v)\bigr) = v$ and for every edge $e$ adjacent to this vertex $D\bigl(D(e)\bigr) = e$. This can be continued for a whole graph and thus, $D\bigl(D(G)\bigr) = G$.
\end{proof}

\begin{cor}
    For 2-edge connected graph $G$ with no loops holds
    $$\|V\bigl(D(G)\bigr)\| = \|F(G)\|,\quad\|E\bigl(D(G)\bigr)\| = \|E(G)\|,\quad\|F\bigl(D(G)\bigr)\| = \|V(G)\|.$$
\end{cor}
\begin{proof}
    By definition: $\|V\bigl(D(G)\bigr)\| = \|F(G)\|$. By lemma~\ref{lm:dualpl}: $\|E\bigl(D(G)\bigr)\| = \|E(G)\|$. Using the correspondence in lemma~\ref{lm:dualsquare}: $\|F\bigl(D(G)\bigr)\| = \|V(G)\|$.
\end{proof}

\NB These lemma and corollary also hold for any connected graphs but the proof is more complex because of loops.

\begin{cor}
    Any non-isomorphic 2-edge connected planar graphs with no loops have non-isomorphic dual graphs.
\end{cor}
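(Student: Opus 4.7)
The plan is to prove this by contraposition, using the Squared duality lemma (Lemma~\ref{lm:dualsquare}) as the main tool. Suppose, for contradiction, that there exist two non-isomorphic connected planar graphs $G_1$ and $G_2$ whose dual graphs are isomorphic, i.e.\ $D(G_1)\cong D(G_2)$. The goal is to derive $G_1\cong G_2$, contradicting the assumption.

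The second step is to apply the dual operator once more. By Lemma~\ref{lm:dualpl} the graphs $D(G_1)$ and $D(G_2)$ are planar, so $D(D(G_1))$ and $D(D(G_2))$ make sense as planar graphs built from the canonical embeddings inherited from the original embeddings of $G_1$ and $G_2$. Then by the Squared duality lemma, $D(D(G_1))=G_1$ and $D(D(G_2))=G_2$ as graphs. So it suffices to argue that the isomorphism $\varphi\colon D(G_1)\to D(G_2)$ lifts to an isomorphism between their second duals; once that is established, composing with the two equalities from Lemma~\ref{lm:dualsquare} yields $G_1\cong G_2$, the desired contradiction.

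The main obstacle is exactly this lifting step: the dual construction is not an operation on abstract graphs but on \emph{planar embeddings}, and in general a graph isomorphism between two planar graphs need not respect their face structures. I would handle this by unpacking the bijective matching between vertices and multi-angles (and between edges and dual edges) established inside the proof of Lemma~\ref{lm:dualsquare}: under $\varphi$, the cyclic order of edges around each vertex of $D(G_1)$ is transported to the cyclic order around the corresponding vertex of $D(G_2)$, hence the simple cycles bounding the faces of $D(G_1)$ are carried to the simple cycles bounding the faces of $D(G_2)$. That face-preserving correspondence is exactly the data required to induce an isomorphism $D(\varphi)\colon D(D(G_1))\to D(D(G_2))$, after which the Squared duality lemma finishes the argument.

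I would then conclude briefly by noting that the remark preceding the statement (``non-isomorphic graphs can have isomorphic duals'') is not a counterexample, since there the two non-isomorphic graphs carry \emph{different} planar embeddings; the corollary here concerns the dual graph viewed together with its induced embedding, which is the setting in which $D\circ D$ is the identity.
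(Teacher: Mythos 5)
Your proposal is correct and matches the paper's (implicit) argument: the corollary is stated without proof as an immediate consequence of the squared duality lemma~\ref{lm:dualsquare}, i.e.\ apply $D$ once more and use $D(D(G_i))=G_i$. Your additional care about the embedding-dependence of the dual construction --- insisting that the isomorphism of duals respect the face structure so that it lifts to the second duals --- is warranted, and it correctly explains why the corollary does not conflict with the earlier remark that non-isomorphic graphs can have isomorphic dual graphs.
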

\begin{proof}
    It holds from the bijection in lemma~\ref{lm:dualsquare}.
\end{proof}

\NB If dual graph has loops this corollary \emph{doesn't hold}. The example can be constructed as in figure~\ref{fig:duals} right-bottom, if split the connected vertex to an edge between two triangles and then move the obtained loop of the dual graph such that a leaf will appear in the dual of dual graph.

\begin{lm}\label{lm:2conn}
      Dual graph $D(G)$ has no loops, 2-connected and planar iff. graph $G$ has no loops, 2-connected and planar.
\end{lm}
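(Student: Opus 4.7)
The plan is to assemble the lemma from the preceding characterizations together with the squared-duality identity $D(D(G)) = G$ (Lemma \ref{lm:dualsquare}). Since $D(H)$ is always planar for a connected planar $H$ (Lemma \ref{lm:dualpl}) and $G = D(D(G))$, planarity on either side of the biconditional transfers to the other for free, so the work reduces to the loop-free, 2-connectedness, and (on the right) simplicity clauses.

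For the loop-free clauses I would invoke the immediately preceding lemma: $D(G)$ has no loops iff $G$ is 2-edge connected. Since $\k \leq \l$ (Lemma \ref{lm:kld}), a 2-vertex-connected graph is automatically 2-edge connected, so the right-hand hypothesis yields $D(G)$ loop-free. Conversely, applying the same characterization to $D(G)$ in place of $G$ and using $D(D(G)) = G$, loop-freeness of $G$ is equivalent to 2-edge-connectedness of $D(G)$, which itself follows from 2-vertex-connectedness of $D(G)$ by $\k \leq \l$.

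The heart of the proof is the 2-vertex-connectedness correspondence. I would first establish the bridging fact that for a loopless connected planar graph, a vertex $v$ is a cut vertex iff some face-boundary walk visits $v$ more than once. Given this, if $v$ is a cut vertex of $G$ separating lobes $G_1, G_2$, then the face $F$ whose boundary walks through $v$ from both sides has the property that $D(G) \setminus F$ splits into the dual vertices coming from $G_1$'s faces and those coming from $G_2$'s faces, because no edge of $G$ is shared by faces lying in different lobes; hence $F$ is a cut vertex of $D(G)$. Applying the same argument with $D(G)$ in place of $G$, together with $D(D(G)) = G$, gives the converse implication. The ``simple'' clause on $G$ (absence of multi-edges) is handled analogously: a pair of parallel edges in $G$ bounds a ``lens'' face whose dual vertex has exactly two neighbours, and the interplay of this bottleneck with 2-connectedness of $D(G)$ is tracked through the squared-duality correspondence.

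The main obstacle I anticipate is the bridging claim relating cut vertices to repeated vertices on face-boundary walks, since in the not-necessarily-simple setting face boundaries are closed walks rather than simple cycles, and one has to check carefully which repetitions correspond to genuine separations of the graph. Once this characterization is pinned down, the remaining pieces amount to bookkeeping: each clause on one side of the biconditional is matched to its counterpart on the other side via squared duality and the previously established lemmas of this section.
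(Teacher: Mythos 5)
Your proposal follows essentially the same route as the paper: one direction is reduced to the other via the squared-duality identity $D(D(G))=G$, loops are handled through the loop--bridge correspondence, and the core step identifies a cut vertex of one graph with a face of the other whose boundary passes through the separation point twice --- the paper runs this argument starting from a hypothetical cut vertex of $D(G)$ and exhibits the separating vertex of $G$ as the face bordering both lobes, which is the mirror image of your version. The ``bridging fact'' you flag as the main obstacle is precisely the step the paper also leaves at the level of a figure (its claim that each lobe $M_i$ contains a face and that any $u_1$--$u_2$ path must pass through the common face's vertex), so your sketch is no less complete than the paper's own.
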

\begin{proof}
    Let's prove \emph{sufficient} condition. The \emph{necessary} condition will hold by the lemma~\ref{lm:dualsquare}. 

    Assume the contrary: the dual graph $D(G)$ has a vertex $v$ such that it becomes not connected after deleting $u$. Consider two connected components $G_1$ and $G_2$ in $D(G)$. Let's $M_1$ and $M_2$ be induced subgraphs on vertices $V(G_1)\bigcup \{v\}$ and $V(G_2)\bigcup \{v\}$ respectively. These subgraphs $M_1$ and $M_2$ should contain cycles $f_1$ and $f_2$ respectively, otherwise one of them is a tree and it holds a contradiction using the lemma~\ref{lm:dualsquare}. Let's denote the vertices of $D\bigl(D(G)\bigr) = G$ corresponded to faces $f_1$ and $f_2$ by $u_1\in G$ and $u_2\in G$ (see figure~\ref{fig:2conn3conn}). Let's denote the vertex corresponded to outer face of $D(G)$ by $u$. Since any inner faces of $M_1$ have no adjacent edges with any inner face of $M_2$, any path from vertex $u_1$ to $u_2$ should contain $u$. This holds a contradiction with 2-connectivity of $G$.
\end{proof}

\begin{lm}\label{lm:3conn}
    Dual graph $D(G)$ is simple, 3-connected and planar iff. graph $G$ is simple 3-connected and planar.
\end{lm}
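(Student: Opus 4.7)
The plan is to use the squared-duality identity $D(D(G)) = G$ from Lemma~\ref{lm:dualsquare}, so that it suffices to prove one direction: if $G$ is simple, 3-connected and planar, then $D(G)$ is simple, 3-connected and planar. Planarity of $D(G)$ is immediate from Lemma~\ref{lm:dualpl}. Simplicity follows by combining Lemma~\ref{lm:kld} (so that $\k(G) \le \l(G)$ and hence $G$ is 3-edge-connected) with the earlier lemma which says that $D(G)$ is simple iff $G$ is 3-edge-connected. Two-connectivity of $D(G)$ is then immediate from Lemma~\ref{lm:2conn} (applied to $G$, which is simple, 2-connected, planar, and has no loops). So the only non-trivial step left is ruling out 2-vertex-cuts in $D(G)$.

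For this I would argue by contradiction, in parallel with the proofs of Lemmas~\ref{lm:k5k33} and~\ref{lm:2conn}. Suppose $\{v_1, v_2\}$ is a 2-vertex-cut of $D(G)$, let $f_1, f_2$ be the corresponding faces of $G$, and let $A, B$ be two components of $D(G) \setminus \{v_1, v_2\}$. Since $G$ is 2-connected, the face boundaries $C_1 = \d f_1$ and $C_2 = \d f_2$ are simple cycles; and since $D(G)$ has no multi-edges (just proved), $C_1$ and $C_2$ share at most one edge. The key combinatorial observation is that, because $A$ and $B$ lie in different components of $D(G) \setminus \{v_1, v_2\}$, no edge of $G$ can be incident to both an $A$-face and a $B$-face. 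A cyclic-rotation analysis at each vertex of $G$ then shows that any vertex $x$ incident to both an $A$-face and a $B$-face must have both $f_1$ and $f_2$ in its cyclic face arrangement (they are the only faces that can separate the $A$-corners from the $B$-corners at $x$), so $x \in V(C_1) \cap V(C_2)$.

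I would then split on $|V(C_1) \cap V(C_2)|$. If this intersection is empty, then $V(G)$ partitions into an $A$-incident part and a $B$-incident part with no edges between them, making $G$ itself disconnected --- a contradiction. If it is a single vertex $\{u\}$, the same analysis shows that $\{u\}$ is a 1-vertex-cut of $G$, contradicting 2-connectivity. Otherwise I would pick two shared vertices $\{u, w\} \subseteq V(C_1) \cap V(C_2)$ and argue that $G \setminus \{u, w\}$ has no edge between the $A$-side and the $B$-side: the only remaining candidate ``cross'' edges between $f_1$ and $f_2$ have both endpoints on $V(C_1) \cap V(C_2)$ (and by simplicity of $D(G)$ there is at most one such edge), so they too disappear, yielding a 2-vertex-cut of $G$ and contradicting 3-connectivity.

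The main obstacle is the last case when $|V(C_1) \cap V(C_2)| > 2$, since then one must verify that \emph{some} two shared vertices already suffice to cut $G$. I expect to resolve this via the topological picture of $\bar f_1 \cup \bar f_2$ as two closed disks glued at the shared points: if they are glued at $k$ isolated points then $S^2$ minus this union has exactly $k$ disk components, so the components of $D(G) \setminus \{v_1, v_2\}$ correspond to cyclically consecutive pinch points, and choosing $u, w$ to be two pinch points delimiting one such component yields the desired 2-cut. Once that case is settled, the reverse implication is free from $D(D(G)) = G$.
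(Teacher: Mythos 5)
Your proposal is correct and follows essentially the same route as the paper: reduce to one direction via $D(D(G))=G$, inherit planarity, simplicity and 2-connectivity from the earlier lemmas, and then convert a 2-cut $\{v_1,v_2\}$ of $D(G)$ into a 2-cut of $G$ consisting of two vertices lying on both $\partial f_1$ and $\partial f_2$ --- exactly the vertices the paper describes as the duals of the ``outer'' and ``middle'' faces of $D(G)$. The paper's own argument is only a one-sentence reference to the preceding lemma and a figure, so your pinch-point analysis of $\overline{f_1}\cup\overline{f_2}$ (in particular, choosing the two consecutive shared vertices that delimit the region containing one component, so that the two sides of the cut genuinely partition the remaining vertices) supplies precisely the detail the paper omits.
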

\begin{proof}
    The proof for this theorem is the same as for the previous lemma, but in this case one should consider two faces: outer face that corresponds to the vertex $u\in G$ and \textquote{middle} face $f$ that corresponds to the vertex $w\in G$ (see figure~\ref{fig:2conn3conn}).
\end{proof}

\begin{figure}[H]
\vspace{-12pt}
    \centering
	\includegraphics[width=0.8\textwidth]{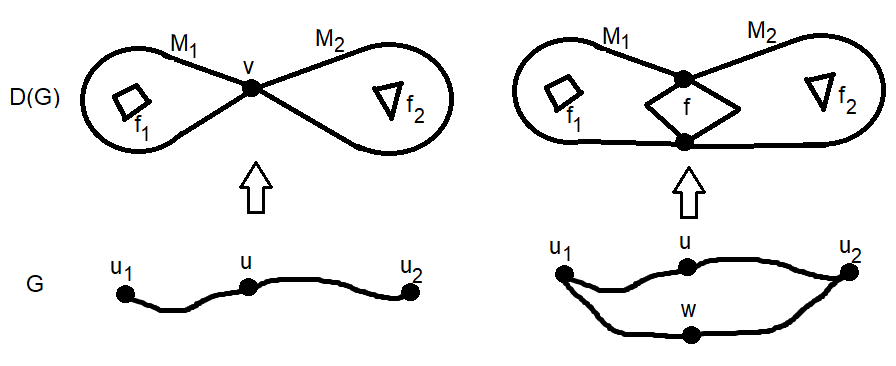}
	\caption{Graph $G$ and dual graph $D(G)$ for lemmas~\ref{lm:2conn} (left) and \ref{lm:3conn} (right).}\label{fig:2conn3conn}
\end{figure}

\subsection{Steinitz theorem}\label{subsec:st}

Let's start with several definitions and lemmas:

\begin{defin}
    \bf{Contraction of an edge $(u,v)$} is the operation that transform the graph $G$ to the graph $G'$ by merging vertices $u$ and $v$ to a new vertex $w\in V(G')$: $V(G') = \bigl(V(G)\setminus \{u, v\}\bigr)\bigcup \{w\}$ (see figure~\ref{fig:op}).
\end{defin}

\begin{defin}
    \bf{Minor of a graph $G$} is a graph that is obtained from $G$ by contraction and deleting operations of edges.
\end{defin}

\begin{defin}
    \bf{In series} reduction is a contraction of an edge adjacent to the vertex of degree two. \bf{In parallel} reduction is deleting multiple edges between to vertices. \bf{SP-reduction} of graph is a sequence of series and parallel reductions obtained for this graph (see figure~\ref{fig:op}).
\end{defin}

\begin{figure}[H]
\vspace{-12pt}
    \centering
	\includegraphics[width=1.0\textwidth]{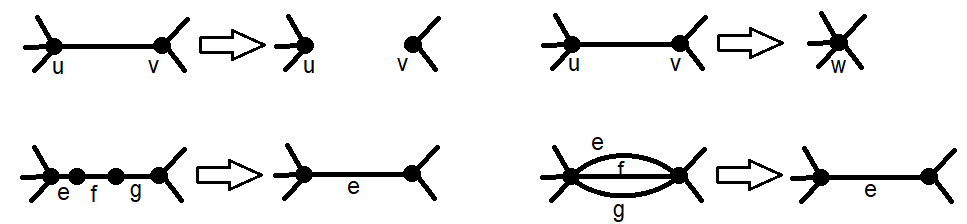}
	\caption{Deleting (left-top) and contraction (right-top) operations. In series (left-bottom) and in parallel (right-bottom) reductions.}\label{fig:op}
\end{figure}

\begin{defin}
    \bf{Grid graph $G(k, l)$} is the graph on the plane with vertices coordinates $\{(a,b):0\leq a\leq k-1, 0\leq b\leq l-1\}$ such that two vertices are connected if either their $x$-coordinates differ by 1 and $y$ coincide or $y$-coordinates differ by 1 and $x$ coincide.  
\end{defin}

\begin{lm}\label{lm:minorgrid}
    Any planar graph is a minor of a grid graph. 
\end{lm}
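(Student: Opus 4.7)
The plan is to realize the given planar graph $G$ as a subdivision inside a sufficiently large grid graph $G(k,l)$, and then recover $G$ as a minor by contracting the subdivision edges. Recall that if $H$ is a subdivision of $G$, then $G$ is obtained from $H$ by contracting every edge along each subdivision path, so $G$ is automatically a minor of $H$. It therefore suffices to exhibit some subdivision of $G$ as a subgraph of a grid.

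First I would fix a planar representation of $G$ in the plane, which exists by hypothesis. After an affine scaling I may assume every vertex of $G$ sits at a grid point and that distinct vertices are far apart (all pairwise coordinate differences at least some large parameter $N$ to be chosen later). By a generic perturbation I may further assume that no vertex lies on an edge to which it is not incident, and that the finitely many edges of the embedding pairwise avoid each other except at common endpoints.

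Next, around each vertex $v$ I would place a small grid \emph{gadget}: a cross-shaped grid patch with a distinct \textquote{port} on its boundary for each edge incident to $v$, arranged so that the cyclic order of the ports agrees with the rotational order of the edges around $v$ in the planar embedding. For each edge $(u,v)$ of $G$ I would then route a staircase path in the grid between the port of $u$ assigned to this edge and the corresponding port of $v$, roughly following the curve used to draw the edge in the embedding. Provided $N$ is large relative to the combinatorial complexity of the drawing, these paths can be taken internally vertex-disjoint, so the union of all gadgets and routing paths is a subdivision of $G$ sitting inside $G(k,l)$ for suitable $k,l$.

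The main obstacle is precisely this routing step: guaranteeing that the staircase paths do not collide with each other or with non-incident gadgets. The cleanest way to handle this is to surround each edge of the planar embedding by a thin tubular neighborhood disjoint from every other edge and from every non-incident vertex (possible because there are only finitely many edges and they pairwise avoid each other), and to choose $N$ large enough that a unit-width grid path fits inside each tube; the cyclic-order matching at the ports ensures that two such tubes sharing an endpoint do not force the grid paths to merge. Once internal disjointness is verified, contracting every subdivision edge of the resulting grid subgraph produces $G$, exhibiting $G$ as a minor of $G(k,l)$.
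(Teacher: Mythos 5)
Your construction is essentially the paper's: both fix a planar drawing, rescale it onto a fine integer grid, represent each vertex by a small grid region with boundary ports in the correct rotational order, route each edge as a grid path inside a tubular neighbourhood disjoint from all other edges and non-incident vertices, and recover $G$ by contraction. The one point to repair is your framing, not your construction: it does \emph{not} suffice to exhibit a subdivision of $G$ inside the grid, because a subdivision preserves the degree of every branch vertex while a grid vertex has degree at most $4$, so for a vertex of degree at least $5$ no subdivision of $G$ can be a subgraph of any grid graph. What your gadget step actually produces is a minor model rather than a subdivision: each cross-shaped patch is a connected branch set that you contract to a single vertex, and that contraction of a whole connected subgraph (not merely of subdivision edges) is the minor operation you need. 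The paper handles the same difficulty explicitly by first splitting every vertex of degree greater than $4$ into a small tree of degree-$\le 4$ vertices and undoing the splits by contraction at the end; either phrasing works, but in your first paragraph the reduction to \emph{subdivisions} should be replaced by a reduction to \emph{minor models}, or the degree reduction should be made explicit.
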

\begin{proof}
    Consider some representation of a planar graph $G$ on the plane $\rR^2$. We can split any vertex such that $G$ will have a vertices of maximum degree 4 (see figure~\ref{fig:spl}). Let's $U_r(v)$ be a disk in $\rR^2$ with center in $v$ and radius $r$, $\partial U_r(v)$ be the boundary of the disk and $U_r(e) = \bigcup\limits_{p\in e} U_r(p)$, where $p\in \rR^2$ are points of an edge $e$. 

\begin{figure}[H]
\vspace{-12pt}
    \centering
	\includegraphics[width=0.6\textwidth]{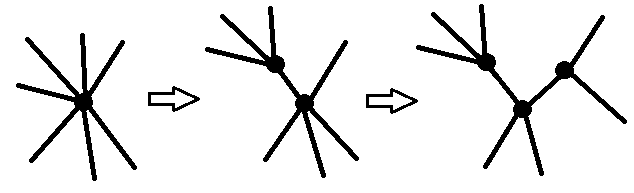}
	\caption{Splitting in small neighbourhood of vertex with degree more than 4 to vertices which has maximum degree 4 for lemma~\ref{lm:minorgrid}.}\label{fig:spl}
\end{figure}

    Let's find $\eps_1, \eps_2$ and $\eps_3$ such that:
    \begin{enumerate}
        \item Consider for every vertex $v_j$ a point $v'_j\in U_{\eps_3}(v_j)$ with rational coordinates. For any $j$: $U_{\eps_3}(v_j)\sub U_{\eps_1}(v'_j)$.
        \item For any $j$ the number of intersections of $\partial U_{\eps_1}(v'_j)$ and an edge $e$ should be equal to 1 for $e$ adjacent to $v_j$ and 0 otherwise.
        \item For any edge $e$ all regions $U_{\eps_2}(e)\setminus\Bigr(\bigcup\limits_{j=1}^n U_{\eps_1}(v'_j)\Bigl)$ are connected and disjoint.
    \end{enumerate}

    Since $U_{\eps_3}(v_j)\sub U_{\eps_1}(v'_j)$, the vertex $v_j\in U_{\eps_1}(v'_j)$. Let's denote by $e_{i,j}$ the intersections of edges $e_i$ and $\partial U_{\eps_1}(v'_j)$. Consider the points with rational coordinates $p_{i,j}\in \partial U_{\eps_1}(v'_j)$ such that $p_{i,j}\in U_{\eps_2}(e_{i,j})$. Since all vertices have the maximum degree 4, any $\partial U_{\eps_1}(v'_j)$ can contain maximum 4 different points $p_{i,j}$.
    
    Let's scale the coordinates of $\rR^2$ by multiplication for the product of denominators of all corresponding points. Thus all corresponded points will have integer coordinates. For any region corresponded to $U_{\eps_1}(v'_j)$ there exists the natural number $C_j$ for scaling coordinates of $\rR^2$ such that after scaling there exist less or equal four vertex independent paths from $v'_j$ to $p_{i,j}\in \partial U_{\eps_1}(v'_j)$ with consequent vertices in $U_{\eps_1}(v'_j)$ with either their $x$-coordinates differ by 1 and $y$ coincide or $y$-coordinates differ by 1 and $x$ coincide. There exist natural numbers $D_i$ with the same properties for any region $U_{\eps_2}(e_i)\setminus\Bigl(\bigcup\limits_{j=1}^n U_{\eps_1}(v'_j)\Bigr)$ and two vertices $p_{i,j}$ and $p_{i,k}$ corresponded to the edge $e_i$. By scaling, finally for the product of all $C_j$ and $D_i$ and shifting coordinates one can obtain required grid graph.
\end{proof}

\begin{defin}
    \bf{$\Delta Y$ operation} is an operation replacing a triangle that bounds a face by 3-star that connects the same vertices or vice versa (see figure~\ref{fig:dyop}). If a triangle transforms to 3-star it is called \bf{$\Delta$-to-$Y$ operation} and the reverse is called \bf{$Y$-to-$\Delta$ operation}.
\end{defin}

\begin{figure}[H]
\vspace{-12pt}
    \centering
	\includegraphics[width=0.6\textwidth]{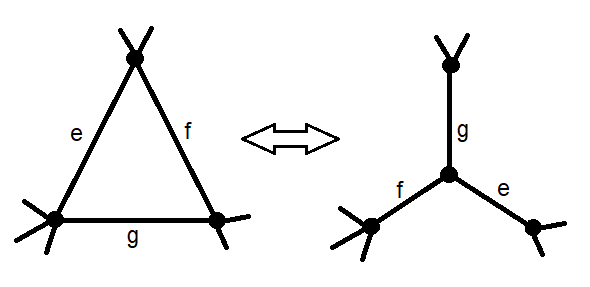}
	\caption{$\Delta Y$ operation.}\label{fig:dyop}
\end{figure}

Because $\Delta Y$ operations can cause multi-edges in a new graph, thus in this subsection all graphs can be considered as not simple by default.  

\begin{lm}\label{lm:ytod}
    \begin{enumerate}
        Let $v\in G$ is a vertex of degree 3 that has three non-parallel adjacent edges $(e, f, g)$.
        \item If $G$ has no loops, planar and 2-connected, then the result of $Y$-to-$\Delta$ operation to $(e, f, g)$ and then series of $SP$-reductions also has no loops, planar and 2-connected.
        \item Let $G$ be simple, planar, 3-connected graph and not $K_4$, then the result of $Y$-to-$\Delta$ operation to $(e, f, g)$ and then series of $SP$-reductions is also simple, planar, 3-connected.
    \end{enumerate}
\end{lm}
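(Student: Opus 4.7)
The plan is to verify each required property (planarity, absence of loops, and the relevant connectivity) separately. Planarity and loop-freeness will follow from local geometric and combinatorial observations, while the connectivity claims will be established via path-surgery and Menger's theorem.

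For planarity, observe that $Y$-to-$\Delta$ is a purely local modification: inside a sufficiently small disk around $v$ meeting only the three edges $e, f, g$, erase the 3-star and draw the triangle along the boundary of that disk. No new crossings arise, and each subsequent SP-reduction is a strict simplification of the drawing and also introduces no crossings. For loop-freeness, the non-parallel hypothesis on $e, f, g$ guarantees that the three neighbors $u_1, u_2, u_3$ of $v$ are distinct, so the $Y$-to-$\Delta$ step creates no loop; by scheduling all parallel reductions before any series reductions one ensures that every degree-$2$ vertex encountered has two distinct neighbors, so no series reduction introduces a loop either.

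For item 1 (2-connectivity), I would apply Menger's theorem. Given vertices $a, b \neq v$, fix two vertex-disjoint paths between them in $G$. At most one of these paths passes through $v$, and if one does, its subpath $u_i\,v\,u_j$ can be replaced by the new triangle edge $(u_i, u_j)$. The two modified paths remain vertex-disjoint, so the graph obtained from $Y$-to-$\Delta$ is $2$-connected. Parallel reductions preserve vertex-connectivity since they only delete redundant edges, and a series reduction is the inverse of an edge subdivision, a classical $2$-connectivity-preserving operation. The identical path-surgery applied with three vertex-disjoint paths (guaranteed by Menger's theorem for the $3$-connected $G$) shows that the graph obtained immediately after $Y$-to-$\Delta$ in item 2 is also $3$-connected.

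The main obstacle is verifying that the subsequent SP-reductions preserve $3$-connectivity in item 2. The only vertices whose degrees change under $Y$-to-$\Delta$ are $u_1, u_2, u_3$, and $u_i$ can become of degree $2$ after parallel reduction only when $\deg_G(u_i) = 3$ and both edges $(u_i, u_j), (u_i, u_k)$ already belong to $G$. The exclusion $G \neq K_4$ is needed at the extreme: for $G = K_4$, $Y$-to-$\Delta$ followed by parallel reduction collapses the graph to $K_3$, which is only $2$-connected. I would complete the proof via a case analysis on the number $t \in \{0, 1, 2, 3\}$ of pre-existing edges among $(u_1, u_2), (u_2, u_3), (u_1, u_3)$, and in each subcase requiring a series reduction I would use $3$-connectivity of $G$ combined with the existence of vertices outside $\{v, u_1, u_2, u_3\}$ (forced by the combination of $G \neq K_4$ and the lower bound $\delta(G) \geq 3$ from Lemma~\ref{lm:kld}) to produce, via Menger, three vertex-disjoint paths between any two vertices of the contracted graph, thereby completing the verification of $3$-connectivity of the SP-reduced graph.
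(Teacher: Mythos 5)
Your proposal is correct and its core argument is exactly the paper's: invoke Menger's theorem, note that at most one path in a vertex-independent family passes through $v$, and reroute its subpath $u_i\,v\,u_j$ along the new triangle edge $(u_i,u_j)$. You additionally spell out the planarity, loop-freeness, and $SP$-reduction bookkeeping (including the precise role of the $G\neq K_4$ hypothesis), all of which the paper's one-paragraph proof leaves implicit.
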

\begin{proof}
    By Mengers theorem~\ref{thm:menger} there exist 2 or 3 vertex independent paths between any two vertices in $G$. Paths which don't contain vertex $v$ after $Y$-to-$\Delta$ operation will not change. There can be only one path from these independent sets that contains vertex $v$ and thus edges $e$ and $f$ without loss of generality. After $Y$-to-$\Delta$ operation this path will walk through the new edge $g$.
\end{proof}

\begin{lm}\label{lm:dtoy}
    \begin{enumerate}
        Let $(e, f, g)$ be three non-parallel edges which form a triangle.
        \item If $G$ has no loops, planar and 2-connected, then the result of $\Delta$-to-$Y$ operation to $(e, f, g)$ and then series of $SP$-reductions is also has no loops, planar and 2-connected.
        \item Let $G$ be simple, planar 3-connected graph, then the result of $\Delta$-to-$Y$ operation to $(e, f, g)$ and then series of $SP$-reductions is also simple, planar and 3-connected.
    \end{enumerate}
\end{lm}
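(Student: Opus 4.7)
The plan is to follow the strategy of Lemma~\ref{lm:ytod}, but in the reverse direction: to transfer vertex-independent paths from $G$ to the graph $G''$ obtained after $\Delta$-to-$Y$ and the subsequent $SP$-reductions, via Menger's theorem~\ref{thm:menger}. Let $a,b,c$ be the triangle vertices, $e=(a,b), f=(b,c), g=(a,c)$ the triangle edges, and let $v$ be the new vertex connected by fresh edges $(v,a),(v,b),(v,c)$ in the intermediate graph $G'$. Planarity is transparent: the triangle bounds a face, so I place $v$ in its interior and draw three non-crossing edges; removing $e,f,g$ leaves a planar graph, and $SP$-reductions preserve planarity trivially. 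The no-loops condition is also preserved, since all new edges touch the fresh vertex $v$, and any loop that could arise from a series reduction (collapsing a degree-2 vertex whose two neighbors coincide) would require a pre-existing digon, which parallel reduction eliminates first.

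For the connectivity claims I invoke Menger's theorem. Given two vertices of $G''$, I lift them to vertices in $G'$ and ultimately to corresponding vertices in $G$; $G$ provides $2$ (resp.\ $3$) vertex-independent paths between them. The key observation is that a path using a triangle edge passes through at least two vertices of $\{a,b,c\}$, and by vertex-independence at most one path in the family can use any triangle edge, provided both endpoints lie outside $\{a,b,c\}$. That single offending path is rerouted through $v$: each used edge $(a,b)$ is replaced by $a\to v\to b$, and when the path uses two consecutive triangle edges such as $a\to b\to c$, one takes the shortcut $a\to v\to c$ so as not to revisit $v$. The remaining paths do not touch the triangle edges and are unchanged. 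Finally, $SP$-reductions only shorten (series) or identify (parallel) pieces of these paths, preserving endpoint vertex-independence, so the required family of paths survives in $G''$.

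The main obstacle is the case analysis when one of the endpoints lies inside $\{a,b,c\}$: then two independent paths may legitimately leave (say) $a$ along the two distinct triangle edges $(a,b)$ and $(a,c)$, and the naive rerouting forces both to traverse $v$, breaking independence. Here one must argue via Menger more carefully, for instance by first selecting the independent family in $G$ so as to minimize the number of triangle edges used, or by working directly with a hypothetical small vertex cut in $G''$ and lifting it to a vertex cut in $G$ of the same size via the inverse $Y$-to-$\Delta$ operation of Lemma~\ref{lm:ytod}. A secondary subtlety is the degenerate case $G=K_4$ in part 2, where all three triangle vertices have degree $3$ and $SP$-reductions collapse the output; such boundary cases must either be excluded implicitly or treated as exceptions, in parallel with the $K_4$ exclusion in Lemma~\ref{lm:ytod}.
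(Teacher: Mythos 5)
Your route is genuinely different from the paper's. The paper does not reroute paths at all: it observes that a $\Delta$-to-$Y$ operation on $G$ is a $Y$-to-$\Delta$ operation on the dual graph $D(G)$ (and an in series reduction dualizes to an in parallel one), so the whole statement follows by applying Lemma~\ref{lm:ytod} to $D(G)$ and translating back through the duality equivalences of Lemmas~\ref{lm:2conn} and~\ref{lm:3conn}. That reduction sidesteps entirely the case analysis you run into.

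As written, your argument has a genuine gap, and you name it yourself: when an endpoint of the path family lies in $\{a,b,c\}$, two independent paths can leave that vertex along the two distinct triangle edges, and both reroutings are forced through the new vertex $v$. You offer two possible repairs (choose the family minimizing the number of triangle edges used, or lift a small cut of $G''$ back to a cut of $G$) but execute neither, and the second is less routine than it sounds: if a triangle vertex $a$ has degree $3$ in $G$ with third neighbour $w$ (as happens, e.g., in the triangular prism), then $\{v,w\}$ genuinely is a $2$-cut of the intermediate graph $G'$, since it isolates $a$. So $3$-connectivity actually fails before the $SP$-reductions and is only restored because the series reduction at the now degree-$2$ vertex $a$ absorbs it into $v$. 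A correct direct proof must therefore interleave the cut-lifting (or path-rerouting) argument with the $SP$-reductions rather than treat them as an afterthought that "only shortens pieces of the paths." Your observation that $K_4$ collapses under the operation is correct and flags a boundary case the paper's statement of part 2 silently ignores, but the central missing step is the worked-out argument for the endpoint-in-triangle case combined with the essential role of the $SP$-reductions.
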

\begin{proof}
    Let's note that $Y$-to-$\Delta$ operation for a graph $G$ corresponds to $\Delta$-to-$Y$ for the dual graph $D(G)$ and vice versa. Also in series reduction for the graph $G$ corresponds to in parallel reduction for the dual graph $D(G)$ and vice versa. The multi-edges correspond to the object in dual graph that can be reduced by in series reduction to just one edge. By using this facts and lemmas~\ref{lm:2conn} and \ref{lm:3conn} this lemma is led to the previous one. 
\end{proof}

\begin{defin}
    A 2-connected graph $G$ is \bf{$\Delta Y$ reducible} to the graph $G'$ if it can be transformed to the graph $G'$ by sequence of $\Delta Y$ operations and $SP$-reductions.
\end{defin}

\begin{lm}\label{lm:grid2k4}
    Any grid graph $G(k,l): k, l\geq 3$ is $\Delta Y$ reducible to the graph $K_4$.
\end{lm}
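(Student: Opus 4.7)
The plan is to argue by induction on $k+l$, taking $k=l=3$ as the base case.

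For the base case $G(3,3)$, I would work out the reduction explicitly. The four corner vertices $(0,0)$, $(0,2)$, $(2,0)$, $(2,2)$ all have degree $2$; performing an in-series reduction at each of them in turn removes the corner and installs a new edge joining its two side-midpoint neighbours (for instance, the reduction at $(0,0)$ produces the new edge $(0,1)-(1,0)$). After all four corner reductions the surviving graph on the five vertices is the $4$-wheel with hub $(1,1)$ and rim $(0,1)-(1,2)-(2,1)-(1,0)-(0,1)$. A single $Y$-to-$\Delta$ operation at any rim vertex, say $(0,1)$, then finishes the job: two of the three new triangle edges coincide with existing spokes of the wheel and vanish under in-parallel reduction, while the third edge $(1,0)-(1,2)$ survives, leaving exactly $K_4$ on the four remaining vertices.

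For the inductive step, assume $k+l>6$ and, without loss of generality, $l\geq 4$. The plan is to sweep the last column of the grid away so as to invoke the inductive hypothesis on a graph with strictly smaller $k+l$. First I would in-series-reduce the two corner vertices $(0,l-1)$ and $(k-1,l-1)$, which attaches new edges to $(1,l-1)$ and $(k-2,l-1)$ respectively. Every remaining vertex $(i,l-1)$ in the last column then has degree $3$, so I would apply $Y$-to-$\Delta$ to each of them in turn. At every such step one of the three new triangle edges duplicates an existing grid edge and is killed by in-parallel reduction, while the other two become chords anchored in the second-to-last column.

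The main technical obstacle is what is left over after the column is gone: not the clean grid $G(k,l-1)$, but $G(k,l-1)$ together with a bundle of residual chord edges at the new boundary. To handle this I would strengthen the induction hypothesis to say that the grid $G(k,l)$ together with any configuration of boundary chords produced by the sweep is already $\Delta Y$-reducible to $K_4$. The stronger statement closes inductively because each residual chord $(i-1,l-2)-(i+1,l-2)$ can be absorbed: apply $\Delta$-to-$Y$ to the triangle it forms with $(i,l-2)$, and the resulting local reconfiguration together with a few more $Y$-to-$\Delta$ and in-parallel moves re-expresses the graph as a grid-plus-chords of the same strengthened form but with one fewer column. The main bookkeeping burden is to classify the chord configurations that can appear, to verify that every intermediate graph remains simple, loop-free, $2$-connected and planar, and to check that Lemmas~\ref{lm:ytod} and~\ref{lm:dtoy} license each $Y$-to-$\Delta$ and $\Delta$-to-$Y$ move throughout the sweep.
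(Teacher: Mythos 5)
Your base case is correct and complete: in-series reducing the four degree-$2$ corners of $G(3,3)$ does produce the $4$-wheel on $(1,1),(0,1),(1,0),(1,2),(2,1)$, and one $Y$-to-$\Delta$ at a rim vertex followed by two in-parallel reductions yields $K_4$. This is a legitimate (and arguably cleaner) substitute for the paper's explicit reduction of $G(2,2)$ to $K_4$.

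The inductive step, however, has a concrete error exactly where the difficulty lies. After the corner reduction and the first $Y$-to-$\Delta$ at $(1,l-1)$ (whose neighbours are $(0,l-2)$, $(1,l-2)$ and $(2,l-1)$), the two surviving triangle edges are $(0,l-2)$--$(2,l-1)$ and $(1,l-2)$--$(2,l-1)$: both are anchored at the \emph{next vertex of the last column}, not in the second-to-last column as you claim. Hence $(2,l-1)$ now has degree $4$ and the asserted degree-$3$ sweep down the column stops after one step; applying $Y$-to-$\Delta$ to the later column vertices instead produces triangles none of whose edges duplicate grid edges, so chords proliferate rather than cancel. Your proposed repair --- strengthening the induction hypothesis to ``grid plus any configuration of boundary chords'' and absorbing each chord by a $\Delta$-to-$Y$ plus ``a few more moves'' --- is left entirely unspecified at this point, and it is precisely this absorption that constitutes the whole content of the lemma. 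The paper handles it with a designated composite move (its ``operation 2,'' a $\Delta Y$ exchange through a triangle adjacent to a degree-$4$ vertex) that propagates a single travelling triangle square-by-square along the bottom row, returning the local configuration to a clean state after each square is consumed, and then eliminates rows and columns one at a time down to $G(2,2)$. Without an explicit, verified version of such a propagation step (including the checks against Lemmas~\ref{lm:ytod} and~\ref{lm:dtoy} that you correctly flag as necessary), the inductive step does not go through as written.
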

\begin{proof}
    Let's use two following operations (figure~\ref{fig:twoop}) in the grid graph $G(k,l)$.

\begin{figure}[H]
\vspace{-5pt}
    \centering
	\includegraphics[width=0.55\textwidth]{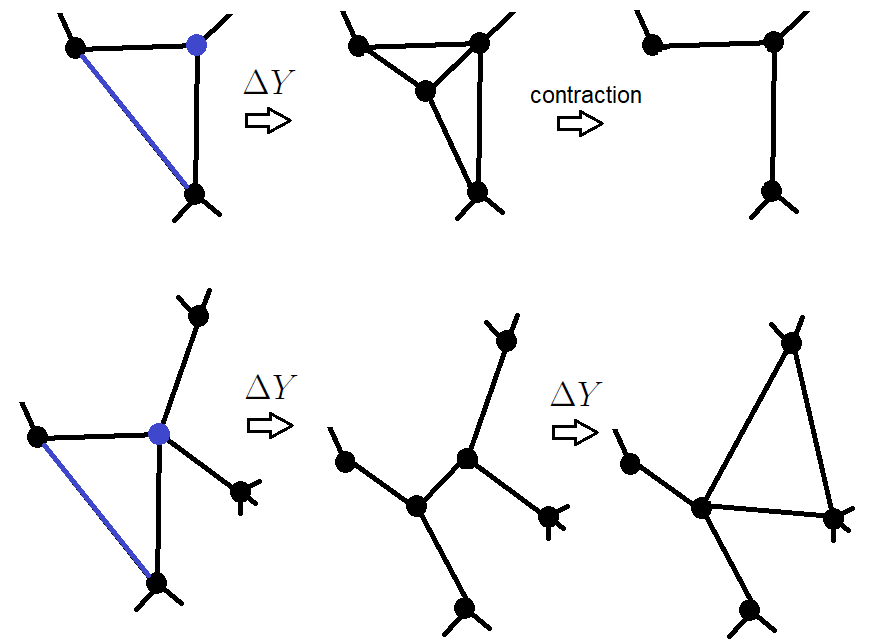}
	\caption{Two operations in lemma~\ref{lm:grid2k4}: \emph{operation 1} for triangle with adjacent vertex with degree 3 (top) and \emph{operation 2} for triangle with adjacent vertex with degree 4 (bottom).}\label{fig:twoop}
\end{figure}

First, let's use in series reduction to transform the left-bottom and right-top squares to triangles. Then, transfer the left-bottom triangle to the top row by series of operation 2. If by this the multi-edges are obtained, use in parallel reduction, if it is not, use operation 1 to delete transferred triangle. Use this procedure for second square in the bottom row. By this procedure one can delete all squares of the bottom row. The symmetrical procedure can be used for deleting columns from left to right. Therefore, this grid graph can be reduced to the grid graph $G(2,2)$. See figure~\ref{fig:grid2k4} to obtain the graph $K_4$ from $G(2,2)$.

\begin{figure}[H]
    \centering
	\includegraphics[width=0.9\textwidth]{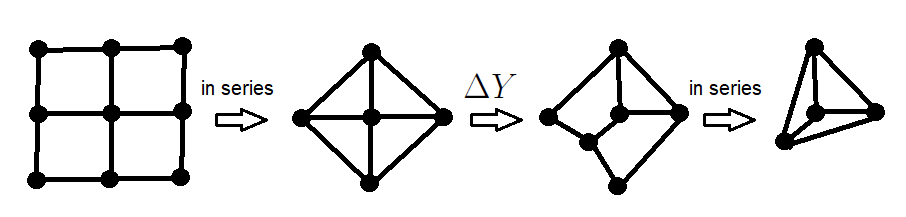}
	\caption{The series of reductions to obtain the graph $K_4$ from $G(2,2)$ for lemma~\ref{lm:grid2k4}.}\label{fig:grid2k4}
\end{figure}
    
\end{proof}

Let's $C_2$ be a circle with two vertices and two multi-edges between them.

\begin{lm}\label{lm:g2c2}
    If a planar graph with no loops $G$ is $\Delta Y$ reducible to the graph $C_2$, then so is for every 2-connected minor without loops of $G$.
\end{lm}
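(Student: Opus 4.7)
The plan is to reduce the lemma to a single-edge operation and then induct on the length of the reduction sequence. Since any minor of $G$ is obtained by a finite sequence of edge deletions and edge contractions, it suffices to establish the following two single-step claims: if $G$ is a loopless planar graph that is $\Delta Y$ reducible to $C_2$ and $e\in E(G)$ is such that $G\setminus e$ (respectively $G/e$) is 2-connected and loopless, then $G\setminus e$ (respectively $G/e$) is also $\Delta Y$ reducible to $C_2$. Iterating the single-step claim along a deletion/contraction sequence from $G$ to any 2-connected loopless minor $H$ then yields the lemma, after first verifying that the sequence can be reordered so that all intermediate graphs remain 2-connected and loopless (which follows from the fact that $H$ itself has these properties).

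To establish the single-step claims, I would induct on the length $n$ of a shortest reduction $G = G_0\to G_1\to\cdots\to G_n = C_2$. The base case $n=0$ is vacuous because $C_2$ has no proper 2-connected loopless minor: deleting either edge drops the connectivity to $1$, and contracting either edge produces a loop. For the inductive step, consider the first operation $G\to G_1$ and attempt to perform a matching operation on $G\setminus e$ (or $G/e$) that lands on a 2-connected loopless minor $H_1$ of $G_1$; the inductive hypothesis applied to $G_1$ (which reduces in $n-1$ steps) and to the minor $H_1$ of $G_1$ would then yield a reduction of $H_1$ to $C_2$, and prefixing the one operation gives the desired reduction of $G\setminus e$ (or $G/e$) to $C_2$.

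The case analysis proceeds by the type of the first operation: in-parallel reduction, in-series reduction, $\Delta$-to-$Y$, and $Y$-to-$\Delta$. When $e$ is disjoint from the edges/vertices involved in the operation, the same operation commutes with the deletion or contraction of $e$. When $e$ is the edge being deleted or contracted by the operation, the result either equals $G_1$ directly (so there is nothing more to do) or is isomorphic to $G_1$ by the symmetry of parallel edges. Some sub-cases are ruled out by the 2-connectivity hypothesis: for example, if the first operation is an in-series reduction contracting an edge at a degree-$2$ vertex $v$, then $e$ cannot be adjacent to $v$ when considering $G\setminus e$, since that would leave $v$ of degree $\le 1$. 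The remaining delicate cases are those in which $e$ is one of the three edges forming a triangle or a 3-star for a $\Delta Y$ operation; after deleting or contracting $e$, the triangle/star structure is destroyed and the operation cannot be directly mimicked, so one must replace it by a short sequence of $SP$-reductions and possibly another $\Delta Y$ operation producing a 2-connected loopless graph that is a minor of $G_1$.

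The main obstacle I expect is this last case of a $\Delta Y$ operation interacting with $e$: one must verify that the replacement sequence (i) preserves 2-connectivity and looplessness, (ii) lands on a graph that is a minor of $G_1$, and (iii) matches the parallel image in the required sense. Lemmas~\ref{lm:ytod} and \ref{lm:dtoy} should provide the correct invariants for (i) and partially for (ii), since they already show that $\Delta Y$ operations followed by $SP$-reductions preserve the class of 2-connected loopless planar graphs. The 2-connectivity hypothesis on $G\setminus e$ or $G/e$ is what rules out the degenerate sub-configurations (e.g.\ a vertex of the triangle becoming isolated or a bridge appearing) that would otherwise break the argument.
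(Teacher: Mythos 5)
Your overall strategy --- induction on the length of the $\Delta Y$ reduction sequence, with a case analysis on how the first operation interacts with the minor --- is the same as the paper's, but the way you set it up has a genuine gap. You reduce everything to single-edge deletions/contractions by asserting that the deletion/contraction sequence from $G$ to $H$ ``can be reordered so that all intermediate graphs remain 2-connected and loopless,'' and you claim this ``follows from the fact that $H$ itself has these properties.'' It does not. First, the lemma does not assume $G$ is 2-connected, so such a chain cannot even start (take $G$ to be two triangles sharing a vertex and $H$ one of the triangles: no ordering of the single-edge operations keeps every intermediate graph 2-connected). Second, even when $G$ is 2-connected, the existence of a chain of 2-connected (and loopless --- note that contracting one edge of a parallel pair creates a loop) intermediate graphs is a nontrivial chain theorem that must be proved, not a formal consequence of $H$ being 2-connected. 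Without this step your single-edge claims, even if proved, do not compose to give the lemma.

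The paper sidesteps this entirely: it inducts on the reduction length while keeping $H$ an \emph{arbitrary} 2-connected loopless minor of $G$, and splits on whether $H$ contains none, all, or only part of the locally modified configuration, invoking Lemmas~\ref{lm:ytod} and~\ref{lm:dtoy} in the ``all'' case and doing an explicit sub-case analysis (contract/delete one or two of the edges $e,f,g$) in the ``part'' case. Your sketch of the delicate $\Delta Y$ cases points in the right direction but stops at ``one must verify''; these sub-cases are exactly where the content of the proof lies (e.g.\ deleting $g$ from a triangle corresponds to contracting the matching edge of the $Y$ in $G'$, contracting two star edges is excluded because it creates a loop, deleting two star edges is excluded by 2-connectivity), and they need to be carried out. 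If you replace your single-edge decomposition by the paper's ``$H$ versus the changed part'' trichotomy, the chain-theorem issue disappears and the remaining work is precisely that finite case check.
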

\begin{proof}
    Let's prove it by induction for the number of $\Delta Y$ reductions necessary to obtain the graph $C_2$:
    
    The base: any 2-connected minor of $C_2$ is $C_2$.
    
    Now, let for every $G$ that can be $\Delta Y$ reduced to the graph $C_2$ by $k$ steps, every 2-connected minor is also $\Delta Y$ reducible to $C_2$. Let's prove it for $G$ that can be $\Delta Y$ reduced to the graph $C_2$ by $k+1$ steps. Consider the graph $G'$ after one step of $\Delta Y$ reduction. Let's $H$ be a 2-connected minor without loops of $G$.   

        \begin{enumerate}
            \item If $H$ doesn't contain the part that is changed by current $\Delta Y$ reduction from $G$ to $G'$ then $H$ is 2-connected minor of $G'$. Therefore, $H$ is $\Delta Y$ reducible to $C_2$ by the inductive assumption.
            \item If $H$ contains whole part that is changed by current $\Delta Y$ reduction from $G$ to $G'$, then this reduction can be used in $H$ to obtain the graph $H'$. By the lemmas~\ref{lm:ytod} and \ref{lm:dtoy} $H'$ will be 2-connected minor without loops of $G'$. Therefore, $H'$ and hence $H$ are $\Delta Y$ reducible to $C_2$.
            \item If $H$ doesn't contain whole part, then the current reduction is $\Delta$-to-$Y$ or $Y$-to-$\Delta$. Let's denote the corresponding edges by $(e,f,g)$, and $e$ or $e, f$ will be the part of the graph $H$ without loss of generality. Let's consider cases:
            \begin{enumerate}
                \item The current reduction is $\Delta$-to-$Y$. 
                \begin{enumerate}
                    \item If $H$ can be obtained from $G$ by one contraction of the edge $g$ then $H$ will be the minor of $G'$ by contraction of corresponded edges $e, f$ in $G'$. 
                    \item If $H$ can be obtained from $G$ by contraction of two edges or more, then $H$ will contain a loop. This holds a contradiction.
                    \item If $H$ can be obtained from $G$ by one deletion of the edge $g$ then $H$ will be the minor of $G'$ by contraction the corresponded edge $g$ in $G'$.
                    \item If $H$ can be obtained from $G$ by deletion of the edges $f, g$ then $H$ will be the minor of $G'$ by the deletion of the corresponded edge $e$ and contraction the corresponded edge $f$ or $g$ in $G'$.
                \end{enumerate}
                 \item The current reduction is $Y$-to-$\Delta$. 
                 \begin{enumerate}
                     \item If $H$ can be obtained from $G$ by contraction of any edges then $H$ will contain a loop. This holds a contradiction.
                     \item If $H$ can be obtained from $G$ by deletion of one edge $g$ then let's $H'$ be the graph that is obtained by $H$ by in series reduction of the part $\{e, f\}$. Since $H'$ is a minor of $G'$, the minor $H'$ and hence $H$ are $\Delta Y$ reducible to $C_2$. 
                     \item If $H$ can be obtained from $G$ by deletion of two or more edges then $H$ will be not 2-connected. This holds a contradiction.
                 \end{enumerate}
            \end{enumerate}
        \end{enumerate}
        This ends the proof.
\end{proof}

\begin{cor}
    Any simple 3-connected planar graph $G$ is $\Delta Y$ reducible to the graph $K_4$. 
\end{cor}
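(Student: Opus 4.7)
The plan is a three-lemma sandwich. Since $G$ is planar, Lemma~\ref{lm:minorgrid} exhibits $G$ as a minor of some grid graph $G(k,l)$; enlarging the grid if necessary I may assume $k,l \ge 3$ (every grid graph embeds as a minor of a larger one), so by Lemma~\ref{lm:grid2k4} that host grid graph is $\Delta Y$ reducible to $K_4$. I would then transport this reduction down to the minor $G$ using Lemma~\ref{lm:g2c2}: $G$ is simple and $3$-connected, hence in particular loopless and $2$-connected, so it qualifies as a $2$-connected loopless minor of $G(k,l)$, and the lemma converts each reduction step performed on the host into a valid $\Delta Y$/$SP$-step on $G$.

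The only real obstacle is a cosmetic mismatch between the stated target of Lemma~\ref{lm:g2c2} (namely $C_2$) and the target $K_4$ that the corollary asks for. I expect this to be resolved in one of two ways. Cleanly: reread the inductive argument of Lemma~\ref{lm:g2c2} with $K_4$ in place of $C_2$; the induction step there is purely combinatorial in the $\Delta Y$ operations and $SP$-reductions and carries over verbatim, while the new base case -- that every $2$-connected loopless minor of $K_4$ is $\Delta Y$-reducible to $K_4$ -- is a finite check on the short list of such minors ($K_4$ itself, $K_4$ minus an edge, and the triangle, each handled by an obvious $\Delta$-to-$Y$ together with $SP$-reductions). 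Alternatively, one may append a fixed tail reduction $K_4 \rightsquigarrow C_2$ to the grid reduction, quote Lemma~\ref{lm:g2c2} exactly as stated to obtain $G \rightsquigarrow C_2$, and then observe that the resulting reduction sequence for $G$ passes through $K_4$ as an intermediate graph, which is precisely what is asserted. Either way, beyond this small bookkeeping no new ideas are needed: Lemmas~\ref{lm:minorgrid}, \ref{lm:grid2k4} and \ref{lm:g2c2} already do all the real work.
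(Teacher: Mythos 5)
Your skeleton---Lemma~\ref{lm:minorgrid} to embed $G$ as a minor of a grid, Lemma~\ref{lm:grid2k4} to reduce the grid, Lemma~\ref{lm:g2c2} to transport the reduction to $G$---is exactly the paper's, and those steps are fine. The gap is in the passage from ``$G$ is $\Delta Y$ reducible to $C_2$'' to ``$G$ is $\Delta Y$ reducible to $K_4$'', and neither of your two fixes works as written. Option (a) is false: a $\Delta Y$ operation preserves the number of edges and an $SP$-reduction strictly decreases it, so no graph with fewer than six edges can ever be $\Delta Y$ reduced to $K_4$. In particular the triangle is a $2$-connected loopless minor of $K_4$ that is \emph{not} reducible to $K_4$, so the base case of your ``$K_4$-version'' of Lemma~\ref{lm:g2c2} fails and that modified lemma is simply untrue. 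Option (b) rests on the assertion that the reduction sequence for $G$ produced by Lemma~\ref{lm:g2c2} passes through $K_4$ because the host grid's sequence does; but the induction in that lemma builds a \emph{different} sequence for the minor (steps are skipped, replaced by contractions or deletions, or altered), so the host's intermediate graphs tell you nothing about the minor's, and you offer no other reason why $K_4$ must occur.

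The conclusion can be rescued, but only by invoking the preservation lemmas~\ref{lm:ytod} and~\ref{lm:dtoy}, which you never use. The paper's route is an induction on $\|E(G)\|$: since $\Delta Y$ operations keep the edge count fixed and $C_2$ has only two edges, the reduction of $G$ to $C_2$ must contain an $SP$-reduction; truncating the sequence just after the first one yields a graph $G'$ which, by those lemmas, is again simple, planar and $3$-connected but has strictly fewer edges, so the inductive hypothesis gives a reduction of $G'$ to $K_4$, which composed with $G\to G'$ finishes the step; the base case is $G=K_4$, the simple $3$-connected planar graph with the fewest edges. (Equivalently, those lemmas show every step preserves ``simple, planar, $3$-connected'' except a $Y$-to-$\Delta$ performed on $K_4$ itself, so $K_4$ is the only exit point from that class and any sequence terminating at $C_2$ must visit it---that is the honest version of your ``observe that the sequence passes through $K_4$'', and it is an argument, not an observation.) Without one of these arguments the proof is incomplete.
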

\begin{proof}
    By the lemma~\ref{lm:minorgrid} $G$ is a minor of a grid graph $G(k,l)$. This grid graph $G(k,l)$ is $\Delta Y$ reducible to the graph $K_4$ by the lemma~\ref{lm:grid2k4}. Let's note that $K_4$ is also $\Delta Y$ reducible to $C_2$. Hence, $G$ is $\Delta Y$ reducible to $C_2$ by the lemma~\ref{lm:g2c2}. 
    
    Let's prove that $G$ is $\Delta Y$ reducible to $K_4$ by the induction of the number edges in $G$:
    \begin{enumerate}
        \item The base. A simple 3-connected planar graph with the smallest number of edges is $K_4$ by the lemma~\ref{lm:kld} for $(k,l,d)$ graphs.
        \item  Let's any simple 3-connected planar graph with $k$ edges is $\Delta Y$ reducible to $K_4$. Let's prove it for $G$ with $k+1$ edges. Consider the $\Delta Y$ reduction of $G$ to $C_2$. Since $C_2$ has only two edges, there will be step with $SP$-reduction in $\Delta Y$ sequence. Let's $G'$ be the graph that is obtained from $G$ after the first $SP$-reduction. By the lemma~\ref{lm:3conn} $G'$ is simple, planar, 3-connected and has less edges than $G$. Then $G'$ and hence $G$ are $\Delta Y$ reducible to $K_4$ by the inductive assumption.
    \end{enumerate}
\end{proof}

\begin{thm}[Steinitz]\label{thm:st}
    Let's $G$ be simple connected graph. Then, $G$ is planar and 3-connected iff. it is graph of a convex polyhedron. 
\end{thm}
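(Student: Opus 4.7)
I would treat the two directions separately, since the ``only if'' direction is essentially immediate from material already developed, while the ``if'' direction is the genuine content of the theorem and rests on the $\Delta Y$-reducibility machinery built up in this subsection.

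For \emph{necessity} ($\Rightarrow$), the planarity half follows verbatim from Theorem~\ref{thm:eulerpoly}, whose stereographic argument only used convexity, and connectedness of $G$ is built into the polyhedron definition. The genuinely new point is 3-connectivity. I would prove it by the standard linear-functional argument (the $d=3$ case of Balinski's theorem): given two vertices $u,v$ to delete, choose a generic affine functional $\ell\colon\rR^3\to\rR$ with $\ell(u)=\ell(v)$; this splits the remaining vertices into an ``upper'' set $V^+$ and a ``lower'' set $V^-$, and following gradient edges shows that each of $V^+$, $V^-$ induces a connected subgraph of $G\setminus\{u,v\}$. A face of $P$ meeting both sides (which exists because $\{u,v\}$ does not disconnect the boundary $\d P$) supplies an edge path joining $V^+$ to $V^-$ inside $G\setminus\{u,v\}$, completing the argument.

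For \emph{sufficiency} ($\Leftarrow$), the plan is to apply the $\Delta Y$-reducibility corollary immediately above this theorem and build the polyhedron by a ``reverse induction''. The base case is the tetrahedron, which realises $K_4$. Given a convex polyhedron whose $1$-skeleton is some intermediate graph in the reduction sequence, I would realise the inverse of each reduction step geometrically: truncating a 3-valent vertex by a plane very close to it (inverse of $\Delta$-to-$Y$), erecting a shallow pyramid on a triangular face (inverse of $Y$-to-$\Delta$), splitting an edge by inserting a vertex via a tiny perturbation of two adjacent faces (inverse of in-series reduction), and splitting a face by a new edge (inverse of in-parallel reduction). Running this reverse sequence from the tetrahedron back up to $G$ produces a convex polyhedron whose graph is $G$.

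The hard part --- and classically the entire substance of Steinitz's theorem --- is showing that each reverse operation can actually be carried out in $\rR^3$ while preserving convexity and without introducing accidental coplanarities that would alter the combinatorial type. I would handle this by a moduli-space argument: at every intermediate stage the set of convex realisations of the current combinatorial type is a nonempty open subset of a euclidean space of vertex coordinates, and within that open set one can always find a representative admitting a sufficiently shallow truncation, or sufficiently flat pyramid, to yield a valid convex realisation of the next combinatorial type. Combining this realisability lemma with the purely combinatorial $\Delta Y$-reducibility developed above closes the induction and proves the theorem.
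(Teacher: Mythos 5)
Your overall architecture matches the paper's. For the direction ``planar and 3-connected $\Rightarrow$ graph of a convex polyhedron'' the paper does exactly what you propose: it runs the $\Delta Y$ reduction of $G$ to $K_4$ backwards starting from the tetrahedron and realises each inverse step geometrically (truncation of a 3-valent vertex by a plane for one case, and for the other a deformation that makes the three faces adjacent to the triangle meet in a common point, to which those faces are then extended). For the converse direction the paper also gets planarity from Theorem~\ref{thm:eulerpoly}, but it proves 3-connectivity combinatorially: any path through a deleted vertex or along a deleted face is rerouted around the boundaries of adjacent faces, using the fact that two faces of a polyhedron share at most one edge. Your Balinski-type linear-functional argument is a genuinely different, standard alternative; it is arguably cleaner but imports some convex geometry, and as sketched the connecting step has a small hole: the boundary cycle of a face meeting both $V^+$ and $V^-$ may be blocked by $u$ and $v$ themselves. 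The usual repair is to take the cutting plane through $u$, $v$ and a third vertex $w_0$, which then joins to the $\ell$-maximum and $\ell$-minimum by monotone edge paths avoiding $u,v$.

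Two details of your sufficiency plan need repair. First, you list the inverses of the in-series and in-parallel reductions as standalone geometric operations; but the inverse of an in-series reduction creates a degree-2 vertex and the inverse of an in-parallel reduction creates parallel edges, and no convex polyhedron has either, so these intermediate graphs are not realisable at all. The paper avoids this by bundling each $SP$-reduction with the adjacent $\Delta Y$ operation into a single combined step (this is precisely why Lemmas~\ref{lm:ytod} and~\ref{lm:dtoy} are stated for a $\Delta Y$ operation followed by $SP$-reductions): every intermediate graph then stays simple, planar and 3-connected, and only the combined steps require geometric realisation. Second, ``erecting a shallow pyramid on a triangular face'' does not realise $\Delta$-to-$Y$: a generic pyramid retains the three edges of the triangle, so you obtain the triangle together with the star rather than the star alone. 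The apex must be taken at the common intersection point of the planes of the three faces adjacent to the triangle (after a preliminary deformation of the polyhedron making those planes concurrent on the correct side), so that the old triangle edges cease to be edges of the new polyhedron. Your moduli-space remark is the right instinct for justifying that such deformations exist without accidental coplanarities, and is in fact more explicit than what the paper writes down at that point.
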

\begin{proof}\ 
\begin{enumerate}
    \item \emph{Necessarity.} Let's prove the following: let $G'$ be a graph that obtained from $G$ by one $\Delta Y$ operation and $SP$-reduction and $G'$ be a graph of a convex polyhedron $P(G')$ then $G$ is also a graph of a convex polyhedron $P(G)$. Let's consider cases:
    \begin{enumerate}
        \item If the current $\Delta Y$ operation is $\Delta$-to-$Y$ operation then the polyhedron $P(G)$ can be obtained from the polyhedron $P(G')$ by the section of the plane $\pi$ (see figure~\ref{fig:dtoypoly}).  

\begin{figure}[H]
    \centering
	\includegraphics[width=0.9\textwidth]{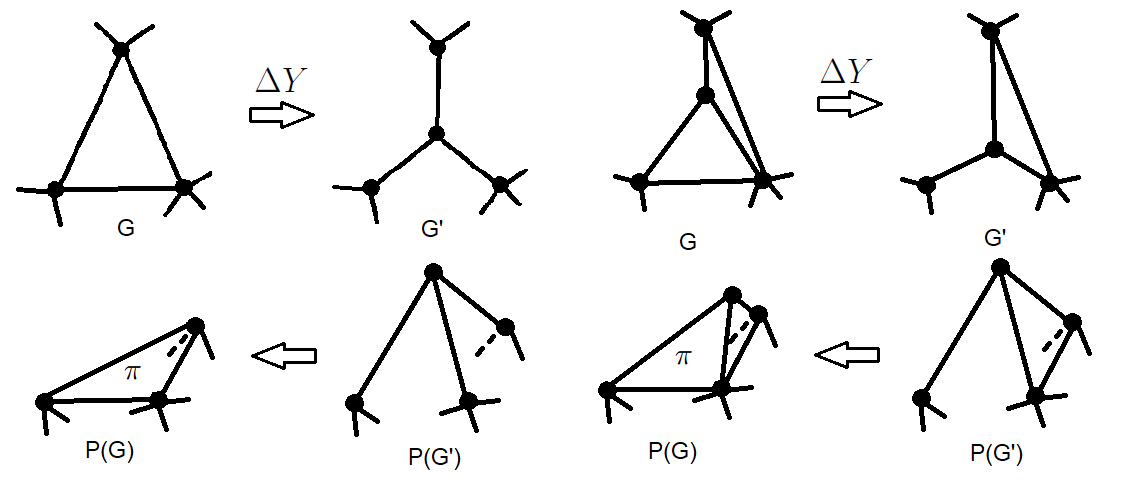}
	\caption{$\Delta$-to-$Y$ operation without $SP$-reduction (left-top) and corresponded section of the polyhedron $P(G')$ by the plane $\pi$ (left-bottom). $\Delta$-to-$Y$ operation with $SP$-reduction (in series, right-top) and the corresponded section of the polyhedron $P(G')$ by the plane $\pi$ (right-bottom).}\label{fig:dtoypoly}
\end{figure}

    \item\label{enum:st0} If the current $\Delta Y$ operation is $Y$-to-$\Delta$ operation. Consider the corresponded triangle $T\in P(G')$. This triangle is adjacent to three faces $F_1, F_2, F_3$. If faces $F_1, F_2$ and $F_3$ intersect in one point then the polyhedron $P(G')$ is tetrahedron, $G'$ is $K_4$ and polyhedron $P(G)$ can be constructed by gluing two tetrahedrons to each other. Thus $P(G)$ is convex and corresponds to the graph $G$.
    
    Now let's faces $F_1, F_2$ and $F_3$ don't intersect in one point. Then, do the reverse operation like in figure~\ref{fig:dtoypoly} left, but with some deformation: compress the triangle in the face $T$ and deform $P(G')$ such that planes corresponded to $F_1, F_2$ and $F_3$ become intersected in one point (see figure~\ref{fig:deformp}) and extend current faces to this intersection. Obtained polyhedron will be convex and correspond to the graph $G$. 

\begin{figure}[H]
    \centering
	\includegraphics[width=0.25\textwidth]{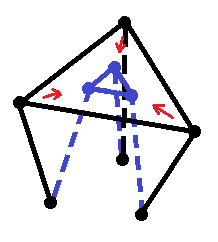}
	\caption{Deformation of the polyhedron $P(G')$ corresponded to the case~\ref{enum:st0} of the Steinitz theorem.}\label{fig:deformp}
\end{figure}

    \end{enumerate}
    Let's note that in each case all transformations been done preserve convexity of the polyhedron.
    \item \emph{Sufficiency.} In the Euler theorem~\ref{thm:eulerpoly} the planarity property was proved. Let's prove that by deletion of any two vertices $u, v$ the graph $G$ remains connected.
    \begin{enumerate}
        \item Consider the case when $u,v$ are vertices of the common face $\pi$. Let's prove that for any two vertices $\notin\pi$ there exists path between them that doesn't intersect the face $\pi$. Consider the path $P$ between any two vertices in $G$. Let's transform this path by following:
        \begin{enumerate}
            \item\label{thm:st1} Consider the case when $P$ contains an edge $e\in\pi$. Let's denote the another adjacent face (not $\pi$) by $F$. Transform the path $P$ to $(P\setminus e)\bigcup(F\setminus e)$. Since faces $F$ and $\pi$ can intersects only by one edge (property~\ref{poly:ev} of polyhedrons), the number of edges in the path $P$ which belongs to $\pi$ will decreased by 1. Do this for every edge in $P\bigcap\pi$. 
            \item\label{thm:st2} If $P$ contains a vertex $v\in\pi$. Consider the multi-angle of the polyhedron corresponded to vertex $v$ (or all faces which contain $v$). Let's denote the adjacent to $v$ edges in $P$ by $e_1$ and $e_n$. For the edge $e_1$ there exists a face $F_1\neq\pi$ that contains $e_1$. Let's do the transformation like in case~\ref{thm:st1} for the edge $e_1$. By this transformation the edge $e_1$ transforms to the path through the face $F_1$. Let's denote the new edge of $F_1$ adjacent to the vertex $v$ by $e_2$ and continue this procedure for $e_2$ and so on. In the end transform the path $P$ to $\bigl(P\setminus \{e_{n-1}, e_n\}\bigr)\bigcup\bigl(F_{n-1}\setminus\{e_{n-1}, e_n\}\bigr)$.
        \end{enumerate}
        By this transformations one can modify the path between any two vertices $\notin\pi$ by a path that doesn't intersect the face $\pi$. Therefore, there exists path between any two vertices in $G\setminus\{u,v\}$.
        \item Let's vertices $u, v$ don't belong to the same face. Consider the path between any two vertices that contains the vertex $u$ without loss of generality. Transform the path using procedure in the case~\ref{thm:st2}. Since vertices $u, v$ don't belong to the same face, the faces $F_1, F_2, ..., F_{n-1}$ don't contain the vertex $v$ and hence, the new transformed path doesn't contain the vertices $u$ and $v$. Therefore, there exists a path between any two vertices that doesn't contain the vertices $u$ and $v$.
    \end{enumerate}
\end{enumerate}
\end{proof}

\subsection{Dual polyhedrons}

\begin{defin}
    \bf{Dual polyhedron $D(P)$} for a convex polyhedron $P$ is defined by following:
    \begin{enumerate}
        \item Consider the 3-dimensional coordinates of vertices of polyhedron $P$. Each face $v_1 v_2 ... v_k \in P$ match with the vertex $u = \frac 1 k \sum_{i = 1}^k v_i.$
        \item If any two faces of $P$ are adjacent then add an edge in the dual graph $D(P)$ between two vertices which are matched to these faces.
    \end{enumerate}
\end{defin}

\NB A dual polyhedron is unique by this construction.

\NB The graph of dual polyhedron $D(P)$ is dual to the graph of a polyhedron $P$.

\begin{lm}\label{lm:dualint}
    $int\bigl(D(P)\bigr)\sub int(P)$.
\end{lm}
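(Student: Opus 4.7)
The plan is to establish $\partial D(P)\subset\overline{P}$, and then use that $int(D(P))$ is the bounded component of $\rR^3\setminus\partial D(P)$ while the exterior of the convex body $\overline{P}$ is unbounded and connected. This reduces the lemma to a simple convex-combination check on the centroids, plus a topological argument already in the style of Jordan--Brouwer used earlier in the text.

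First I would verify that each face of $D(P)$ lies in $\overline{P}$. Every vertex $u=\frac{1}{k}\sum_{i=1}^{k}v_i$ of $D(P)$ is, by construction, the centroid of the vertices $v_1,\ldots,v_k$ of some face $F$ of $P$. Since $F$ is a convex polygon contained in $\partial P\subset \overline{P}$, the centroid $u$ lies in $F$, hence in $\overline{P}$. By convexity of $\overline{P}$, any line segment between two such centroids (that is, any edge of $D(P)$) is contained in $\overline{P}$, and more generally each face of $D(P)$, being a planar polygon with centroid vertices, lies in the convex hull of its vertices and therefore in $\overline{P}$. Consequently $\partial D(P)\subset \overline{P}$.

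Next I would invoke the Jordan--Brouwer theorem quoted in the paper: the closed surface $\partial D(P)$ splits $\rR^3$ into $int(D(P))$ (bounded) and $out(D(P))$ (unbounded). Since $P$ is a convex polyhedron, $\rR^3\setminus\overline{P}$ is open, connected, and unbounded; moreover it is disjoint from $\partial D(P)$ because $\partial D(P)\subset\overline{P}$. A connected unbounded subset of $\rR^3\setminus\partial D(P)$ must lie entirely in the unbounded component, so $\rR^3\setminus\overline{P}\subset out(D(P))$, which is equivalent to $int(D(P))\subset\overline{P}$. Finally, since $int(D(P))$ is open and $int(\overline{P})=int(P)$ for the convex body $\overline{P}$, we conclude $int(D(P))\subset int(P)$.

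The main obstacle is the first step: strictly speaking one must know that each face of $D(P)$ really is a planar polygon whose vertices are the centroids of the faces of $P$ meeting at one vertex of $P$, so that the convex-hull argument applies to it. Provided this structural fact about the realization of $D(P)$ is granted (it is implicit in the definition), everything else is routine, and the topological half of the argument closes the proof without computation.
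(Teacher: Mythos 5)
Your proposal is correct and follows essentially the same route as the paper: both arguments rest on the observation that the vertices of $D(P)$ are centroids of faces of $P$, hence lie on $\d P$, so that convexity forces all of $D(P)$ into $\overline{P}$, from which $int(D(P))\sub int(P)$ follows. The only difference is that you carefully justify the final implication via Jordan--Brouwer and a connectedness-of-the-unbounded-component argument, whereas the paper states it as an immediate consequence of $D(P)\sub P$.
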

\begin{proof}
    By using convex property:
    $$\forall u, v\in V\bigl(D(P)\bigr)\Rightarrow u, v\in\d P\Rightarrow (u,v)\sub int(P) \Rightarrow D(P) \sub P \Rightarrow int\bigl(D(P)\bigr)\sub int(P).$$
\end{proof}

\begin{lm}\label{lm:dpc}
    The dual polyhedron is convex.
\end{lm}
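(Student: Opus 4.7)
The plan is to identify $D(P)$ with the boundary of the convex hull $Q := \mathrm{conv}\{u_F : F \text{ is a face of } P\}$ of the vertex set of $D(P)$. If this identification goes through, then $D(P) = \partial Q$ is the boundary of a convex body in $\rR^3$, so the convexity condition from the definition (the segment from any boundary point to any interior point does not meet the boundary again) is immediate.

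First I would check $Q \sub P$. By construction, each centroid $u_F = \tfrac{1}{k}\sum_{i=1}^k v_i$ is a convex combination of vertices of $F$, and these vertices lie in $P$. Convexity of $P$ (the hypothesis on $P$) then gives $u_F \in P$ for every face $F$, and closing under convex combinations yields $Q \sub P$. Combined with Lemma~\ref{lm:dualint}, which already places $\mathrm{int}(D(P))$ inside $\mathrm{int}(P)$, this shows that $D(P)$ and $Q$ both sit inside $P$ with matching vertex sets.

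The key step is to show that every centroid $u_F$ is an extreme point of $Q$, so that the vertex set of $D(P)$ agrees with the vertex set of the convex polytope $Q$. Since $P$ is convex, each face $F$ lies in a supporting plane $H_F$ of $P$ with $P$ entirely on one side. Distinct faces of a convex polyhedron lie in distinct planes, so for any $F' \neq F$ the point $u_{F'}$ lies in $F'$, hence strictly on the interior side of $H_F$, while $u_F \in F \sub H_F$. Thus $H_F$ is a supporting plane of $Q$ meeting $Q$ only at $u_F$, which exhibits $u_F$ as an exposed vertex of $Q$. It remains to check that edges of $D(P)$ coincide with edges of $\partial Q$: for an edge $(u_F, u_{F'})$ of $D(P)$ coming from adjacent faces $F, F'$ of $P$ sharing an edge $e$, rotating $H_F$ about $e$ until it first touches another centroid produces a supporting plane of $Q$ containing exactly $u_F$ and $u_{F'}$, so $[u_F, u_{F'}]$ is an edge of $\partial Q$.

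The main obstacle is to identify the $2$-dimensional faces of $\partial Q$ with the $2$-dimensional faces of $D(P)$, i.e.\ to show that for each vertex $v$ of $P$ the cycle of centroids $\{u_F : v \in F\}$ bounds a single face of $\partial Q$. In general these centroids need not be exactly coplanar, so one has to interpret the dual face as a (possibly triangulated) polygon and then argue that the supporting plane at $v^*$ in the polar-dual sense separates this cycle from all remaining centroids. Once this identification is established, $D(P) = \partial Q$ follows, and convexity of the dual polyhedron is inherited from convexity of $Q$.
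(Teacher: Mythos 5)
You take a genuinely different route from the paper, and the route is not completed: its hardest step is left as an acknowledged obstacle, and that step is where the approach breaks down. The paper never tries to realize $D(P)$ as the boundary of a convex hull. It verifies the segment definition of convexity directly by contradiction: using the face/multi-angle correspondence inherited from Lemma~\ref{lm:dualsquare} together with the containment of Lemma~\ref{lm:dualint}, a hypothetical segment $(u,v)$ with $u\in D(P)$ and $v\in \mathrm{int}(D(P))$ that meets a face of $D(P)$ is transported to a segment from a point of $P$ to a point of $\mathrm{int}(P)$ that meets $P$, contradicting convexity of $P$. Your vertex step, by contrast, is sound: the supporting plane $H_F$ meets $P$ exactly in $F$, and since $u_{F'}$ lies in the relative interior of $F'$ it cannot lie in $H_F$ for $F'\neq F$, so every centroid is an exposed vertex of $Q=\mathrm{conv}\{u_F\}$.

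Two of your steps do not go through as written. For edges: rotating $H_F$ about the shared edge $e$ of $F$ and $F'$ immediately loses $u_F$, because $u_F$ lies in the relative interior of $F$ and hence off the axis $e$; any nontrivial rotation of $H_F$ about $e$ yields a plane not containing $u_F$, so it cannot certify $[u_F,u_{F'}]$ as an edge of $Q$. More seriously, the face identification you flag as ``the main obstacle'' is not a routine verification but the actual content of the lemma, and taken literally it fails: for a vertex $v$ of $P$ of degree at least four (e.g.\ a generic small perturbation of the octahedron) the centroids of the faces containing $v$ need not be coplanar, so the corresponding face of $D(P)$ is a skew polygon and $D(P)\neq\partial Q$; replacing that face by a triangulated cap changes the object whose convexity is being asserted. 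As it stands your argument establishes convexity of $Q$, not of $D(P)$, and the bridge between the two is exactly what is missing. The paper's argument sidesteps this entirely because it never needs $D(P)$ to coincide with a convex hull --- it only uses the containment $\mathrm{int}(D(P))\sub \mathrm{int}(P)$ and the combinatorial matching between faces of $D(P)$ and multi-angles of $P$.
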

\begin{proof}
    Since the graph of dual polyhedron $D(P)$ is dual to the graph of a polyhedron $P$, this matching can be continued in the same way like in the lemma~\ref{lm:dualsquare} to the matching between faces and multi-angles of the polyhedron $P$ and the dual polyhedron $D(P)$ such that edges of faces are corresponded to edges of multi-angles, and vertices of faces are corresponded to faces of the multi-angles (see figure~\ref{fig:dualpoly}). 
    
    Now let's assume the contrary: there exist two faces $F_1, F_2\in D(P)$ such that the edge $(u,v):u\in F_1, v\in int\bigl(D(P)\bigr)$ intersect the face $F_2$. The vertex $v\in int(P)$ by lemma~\ref{lm:dualint}. By the correspondence below there are a multi-angles at vertices $h_1$ and $h_2$ which match to the faces $F_1$ and $F_2$ (see figure~\ref{fig:dualpoly} left). Hence, this edge $(u, v)$ intersects with the polyhedron $P$ and $u\in P$ and $v\in int(P)$. This holds a contradiction.
\end{proof}

\begin{figure}[H]
    \centering
    \vspace{-10pt}
	\includegraphics[width=0.8\textwidth]{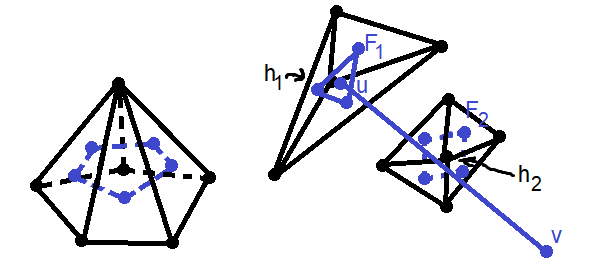}
	\caption{For the lemma~\ref{lm:dpc}: correspondence between a face of dual polyhedron $D(P)$ (blue, left) and a multi-angel of polyhedron $P$ (black, left). Two faces $F_1$ and $F_2$, the edge $(u,v)$ of dual polyhedron $D(P)$ (blue, right) and corresponded multi-angels at vertices $h_1$ and $h_2$.}\label{fig:dualpoly}
\end{figure}

Consider the duality relationship for platonic solids (see figure~\ref{fig:dgplatonic}):

\begin{enumerate}
    \item Tetrahedron $\rightarrow$ Tetrahedron,
    \item Octahedron $\rightarrow$ Cube,
    \item Icosahedron $\rightarrow$ Dodecahedron,
    \item Cube $\rightarrow$ Octahedron,
    \item Dodecahedron $\rightarrow$ Icosahedron.
\end{enumerate}

\begin{figure}[H]
    \centering
    \vspace{-12pt}
	\includegraphics[width=0.78\textwidth]{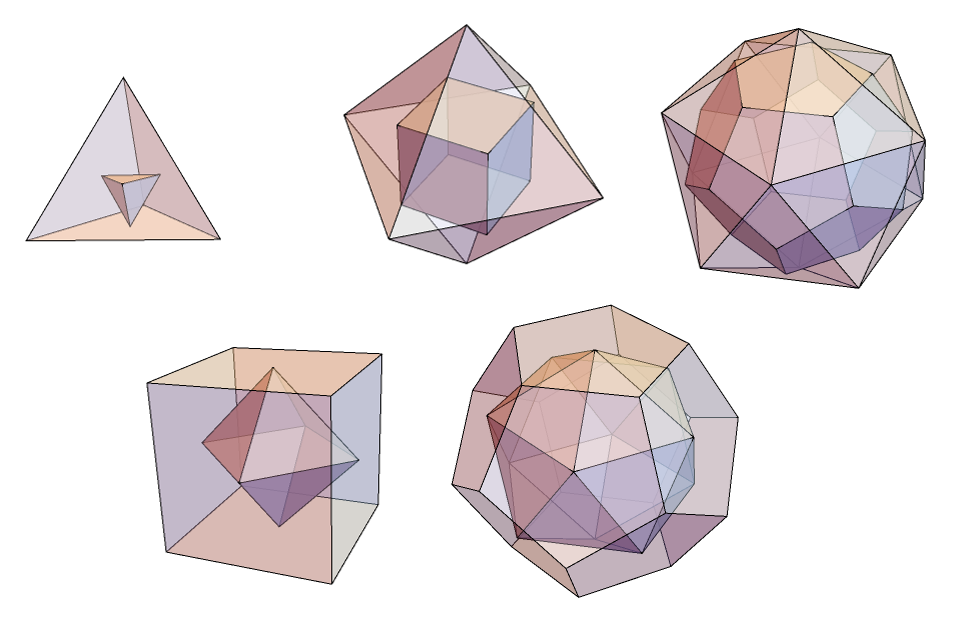}
	\caption{Dual polyhedrons for platonic solids.\label{fig:dgplatonic}}
\end{figure}

\begin{thm}[Whitney]
    Any planar 3-connected graph has unique dual graph up to isomorphism. 
\end{thm}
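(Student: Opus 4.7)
The natural approach is to show that the set of facial cycles of a 3-connected planar graph is a combinatorial invariant of the graph, independent of which planar representation we pick. Once the faces are determined by $G$ alone, the dual graph is built mechanically from this data (a vertex per face, an edge per shared edge of $G$), and so any two duals are isomorphic by matching up the identical face sets. All the work therefore goes into a purely graph-theoretic characterization of face boundaries.

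The characterization I would aim for is the following classical one: in a 3-connected planar graph $G$, a simple cycle $C$ is a facial cycle of some (equivalently, every) planar representation iff $C$ is chordless and $G\setminus V(C)$ is connected. Cycles with these two properties are usually called \emph{non-separating induced cycles}. The necessity direction is the easier one and I would do it first. If $C$ bounds a face and admits a chord $(u,v)$, then $\{u,v\}$ must be a 2-vertex cut: removing $u$ and $v$ separates the two arcs of $C\setminus\{u,v\}$, contradicting the 3-connectivity of $G$ (here I would invoke Menger's theorem as stated in subsection~\ref{subsc:3conn}). Similarly, if $C$ bounds a face and $G\setminus V(C)$ has two components, then by the Jordan curve theorem one component sits in $int(C)$ and another in $out(C)$, and a short case analysis of attachments again produces a 2-cut.

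The hard part, and the main obstacle, is the sufficiency direction: showing that every non-separating chordless cycle actually bounds a face. I would run it through the $C$-component language of subsection~\ref{subsc:3conn}. Being chordless and non-separating, $C$ has exactly one $C$-component $H$ in the planar representation; I then argue using 3-connectivity together with the planar-topology tools of lemma~\ref{lm:plan2conn} and corollary~\ref{cor:outface} that $H$ lies entirely in one of the two Jordan regions determined by $C$, so the other region contains no vertex or edge of $G$ and is therefore a face with boundary exactly $C$. This is the step that really uses 3-connectivity: for 2-connected graphs there can be several $C$-components on either side, and the characterization fails.

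With the characterization in hand the theorem is immediate. Given two planar representations of $G$, their face sets coincide (both equal the family of non-separating induced cycles of $G$), and in each representation an edge of $G$ lies on exactly two facial cycles. The construction of the dual depends only on this combinatorial incidence between edges and faces, so the two dual graphs are literally the same abstract graph, hence isomorphic. I would finish by noting that this also recovers the second fact stated in the remarks before subsection~\ref{subsec:dp}: the non-uniqueness of $D(G)$ observed there can only happen when $G$ fails to be 3-connected.
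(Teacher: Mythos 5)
Your route is genuinely different from the paper's. The paper argues abstractly: by the Steinitz theorem~\ref{thm:st} the graph $G$ is the graph of a convex polyhedron, by Lemma~\ref{lm:3conn} any dual of $G$ is again simple, planar and 3-connected, and then the squared-duality Lemma~\ref{lm:dualsquare} applied to two hypothetical non-isomorphic duals $D_1,D_2$ gives $D(D_1)\cong G\cong D(D_2)$, which the matching of that lemma promotes to an isomorphism $D_1\cong D_2$. You instead propose the classical Whitney--Tutte route: characterize the facial cycles of a 3-connected planar graph intrinsically as the non-separating induced cycles, so that the face set, and hence the dual, is determined by $G$ alone. Your approach buys strictly more information (it pins down the faces combinatorially, not just the isomorphism type of the dual, and in effect proves uniqueness of the embedding), at the cost of having to prove a substantial characterization theorem that the paper nowhere establishes; the paper's argument is shorter but leans on the heavy machinery of Sections on Steinitz and duality.

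Two concrete problems in your sketch. First, in the necessity direction you claim that if $C$ bounds a face and $G\setminus V(C)$ is disconnected, then one component lies in $int(C)$ and another in $out(C)$. That cannot happen: one of the two regions \emph{is} the face and contains no vertices, so all components of $G\setminus V(C)$ sit on the same side of $C$. The correct argument is that two $C$-components embedded on the same side of $C$ cannot be skew, hence one of them has all its attachments on a segment of $C$ delimited by two attachments of the other, and those two vertices (or the attachment set of a $C$-component with at most two attachments) form a 2-cut, contradicting 3-connectivity. Second, you have the two directions reversed: the sufficiency direction (non-separating induced cycle implies face boundary) needs no 3-connectivity at all --- such a cycle has at most one $C$-component, which is connected and therefore lies entirely in one region, leaving the other region empty and hence a face --- whereas 3-connectivity is exactly what makes necessity true; for merely 2-connected graphs a face boundary can carry a chord or separate the graph, which is why duals fail to be unique there. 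Neither issue is fatal, but as written the step you single out as the main obstacle is the trivial one, and the step you call easy is where the real work and the error both sit.
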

\begin{proof}
    Let's assume the contrary: a planar 3-connected graph $G$ has two non-isomorphic dual graphs $D_1$ and $D_2$. By the Steinitz theorem~\ref{thm:st} any 3-connected graph $G$ is a graph of a convex polyhedron $P(G)$. Any dual graph $D(G)$ is 3-connected by the lemma~\ref{lm:3conn} thus it is also a graph of a convex polyhedron $P\bigl(D(G)\bigr)$. Since $D_1$ and $D_2$ are not isomorphic, the graphs of two polyhedrons $P(D_1)$ and $P(D_2)$ are not isomorphic. By the lemma~\ref{lm:dualsquare} graphs of dual polyhedrons $D\bigl(P(D_1)\bigr)$ and $D\bigl(P(D_2)\bigr)$ are isomorphic to the graph $G$ and thus isomorphic. Let's denote this isomorphism by $i_1:D(P\bigl(D_1)\bigr)\rightarrow D\bigl(P(D_2)\bigr)$. By matching in the lemma~\ref{lm:dualsquare} this isomorphism defines isomorphism of their duals: $i'_1:D^2\bigl(P(D_1)\bigr)\rightarrow D^2\bigl(P(D_2)\bigr)$, and dual polyhedron $D^2\bigl(P(D_1)\bigr)$ is isomorphic to $P(D_1)$ and $D^2\bigl(P(D_2)\bigr)$ is isomorphic to $P(D_2)$. Let's denote these isomorphisms by $i_2:D^2\bigl(P(D_1)\bigr)\rightarrow P(D_1)$ and $i_3:D^2\bigl(P(D_2)\bigr)\rightarrow P(D_2)$. The composition $i_3\circ i'_1\circ i_2^{-1}$ is isomorphism $P(D_1)\rightarrow P(D_2)$.
\end{proof}

\subsection{Algorithms}\indent

\bf{Dual graph constructing algorithm}.

This algorithm constructed the dual graph $D(G)$ corresponding to a planar representation of a connected graph $G$.\\

\emph{Description:}
\begin{enumerate}
    \item Initialize 2-dimensional array $a = \bigl[a_1[v], a_2[v]\bigr]$ corresponded to edges of a dual graph $D(G)$ with dimensions $2\times\|E(G)\|$.    
    \item Find all bridges using bridge-detecting algorithm.
    \item \label{enum:dgalg1}  $e\in G$ and use any algorithm of searching shortest path $P$ in $G\setminus e$ between two vertices adjacent to the edge $e$ to find a simple cycle $C = P_1\bigcup e$ in $G$.
    \item Check that this cycle is new cycle by checking whether all $a[j]$ corresponded to the edges of $C$ don't contain the same vertex. If it is not new cycle find another shortest path $P_2$ (by deleting one of adjacent edges to $e$ in $P_1$) in $G\setminus e$ again and change $C = P_2\bigcup e$. 
    \item\label{enum:dgalg2} For every edge $e_j\in C$ add new vertex $v$ either in $a_1[j]$ or in $a_2[j]$ if $a_1[j]$ is not empty.
    \item Do~\ref{enum:dgalg1}-\ref{enum:dgalg2} for every another edge in $G$.
    \item For each empty $a_2[j]$ add a vertex $v_0$ that corresponds to the outer face of the graph. 
    \item For every not assigned bridge $b$ find the path $P$ from this bridge to a vertex that contains an edge $e_i$ with not empty $a_2[i]$. For every bridge $e_j\in P\bigcup{b}$ add in $a_1[j]$ and $a_2[j]$ the value $a_2[i]$.
\end{enumerate}

\NB The complexity of this algorithm using breadth first search or depth first search for connected graphs equals to $\mathbf{O\bigl(\|E\|^2\bigr)}$.\\

\begin{defin}
    The dual graph $D(G)$ is called \bf{simple*} iff. it is simple after the deletion of vertex corresponded to the outer face of
    $G$.
\end{defin}\noindent
Consider the algorithm corresponded to simple* dual graphs:\\

\bf{Lee algorithm.}

This algorithm finds the shortest path through inner faces between any two inner faces of a planar representation of a graph with simple* dual graph.\\

\emph{Description:}
\begin{enumerate}
    \item Construct the dual graph $D(G)$ for the current graph $G$ and delete the vertex corresponded to outer face of the current graph $G$.
    \item Mark the vertices corresponded to the same depth by this depth (or distance from starting point) using breadth first search algorithm.
    \item Algorithm terminates when the end point is marked.
\end{enumerate}

To find the shortest path walk through the vertices corresponded to reverse order of marked numbers (from ending point to starting, see figure~\ref{fig:lee}).\\

\NB The complexity of this algorithm (without constructing dual graph) equals to the complexity of breadth first search for connected graphs $\mathbf{O\bigl(\|E\|\bigr)}$.

\begin{figure}[H]
    \centering
	\includegraphics[width=0.5\textwidth]{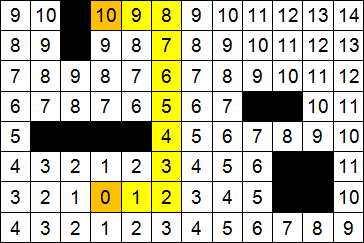}
	\caption{Result of the Lee algorithm.\label{fig:lee}}
\end{figure}

\emph{Modifications of Lee algorithm:}
\begin{enumerate}
    \item In this algorithm any constructing shortest path algorithm can be used. The complexity will be equal to the complexity of corresponded algorithm.
    \item One can consider different time travelling (and also infinite time travelling) from one face to another in the planar representation. Thus, the additional weights can be added in the dual graph. And in this case the shortest path algorithms for weighted graphs can be used. 
    \item\label{enum:modlee} The breadth first search algorithm can be applied to the starting point and the ending point in parallel. In this case the shortest path can be constructed when these two algorithms intersect by a vertex.
    \item If one has some information about the dual graph or path (for example, it is known that the path should contain some vertex). Lee algorithm can be used from this vertex or done in parallel for the starting vertex, the ending vertex and the current vertex like in~\ref{enum:modlee}. 
\end{enumerate}


\section{Flows on graphs}\label{sect:flow}

\subsection{Decomposition of flows}

\begin{defin}
    A \bf{flow network} is directed graph $G$ with positive weights $w(u,v)$ with two highlighted vertices \bf{$s$ (source)} and \bf{$t$ (terminal)} and \bf{capacity function} $c:V(G)\times V(G)\rightarrow \rR^+$ defined by following:
    $$c(u,v) =
    \begin{cases}
        w(u,v) & (u,v)\in E(G)\\
        0 & \text{otherwise}
    \end{cases}.$$
\end{defin}

Let's consider a flow network $G$.

\begin{defin}
    A \bf{flow} is a function $f:V(G)\times V(G)\rightarrow\rR$ satisfied the following properties:
    \begin{enumerate}
        \item $f(u,v) = -f(v,u)$ (skew symmetry),
        \item $f(u,v)\leq c(u,v)$ (capacity constraint),
        \item $\forall u\in V(G)\setminus\{s,t\}: \sum\limits_{v\in V(G)} f(u,v) = 0$ (conservation of a flow).
    \end{enumerate}
    A \bf{flow $f$ value} is $F = \sum\limits_{v\in V(G)} f(s,v)$.
\end{defin}

\NB By skew symmetry a value $F$ of the flow $f$ can be less than 0.

\begin{st}[Sum of flows]\label{st:sumfl}
    Consider a flow network $G$ with two flows: $f_1$ and $f_2$. If $f_1(u,v)+f_2(u,v)\leq c(u,v)$ for any edge $(u,v)$ then $f_1+f_2$ also a flow and $F(f_1+f_2)= F_1+F_2$.
\end{st}
\begin{proof}
     It easy holds from definition. 
\end{proof}

\begin{thm}[Decomposition of flow]\label{thm:decflow}
     For a positive flow $f: F > 0$ there exist at most $\|E\|$ simple paths $P_i$ from $s$ to $t$ and simple cycles $C_i$ such that $$f = \sumt_i f_{P_i}+ \sumt_i f_{C_i} \text{ and } F = \sumt_i f(P_i),$$ where $f(P) = \mint_{e\in P,\ f(e) > 0} f(e)$ and $$f_P(u,v) = 
    \begin{cases}
        f(P) & (u,v)\in P,\\
        -f(P) & (v,u)\in P,\\
        0 & \text{otherwise.}
    \end{cases}
    $$
\end{thm}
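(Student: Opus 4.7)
My plan is to prove this by greedy extraction: at each step pull out one simple path from $s$ to $t$ (while there is still positive flow value) or one simple cycle (to mop up what remains), subtract it, and show that each extraction kills at least one edge from the positive support of the flow, so the whole process finishes in at most $\|E\|$ steps.

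First I would set up the key walking lemma. Call an edge $(u,v)$ \emph{active} if $f(u,v) > 0$, and let $G^+(f)$ be the subgraph of active edges. Using the conservation law at any $u \notin \{s,t\}$, if some active edge enters $u$ then some active edge must also leave $u$ (otherwise $\sum_v f(u,v)$ would be strictly negative). So starting from $s$ (when $F > 0$, at least one edge $(s,v)$ is active) and walking forward along active edges, the walk can only stop at $t$ or at an already-visited vertex. In the first case the walk, after removing any repetitions, contains a simple path $P$ from $s$ to $t$; in the second case it contains a simple cycle $C$. The same argument starting from any active vertex (when $F = 0$ but some edge is still active) must produce a simple cycle, because conservation now holds at $s$ and $t$ as well, so the walk cannot end anywhere.

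Next, for a simple $s$-$t$ path $P$ with $m = f(P) = \min_{e \in P} f(e) > 0$, I define $f_P$ as in the statement. Then $f_P$ is itself a flow of value $m$: skew symmetry and capacity constraints are immediate from the definition and from $m \le f(e) \le c(e)$ on edges of $P$, while conservation at internal vertices of $P$ holds because $P$ contributes exactly one incoming and one outgoing edge. For a simple cycle $C$ the analogous $f_C$ is a flow of value $0$. By Statement \ref{st:sumfl} (applied in reverse), $f' := f - f_P$ (resp.\ $f - f_C$) is again a flow, its value is $F - m$ (resp.\ $F$), and crucially, at the edge $e^\star \in P$ (or $C$) realizing the minimum, $f'(e^\star) = 0$, so the positive support strictly shrinks.

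I would then wrap this up by induction on $|E(G^+(f))|$. The base case $|E(G^+(f))| = 0$ forces $F = 0$ and $f \equiv 0$ on edges, so the empty decomposition works. For the inductive step, extract a path $P$ while $F > 0$ and a cycle $C$ while $F = 0$ but $G^+(f) \ne \emptyset$; by the previous paragraph the residual flow $f'$ has strictly smaller positive support, and the inductive hypothesis decomposes it into at most $|E(G^+(f))| - 1$ further paths and cycles. Prepending $f_P$ (or $f_C$) gives the required decomposition of $f$ into at most $\|E\|$ terms, and summing the values yields $F = \sum_i f(P_i)$ since cycles contribute $0$.

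The only delicate point is the walking lemma, and specifically making sure the forward walk is well-defined: one has to check that conservation really does guarantee an outgoing active edge, which is not automatic because $f(u,v)$ can be negative on non-active edges. The cleanest way is to split the sum $\sum_v f(u,v) = 0$ into active ($f > 0$), reverse-active ($f < 0$), and zero parts, and observe that if there is at least one reverse-active incoming edge to $u$ (i.e.\ an active edge into $u$), then the positive part of the sum cannot be zero, so at least one outgoing edge is active. Everything else is bookkeeping.
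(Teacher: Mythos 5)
Your proposal is correct and follows essentially the same route as the paper: greedily walk along positive-flow edges from $s$ (using skew symmetry and conservation to guarantee a continuation), extract a simple $s$-$t$ path or simple cycle, subtract the corresponding flow so that at least one edge of the positive support is zeroed, and repeat at most $\|E\|$ times, switching to cycle extraction once no positive edge leaves $s$. Your treatment is somewhat more careful than the paper's (in particular the explicit verification that conservation yields an outgoing active edge, and the induction on the size of the positive support), but the underlying argument is identical.
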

\begin{proof}
    Since the flow $f$ is positive, there exists an edge $(s, v_1)$ with positive flow $f(s, v_1)$. By the skew symmetry $f(v_1, s) = - f(s, v_1)$ and thus by conservation of a flow there exists an edge $(v_1, v_2)$ with positive flow $f(v_1, v_2)$. Then the same holds for the edge $(v_1, v_2)$. If these procedure reaches $t$ then denote this path by $P_1$. If these procedure reaches some visited vertex then denote the corresponded visited simple cycle by $C_1$. Let's denote this cycle or path just by $P$.

    Consider a flow $f'(u,v) = 
    \begin{cases}
        f(P) & (u,v)\in P,\\
        -f(P) & (v,u)\in P,\\
        0 & \text{otherwise.}
    \end{cases}
    $ 

    By decreasing our flow $f$ by this flow $f'$ new flow becomes 0 at at least one edge. Then all edges from the source $s$ will be with 0 flow. This procedure can be obtained for all the rest vertices with positive adjacent edges. Thus by doing it at most $\|E\|$ times we obtain the decomposition. 
\end{proof}

This theorem simple implies \bf{the decomposition algorithm}.\\

\NB Since the number of steps for constructing a cycle or a path in the theorem~\ref{thm:decflow} less or equal the number of vertices in the graph and the number of such cycles and paths is $O\bigl(\|E\|\bigr)$, the complexity of decomposition algorithm is $\mathbf{O\bigl(\|V\|\|E\|\bigr)}$.

\begin{defin}
    A \bf{residual network} $G_f = \bigl(V(G), E_f\bigr)$ is a flow network with edges $E_f$ with non-negative \bf{residual capacity} $c_f(u,v) = c(u,v)-f(u,v)$. 
\end{defin}

\begin{lm}[Two flows]\label{lm:2fl}
    Let $f$ and $h$ be two flows in a flow network $G$ then $h$ can be represented as a sum $f+f'$, where $f'$ is a flow in the residual network $G_f$ with the value $F' = H-F$.
\end{lm}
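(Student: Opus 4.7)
The plan is to exhibit $f'$ explicitly as the pointwise difference $f'(u,v) := h(u,v) - f(u,v)$ and then check the three defining properties of a flow (skew symmetry, capacity constraint, conservation) on the residual network $G_f$, together with the value identity $F' = H - F$. All four checks reduce to elementary arithmetic once the right object is written down, so this is really a verification rather than a construction.

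The first two verifications are essentially automatic. Skew symmetry follows because both $h$ and $f$ are skew symmetric, so $f'(u,v) = h(u,v)-f(u,v) = -h(v,u)+f(v,u) = -f'(v,u)$. The capacity constraint on $G_f$ is the one step that actually uses the hypothesis that $h$ is itself a flow on $G$: by definition $c_f(u,v) = c(u,v) - f(u,v)$, and the bound $h(u,v) \leq c(u,v)$ gives directly $f'(u,v) = h(u,v) - f(u,v) \leq c(u,v) - f(u,v) = c_f(u,v)$. This is the step I expect to be the main (mild) obstacle, in the sense that one has to notice how the residual capacity was tailored exactly so that this inequality goes through.

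Conservation for $f'$ at any $u \in V(G)\setminus\{s,t\}$ is the sum of two conservation identities: $\sum_{v} f'(u,v) = \sum_v h(u,v) - \sum_v f(u,v) = 0 - 0 = 0$. The value identity is the same computation performed at the source: $F' = \sum_v f'(s,v) = \sum_v h(s,v) - \sum_v f(s,v) = H - F$. Finally, Statement~\ref{st:sumfl} (or just unfolding the definition) gives $f + f' = h$, closing the argument.

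A small point I would flag for the write-up is that if the definition of the residual network insists on $c_f(u,v) \geq 0$ only on a restricted edge set $E_f$, then one should check that $f'(u,v) = 0$ on every pair where $c_f(u,v)$ is not defined, i.e.\ on pairs with $c(u,v) = 0 = f(u,v)$; but whenever $f(u,v) \neq 0$ or $h(u,v) \neq 0$ the corresponding reverse/forward edges do lie in $G_f$ because of skew symmetry and the capacity bound. Beyond this bookkeeping there is no real obstacle.
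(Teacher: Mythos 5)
Your proposal is correct and follows exactly the route the paper intends: the paper's own proof is just ``It easy holds from definition,'' and you have simply carried out that verification by setting $f'=h-f$ and checking skew symmetry, the residual capacity bound, conservation, and the value identity. Nothing is missing; your remark about the edge set $E_f$ is sensible bookkeeping but not an obstacle.
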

\begin{proof}
     It easy holds from definition. 
\end{proof}

\begin{lm}[Subtraction of flows]\label{lm:flowminus}
    Let $f$ and $g$ be flows with equal values in a flow network $G$. Then $g = f+\sumt_i f_{C_i},$ where $f_{C_i}$ are flows in the residual network $G_f$ among cycles $C_i$.
\end{lm}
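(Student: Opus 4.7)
The plan is to reduce the statement to the Decomposition Theorem~\ref{thm:decflow} applied in the residual network. By hypothesis $f$ and $g$ have equal values, so Lemma~\ref{lm:2fl} provides a flow $f'$ in the residual network $G_f$ with $g = f + f'$ and value $F' = G - F = 0$. The remaining task is therefore to show that a flow of value zero admits a decomposition into cycles alone.

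For this I would rerun the construction used to prove Theorem~\ref{thm:decflow}, but with the starting vertex chosen freely rather than at $s$. Pick any edge $(u_0, u_1)$ with $f'(u_0,u_1) > 0$; at $u_1$ the positive incoming contribution from $u_0$, combined with conservation, forces some outgoing edge $(u_1, u_2)$ with $f'(u_1, u_2) > 0$. Because $F' = 0$, the net outflow from $s$ and the net inflow at $t$ both vanish, so conservation effectively holds at \emph{every} vertex and the trace can never terminate at the source or the sink. In a finite graph the trace must then revisit some earlier vertex, producing a simple cycle $C_1$. Subtract the cycle flow $f_{C_1}$: this preserves skew symmetry, capacity constraints in $G_f$, conservation, and the value $0$, while zeroing out at least one edge. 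Iterating at most $\|E\|$ times exhausts $f'$ and yields $f' = \sum_i f_{C_i}$ with the $C_i$ cycles in $G_f$.

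The main obstacle is purely bookkeeping: Theorem~\ref{thm:decflow} is stated only for strictly positive flows, so one must either rerun its construction in the $F' = 0$ case as above, or invoke it formally and notice that the identity $0 = F' = \sum_i f(P_i)$ together with $f(P_i) > 0$ forces the collection of $s$--$t$ paths in the decomposition to be empty, leaving only cycles. Either route, combined with Lemma~\ref{lm:2fl}, gives $g = f + \sum_i f_{C_i}$ with each $f_{C_i}$ a cycle flow in the residual network $G_f$, as required.
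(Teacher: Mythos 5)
Your proof is correct and follows essentially the same route as the paper: apply the two-flows lemma~\ref{lm:2fl} to write $g = f + f'$ with $f'$ a flow of value $0$ in $G_f$, then observe that the decomposition procedure applied to a zero-value flow produces cycles only. Your extra care about the hypothesis $F>0$ in Theorem~\ref{thm:decflow} (and either workaround you give for it) is a reasonable refinement of a point the paper glosses over, but it does not change the argument.
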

\begin{proof}
    By the previous lemma there exists a flow $f'$ in $G_f$ such that $g = f+f'$. Since $F' = G-F = 0$, the decomposition algorithm will obtain only cycles $C_i$ in the residual network $G_f$.
\end{proof}

\begin{defin}
    A \bf{$s,t$-cut $(A, B)$} is a distribution of the vertex set $V(G)$ by two parts $A,B$ such that:
    \begin{enumerate}
        \item $A\bigcup B = V(G)$,
        \item $A\bigcap B =$ \O,
        \item $s\in A,\ t \in B$.
    \end{enumerate}
\end{defin}

\begin{defin}
    A \bf{capacity of the $s,t$-cut $(A, B)$} is 
    $$C(A, B) = \sum\limits_{u\in A}\sum\limits_{v\in B} c(u,v).$$
\end{defin}

\begin{defin}
    A \bf{flow through the $s,t$-cut $(A, B)$} is 
    $$F(A, B) = \sum\limits_{u\in A}\sum\limits_{v\in B} f(u,v).$$
\end{defin}

\begin{lm}[$s,t$-cut flow]\label{lm:stcut}
    A flow through any $s,t$-cut is equal to the flow value.
\end{lm}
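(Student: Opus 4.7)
The plan is to start from the definition of the flow value and use conservation to extend the single-vertex sum at $s$ into a sum over all of $A$, then split that sum according to the cut and kill the internal part using skew symmetry.

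Concretely, I would proceed as follows. First, write $F = \sumt_{v \in V(G)} f(s,v)$. For every $u \in A \setminus \{s\}$, conservation gives $\sumt_{v \in V(G)} f(u,v) = 0$, so I may add all of these zero terms to the right-hand side and obtain
$$F = \sumt_{u \in A} \sumt_{v \in V(G)} f(u,v).$$
Now I split the inner sum along the cut $V(G) = A \sqcup B$:
$$F = \sumt_{u \in A} \sumt_{v \in A} f(u,v) + \sumt_{u \in A} \sumt_{v \in B} f(u,v).$$
The second double sum is by definition $F(A,B)$, so it remains to show that the first one vanishes.

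The key observation is that the first double sum ranges over ordered pairs $(u,v) \in A \times A$, and pairing $(u,v)$ with $(v,u)$ gives $f(u,v) + f(v,u) = 0$ by skew symmetry, while diagonal terms $f(u,u)$ are also $0$ by skew symmetry applied to $u=v$. Hence the $A \times A$ contribution cancels entirely and we conclude $F = F(A,B)$.

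I do not expect a serious obstacle here: the only delicate point is being careful that the double sum really is over \emph{ordered} pairs so that the skew-symmetric cancellation is exact, and that $s \in A$ is what makes the manipulation in the first step legitimate (we needed conservation only on $A \setminus \{s\}$, while $s$ itself supplies the term $\sumt_v f(s,v) = F$). No use of the particular structure of the graph, positivity of capacities, or decomposition machinery is required.
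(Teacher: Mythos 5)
Your proof is correct and is essentially the paper's argument run in the opposite direction: the paper starts from $F(A,B)$, adds the vanishing $A\times A$ sum (skew symmetry), and collapses to $F$ via conservation, while you start from $F$ and expand outward to $F(A,B)$ using the same two ingredients. The extra care you take about ordered pairs and diagonal terms is a minor but welcome sharpening of the same idea.
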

\begin{proof}
    By the skew symmetry: $\sum\limits_{u\in A}\sum\limits_{v\in A} f(u,v) = 0$.
    $$\sum\limits_{u\in A}\sum\limits_{v\in B} f(u,v) =\sum\limits_{u\in A}\sum\limits_{v\in B} f(u,v) + \sum\limits_{u\in A}\sum\limits_{v\in A} f(u,v) = \sum\limits_{u\in A}\sum\limits_{v\in V(G)} f(u,v) = $$
    $$ = \sum\limits_{v\in V(G)} f(s,v) + \sum\limits_{u\in A\setminus\{s\}}\sum\limits_{v\in V(G)} f(u,v)\overset{\text{conservation low}}{=} F.$$
\end{proof}

\begin{cor}
    The sum of flows from the source is equal to the sum of flows to the terminal.
\end{cor}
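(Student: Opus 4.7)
The plan is to apply the $s,t$-cut flow lemma (Lemma~\ref{lm:stcut}) to two carefully chosen cuts and read off the desired equality. First I would identify the two quantities: the sum of flows from the source is by definition $F=\sum_{v\in V(G)} f(s,v)$, while the sum of flows to the terminal is $T:=\sum_{u\in V(G)} f(u,t)$. The goal is to show $F=T$.

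Next I would choose the cut $(A_1,B_1)=(\{s\},V(G)\setminus\{s\})$. Since $f(s,s)=0$ by skew symmetry, Lemma~\ref{lm:stcut} gives $F(A_1,B_1)=\sum_{v\in V(G)\setminus\{s\}} f(s,v)=\sum_{v\in V(G)} f(s,v)=F$, which is just the definition of $F$ reaffirmed by the lemma. Then I would choose the second cut $(A_2,B_2)=(V(G)\setminus\{t\},\{t\})$. Again by Lemma~\ref{lm:stcut}, the flow through this cut equals the flow value $F$, and direct expansion using $f(t,t)=0$ yields $F(A_2,B_2)=\sum_{u\in V(G)\setminus\{t\}} f(u,t)=\sum_{u\in V(G)} f(u,t)=T$.

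Combining these two applications of the lemma gives $F=F(A_1,B_1)=F(A_2,B_2)=T$, which is precisely the statement of the corollary. There is essentially no obstacle here: the corollary is an immediate specialization of Lemma~\ref{lm:stcut} to the two extreme cuts (isolating $s$ and isolating $t$, respectively), and the only minor bookkeeping is noting that the diagonal terms $f(s,s)$ and $f(t,t)$ vanish by skew symmetry so that the sums over $V(G)\setminus\{s\}$ (resp.\ $V(G)\setminus\{t\}$) agree with the sums over all of $V(G)$ that appear in the defining expressions.
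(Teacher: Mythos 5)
Your proposal is correct and is exactly the argument the paper intends: the corollary is stated immediately after the $s,t$-cut flow lemma (Lemma~\ref{lm:stcut}) precisely because it follows by applying that lemma to the two cuts $(\{s\},V(G)\setminus\{s\})$ and $(V(G)\setminus\{t\},\{t\})$, with the diagonal terms vanishing by skew symmetry as you note. No gaps.
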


\subsection{Max-flow}\label{subsc:flowalg}

\begin{defin}
    \bf{Max-flow} is a flow with the maximum possible value in a flow network. 
\end{defin}

\NB The value of the max-flow should be non-negative.

\begin{lm}[Necessary and sufficient condition of positive max-flow value]\label{lm:nsmaxflow}
    The max-flow value is positive iff. there exists a path from the source to the terminal.
\end{lm}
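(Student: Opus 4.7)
The plan is to prove the two directions separately, both relying on the decomposition theorem~\ref{thm:decflow} and the structure of $f_P$ introduced there.

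For \emph{sufficiency}, suppose a directed path $P = s \to v_1 \to \cdots \to v_k \to t$ exists in $G$. I would construct the candidate flow $f_P$ exactly as in the decomposition theorem, putting $m := \min_{e \in P} c(e) > 0$ on each forward edge of $P$, $-m$ on the reverse edges (to satisfy skew symmetry), and zero elsewhere. Because all capacities along the path are strictly positive (edges of $G$ have positive weights), we have $m > 0$ and the capacity constraint $f_P(u,v) \leq c(u,v)$ holds. Conservation holds at every internal vertex of $P$ since exactly one unit of flow enters and leaves, and trivially elsewhere. Thus $f_P$ is a valid flow with value $m > 0$, so the max-flow value is at least $m > 0$.

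For \emph{necessity}, suppose the max-flow value $F > 0$. Apply the decomposition theorem~\ref{thm:decflow} directly to the max-flow $f$: it gives
$$f = \sumt_i f_{P_i} + \sumt_i f_{C_i}, \qquad F = \sumt_i f(P_i),$$
where each $P_i$ is a simple path from $s$ to $t$ and each $C_i$ is a simple cycle. Since $F > 0$ and every $f(P_i) \geq 0$ (it is defined as a positive minimum), at least one summand $f(P_{i_0}) > 0$ must occur; in particular the path $P_{i_0}$ from $s$ to $t$ exists in $G$, which is exactly what was to be shown.

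There is no real obstacle here: both directions are short consequences of the decomposition theorem and the definition of $f_P$. The only thing worth being careful about is the sufficiency direction, where one should explicitly verify that $f_P$ satisfies the three defining conditions of a flow (skew symmetry, capacity constraint, conservation at internal vertices) rather than just invoking the decomposition theorem — the latter produces flows of this form but is stated for a given existing flow, whereas here $f_P$ is being constructed from scratch out of a path in $G$.
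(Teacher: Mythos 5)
Your proof is correct, and the sufficiency direction is exactly the paper's argument: saturate the path with its minimum capacity $c_P>0$ and check the flow axioms. The necessity direction, however, takes a genuinely different route. The paper argues by contraposition with a cut: if no path from $s$ to $t$ exists, let $A$ be the set of vertices reachable from $s$ and $B$ the rest; then $c(A,B)=0$, and the $s,t$-cut flow lemma~\ref{lm:stcut} forces $F = F(A,B)\le c(A,B)=0$, a contradiction with $F>0$. You instead apply the decomposition theorem~\ref{thm:decflow} to the max-flow and observe that $F=\sumt_i f(P_i)>0$ forces at least one $s$--$t$ path $P_{i_0}$ to appear in the decomposition; since $f>0$ on its edges implies $c>0$ and hence membership in $E(G)$, this path lives in $G$. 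Both are legitimate given the paper's ordering (the decomposition theorem precedes this lemma and does not depend on it). Your route is more constructive --- it exhibits the path --- at the cost of invoking the heavier decomposition machinery; the paper's route uses only the lightweight cut identity and has the pedagogical advantage of rehearsing exactly the reachability-set construction that reappears in the proof of the Ford--Fulkerson theorem~\ref{thm:fordfulk}.
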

\begin{proof}
    \begin{enumerate}
        \item \emph{Sufficiency.} Let $c_P$ be the minimum capacity value among the path $P$. Let's define a flow: 
        $$f(u,v) = 
        \begin{cases}
            c_P & (u,v)\in P,\\
            -c_P & (v,u)\in P,\\
            0 & otherwise.
        \end{cases}$$
        The flow value is $F= c_P > 0$ thus the max-flow value is positive.
        \item \emph{Necessarity.}  Let's assume the contrary: there is no path from the source to the terminal. Let's denote by $A$ a set of vertices which are reachable from the source and $B$ will be all the rest. Thus $c(A, B) = 0$ and by $s,t$-cut flow lemma~\ref{lm:stcut}: 
        $$0 < F = F(A, B) \leq c(A, B) = 0$$ holds contradiction.
    \end{enumerate}
\end{proof}

\NB For the max-flow a flow value is positive is equivalent to flow value is not zero.

\begin{lm}[About a path in residual network]\label{lm:resnetpath}
    A flow $f$ is max-flow iff. there is no path from the source to the terminal in the residual network $G_f$.     
\end{lm}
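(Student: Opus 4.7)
The plan is to derive both directions of the equivalence directly from the two immediately preceding lemmas: the ``two flows'' decomposition (Lemma~\ref{lm:2fl}) and the necessary-and-sufficient condition for positive max-flow value (Lemma~\ref{lm:nsmaxflow}), applied to the residual network $G_f$.

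For the \emph{necessity} direction (if $f$ is a max-flow then no $s$-$t$ path exists in $G_f$), I would argue by contradiction. Suppose such a path $P$ exists in $G_f$. Since every edge of $P$ has strictly positive residual capacity, Lemma~\ref{lm:nsmaxflow} applied to $G_f$ itself guarantees a flow $f'$ in $G_f$ with strictly positive value $F'>0$ (indeed, one may just push $c_f(P)=\min_{e\in P}c_f(e)$ along $P$). Then by Lemma~\ref{lm:2fl}, the sum $f+f'$ is a valid flow in $G$, and its value equals $F+F' > F$, contradicting the maximality of $f$.

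For the \emph{sufficiency} direction (if no $s$-$t$ path exists in $G_f$ then $f$ is a max-flow), I again argue by contradiction. Suppose $f$ is not a max-flow, and let $h$ be any flow in $G$ with $H>F$. Applying Lemma~\ref{lm:2fl} in the other direction, we get $h=f+f'$ for some flow $f'$ in the residual network $G_f$ with value $F'=H-F>0$. But the assumption that no $s$-$t$ path exists in $G_f$, together with Lemma~\ref{lm:nsmaxflow} applied to $G_f$, forces every flow in $G_f$ to have value at most $0$, contradicting $F'>0$.

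There is no real obstacle here beyond bookkeeping: the whole content is already packaged in the two preceding lemmas, and the proof reduces to checking that the residual network $G_f$ is itself a legitimate flow network to which those lemmas apply, and that ``positive value'' in the contradiction can be sharpened to $F'\neq 0$ (as already noted in the remark after Lemma~\ref{lm:nsmaxflow}). The only subtlety worth flagging explicitly in writing is that Lemma~\ref{lm:nsmaxflow} is invoked twice, once in each direction, so both invocations should be stated cleanly with respect to $G_f$ rather than to $G$.
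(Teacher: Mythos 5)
Your proposal is correct and follows essentially the same route as the paper: necessity by pushing the minimum residual capacity along the alleged path and adding the resulting flow to $f$, and sufficiency by writing a better flow as $f+f'$ via Lemma~\ref{lm:2fl} and invoking Lemma~\ref{lm:nsmaxflow} on $G_f$ to produce a forbidden $s$-$t$ path. The only cosmetic difference is that the paper verifies $f+f'$ is a flow by checking the capacity constraint directly rather than citing Lemma~\ref{lm:2fl}.
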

\begin{proof}
    \begin{enumerate}
        \item \emph{Necessarity.} Let $f$ be the max-flow. Assume the contrary: there exists a path $P$ from the source to the terminal in the residual network $G_f$. Let $c_{f,P}$ be the minimum capacity value in $G_f$ among the path $P$. Let's define a flow: 
        $$f'(u,v) = 
        \begin{cases}
            c_{f,P} & (u,v)\in P,\\
            -c_{f,P} & (v,u)\in P,\\
            0 & otherwise.
        \end{cases}$$
        The flow value $F' = c_{f,P}$. Since $f(u,v)+f'(u,v)\leq f(u,v)+c_f = c(u,v)$ the sum $f+f'$ is a flow and $F+F'> F$. This holds a contradiction with max-flow assumption.
        \item \emph{Sufficiency.} Assume the contrary: there exists max-flow $f'\neq f$. By the two flows lemma~\ref{lm:2fl} the flow $f'-f$ is a flow in residual network $G_f$ with the flow value $F'-  F> 0$. Thus the maximal flow on $G_f$ is positive and by the lemma~\ref{lm:nsmaxflow} there exists a path in $(G_f)_{f'-f} = G_{f'}$. This holds the contradiction by necessarity condition.
    \end{enumerate}
\end{proof}

\begin{defin}
    \bf{Min-cut} is a $s,t$-cut $(A,B)$ with the minimum capacity $(A,B)$. 
\end{defin}

\begin{thm}[Ford–Fulkerson]\label{thm:fordfulk}
    The max-flow value equals to the min-cut capacity value.
\end{thm}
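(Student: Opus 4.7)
The plan is to prove equality by a two-sided sandwich: first a weak-duality bound showing every flow value is at most every cut capacity, and then an exhibition argument that produces a specific cut whose capacity equals the max-flow value, using the residual-network characterization of Lemma~\ref{lm:resnetpath}.

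First I would establish weak duality. Given any flow $f$ and any $s,t$-cut $(A,B)$, the $s,t$-cut flow lemma~\ref{lm:stcut} tells us that $F = F(A,B) = \sum_{u\in A}\sum_{v\in B} f(u,v)$. The capacity constraint $f(u,v)\leq c(u,v)$ then gives $F \leq \sum_{u\in A}\sum_{v\in B} c(u,v) = C(A,B)$. Since this holds for every flow and every cut, the max-flow value is at most the min-cut capacity value.

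Next, for the reverse inequality, I would take a max-flow $f$ and construct a cut that attains $F$. By Lemma~\ref{lm:resnetpath}, the terminal $t$ is not reachable from $s$ in the residual network $G_f$. Let $A$ be the set of vertices reachable from $s$ in $G_f$ and $B = V(G)\setminus A$; then $s\in A$, $t\in B$, so $(A,B)$ is an $s,t$-cut. The key observation is that for any $u\in A$ and $v\in B$, the absence of an edge $(u,v)$ in $G_f$ forces $c_f(u,v) = c(u,v)-f(u,v) = 0$, i.e. $f(u,v) = c(u,v)$; and by skew symmetry $f(v,u) = -c(u,v)$, which, combined with the non-negativity of capacities, implies reverse edges contribute nothing extra. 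Summing yields $F(A,B) = \sum_{u\in A, v\in B} f(u,v) = \sum_{u\in A, v\in B} c(u,v) = C(A,B)$. Invoking Lemma~\ref{lm:stcut} once more gives $F = F(A,B) = C(A,B)$, so the max-flow value is at least this particular cut's capacity, and hence at least the min-cut value.

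Combining the two inequalities yields equality, and as a byproduct the cut $(A,B)$ constructed from the residual network is a min-cut. The main subtlety I expect is the careful handling of saturated forward edges versus reverse edges across the cut when verifying $F(A,B) = C(A,B)$: one must check that every $(u,v)$ with $u\in A,v\in B$ has $f(u,v) = c(u,v)$ (no residual capacity forward) and that every $(v,u)$ with $v\in B, u\in A$ has $f(v,u) = 0$ (no residual capacity backward, which would otherwise let $v$ lie in $A$). Everything else is a direct invocation of the lemmas already in hand.
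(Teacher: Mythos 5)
Your proposal is correct and follows essentially the same route as the paper's proof: weak duality via Lemma~\ref{lm:stcut} together with the capacity constraint, then taking $A$ to be the set of vertices reachable from $s$ in the residual network $G_f$ (nonempty and avoiding $t$ by Lemma~\ref{lm:resnetpath}) so that $c_f(u,v)=0$ across the cut, forcing $F=F(A,B)=C(A,B)\ge C(A_{min},B_{min})$. Your extra care about reverse edges is harmless but not needed, since $F(A,B)$ sums only over ordered pairs from $A$ to $B$, each of which is saturated.
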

\begin{proof}
    Let $f$ be the max-flow and $(A_{min}, B_{min})$ be the  min-cut. By the $s,t$-cut lemma~\ref{lm:stcut}: $F = F(A_{min}, B_{min})\leq c(A_{min}, B_{min})$.
    
    By the lemma about a path in residual network~\ref{lm:resnetpath} there is no paths between the source $s$ and the terminal $t$ in the residual network $G_f$. Let's denote by $A$ the subset of vertices of $G_f$ which are reachable from $s$ and by $B$ all the rest. 
    $$0 = c_f(A,B) = c(A,B)-f(A,B)\imp F = f(A, B) = c(A, B)\geq c(A_{min}, B_{min})\imp $$
    $$\imp F =  c(A_{min}, B_{min}).$$
\end{proof}

Consider algorithms for constructing max-flow:

\begin{enumerate}
    \item \bf{Ford–Fulkerson algorithm}.\\
This algorithm finds max-flow in a flow network $G$ from the source $s$ to the terminal $t$.\\
\emph{Description:}
\begin{enumerate}
    \item Initialize max-flow $f_{max} = 0$.
    \item\label{enum:ff1} Find any path $P$ from $s$ to $t$ in $G$.
    \item Find the minimum capacity $c_p$ among this path $P$.
    \item\label{enum:ff2} Construct the flow $f$ with $c_p$ among the path as in the lemma~\ref{lm:nsmaxflow}.
    \item Add new flow $f$ to the max-flow $f_{max}$. Do steps~\ref{enum:ff1}-~\ref{enum:ff2} for residual network $G_{f_{max}}$ and find a new flow $f$.
\end{enumerate}

\NB The complexity of this algorithm depends of a weights distribution of a network.

\NB In some special cases this algorithm may work very slow (see ex~\ref{ex:ff}) and moreover, not terminate. \\

\begin{ex}\label{ex:ff}
    Let's consider several steps of Ford–Fulkerson algorithm:

\begin{figure}[H]
    \centering
	\includegraphics[width=1.0\textwidth]{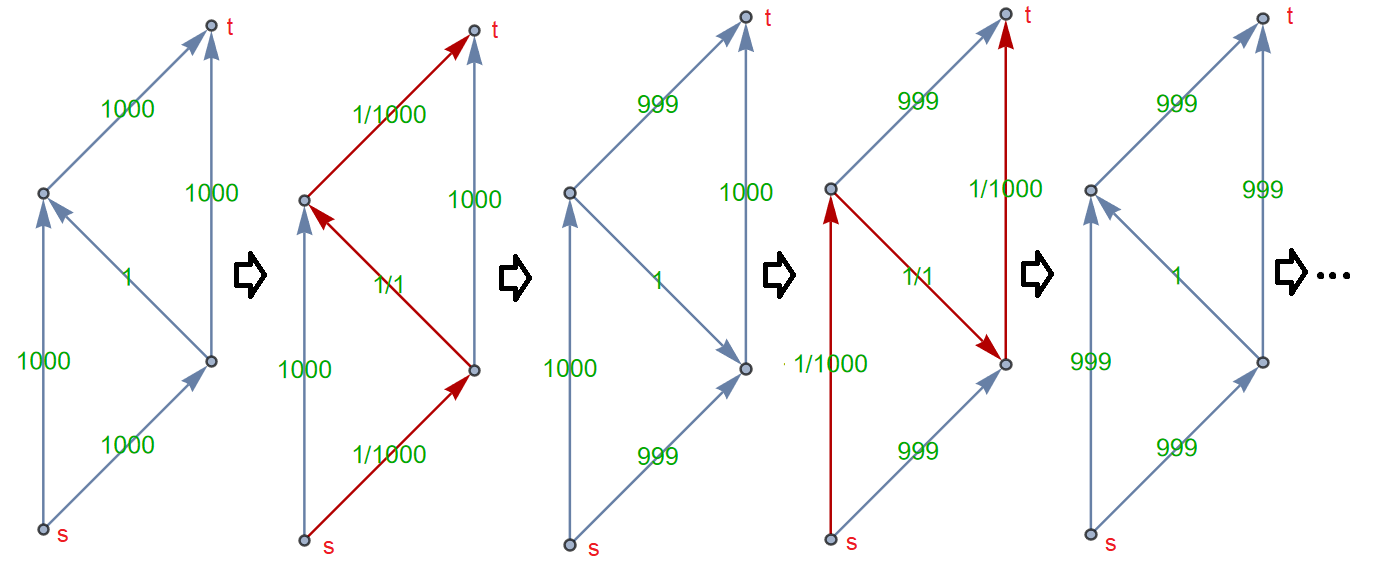}
	\caption{Steps of Ford–Fulkerson algorithm. First: the flow network $G$. Second: the flow $f$ versus capacity among the path (red). Third: the residual network $G_f$ and so on.}\label{fig:ff}
\end{figure}

Since the path construction algorithm is not specified, the path can be chosen like in this figure~\ref{fig:ff} and thus this algorithm will do about 2000 cycles. It is easy to see that by choosing \textquote{right} paths it can be terminated after 2 steps. 

\end{ex}

To specify the path construction algorithm for example BFS algorithm can be chosen and corresponded algorithm is called

\item \bf{Edmonds–Karp algorithm}.\\
This algorithm finds max-flow in a flow network $G$ from the source $s$ to the terminal $t$.\\
\emph{Description:}\\
This algorithm works the same as Ford–Fulkerson algorithm but at the step~\ref{enum:ff1} BFS algorithm is being used for searching the path with the minimum number of edges. 

\NB The complexity of this algorithm corresponds to the weights of a network and also may work very slow.

But the problem of complexity will stay almost the same. The next algorithm solves this problem:

\item \bf{Dinic algorithm.}

To explain next algorithm let's first give some definitions:

\begin{defin}
    A \bf{level graph} corresponded to the root $s$ of a directed unweight graph $G$ is a subgraph $G_L = \bigl(V(G), E_L\bigr)$ such that $(u,v)\in E_L\sub E(G) \iff \dist(s, v) = \dist(s, u)+1$. 
\end{defin}

\NB A level graph can be constructed using BFS algorithm starting from the root $s$. Note that $BFS(v) = \dist(s, v),\ \forall v\in V(G)$.

\begin{defin}
    A \bf{block flow} is the max-flow in the level graph $G_L$ corresponded to the source $s$.
\end{defin}

\begin{st}
    A level graph is acyclic graph and all paths from the root $s$ to any vertex $v$ have the same length.
\end{st}
\begin{proof}
    Easily holds from definition.
\end{proof}

Now let's consider the \bf{Dinic algorithm.}\\ \\
\emph{Description:}\\
This algorithm finds max-flow in a flow network $G$ from the source $s$ to the terminal $t$ by using a residual network for a block flow.

\begin{enumerate}
    \item\label{enum:din1} Construct $G_L$ for a graph $G$ without weights using BFS algorithm. If there is no path from $s$ to $t$ then terminates.
    \item\label{enum:din2} Construct a block flow $f$ in $G_L$ using Edmonds–Karp algorithm.
    \item Do \ref{enum:din1} and \ref{enum:din2} for the residual network $G_f$.
\end{enumerate}

\begin{thm}[Block flow]\label{thm:blockflow}
    Consider a block flow $f$. Let's denote the minimum number of edges from $s$ to a vertex $v$ in $G$ and in $G_f$ by $\dist(s, v)$ and $\dist'(s, v)$ respectively. Then, $\dist'(s,t)\geq \dist(s,t)+1$. 
\end{thm}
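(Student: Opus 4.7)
The plan is to prove the theorem in two stages. First I would establish the weaker monotonicity statement $dist'(s,v) \geq dist(s,v)$ for every vertex $v$ reachable from $s$ in $G_f$, and then upgrade it to strict inequality at $v = t$ by exploiting the block-flow property.

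For the monotonicity lemma, the key observation is that every edge $(u,v) \in E(G_f)$ satisfies $dist(s,v) \leq dist(s,u) + 1$ in the original graph. There are three cases to distinguish: (i) $(u,v) \in E(G_L)$ and still carries residual capacity, in which case $dist(s,v) = dist(s,u)+1$ by the very definition of the level graph; (ii) $(u,v) \in E(G) \setminus E(G_L)$, where $dist(s,v) \leq dist(s,u)$ holds because $(u,v)$ is a non-level forward edge; (iii) $(u,v)$ is a reverse edge, meaning $(v,u) \in E(G)$ carries positive flow, and since the block flow $f$ lives on $G_L$ the edge $(v,u)$ must be a level edge, so $dist(s,u) = dist(s,v)+1$, giving $dist(s,v) = dist(s,u)-1$. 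With this inequality in hand, induction on $dist'(s,v)$ along a BFS shortest path in $G_f$ yields $dist'(s,v) \geq dist(s,v)$ for every reachable $v$.

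For the strict increase at $t$, I would argue by contradiction: suppose $dist'(s,t) = dist(s,t)$, and let $P$ be a shortest path from $s$ to $t$ in $G_f$, of length $dist(s,t)$. Writing $s = v_0, v_1, \dots, v_k = t$, the inequality chain $dist(s,v_i) \leq dist'(s,v_i) = i$ combined with $dist(s,t) = k$ forces $dist(s,v_i) = i$ for every $i$, so every consecutive pair $(v_{i-1},v_i)$ on $P$ satisfies $dist(s,v_i) = dist(s,v_{i-1})+1$. By the case analysis above, only case (i) is consistent with this relation, so each such edge is a level edge with strictly positive residual capacity. But then $P$ is an augmenting $s$-$t$ path in the residual network of $G_L$ with respect to $f$, contradicting the fact that $f$ is the max-flow on $G_L$ via lemma~\ref{lm:resnetpath}.

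The main obstacle is the reverse-edge case in step one: it is essential to use that the block flow $f$ pushes mass only along level-graph edges, so that every reverse edge in $G_f$ corresponds to a bona fide level-graph edge traversed backward and therefore drops $dist(s,\cdot)$ by exactly one instead of potentially leaping across many levels. Once this case is correctly handled, the rest of the argument is a clean forcing-of-equality step followed by the block-flow contradiction.
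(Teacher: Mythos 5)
Your proof is correct, and it takes a recognizably different route from the one in the text. The paper proves a stronger quantitative statement by induction on the number of ``new'' (reverse) edges $e_1,\dots,e_k\notin E(G)$ appearing on a path $P'$ in $G_f$, namely $len(P')\geq dist(s,v)+2k$, and combines this with the observation that a path using no new edges must have length strictly greater than $dist(s,t)$. You instead prove the per-edge inequality $dist(s,v)\leq dist(s,u)+1$ for every edge $(u,v)$ of $G_f$ (splitting into level edge, non-level forward edge, and reverse edge), deduce the monotonicity $dist'(s,\cdot)\geq dist(s,\cdot)$, and then force equality along a hypothetical shortest path to conclude that all its edges would be unsaturated level edges, contradicting maximality of the block flow via Lemma~\ref{lm:resnetpath}. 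Both arguments rest on the same two facts --- a reverse edge of $G_f$ corresponds to a saturated level edge and hence drops $dist(s,\cdot)$ by exactly one, and the block flow admits no augmenting path inside $G_L$ --- but your version has the advantage of making the second fact explicit: the paper's opening claim that a path with no new edges has $len(P')>dist(s,t)$ silently uses this same maximality (a length-$dist(s,t)$ path of old edges would consist entirely of unsaturated level edges), whereas you invoke it openly in the final contradiction. The paper's induction, in exchange, yields the sharper bound $len(P')\geq dist(s,v)+2k$, which quantifies how much longer paths through reverse edges must be.
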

\begin{proof}
    Let's denote by $len(P)$ the length of a path $P$. First let's note that if there is a path $P'$ in $G_f$ from $s$ to $t$ then either $\len(P') > \dist(s,t)$ or it contains a new edge $e\notin E(G)$. Thus it is sufficient to prove that for any path $P'$ in $G_f$ from $s$ to $t$ that contains a new edge $e\notin E(G)$ the length $\len(P')\geq \dist(s,t)+1$.

    Let's prove more: if the path $P'$ in $G_f$ from $s$ to any vertex $v$ contains new edges $e_1, e_2, ..., e_k\notin E(G)$ then $\len(P')\geq \dist(s,v)+2k$. Let's prove it using induction by $k$:
    \begin{enumerate}
        \item The base. If the path $P'$ in $G_f$ from $s$ to a vertex $v$ contains a new edge $e_1 = (w_1,u_1)$ then the inverse edge $(u_1,w_1)$ belongs to $G$ and $f(u_1,w_1) = c(u_1,w_1)$ for the block flow $f$. By decomposition algorithm and previous statement there exists a path in the level graph $G_L$ that contains this edge $(u_1,w_1)$. Thus $\dist(s, w_1) = \dist(s, u_1)+1$.\\
        
        Let's denote the part of the path $P'$ from the vertex $a$ to $b$ by $P'_{ab}$. Since $P'_{s w_1}$ and $P'_{u_1 v}$ are paths in $G$, the length $\len(P'_{s w_1})\geq \dist(s, w_1)$ and $\len(P'_{u_1 v})\geq \dist(u_1, v)$. Thus $\len(P'_{s u_1})\geq \dist(u_1)+2\imp \len(P')\geq \dist(s, v)+2$.
        \item Let this statement holds for any $j\leq k-1$ then let's prove it for $j = k$. Since $e_k = (w_k,u_k)$ is a new edge, there exists a path in the level graph $G_L$ that contains the inverse edge $(u_k,w_k)$ and thus $\dist(s, w_k) = \dist(s, u_k)+1$. Since $P'_{u_k v}$ is a path in $G$, the length $\len(P'_{u_k v})\geq \dist(u_k, v)$.\\
        
        By the induction statement the length $\len(P'_{s w_k})\geq \dist(s, w_k)+2(k-1)$. Thus $\len(P'_{s u_k})\geq \dist(s, u_k)+2k\imp \len(P')\geq \dist(s,v)+2k$. 
    \end{enumerate}

    Hence, if a path $P'$ in $G_f$ from $s$ to $t$ contains a new edge $e\notin E(G)$ then the length $len(P')\geq \dist(s,t)+1$.
\end{proof}

\NB By the block flow theorem~\ref{thm:blockflow} the minimum path length at each step of the Edmonds–Karp algorithm increases at least by 1 and thus the number of steps is $O\bigl(\|V\|\bigr)$. By the decomposition theorem~\ref{thm:decflow} the number of steps in Edmonds–Karp algorithm for a level graph $G_L$ is $O\bigl(\|V\|\|E\|\bigr)$ (like in the decomposition algorithm). Thus the complexity of Dinic algorithm is $\mathbf{O\bigl(\|V\|^2 \|E\|\bigr)}$.\\

\NB One can improve the complexity of Dinic algorithm using the dynamic tree structure. In that case it will be $\mathbf{O\Bigl(\|V\| \|E\| \log\bigl(\|V\|\bigr)\Bigr)}$.

\end{enumerate}

\subsection{Proof of Menger's theorem}\label{subsc:menger}

\begin{lm}[Integer flow]
    If all capacities of a flow network are integer then there exists a max-flow that is also integer on every edge.
\end{lm}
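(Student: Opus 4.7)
The plan is to run the Ford--Fulkerson algorithm starting from the zero flow and show that integrality is preserved through every augmentation, so the max-flow it produces is integer-valued on every edge.

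First I would set $f_0 \equiv 0$, which is trivially integer and clearly a valid flow in $G$. I would then argue inductively: if $f_k$ is an integer flow, then in the residual network $G_{f_k}$ every residual capacity $c_{f_k}(u,v) = c(u,v) - f_k(u,v)$ is a non-negative integer, since both $c$ and $f_k$ are integer. If there exists a path $P$ from $s$ to $t$ in $G_{f_k}$, then the bottleneck value $c_{f_k,P} = \min_{e \in P} c_{f_k}(e)$ is a positive integer. Augmenting $f_k$ by the path flow of value $c_{f_k,P}$ along $P$ (as constructed in Lemma~\ref{lm:resnetpath}) yields $f_{k+1} = f_k + f'$, which is again integer on every edge (by the statement~\ref{st:sumfl} on the sum of flows, together with the fact that the augmenting flow takes only the integer values $\pm c_{f_k,P}$ and $0$).

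Next I would establish termination. Each augmentation step strictly increases the flow value by $c_{f_k,P} \geq 1$, and the flow value is bounded above by the capacity of any $s,t$-cut (for instance the cut $(\{s\}, V(G)\setminus\{s\})$, whose capacity is a finite integer since capacities are integer and $V(G)$ is finite). Hence the process terminates after finitely many steps at some integer flow $f_N$ for which no $s$-to-$t$ path exists in $G_{f_N}$.

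Finally, by Lemma~\ref{lm:resnetpath} the absence of an augmenting path in the residual network characterises max-flows, so $f_N$ is a max-flow; by construction it is integer on every edge. The only subtle point worth flagging is termination: over the reals, Ford--Fulkerson is known to potentially loop forever (as noted in Example~\ref{ex:ff}), but in the integer setting each augmentation adds at least $1$ to the flow value, which rules this out and makes the argument go through cleanly without needing Edmonds--Karp or Dinic.
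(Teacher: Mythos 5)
Your proof is correct and follows essentially the same route as the paper, whose entire argument is the one-line remark that ``any max-flow construction algorithm'' yields an integer max-flow. Your version is in fact more careful: you supply the induction showing integrality is preserved under augmentation and, crucially, the termination argument (each augmentation increases the value by at least $1$, bounded by a finite integer cut capacity), which is exactly the point the paper's one-liner glosses over given that Ford--Fulkerson need not terminate for non-integer capacities.
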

\begin{proof}
    By using any max-flow construction algorithm one can obtain the integer max-flow.
\end{proof}

\begin{lm}
    If all capacities of a flow network are either 0 or 1 and the max-flow value equal to $K$ then the paths from the source $s$ to the terminal $t$ are $K$-edge independent.  
\end{lm}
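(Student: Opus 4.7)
My plan is to exploit the integer flow lemma (just proven) together with the decomposition theorem~\ref{thm:decflow}. First I would invoke the integer flow lemma to produce an integer max-flow $f$ of value $K$; since all capacities lie in $\{0,1\}$, the capacity constraint together with skew symmetry forces $f(u,v)\in\{-1,0,1\}$ for every ordered pair, and $f(e)=1$ on every edge that carries positive flow.

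Next I would apply the decomposition theorem to write $f = \sum_i f_{P_i} + \sum_j f_{C_j}$ with simple $s$--$t$ paths $P_i$, simple cycles $C_j$, and $K = F = \sum_i f(P_i)$. Each quantity $f(P_i) = \min_{e\in P_i,\ f(e)>0} f(e)$ is a positive integer bounded above by $1$, so $f(P_i) = 1$ for every $i$. Consequently the decomposition contains exactly $K$ simple paths from $s$ to $t$.

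The remaining task is edge-independence of these $K$ paths. This follows by inspecting the iterative extraction used in the proof of Theorem~\ref{thm:decflow}: when $P_1$ is extracted, $f$ is replaced by $f - f_{P_1}$, which (because $f(P_1)=1$) drops $f(e)$ from $1$ to $0$ on every forward edge $e \in P_1$ and raises $f(\bar e)$ from $-1$ to $0$ on each reverse. Hence no edge of $P_1$ (in either orientation) carries positive flow afterwards, and since the subsequent paths $P_2, P_3,\ldots$ are constructed by following edges of strictly positive residual flow, none of them can reuse any edge of $P_1$. Iterating this observation across the $K$ extractions yields the pairwise edge-disjointness of $P_1,\ldots,P_K$, which is exactly the $K$-edge independence claimed.

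The delicate point I expect to be the main obstacle is the bookkeeping for antiparallel edges: if both $(u,v)$ and $(v,u)$ lie in $E(G)$ with capacity $1$, one must confirm that extracting a unit of flow through one of them zeroes the flow on both directed edges simultaneously, so that neither can be reused by later paths. Once this is verified, the $0/1$-capacity hypothesis forces the decomposition paths to be genuinely edge-disjoint in the underlying directed graph, completing the argument.
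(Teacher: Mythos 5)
Your proof is correct and follows essentially the same route as the paper: take an integer max-flow from the integer flow lemma, apply the decomposition theorem~\ref{thm:decflow}, observe that the $0/1$ capacities force $f(P_i)=1$ so there are exactly $K$ paths, and derive edge-disjointness from the fact that extracting each path zeroes the flow on its edges so later paths cannot reuse them. You supply more detail than the paper does (which asserts the edge-independence of the decomposition paths in one line), but the underlying argument is identical.
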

\begin{proof}
    Let's construct the integer max-flow from the previous lemma. By using decomposition algorithm any paths $P_i$ will be edge independent and since the flow among each path equals to 1 then the number of paths $P_i$ equals to $K$.
\end{proof}

\begin{thm}[Menger]\label{thm:menger}
    A graph is $\k$-vertex (edge) connected iff. for any two vertices there exist $\k$-vertex (edge) independent paths.
\end{thm}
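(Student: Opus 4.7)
The plan is to reduce both the edge and vertex versions of Menger's theorem to the Ford--Fulkerson min-cut = max-flow theorem (theorem~\ref{thm:fordfulk}) together with the preceding lemma identifying the max-flow value with the number of edge-independent $s\to t$ paths in a $\{0,1\}$-capacity network. First I would dispatch the edge version. Fix two vertices $s,t\in V(G)$ and build a flow network $N_{st}$ by replacing every undirected edge between $u$ and $v$ by two oriented arcs $(u,v)$ and $(v,u)$, each of capacity $1$, and declaring $s$ the source and $t$ the terminal. For any $s,t$-cut $(A,B)$ in $N_{st}$, the capacity $C(A,B)$ equals the number of edges of $G$ between $A$ and $B$, so the min-cut value is exactly the smallest number of edges whose deletion separates $s$ from $t$ in $G$. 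Ford--Fulkerson identifies this with the max-flow value, and the $\{0,1\}$-capacity lemma identifies the max-flow value with the maximum number of edge-independent $s\to t$ paths. Hence $G$ is $\k$-edge connected \ifof the min-cut between every pair is $\geq \k$, \ifof there exist $\k$ edge-independent paths between every pair.

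For the vertex version I would use the standard vertex-splitting trick. For each vertex $v\neq s,t$ introduce two copies $v^-, v^+$ joined by an internal directed edge $v^-\to v^+$ of capacity $1$; for each original undirected edge between $u$ and $v$ add the arcs $u^+\to v^-$ and $v^+\to u^-$ of capacity $\|V\|$ (large), with the convention that $s$ and $t$ are not split. In a minimum $s,t$-cut of the resulting network it is never profitable to sever a large-capacity arc, so every min-cut consists solely of internal edges; cutting the internal edge of $v$ corresponds exactly to removing $v$ from $G$, so the min-cut capacity equals the minimum number of vertices whose removal separates $s$ from $t$. Conversely, an integer max-flow of value $K$ decomposes via theorem~\ref{thm:decflow} into $K$ paths in the auxiliary network; because each internal edge has capacity $1$, at most one path can pass through any given split vertex, and translating back to $G$ yields $K$ internally vertex-disjoint $s\to t$ paths. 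Applying Ford--Fulkerson a second time closes the equivalence.

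The main obstacle I anticipate is the bookkeeping for adjacent pairs $s,t$ in the vertex case, where no vertex separator exists at all (removing internal vertices cannot disconnect $s$ from $t$ because the edge $(s,t)$ survives); the customary convention is that the edge itself counts as one of the $\k$-vertex independent paths, and one recovers the equivalence by adding direct arcs $s\to t$ and $t\to s$ of capacity $1$ to the auxiliary network alongside the split structure. The remaining effort is purely verification: that $\{0,1\}$ internal capacities force the decomposed paths to be genuinely internally disjoint, and that large-capacity arcs never appear in a minimum cut. Both are routine once the reduction is set up. As a sanity check, the resulting theorem immediately gives lemma~\ref{lm:kld} ($\k(G)\leq\l(G)$) and corollary~\ref{cor:3connminus} via the edge-version applied to $G\setminus(u,v)$.
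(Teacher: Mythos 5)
Your proposal is correct in outline, and for the \emph{edge} version it is essentially the paper's own argument: assign capacity $1$ to every edge, invoke the Ford--Fulkerson theorem~\ref{thm:fordfulk} to identify the minimum edge cut with the max-flow value, and use the integrality/decomposition lemma for $\{0,1\}$ capacities to turn that flow into $\kappa$ edge-independent paths. For the \emph{vertex} version, however, you take a genuinely different route. You reduce again to max-flow via the vertex-splitting construction ($v\mapsto v^-\to v^+$ with internal capacity $1$ and large capacities on the original arcs), so that minimum cuts consist only of internal arcs and correspond to vertex separators, while integral flows decompose into internally disjoint paths. The paper instead stays combinatorial: it formulates the set version of Menger's theorem via $AB$-separators (for vertex sets $A,B$, if every $AB$-separator has at least $\kappa$ vertices then there are $\kappa$ vertex-independent $A$--$B$ paths) and proves it by induction on the number of edges, deleting an edge $(v_1,v_2)$ and analysing the separators $S\cup\{v_1\}$ and $S\cup\{v_2\}$. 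Your reduction buys uniformity -- both halves of the theorem come from the same flow machinery already developed in section~\ref{sect:flow} -- at the cost of the auxiliary-network bookkeeping you flag (large capacities never in a min cut, internal capacities forcing internal disjointness, and the adjacent-pair convention for $s\sim t$). The paper's induction avoids any auxiliary graph and yields the stronger set-to-set statement directly, which is the form actually needed to pass from the local two-vertex statement to global $\kappa$-connectivity; note that with $A=\{u\}$, $B=\{v\}$ the separator hypothesis fails trivially, so one must apply the set version to neighbourhoods, a point your flow construction sidesteps but the paper's formulation makes explicit. Both arguments are standard and sound; just be aware that your second paragraph leaves the same amount of verification implicit as the paper does in its inductive step.
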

\begin{proof}
    \begin{enumerate}
    \item Let's prove first for edges.
        \begin{enumerate}
        \item \emph{Sufficiency.} If there exist $\k$-edge independent paths thus there is no $\k-1$-edge cuts.
        \item \emph{Necessarity.} Let the graph $G$ be connected after a deletion of any $\k-1$ edges. Let's denote two vertices in the condition by $s$ and $t$. Let's define weight of every edge in $G$ by 1. Consider $G$ as the flow network and let $f$ be the max-flow. By the Ford-Fulkerson theorem~\ref{thm:fordfulk}: $F = c(A_{min}, B_{min})$ and equal to the number of edges from vertices of $A_{min}$ to vertices of $B_{min}$. Since the graph $G$ will be connected after a deletion of any $\k-1$ edges and $F$ is integer, the value $F \geq \k$. Thus by the previous lemma the paths from $s$ to $t$ are $\k$-edge independent.
        \end{enumerate}
    \item Let's prove the vertex condition.
        \begin{enumerate}
        \item \emph{Sufficiency.} The same as for edges.
        \item For \emph{necessarity} condition let's give additional definition:

        \begin{defin}
            Let $A$ and $B$ be two subsets of vertices of the graph $G$. A set $S\sub V(G)$ is called \bf{$AB$-separator} if there is no path in $G$ starting from any vertex of $A$ and ending in any vertex of $B$ after the deletion of the set $S$.
        \end{defin}

        Let's prove that for any $A\sub V(G)$ and $B\sub V(G)$ if any $AB$-separator consist of at least $\k$ vertices then there exist $\k$-vertex independent paths from vertices of $A$ to vertices $B$ (let paths can also be consisted of just one vertex). Let's prove it by the induction of the number of edges in $G$: 
        \begin{enumerate}
            \item The base. If $G$ has no edges then the vertices in $A\bigcap B$ are also paths.
            \item Let the induction statement holds for the graph $G' = G\setminus(v_1, v_2)$. Let's prove it for $G$. Let's assume that $AB$-separator in $G$ consists of $\k$ vertices. If $AB$-separator in $G'$ consists also of $\k$ vertices then by the induction statement there exist $\k$-vertex independent paths. Thus there exists $AB$-separator $S$ in $G'$ that consists of $\k-1$ vertices. Let $v_1$ (without loss of generality) is reachable from some vertex of $A$ in $G'\setminus S$ (otherwise $S$ is a $AB$-separator in $G$). \\
            
            Let's denote this set $S\bigcup \{v_1\}$ by $S_1$. The set $S_1$ is $AB$-separator in $G$ (otherwise $S$ is not a $AB$-separator in $G'$). The $AS_1$-separator is also $AB$-separator in $G$ and thus consists of at least $\k$ vertices. Thus by the induction statement there exist $\k$-vertex independent paths from vertices of $A$ to vertices $S_1$. By providing the same proof for $S_2 = S\bigcup \{v_2\}$ one can obtain $\k$-vertex independent paths in $G$ from vertices of $S_2$ to $B$. Since $\|S_1\| = \|S_2\| = \k$, there exist $\k$-vertex independent paths from $A$ to $B$ in $G$.
        \end{enumerate}
        \end{enumerate}
    \end{enumerate}
\end{proof}

\subsection{Minimum-cost flow}

\begin{defin}
    A \bf{cost} function $a$ is any function $a:V(G)\times V(G)\rightarrow \rR$ on a flow network $G$. \bf{The total cost} of the flow $f$ with cost function $a$ is 
    $$p(G) = \sumt_{(u, v)\in V(G)\times V(G),\ f(u,v) > 0} a(u,v) f(u,v).$$
\end{defin}

\begin{defin}
    A flow is called \bf{min-cost-flow} if $f$ has the minimum total cost $p$ among the all another flows with the flow value $F$.
\end{defin}

Let $G$ be a flow network with a flow $f$ and cost function $a$.

\begin{lm}[Necessary and sufficient condition of min-cost-flow]\label{lm:nsmincost}
    A flow $f$ is min-cost-flow iff. there are no negative cycles corresponding to the cost function $a$ in the residual network $G_f$.
\end{lm}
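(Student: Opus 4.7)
The plan is to mimic the structure of Lemma~\ref{lm:resnetpath} (characterization of max-flow by absence of augmenting paths), replacing ``path from $s$ to $t$'' by ``negative cycle'' and ``flow value'' by ``total cost''. The essential preliminary is to extend the cost function $a$ to the residual network $G_f$ in the natural skew-symmetric way: for an original edge $(u,v)\in E(G)$, the forward residual edge inherits cost $a(u,v)$ while the reverse residual edge $(v,u)$ is assigned cost $-a(u,v)$. With this convention, pushing $\delta$ units of flow along any residual cycle $C$ produces a new feasible flow whose total cost differs from $f$ by exactly $\delta\cdot\sum_{e\in C}a(e)$, and whose flow value is unchanged (since a cycle contributes $0$ to the source balance).

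For \emph{necessarity}, I would argue by contraposition: suppose $G_f$ contains a negative cycle $C$, and let $\delta=\min_{e\in C}c_f(e)>0$. Define the cycle flow $f_C$ of value $\delta$ around $C$ in $G_f$; by Statement~\ref{st:sumfl} and the construction of $G_f$, the sum $f+f_C$ is a valid flow in $G$ with the same value $F$. The cost changes by $\delta\cdot\sum_{e\in C}a(e)<0$, contradicting minimality of $f$.

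For \emph{sufficiency}, let $g$ be any other flow in $G$ with the same value $F$. By the subtraction of flows lemma~\ref{lm:flowminus}, one can write $g = f + \sum_i f_{C_i}$, where the $f_{C_i}$ are flows along cycles $C_i$ in the residual network $G_f$. Expanding the total cost and using the skew-symmetric extension of $a$, the cost of $g$ equals the cost of $f$ plus $\sum_i \delta_i\cdot\sum_{e\in C_i}a(e)$, where $\delta_i$ is the (positive) value of $f_{C_i}$. Since every cycle in $G_f$ has non-negative cost by assumption, each term in the sum is non-negative, so $p(g)\geq p(f)$; as $g$ was arbitrary, $f$ is min-cost-flow.

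The main obstacle is bookkeeping rather than conceptual: one must verify carefully that the skew-symmetric extension of $a$ to $G_f$ really makes the identity ``cost difference $=$ sum of residual cycle costs'' hold, i.e.\ that on an original edge $(u,v)$ where $g(u,v)<f(u,v)$ the contribution to $p(g)-p(f)$ is correctly captured by flow on the reverse residual edge $(v,u)$ with cost $-a(u,v)$. Once this sign convention is in place and matched with the decomposition of Lemma~\ref{lm:flowminus}, both implications reduce to routine computation.
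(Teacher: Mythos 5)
Your proposal is correct and follows essentially the same route as the paper: the necessity direction augments along the negative residual cycle by its bottleneck capacity to strictly decrease the cost, and the sufficiency direction applies the subtraction-of-flows lemma~\ref{lm:flowminus} to decompose any competing flow of the same value into $f$ plus residual cycle flows of non-negative cost. Your explicit remark about extending the cost function skew-symmetrically to the residual network is a point of bookkeeping the paper leaves implicit, but it does not change the argument.
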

\begin{proof}
    \begin{enumerate}
        \item \emph{Necessarity.} Assume the contrary: there exists negative cycle $C$ in the residual network $G_f$. Let $c_{f,C} = \min_{e\in C} c_f(e)$. Let's define a flow 
        $$f_C(u,v) = 
        \begin{cases}
            c_{f,C} & (u,v)\in C,\\
            -c_{f,C} & (v,u)\in C,\\
            0 & otherwise.
        \end{cases}
        $$
        The total cost of $f+f_C$ will be less than the total cost of $f$. This holds contradiction.
        \item \emph{Sufficiency.} Let $f$ be the flow with no negative cycles corresponding to the cost function $a$ in the residual network $G_f$ and $f'$ min-cost-flow among all flows with the flow value $F$. The total cost $p(f')\leq p(f)$. By the subtraction of flows lemma~\ref{lm:flowminus}: $f' = f+\sumt_i f_{C_i} \imp p(f')\geq p(f) \imp p(f') = p(f)$.  
    \end{enumerate}
\end{proof}

\begin{thm}\label{thm:deltaflow}
    Let $f$ be the min-cost-flow, $P$ be the path with minimum sum of costs among all paths from the source $s$ to the terminal $t$ in the residual network $G_f$ and $c_{f,P}$ be the minimum residual capacity among the path $P$. Let
    $$ f_\delta(u,v) = 
    \begin{cases}
        \delta & (u,v)\in P\\
        -\delta & (v,u)\in P \ \ ,\quad \text{ for any }\delta: 0\leq\delta\leq c_{f,P}.\\
        0 & otherwise
    \end{cases}
    $$
    Then $f+f_\delta$ is min-cost-flow among all flows with the value $F+\delta$. 
\end{thm}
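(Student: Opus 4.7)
The plan is to reduce the claim to the characterization of min-cost-flows via the residual network (Lemma~\ref{lm:nsmincost}), via a direct comparison argument that exploits the decomposition of flows. First I would verify that $f+f_\delta$ is actually a flow in $G$ with value $F+\delta$: the function $f_\delta$ is a flow in the residual network $G_f$ with value $\delta$ (it sends $\delta$ units along a single $s$-$t$ path $P$ whose residual capacities are all $\geq c_{f,P}\geq\delta$), so by Lemma~\ref{lm:2fl} applied in reverse, or simply by statement~\ref{st:sumfl} combined with the skew symmetry of $f_\delta$, the sum $f+f_\delta$ is a valid flow in $G$ of value $F+\delta$. The case $\delta=0$ is trivial, so assume $\delta>0$.

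Next, let $g$ be any flow in $G$ with value $F+\delta$; the goal is to show $p(g)\geq p(f+f_\delta)$. By the two-flows lemma (Lemma~\ref{lm:2fl}), $g=f+f'$ for some flow $f'$ in the residual network $G_f$ of value $\delta>0$. Applying the decomposition theorem (Theorem~\ref{thm:decflow}) to $f'$ inside $G_f$ yields simple $s$-$t$ paths $P_1,\dots,P_r$ in $G_f$ and simple cycles $C_1,\dots,C_s$ in $G_f$, together with positive coefficients $\delta_i$ and $\gamma_j$, such that $f'=\sum_i f'_{P_i}+\sum_j f'_{C_j}$ and $\sum_i\delta_i=\delta$. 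Interpreting the cost function on residual edges in the natural way (cost $a(u,v)$ on a forward residual edge, and cost $-a(v,u)$ on a reverse residual edge), one checks directly from the definition of $p$ that
\begin{equation*}
p(g)-p(f)=\sum_{i}\delta_i\,a(P_i)+\sum_{j}\gamma_j\,a(C_j),
\end{equation*}
where $a(P)$ and $a(C)$ denote the residual costs along the path/cycle.

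Now I would use the two optimality hypotheses. Because $f$ is a min-cost-flow, Lemma~\ref{lm:nsmincost} guarantees $a(C_j)\geq 0$ for every $j$, so the cycle contributions are non-negative. Because $P$ is the minimum-cost $s$-$t$ path in $G_f$, we have $a(P_i)\geq a(P)$ for every $i$. Combining these with $\sum_i\delta_i=\delta$ gives
\begin{equation*}
p(g)-p(f)\;\geq\;\sum_i\delta_i\,a(P)\;=\;\delta\,a(P)\;=\;p(f+f_\delta)-p(f),
\end{equation*}
so $p(g)\geq p(f+f_\delta)$. Since $g$ was an arbitrary flow of value $F+\delta$, this proves that $f+f_\delta$ is a min-cost-flow among all flows of that value.

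The main technical obstacle I expect is the bookkeeping in the middle step: making the identity relating $p(g)-p(f)$ to the residual costs of the decomposition pieces fully rigorous under the paper's convention that $p$ sums only over pairs with positive flow and that costs need not be skew-symmetric. One careful way is to verify it first on elementary residual pieces (a single forward residual edge and a single reverse residual edge, where the flow $f$ on the underlying edge is respectively not saturated and strictly positive) and then extend by linearity through the decomposition; once this is in place the inequality chain above closes the proof immediately.
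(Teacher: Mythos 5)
Your proof is correct and follows essentially the same route as the paper's: decompose $g-f$ in the residual network $G_f$ via Lemma~\ref{lm:2fl} and Theorem~\ref{thm:decflow}, kill the cycle terms using Lemma~\ref{lm:nsmincost}, and bound the path terms by the minimality of $P$. The only (harmless) variations are that you compare against an arbitrary flow $g$ of value $F+\delta$ and need only $a(C_j)\ge 0$, whereas the paper takes $g$ to be a min-cost competitor and argues the cycle costs are exactly zero; your explicit care with the sign convention for costs on reverse residual edges is a point the paper glosses over.
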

\begin{proof}
    Let $g$ be a min-cost-flow among all flows with value $F+\delta$. Then by the two flows lemma~\ref{lm:2fl}: $g = f+f'$, where $f'$ is a flow in the residual network $G_f$ with a flow value $\delta$. By the decomposition theorem~\ref{thm:decflow} $f' = \sumt_i f'_{P_i}+ \sumt_i f'_{C_i}$. 
    
    By previous lemma any cycle $C_i$ is non-negative cycle corresponding to the cost function $a$. If $C_i$ is a positive cycle then the total cost of $g$ can be decreased and it holds contradiction. Thus $\forall i\text{ and } e\in C_i \text{ hold } a(e) = 0$. Therefore, 

    $$p(f') = \sumt_i p(f'_{P_i}) = \sumt_i \sumt_{(u, v)\in P_i} a(u,v) f'_{P_i}(u,v) = \sumt_i f'(P_i) \sumt_{(u, v)\in P_i} a(u,v) \geq $$
    $$\geq\sumt_i f'(P_i) \sumt_{(u, v)\in P} a(u,v) = \delta \sumt_{(u, v)\in P} a(u,v) = p(f_\delta).$$
\end{proof}

Before introducing min-cost-max-flow construction algorithm let's define the flow among a path $P$ in the residual network by 
    $$f' = \begin{cases}
        c_{f,P} & (u,v)\in P,\\
        -c_{f,P} & (v,u)\in P,\\
        0 & otherwise,
        \end{cases}$$ 
        where $c_{f,P}$ is the minimum residual capacity among the path $P$.

Now, let's introduce

\begin{enumerate}
    \item \bf{Min-cost-max-flow construction using Ford-Fulkerson algorithm.}\\
    \emph{Description:}
    \begin{enumerate}
        \item Initialise $f = 0$.
        \item\label{enum:mincost1} Find a negative cycle $C$ corresponding to the cost function $a$ in the residual network $G_f$ using Belman-Ford algorithm.
        \item Increase $f$ by the flow $f'$ among the cycle $C$ and go to step~\ref{enum:mincost1}.
        \item\label{enum:mincost2} When there will be no negative cycles in the resulting residual network $G_f$, find the path $P$ with minimum sum of costs from the source $s$ to the terminal $t$ in this network $G_f$.
        \item Increase $f$ by the flow $f'$ among the path $P$ and go to step~\ref{enum:mincost2}.
    \end{enumerate}
    \NB Since this algorithm is a modification of Ford-Fulkerson algorithm, the complexity is also depends of the distribution of capacities and price function.
    \item \bf{Min-cost-max-flow construction using Dinic algorithm.}\\
    \emph{Description:}
    \begin{enumerate}
        \item Find max-flow $f$ using Dinic algorithm.
        \item\label{enum:mincost3} Find a negative cycle $C$ corresponding to the cost function $a$ in the residual network $G_f$ using Belman-Ford algorithm.
        \item Increase $f$ by the flow $f'$ among the cycle $C$ and go to step~\ref{enum:mincost3}.
    \end{enumerate}
    \NB The complexity of this algorithm is $\mathbf{O\bigl(\|V\|\|E\|^2+\|V\|^2\|E\|\bigr) = O\Bigl(\|V\|\|E\|\bigl(\|E\|+\|V\|\bigr)\Bigr)}.$
\end{enumerate}


\section{Local and global characteristics of graph}

\subsection{Centralities with local knowledge}

Let's consider an undirected graph $G$. 

\begin{defin}
    \bf{Centrality} is a function defined for each vertex of a graph that contains some information of a graph structure.
\end{defin}

Let's denote
\begin{itemize}
    \item[$\bullet$] by $\cN(v)$ the set of vertices which adjacent to a vertex $v$,
    \item[$\bullet$] by $\overline{\cN}(v) = \cN(v)\bigcup {v}$,
    \item[$\bullet$] by $\overline{f}(x_1, x_2, ... , x_k)$, where $f$ is any function $V\times V \times ... \times V\rightarrow\rR$ the restriction of this function on $\overline{\cN}(v)$ (for example $\overline L(x,y)$ will be the average shortest path between $x$ and $y$ restricted to subgraph $\overline \cN(v)$),
     \item[$\bullet$] by $d_i = deg(v_i)$,
     \item[$\bullet$] by $n = \|V(G)\|,\;\; m = \|E(G)\|$,
     \item[$\bullet$] by $X(i) = X(v_i)$ for any $X$ --- set or function corresponding to vertex $v_i$.
\end{itemize}
    
First let's give some general centralities which contain local information of the graph $G$: 

\begin{enumerate}
    \item The simplest example is the \bf{degree centrality $d_i = \deg{v_i} = \|\cN(i)\|$}.
    \item Let's denote by $G\bigl(\cN(v)\bigr)$ the induced subgraph on vertices $V\bigl(\cN(v)\bigr)$. and by $MC(v)$ the largest connected component in $G\bigl(\cN(v)\bigr)$. \bf{Maximal neighborhood component $MNC(v)$} is the number of vertices in $MC(v)$.
    \item \bf{Density of maximal neighborhood component} $DMNC(v) = \frac {\bigl\|E\bigl(MC(v)\bigr)\bigr\|} {\bigl\|V\bigl(MC(v)\bigr)\bigr\|^\eps}$, for some $\eps\in [1,2]$.
    \item \bf{Local cluster coefficient} $c_i = c(v_i) = \frac {\text{number of edges in }\cN(i)} {\text{maximum possible number of edges in }\cN(i)}= \frac{2 \bigl\|E\bigl(\cN(i)\bigr)\bigr\|} {d_i(d_i-1)}$. 
\end{enumerate}

\noindent Let's represent the list of general global characteristics of a graph:

\begin{enumerate}
    \item The simplest example is \bf{diameter} $diam(G) = \max_{s,t} \dist(s,t)$.
    \item \bf{Density} $D(G) = \frac {\text{number of edges in }G} {\text{maximum possible number of edges in }G}= \frac{2 m} {n(n-1)}$.
    \item \bf{Global efficiency} $E_{glob}(G) = \frac 1 {n (n-1)} \sumt_{s\neq t} \frac 1 {\dist(s,t)}$.
    \item \bf{Average shortest path length} $L(G) = \frac 1 {n(n-1)} \sumt_{s\neq t} \dist(s,t)$.
    \item \bf{Average clustering coefficient} $C_{WS}(G) = \frac 1 {n} \sumt_{i\in V(G)} c_i = \frac 1 {n} \sumt_{i\in V(G)}  \frac{2 \bigl\|E\bigl(\cN(i))\bigr)\bigr\|} {d_i(d_i-1)}$.
     \item \bf{Global clustering coefficient} 
     
    $C(G) = \frac {\text{number of closed triplets in $G$}} {\text{number of all triplets in $G$}} = \frac {3\times\text{number of triangles $G$}} {\frac 1 2 \sumt_{i\in V(G)} d_i (d_i-1)}  = \frac {\sumt_{i, j, k\in V(G)} a_{ij} a_{jk} a_{ki}} {\sumt_{i\in V(G)} d_i (d_i-1)}$.
    \item \bf{Small world coefficient} $SW(G) = \frac {C_{WS}(G)} {C_{WS}(G_{r})}$, where $G_{r}$ is a random graph  $(n, m)$.
\end{enumerate}

\NB It turns out that most real life networks are satisfied the small world property $SW(G)\gg 1$. \\


\NB Random graph $(n, m)$ is also called \emph{Erdős–Rényi graph}.\\

Often in the literature there is common mistake: the average clustering coefficient or Watts-Strogatz coefficient $C_{WS}(G)$ is called global clustering coefficient, but they are different in most cases. Let's show some examples:

\begin{ex}
    Consider \bf{windmill graph $W(n, k)$} --- the graph that is constructed of $n$ copies of complete graph $K_n$ and one additional vertex that has connections to all vertices of these graphs. Following by~\cite{Estrada} for windmill graphs $W(n, k)$ hold:
    \begin{enumerate}
        \item $\forall n,k\geq 2,\;\;C_{WS}\bigl(W(n, k)\bigr) > C\bigl(W(n, k)\bigr),$
        \item $\underset{n\rightarrow\infty}{\lim} C_{WS}\bigl(W(n, k)\bigr) = 1,\;\underset{n\rightarrow\infty}{\lim} C\bigl(W(n, k)\bigr) = 0.$
    \end{enumerate}\par \ \par
    Let's prove this. First,\par\ \par
    $C\bigl(W(n, k)\bigr) = \frac {3 n\ \bigl(\frac {k(k-1)} 2+ \frac {k(k-1)(k-2)} 6\bigr)} {\frac 1 2 \bigl(n k^2(k-1)+n k (n k -1)\bigr)} =  \frac {k^2-1} {k^2-k+nk-1}.$\par\ \par
    $c_{\text{central vertex}} =\frac {n{k(k-1)} } {n k (n k -1)} = \frac {k-1} {n k-1},$\par
    $C_{WS}\bigl(W(n, k)\bigr) = \frac 1 {nk+1} (\frac {k-1} {n k-1}+nk) = $\par
    $\qquad\qquad\qquad\qquad\ \, =  \frac {k-1+n^2 k^2 - n k} {n^2 k^2 -1} = 1-k \frac {n-1} {n^2 k^2-1}.$\par\ \par
    Hence, $\underset{n\rightarrow\infty}{\lim} C_{WS}\bigl(W(n, k)\bigr) = 1,\;\underset{n\rightarrow\infty}{\lim} C\bigl(W(n, k)\bigr) = 0.$\par\ \par\ \par\ \par\ \par
    
\begin{figure}[h]
\vspace{-220pt}
\hspace{280pt}
\begin{minipage}{8cm}
\hspace{50pt}
	\includegraphics[width=0.5\textwidth]{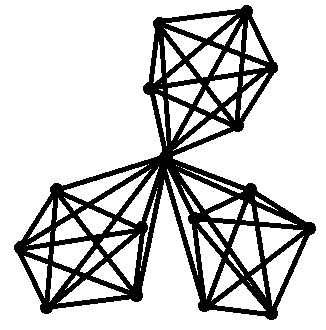}
	\caption{Windmill graph $W(3, 5)$.}
	\label{fig:sample}
 \end{minipage}
\end{figure}

    Second, compare the average clustering coefficient and the global clustering coefficient:
    $$\vspace{5pt}C_{WS}\bigl(W(n,k)\bigr)\vee C\bigl(W(n,k)\bigr)$$
    $$\vspace{5pt}\frac {k-1+n^2 k^2 - n k} {n^2 k^2 -1}\vee \frac {k^2-1} {k^2-k+nk-1}$$
    $$k^3 n^3- k^3 n^2 - k^3 n - k^2 n^2 + 2 k^2 n + k^3 -k^2\vee 0$$
    $$\vspace{5pt}k^2 (n-1)^2 \bigl(k (n-1)-1\bigr) \vee 0$$
    For $n\geq 2$ and $k\geq 2: k^2 (n-1)^2 \bigl(k (n-1)-1\bigr) > 0$. Therefore, $C_{WS}\bigl(W(n, k)\bigr) > C\bigl(W(n, k)\bigr).$ 
\end{ex}

\begin{ex}
    Let's consider \bf{wheel graph $W(k)$} --- the graph constructed of $k$-vertex cycle and one vertex in the center that has connections to all vertices of cycle. If $k = 3$ it is easy to see that $C_{WS}\bigl(W(3)\bigr) = C\bigl(W(3)\bigr) = 1$. If $k = 4$ then $C_{WS}\bigl(W(4)\bigr) = C\bigl(W(4)\bigr) = \frac 2 3$.
    
    For the case $k\geq 5$ holds
    \begin{enumerate}
        \item $C_{WS}\bigl(W(k)\bigr) > C\bigl(W(k)\bigr),$
        \item $\underset{k\rightarrow\infty}{\lim} C_{WS}\bigl(W(k)\bigr) = \frac 2 3,\;\underset{k\rightarrow\infty}{\lim} C\bigl(W(k)\bigr) = 0.$
    \end{enumerate}\par \ \par
    Let's prove it. First, let's calculate for this graph average clustering coefficient and global clustering coefficient: \par\ \par
    
    $C\bigl(W(k)\bigr) = \frac {3 k} {\frac 1 2 \bigl(6 k+k(k-1)\bigr)} = \frac 6 {k+5},$\par
    $C_{WS}\bigl(W(k)\bigr) = \frac 1 {k+1} \bigl(\frac {2n} {k(k-1)}+\frac 2 3 k\bigr) = \frac {2 (k^2-k+3)} {3 (k^2-1)}.$\par\ \par
    Hence, $\underset{k\rightarrow\infty}{\lim} C_{WS}\bigl(W(k)\bigr) = \frac 2 3,\;\underset{k\rightarrow\infty}{\lim} C\bigl(W(k)\bigr) = 0.$\par\ \par 
    Second, compare these two coefficients:\par\ \par
    $\vspace{5pt}C_{WS}\bigl(W(k)\bigr)\vee C\bigl(W(k)\bigr),$\par
    
\begin{figure}[h]
\vspace{-145pt}
\hspace{280pt}
\begin{minipage}{8cm}
\hspace{50pt}
    \center
	\includegraphics[width=0.4\textwidth]{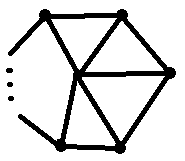}
	\caption{Wheel graph $W(k)$.}
	\label{fig:sample}
 \end{minipage}
\end{figure}
    $\vspace{5pt}\frac {2 (k^2-k+3)} {3 (k^2-1)}\vee\frac 6 {k+5},$\par 
    $k^3-5 k^2-2 k+24\vee 0$\par
    $(k^2-k+3)(k+5)\vee 9(k^2-1)$ \par
    $\vspace{5pt}(k+2)(k-3)(k-4) > 0$\par
    Therefore, $C_{WS}\bigl(W(k)\bigr) > C\bigl(W(k)\bigr).$ 
\end{ex}

\begin{ex}\vspace{15pt}
    Let's consider \bf{nested triangles graph $T(n)$} with $n+2$ triangles --- the graph constructed of 2 outer triangles and $n$ subsequently nested triangles with connections between correspondent vertices (see figure~\ref{fig:nestr}). For these graphs:
    \begin{enumerate}
        \item $C_{WS}\bigl(T(n)\bigr) > C\bigl(T(n)\bigr),$
        \item $\underset{n\rightarrow\infty}{\lim} C_{WS}\bigl(T(n)\bigr) = \frac 1 6,\;\underset{n\rightarrow\infty}{\lim} C\bigl(T(n)\bigr) = 0.$
    \end{enumerate}\par \ \par
    Let's prove this.\par\ \par
    $C\bigl(T(n)\bigr) = \frac {6} {\frac 1 2 (36+36 n)} = \frac 1 {3(n+1)},$\par
    $C_{WS}\bigl(T(n)\bigr) = \frac 1 {3(n+2)} \Bigl(6\,\frac 1 3+3 n\,\frac 1 6\Bigr) = \frac {n+4} {6(n+2)}.$\par\ \par
    Hence, $\underset{n\rightarrow\infty}{\lim} C_{WS}\bigl(T(n)\bigr) = \frac 1 6,\;\underset{n\rightarrow\infty}{\lim} C\bigl(T(n)\bigr) = 0.$
    
\begin{figure}[h]\label{fig:nestr}
\vspace{-200pt}
\hspace{280pt}
\begin{minipage}{8cm}
\hspace{50pt}
    \center
	\includegraphics[width=0.7\textwidth]{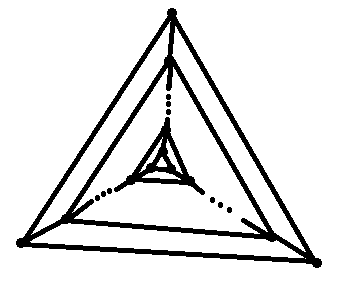}
	\caption{Nested triangles graph $T(n)$.}
	\label{fig:sample}
 \end{minipage}
\end{figure}\ \par

    Compare these two coefficients:
    $$\vspace{5pt}C_{WS}\bigl(T(n)\bigr)\vee C\bigl(T(n)\bigr)$$
    $$n^2+5n+4\vee 2n+4$$
    $$\vspace{5pt} n(n+3) > 0$$
    
    Hence, $C_{WS}\bigl(T(n)\bigr) > C\bigl(T(n)\bigr).$
\end{ex}

Let's prove a theorem (following by~\cite{Tuzh}) about relations between average local and global clustering coefficient.

\begin{thm}\label{thm6} Consider simple graph $G$. Let's $\forall i,j\in V(G),\; i\leq j$ hold $d_i\leq d_j \Rightarrow c_i\leq c_j$, then 
$$C_{WS}(G)\leq C(G).$$
\end{thm}
\begin{proof}
Consider re-numeration of vertices such that $\forall i\leq j:d_i\leq d_j$.
One can rewrite
$$
c_i = \frac {\sumt_{j, k\in V(G)} a_{ij} a_{jk} a_{ki}} {d_i (d_i-1)}.
$$
Indeed,
$$ 
a_{ij} a_{jk} a_{ki} = 
\begin{cases}
	1 & \text{if there exists edge between vertices $j$ and $k$ which adjacent to vertex $i$}, \\
	0 & \text{otherwise}. 
\end{cases}
$$
Therefore, 
$$
C_{WS}(G) = \frac 1 n \sumt_{i\in V(G)} \frac {\sumt_{j, k\in V(G)} a_{ij} a_{jk} a_{ki}} {d_i (d_i-1)}.
$$

Denote by $x_i = d_i (d_i-1)$. The number of edges $\|E\bigl(\cN(i)\bigr)\| = \frac 1 2 \sumt_{j, k\in V(G)} a_{ij} a_{jk} a_{ki}$ and the maximum number of edges in subgraph $\cN(i)$ equals to $\frac {d_i(d_i-1)} 2$, thus $x_i\geq 2,\; 0\leq c_i\leq 1.$ 

\vspace{10pt} Using Chebyshev's sum inequality ($d_i\leq d_j \Rightarrow x_i\leq x_j \text{ and } c_i\leq c_j$):
$$
\frac 1 n \sumt_{i \in  V(G)} x_i\;  C_{WS}(G) = \Bigg(\frac 1 n \sumt_{i \in  V(G)} x_i\Bigg) \Bigg(\frac 1 n \sumt_{i \in  V(G)} c_i\Bigg) \leq \frac 1 n  \sumt_{i \in  V(G)} x_i c_i = \frac 1 n \sumt_{i, j, k\in V(G)} a_{ij} a_{jk} a_{ki}.
$$
Therefore,
$$
C_{WS}(G) \leq \frac {\sumt_{i, j, k\in V(G)} a_{ij} a_{jk} a_{ki}} {\sumt_{i \in  V(G)} d_i(d_i-1)}= C(G).
$$
The equality holds when $\forall i,j\in V(G): d_i = d_j$ (i.e. for regular simple graphs) or when $\forall i,j\in V(G): c_i = c_j$.
\end{proof}

\begin{cor}
Let's $\forall i,j\in V(G),\; i\leq j$ hold $d_i\leq d_j \Rightarrow c_i\geq c_j$, then 
$$C_{WS}(G)\geq C(G).$$
\end{cor}
The proof is the same as in theorem~\ref{thm6} using Chebyshev's sum inequality.

\begin{cor}
For simple regular graphs $G$:
$$C_{WS}(G) = C(G).$$
\end{cor}

We see that for many cases $C_{WS}(G)\geq C(G)$, but it is not very hard to come up with an example then $C_{WS}(G) < C(G).$ Consider the complete graph $K_n$ and glue to each vertex cycle of the length 4. Due to the symmetry by the theorem~\ref{thm6}: $C_{WS}(G) < C(G).$

\subsection{Centralities with global knowledge}

Now let's provide the list of general centralities contained a global information of an undirected graph $G$: 

\begin{enumerate}
    \item \bf{Betweenness centrality} $BC(v) = \sumt_{s\neq t\neq v} \frac {\sigma_{st}(v)} {\sigma_{st}}$, where $\sigma_{st}$ is the total number of shortest paths from $s$ to $t$ and $\sigma_{st}(v)$ is the total number of shortest paths which contains vertex $v$.
    \item Let $T_s$ be a shortest path tree with the root $s$. \bf{Bottleneck} $BN(v) = \sumt_{s\in V(G)} p_s(v)$, where 
    $$p_s(v) = 
    \begin{cases}
        1 & \text{if the set of descendants of $v$ in $T_s$ contains more than $\|V(T_s)\|/4$ vertices,}\\
        0 & \text{otherwise.}
    \end{cases}$$
    \item \bf{Closeness centrality} $Clo(v) = \frac 1 {\sumt_{t\in V(G)} \dist(v, t)}$. 
    \item \bf{Eccentricity centrality} $EC(v) = \frac 1 {\max_{t\in V(G)} \dist(v, t)}$.
    \item \bf{Local efficiency} $E_{loc}(G) = \frac 1 {n} \sumt_{v\in V(G)} E_{glob}\bigl(\cN(v)\bigr).$
    \item \bf{Radiality} $Rad(v) = \frac {\sumt_{t\in V(G)} \bigl(diam(G)+1-\dist(v, t)\bigr)} {n-1}$.
    \item \bf{Stress} $Str(v) = \sumt_{s\neq t\neq v} \sigma_{st}(v)$, where $\sigma_{st}$ is the total number of shortest paths from $s$ to $t$ which contains vertex $v$.
\end{enumerate}

Let's note that if $A(G)$ is the adjacency matrix of $G$ and $E(G)$ is the efficiency matrix, where 
$$E = \{e_{ij}\} = 
\begin{cases}
    \frac 1 {\dist(v_i, v_j)} & i\neq j,\\
    0 & \text{otherwise,}
\end{cases}
$$
then 
$$E_{glob}(\cN(v)) = \frac {\sumt_{(v_i, v_j)\in E(\cN(v))} e_{i,j}} {d_i (d_i-1)}.$$
Thus 
$$
A(G)-E(G) = 
\begin{cases}
    0 
    \frac 1 {\dist(v_i, v_j)} & v_i\not\sim v_j,\\
    0 & \text{otherwise.}
\end{cases}
$$

Now let's prove some relationships between these measures following by~\cite{Strang}:

\begin{thm}\label{thm:cent2}
    $L(G)\geq 2-D(G).$
\end{thm}
\begin{proof}
    Let $\a$ be the number of pairs of vertices $v, w$ with $\dist(v,w) = 2$ and $\b$ be the number of pairs of vertices $v, w$ with $\dist(v,w) \geq 3$ and let's denote the number of vertices of $G$ by $n$ and the number of edges by $m$. Thus $2m+\a+\b = n(n-1)$.
    $$D(G) = \frac {2m} {n(n-1)},$$
    $$L(G) = \frac {2 m+2\a+t\b} {n(n-1)}, \text{ for some }t \geq 3,$$
    $$\frac {2 m+2\a+t\b} {n(n-1)}\geq \frac {2 m+2(\a+\b)} {n(n-1)} = \frac {4 m+2(\a+\b)} {n(n-1)}- \frac {2m} {n(n-1)} = 2-D(G).$$
\end{proof}

\begin{thm}\label{thm:cent3}
    $ 3-L(G)\leq 2 E_{glob}(G)\leq 1+D(G).$
\end{thm}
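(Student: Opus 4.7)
The plan is to partition pairs of vertices by their distance and reduce each inequality to a term-by-term comparison. Assume $G$ is connected (otherwise $L(G)$ is not defined), and let $n=\|V(G)\|$, $m=\|E(G)\|$. For each integer $k\geq 1$ let $\alpha_k$ denote the number of ordered pairs $(s,t)$ with $s\neq t$ and $dist(s,t)=k$. Then $\alpha_1=2m$ and $\sum_{k\geq 1}\alpha_k=n(n-1)$, so the three quantities admit a common denominator:
$$D(G)=\frac{\alpha_1}{n(n-1)},\quad L(G)=\frac{\sum_{k\geq 1} k\,\alpha_k}{n(n-1)},\quad E_{glob}(G)=\frac{\sum_{k\geq 1}\alpha_k/k}{n(n-1)}.$$

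For the upper bound $2E_{glob}(G)\leq 1+D(G)$, I rewrite the right-hand side as $1+D(G)=\frac{2\alpha_1+\sum_{k\geq 2}\alpha_k}{n(n-1)}$ and the left-hand side as $\frac{2\alpha_1+\sum_{k\geq 2}2\alpha_k/k}{n(n-1)}$, so the inequality reduces to $\sum_{k\geq 2}(2/k-1)\alpha_k\leq 0$, which holds since $2/k\leq 1$ for $k\geq 2$.

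For the lower bound $3-L(G)\leq 2E_{glob}(G)$, I use $3=\frac{3\sum_k\alpha_k}{n(n-1)}$ to write $3-L(G)=\frac{\sum_{k\geq 1}(3-k)\alpha_k}{n(n-1)}$, and then the inequality reduces to the pointwise estimate $(3-k)\leq 2/k$ weighted by $\alpha_k$. Clearing $k$, this becomes $(3-k)k-2\leq 0$, i.e. $-(k-1)(k-2)\leq 0$, which holds for every integer $k\geq 1$ (with equality at $k=1$ and $k=2$). Summing against $\alpha_k/k\geq 0$ gives the desired bound.

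There is essentially no obstacle beyond bookkeeping: once the pairs are grouped by distance, both inequalities collapse to elementary integer inequalities for each $k$. The only subtleties are the convention on how to handle disconnected $G$ (one restricts to connected graphs, as is already implicit in the definition of $L(G)$) and making sure the $k=1$ term contributes identically on both sides in the upper bound, so the non-trivial content lives entirely in the tail $k\geq 2$.
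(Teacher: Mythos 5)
Your proof is correct and follows essentially the same route as the paper: both arguments group the vertex pairs by distance and reduce each inequality to a termwise comparison against a common denominator $n(n-1)$. Your per-distance decomposition into the $\alpha_k$ is just a cleaner refinement of the paper's three buckets (distance $1$, distance $2$, distance $\geq 3$ with aggregate parameters $t$ and $\eps$), and it yields the same elementary inequalities $2/k\leq 1$ and $-(k-1)(k-2)\leq 0$.
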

\begin{proof}
    Let's denotations be the same as in the previous theorem. 
    \begin{enumerate}
        \item Consider the case: $2 E_{glob}(G)\leq 1+D(G).$\\
            $$2 E_{glob}(G) = \frac 1 {n(n-1)} (4m+ \a+2\eps \b), \text{ for some }\eps < \frac 1 2,$$
            $$\hspace{-10pt}\frac 1 {n(n-1)} (4m+\a+2\eps \b)\leq \frac 1 {n(n-1)} (4m+\a+\b) = \frac {2m} {n(n-1)} + \frac {2m+\a+\b} {n(n-1)} = 1+D(G).$$
        \item Consider the case: $2 E_{glob}(G)\geq 3-L(G).$\\
        $$\frac 1 {n(n-1)} (4m+\a+2\eps \b)\vee 3-\frac {2 m+2\a+t\b} {n(n-1)},$$
        $$4m+\a+2\eps\b\vee 3 (2m+\a+\b)-2m-2\a-t\b,$$
        $$\b(2\eps+t-3)\geq 0.$$
    \end{enumerate}
\end{proof}

\NB The equality in the previous theorems~\ref{thm:cent2}-\ref{thm:cent3} is obtained in the case then $diam\bigl(\cN(v)\bigr)\leq 2$ for any vertex $v$.

\begin{lm}
    $L\bigl(\cN(i)\bigr) = 2-c_i$.
\end{lm}
\begin{proof}
    $$L\bigl(\cN(i)\bigr) = \frac 1 {d_i(d_i-1)} \sumt_{s,t\in \cN(i), s\neq t} \dist(s,t) = $$
    $$ = \frac 1 {d_i(d_i-1)} \Bigg(\sumt_{(s,t)\in E(\cN(i))} \dist(s,t)+ \sumt_{s,t\in \cN(i), (s,t)\notin E(\cN(i))} \dist(s,t)\Bigg) = $$
    $$ = \frac 1 {d_i(d_i-1)} \Bigg(2 \|E\bigl(\cN(i)\bigr)\| + \sumt_{(s,i), (i, t)\in E(G), (s,t) \notin E(G)} \dist(s,t) \Bigg) = $$ 
    $$ = \frac 1 {d_i(d_i-1)} \Bigg(2 \|E\bigl(\cN(i)\bigr)\|+2 \Bigl(d_i(d_i-1) - 2 \| E\bigl(\cN(i)\bigr)\|\Bigr)\Bigg) = 2 - c_i.$$
    Note that shortest paths are calculated for $L$ in the whole graph $G$.
\end{proof}

\begin{cor}\label{thm:cent1}
    $E_{loc}(G)= \frac 1 2 (1+C_{WS}(G))$.
\end{cor}
\begin{proof}
Let's give two alternative proofs:
\begin{enumerate}
    \item Let's note that $D(\cN(i)) = C(i)$. By theorem~\ref{thm:cent3}: 
    $$3-L(\cN(i))\leq 2 E_{glob}(\cN(i))\leq 1+D(\cN(i)),$$
    $$3-(2-c_i)\leq 2 E_{glob}(\cN(i))\leq 1+c_i,$$
    note shortest paths are calculated for $L$ and $E_{glob}$ in the whole graph $G$. Averaging by $i$ we obtain equality.
    \item Let's rewrite local cluster coefficient formula:
$$c_i = \frac {\sumt_{(s, t)\in E(\cN(i))} 1} {d_i(d_i-1)},$$

$$\frac 1 2 (1+c_i) = \frac 1 2  \frac {\sumt_{(s, t)\in E(\cN(i))} 1 + \sumt_{(s, t)\in E(\cN(i))} 1+ \sumt_{s,t\in V(\cN(i)), (s, t)\notin E(\cN(i))} 1} {d_i(d_i-1)} = $$
$$ = \frac {\sumt_{(s, t)\in E(\cN(i))} 1 + \sumt_{s,t\in V(\cN(i)),(s, t)\notin E(\cN(i))} \frac 1 2} {d_i(d_i-1)} = \frac {\sumt_{s, t\in V(\cN(i))} \frac 1 {\dist(s,t)} } {d_i(d_i-1)} = E_{glob}\bigl(\cN(i)\bigr).$$
Averaging by $i$ we obtain equality.
\end{enumerate}
\end{proof}

\section{Spectral properties}
In this section theorems and lemmas are provided for undirected graphs unless otherwise specified. Let's first review some important and well-known results from Linear Algebra.

\begin{thm}
[Courant-Fischer-Weyl Theorem]
\label{thm:min-max}
Let $A$ be a symmetric matrix of order $n$, and denote by $\l_1(A)\leq\l_2(A)\leq ... \leq \l_n(A)$ the eigenvalues of $A$. Let $\mathcal{H}^k$ be the collection of all $k$-dimensional linear subspaces of
$\rR^n$. Then 
\begin{align*}
\lambda_k(A)&= \min_{X \in { \mathcal{H}^{k}}} \max_{x\in X\backslash \{0\}}   
\frac{(A x,x)}{(x,x)} 
=  \max_{X' \in \mathcal{H}^{n-k+1}} \min_{ x \in X'\backslash \{0\}}  
\frac{(A x,x)}{(x,x)} 
\\&.
\end{align*}
\end{thm}

\begin{thm}[Cauchy’s Interlacing Theorem \cite{HJ12,Fisk05}]
\label{thm:Cauchy-Interlacing}
Let $A$ be an $n\times n$ symmetric real matrix, and $B$ is a principle $m\times m$ submatrix of $A$, 
 then the eigenvalues of $B$ interlace the eigenvalues of $A$, i.e.,
 $$\alpha_k\le \beta_k\le \beta_{k+n-m},\;k=1,\cdots,m,$$
 where $\alpha_1\le\cdots\le \alpha_n$ are the eigenvalues of $A$, and  $\beta_1\le\cdots\le \beta_m$ are the eigenvalues of $B$.
\end{thm}

Before presenting the following famous Perron-Frobenius Theorem, we give some definitions for matrices.

\begin{defin}
   A square $n\times n$ matrix $A=(a_{ij})$  is called \bf{reducible} if the indices $1, 2, \cdots, n$ can be divided into two disjoint nonempty sets $i_1,\cdots,i_k$ and $j_1,\cdots,j_\ell$ (with $k+\ell=n$) such that $a_{ij}=0$ for any $i\in \{i_1,\cdots,i_k\}$ and $j\in \{j_1,\cdots,j_\ell\}$. A square matrix which is not reducible is said to be \bf{irreducible}.  
\end{defin}

\begin{st}\label{st:connected-Perron}
The adjacency matrix of a connected graph is  symmetric, \textbf{irreducible}, and have nonnegative entries.
\end{st}
\begin{proof}
By the definition of adjacency matrix, it is clear to be symmetric and have nonnegative entries. 

Now we need to check that if the graph is connected, then its adjacency matrix must be irreducible. Suppose the contrary, that $G$ is connected but $A(G)$ is reducible.  Then the indices $1, 2, \cdots, n$ can be divided into two disjoint nonempty sets $i_1,\cdots,i_k$ and $j_1,\cdots,j_\ell$ (with $k+\ell=n$) such that $a_{ij}=0$ for any $i\in \{i_1,\cdots,i_k\}$ and $j\in \{j_1,\cdots,j_\ell\}$. 
This means that there is no edge with one end-point in  $i_1,\cdots,i_k$ and the other end-point in $j_1,\cdots,j_\ell$. Thus, $G$ is not connected, which contradicts to the assumption that $G$ is connected.
\end{proof}

\begin{thm}
[Perron-Frobenius Theorem \cite{HJ12}] \label{thm:Perron-Frobenius}
Let $A\in \R^{n\times n}$
be symmetric, irreducible, and have nonnegative entries. Then $A$ has an eigenvalue $\lambda$ that
is strictly positive. Furthermore, it has multiplicity one and its corresponding
eigenvector $\vec v$ has strictly positive entries.   Moreover, the largest eigenvalue of $A$ is positive and has multiplicity 1.
\end{thm}

  The proof is complicate and thus we only give a citation \cite{HJ12} for it.

\subsection{Adjacency and Laplacian matrix spectrum}

Let's prove three theorems about the properties of adjacency and Laplacian matrix spectrum for undirected graphs (following by~\cite{BellaB}). Let's denote by $\mu_{min}$ and $\mu_{max}$ the minimum and the maximum eigenvalues of adjacency matrix $A(G)$ respectively. First, let's notice that since the graph is undirected, the adjacency matrix is symmetric and thus Hermitian. Therefore, from the basic knowledge from Linear algebra the set $\{(Ax,x): \|x\| = 1\}$ is a closed interval $[a,b]\sub \R$, where $(\cdot, \cdot)$ is the standard scalar product and $a = \mu_{min},\; b = \mu_{max}$. 

If the graph is not empty then let's $(v_1, v_2)\in E(G)$ without loss of generality. Hence, $(Ax,x) > 0$ for $x = \bigl(\frac 1 {\sqrt{2}},\frac 1 {\sqrt{2}},0, ..., 0\bigr)$ and $(Ax,x) < 0$ for $x = \bigl(\frac 1 {\sqrt{2}},-\frac 1 {\sqrt{2}},0, ..., 0\bigr)$ and therefore, $\mu_{min}<0<\mu_{max}$.

Let's denote by $d_{min}$ and $d_{max}$ the minimum and maximum degree value of the graph $G$ respectively and let $\|V(G)\| = n$.

\begin{thm} Consider a connected undirected graph $G$ with adjacency matrix $A(G)$.
\begin{enumerate}
    \item For every eigenvalue $\mu$ of $A(G)$ holds $|\mu| \leq d_{max}$.
    \item The maximum degree $d_{max}$ is an eigenvalue of $A(G)$ iff. $G$ is regular graph. If $d_{max}$ is an eigenvalue then its multiplicity equals one. 
    \item If $-d_{max}$ is an eigenvalue of $A(G)$ then $G$ is regular and bipartite graph.
    \item If $G$ is bipartite then for every eigenvalue $\mu$ of $A(G): -\mu$ is also an eigenvalue with equal multiplicity.
    \item For the maximum eigenvalue $d_{min} \leq \mu_{max} \leq d_{max}$.
    \item For any induced subgraph $H\sub G: \mu_{min}(G)\leq \mu_{min}(H) \leq \mu_{max}(H)\leq \mu_{max}(G)$.
\end{enumerate}
\end{thm}
\begin{proof}
    \begin{enumerate}
        \item\label{am1} Let's $x = (x_1, x_2, ... x_n)$ be the eigenvector corresponding to the eigenvalue $\mu$ and $x_p = \max_{i} |x_i|$. Let's normalize this vector $x$ such that $x_p = 1$. Thus,
        $$|\mu| = |\mu x_p| = \Bigl|\sumt_{i=1}^{n} a_{pi} x_i\Bigr| \leq \sumt_{i=1}^{n} a_{pi} |x_i| \leq d_p \leq d_{max}.$$
        \item\label{am2} Consider $x$ and $x_p$ from~\ref{am1}. For any $i =1...n, |x_i|\leq 1$. Therefore,
        $$\mu = \mu x_p = \sumt_{i=1}^{n} a_{pi} x_i = \sumt_{(p,i)\in E(G)} x_i\leq d_p \leq d_{max}.$$
        If $d_{max}$ is eigenvalue of $A(G)$ then $d_p = d_{max}$ and $x_q = 1$ for any adjacent vertex $q$ with $p$. Thus, using the same procedure for $x_q$ imply $d_q = d_{max}$. Since the graph is connected, it holds for all vertices. Therefore, $G$ is regular graph. The reverse holds from the fact that the vector $\j$ of all ones is an eigenvector of $A(G)$ by~\ref{AGdeg} property. The multiplicity of $d_{max}$ holds from the Perron-Frobenius theorem~\ref{thm:Perron-Frobenius}.
        
        
        \item Using~\ref{am1} and~\ref{am2} we obtain $\sum_{i\sim p} x_i\geq -d_p \geq -d_{max}$. Hence for each $i\in V(G): i\sim p$ holds $x_i = -1$ and $d_i = d_{max}$. Using the same arguments for neighbours of $x_i$ we obtain $\forall k\in V(G): k\sim i$ holds $x_k = 1$ and $d_k = d_{max}$ (due to normalization of $x$ such that $x_i = 1$). Therefore, the vertex set $V(G)$ is divided by two groups with $x_i = 1$ and $x_i = -1$ and there is no edges within each of group. Thus, $G$ is bipartite and regular.  
        \item Let the vertex set divided by two groups $V(G) = V_1\bigsqcup V_2$ and there is no edges within each of group. Consider a map $b:\rR^n\rightarrow\rR^n$ such that $\forall x = (x_i)\in \rR^n,$
        $$ b(x_i) = 
        \begin{cases}
            -x_i & i\in V_1, \\
            x_i & i\in V_2.
        \end{cases}
        $$
        
        Let's show that if $\mu$ and $x$ are eigenvalue and eigenvector respectively then for any $i\in V(G)$ holds $\bigl(A\,b(x)\bigr)_i = -\mu \bigl(b(x)\bigr)_i$. Indeed, let $i\in V_1$ without loss of generality, then
        $$
            \bigl(A\,b(x)\bigr)_i = \sumt_{j\in V_1} a_{ij} (-x_j) + \sumt_{j\in V_2} a_{ij} x_j = \sumt_{j\in V_2} a_{ij} x_j = -\sumt_{j\in V_1} a_{ij} (-x_j)-\sumt_{j\in V_2} a_{ij} (-x_j) = 
        $$
        $$
            = -\bigl(A (-x)\bigr)_i = -\mu(-x_i) = -\mu \bigl(b(x)\bigr)_i. 
        $$
        Therefore, $b$ maps different eigenvectors corresponding to eigenvalue $\mu$ to different eigenvectors corresponding to eigenvalue $-\mu$. Therefore, these eigenvalues have equal multiplicity.
        \item From~\ref{am1} holds $\mu_{max}\leq d_{max}$. Then,
        $$
            \Bigg(A \Bigl(\frac 1 {\sqrt{n}}\;\j\Bigr), \frac 1 {\sqrt{n}}\;\j\Bigg) = \frac 1 n \sumt_{i = 1}^n \sumt_{j = 1}^n a_{ij} = \frac 1 n \sumt_{i = 1}^n d_i \geq d_{min}.
        $$
        Thus, $\mu_{max} = \maxt_{x\in \rR^n: \|x\| = 1} (Ax,x) \geq d_{min}$.
        \item Let $A'$ be the adjacency matrix for an induced subgraph $H$ and $y\in\rR^{n-1}:\|y\| = 1,$
        $$ 
            (Ay,y) = \maxt_{x\in \rR^n: \|x\| = 1} (Ax,x) = \mu_{max}(H).
        $$
        Consider $x = (y, 0)\in \rR^n$. Then, $\|x\| = 1$ and $(Ax, x) = (A'y, y) = \mu_{max} (H)$. Hence, $\mu_{max} (H)\leq \mu_{max} (G)$. The other inequality is proved similarly. 
    \end{enumerate}
\end{proof}

\vspace{10pt} Let's denote by $\l_1, \l_2, ... , \l_n$ the eigenvalues of Laplacian matrix $L$ for undirected graph $G$. Since $L$ is Hermitian matrix the same as adjacency matrix, all eigenvalues will be real and we can order them: $\l_1\leq\l_2\leq ... \leq \l_n$ and $\l_1 = \mint_{x\in \rR^n: \|x\| = 1} (Lx,x),\; \l_n = \maxt_{x\in \rR^n: \|x\| = 1} (Lx,x)$. In fact, if $x^1\in\rR^n: \|x^1\| = 1$ is eigenvector with eigenvalue $\l_1$, then 
$$
    \l_2 =  \mint_{x\in \rR^n: \|x\| = 1, (x,x^1) = 0} (Lx,x) = \mint_{x\in \rR^n: (x,x^1) = 0} \frac {(Lx,x)} {(x,x)}. 
$$
One can continue these procedure for all others eigenvalues. Let's prove  Laplacian matrix spectrum properties theorem:

\begin{thm}[Laplacian matrix spectrum properties]\label{thm:lapl} Consider undirected graph $G$ with $n$ vertices.
    \begin{enumerate}
        \item $\forall x\in\rR^n: (Lx, x)\geq 0$.
        \item The first eigenvalue $\l_1 = 0$ and $\j$ is eigenvector with eigenvalue $\l_1$.
        \item For the complete graph $K_n:$ $\l_2 = \l_3 = ... = \l_n = n$.
        \item For not complete graph $\l_2\leq \k(G)$, where $\k(G)$ is the connectivity of the graph.
    \end{enumerate}
\end{thm}
\begin{proof}
    \begin{enumerate}
        \item\label{Lp1} One can rewrite this as
        $$
            (Lx, x) = \sumt_{i=1}^n \Bigl(d_i x_i^2-\sumt_{i\sim j} x_i x_j\Bigr) = \sumt_{(i,j)\in E(G)} (x_i-x_j)^2\geq 0.
        $$
        \item The first part holds from~\ref{Lp1} and the second from the property~\ref{prop:sum} of Laplacian matrix.
        \item Let's do the same procedure like in Cayley theorem~\ref{thm:cayley} for the characteristic polynomial of Laplacian matrix $L$ of the complete graph $K_n$. Then, $\det(L-\l I) = -\l (n-\l)^{n-1}$. Let's note that by definition $\k(K_n) = n-1$.
        \item\label{Lp4} Let $S$ be a $\k(G)$-vertex cut. Denote by $H_1$ and $H_2$ induced subgraphs such that $G\setminus S = H_1\bigsqcup H_2$ and there are no edges between $H_1$ and $H_2$. Let $a = \|V(H_1)\|\text{ and } b = \|V(H_2)\|$. Consider a vector $x = (x_i)\in\rR^n$ such that
        $$
            x_i = 
            \begin{cases}
                b & i\in H_1, \\
                0 & i\in S, \\
                -a & i\in H_2.
            \end{cases}
        $$
        Then $(x, j) = 0$ and $(x,x) = ab^2+ba^2$. Denote $y = L x$. For the vector $\j$ holds $L b \j = 0$, therefore $y = L (x-b\j)$. Consider $i\in H_1$. Note that 
        $$
            \bigl(L (x-b\j)\bigr)_i = 0-\sumt_{k\sim i} a_{ki} (x_i-b) = -\sumt_{k\in S, k\sim i} (-b)\leq \k b.
        $$
        Analogously if $i\in H_2$ then $\bigl(L (x-b\j)\bigr)_i\geq -\k a$. Therefore,
        $$
            \l_2 = \mint_{z\in \rR^n: (z,\j) = 0} \frac {(Lz,z)} {(z,z)} \leq \frac {(Lx,x)} {(x,x)} \leq \frac {\k b^2 a+\k a^2 b} {ab^2+ba^2} = \k.
        $$
    \end{enumerate}
\end{proof}

Let's denote the set of edges between subsets $U\sub V(G)$ and $V\setminus U$ by $\d U$ so-called the \emph{boundary} of $U$. We will give a few words about why it is boundary and so forth in the section~\ref{sc:cheeger}, but now let's prove another property of the Laplacian matrix second eigenvalue~$\l_2$:

\begin{thm}
    Let $G$ be undirected graph with $n$ vertices. Then, $\forall U\sub V(G)$ holds
    $$
        \|\d U\|\geq \l_2 \frac {\|U\| \|V\setminus U\|} n.
    $$
\end{thm}
\begin{proof}
    Let $k = \|U\|$. Let's define $x = (x_i)\in \rR^n$ such as
    $$
        x_i = \begin{cases}
            n-k & i\in U,\\
            k & i\in V\setminus U.
        \end{cases}
    $$
    Then, $(x, \j) = 0$ and $(x,x) = k(n-k)^2+(n-k) k^2 = nk (n-k)$. Using the equation in~\ref{Lp1} in the previous theorem find $(Lx, x) = \|\d U\|\, n^2$. Thus, using the same arguments as in~\ref{Lp4} in the previous theorem obtain the inequality.
\end{proof}

\subsection{The normalized Laplacian spectrum}\label{sc:nls}
Let's define the normalized Laplacian for an undirected unweighted graph $G$ with no loops.
\begin{defin}
The \textbf{normalized Laplacian matrix} $\Delta$ is defined as $D^{-\frac12}LD^{-\frac12}$, where $L$ is the Laplacian matrix of $G$, and $D=\mathrm{diag}(\deg v_1,\cdots,\deg v_n)$ is the diagonal matrix consisting of degrees of $G$. Then 
$$
\Delta_{ij} = 
\begin{cases}
1, & \text{ if }i=j\\
-\frac{1}{\sqrt{d_id_j}} ,& \text{ if }i\ne j,(i,j)\in E(G)\\
0 ,& \text{ otherwise},\\
\end{cases}$$
where $d_i=\deg v_i$.
\end{defin}

We now list the important properties of $\Delta$: 
\begin{thm}[Properties of normalized Laplacian matrix]\ \\
\begin{enumerate}\vspace{-14pt}
\item Consider $\Delta$ as a linear operator  $\Delta:\rR^n\to \rR^n$, $\Delta$ is self-adjoint with respect to the scalar product $(\cdot,\cdot)$, i.e.,
$$
(x,\Delta y) = (\Delta x, y),
$$
for all $x,y \in \rR^n$ and standard scalar product 
$(x,y):=\sum_{i=1}^nx_iy_i$,

\item The operator $\Delta$ is non-negative, i.e., $\forall x\in \rR^n$,
$$
(\Delta x,x) \ge 0,
$$

\item For $x$ proportional to $(\sqrt{d_1},\cdots,\sqrt{d_n})$ holds $\Delta x =0$.

\item $\tr(\Delta) = n.$

\end{enumerate}
\end{thm}
\begin{proof}
   This properties are easy hold from similar theorem~\ref{thm:lapl} about Laplacian matrix. Let's note that for normalized Laplacian
   $$
   (\Delta x,x)=\sum_{(i,j)\in E(G)}\left(\frac{x_i}{\sqrt{d_i}}-\frac{x_j}{\sqrt{d_j}}\right)^2.
   $$
\end{proof}

The preceding properties have consequences for the eigenvalues of $\Delta$. We write them as $\lambda_k(\Delta)$ so that
  the eigenvalue equation becomes
$$
\Delta x^k = \lambda_k(\Delta) x^k,
$$
with $x^k$ being a corresponding \emph{eigenvector}. \\

\NB The eigenvalues and eigenvectors of the normalized Laplacian $\Delta$ can be rewritten by virtue of the (unnormalized) Laplacian in the following component-wise form:
$$(L x)_i=\lambda d_i\, x_i,\;\; i\in V.$$
From now on, we work on connected graphs. We can order the eigenvalues of $\Delta$ as
$$\lambda_1(\Delta)=0<\lambda_2(\Delta)\le \cdots\le \lambda_n(\Delta)$$
with the $<$ justified by

\begin{cor}[Properties of normalized Laplacian matrix spectrum]\ \\
\begin{enumerate}\vspace{-14pt}
\item $\lambda_i(\Delta)\ge0, \forall i = 1,\ldots,n$.
\item The smallest eigenvalue is $\lambda_1(\Delta) =0$. If $G$ is connected, then
$
\lambda_k(\Delta)~>~0
$ 
for $k>1$.
\item 
$$
   \sum_{k=1}^n \lambda_k(\Delta)=n. 
$$
\end{enumerate}
\end{cor}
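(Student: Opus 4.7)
The plan is to invoke the spectral theorem for real symmetric matrices, justified by property 1 of the preceding lemma (self-adjointness of $\Delta$), so that $\Delta$ admits an orthonormal basis $\{x^k\}$ of real eigenvectors with real eigenvalues $\lambda_k(\Delta)$. Once this is in place, each of the three claims follows by a one-line deduction from the correspondingly numbered item of the lemma.

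For (1), I would pick any eigenpair $(\lambda_k(\Delta), x^k)$ with $x^k \ne 0$ and compute $(\Delta x^k, x^k) = \lambda_k(\Delta)\,(x^k, x^k)$. Property 2 of the lemma gives $(\Delta x^k, x^k) \ge 0$, and $(x^k, x^k) > 0$, so $\lambda_k(\Delta) \ge 0$. For (3), the sum $\sum_k \lambda_k(\Delta)$ equals $\mathrm{tr}(\Delta)$ by the standard relation between the characteristic polynomial and the trace, and property 4 of the lemma gives $\mathrm{tr}(\Delta) = n$; combining, $\sum_k \lambda_k(\Delta) = n$.

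For (2), property 3 of the lemma identifies $\ker \Delta$ as the one-dimensional subspace spanned by $(\sqrt{d_1}, \ldots, \sqrt{d_n})$, where connectedness of $G$ enters implicitly through the fact that the null space of $L$ is spanned by the constant vector (this is the only place the connectedness hypothesis is really needed). Since the geometric multiplicity of the eigenvalue $0$ therefore equals $1$, and since $\Delta$ is diagonalizable (algebraic multiplicity equals geometric multiplicity by self-adjointness), the eigenvalue $0$ occurs exactly once. This simultaneously yields $\lambda_1(\Delta) = 0$ and $\lambda_k(\Delta) > 0$ for $k \ge 2$, given the non-negativity from (1).

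There is no serious obstacle here: each item of the corollary is essentially a translation of the corresponding item of the lemma into spectral language, and the most delicate step is just making the appeal to the spectral theorem explicit. If anything is worth underlining, it is the role played by connectedness in turning the inequality $\lambda_2(\Delta) \ge 0$ into the strict inequality $\lambda_2(\Delta) > 0$, via the one-dimensionality of $\ker \Delta$.
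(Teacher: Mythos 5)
Your proof is correct and follows essentially the same route as the paper, which simply asserts that the corollary "easily holds from the previous lemma"; you have filled in exactly the intended details (Rayleigh quotient for non-negativity, trace for the sum, and one-dimensionality of $\ker\Delta$ plus diagonalizability for simplicity of the zero eigenvalue). No gaps.
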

\begin{proof}
    These arguments easily hold from the previous theorem. 
\end{proof}


\begin{lm}
Let $\mathcal{H}^k$ be the collection of all $k$-dimensional linear subspaces of
$\rR^n$. We have the following min-max characterization of the eigenvalues:
$$
\lambda_k= \max_{H_k \in  \mathcal{H}^{k-1}} \min_{x\in H_k^\bot\backslash \{0\}} 
\frac{(\Delta x,x)}{(x,x)} $$
and dually 
\begin{align*}
\lambda_k&= \min_{H_k \in { \mathcal{H}^{k}}} \max_{x\in H_k\backslash \{0\}}   
\frac{(\Delta x,x)}{(x,x)} =
\\&=  \max_{H_{n-k+1} \in \mathcal{H}^{n-k+1}} \min_{ x \in H_{n-k+1} \backslash \{0\}}  
\frac{(\Delta x,x)}{(x,x)} .
\end{align*}
\end{lm}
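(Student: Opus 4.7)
The plan is to invoke the spectral theorem for the real symmetric matrix $\Delta$ (self-adjointness was established in the preceding lemma), obtaining an orthonormal basis $x^1,\ldots,x^n$ of $\rR^n$ with $\Delta x^i=\lambda_i x^i$ and $\lambda_1\leq \cdots\leq \lambda_n$. Writing any $x=\sum_{i=1}^n c_i x^i$ with $c_i=(x,x^i)$, the Rayleigh quotient becomes
$$R(x):=\frac{(\Delta x,x)}{(x,x)}=\frac{\sum_{i=1}^n \lambda_i c_i^2}{\sum_{i=1}^n c_i^2},$$
so $R(x)$ is a convex combination of the eigenvalues weighted by $c_i^2$. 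This reduces every claim to a statement about which index set $\{i:c_i\ne 0\}$ is forced by the choice of subspace, and the rest is a dimension-counting argument.

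I would first handle the $\min$-$\max$ formula $\lambda_k=\min_{H_k\in\mathcal{H}^k}\max_{x\in H_k\setminus\{0\}}R(x)$, since it is the cleanest case. The upper bound comes from the specific choice $H_k^\ast=\mathrm{span}(x^1,\ldots,x^k)$: any nonzero $x\in H_k^\ast$ has $c_i=0$ for $i>k$, so $R(x)\leq \lambda_k$, with equality at $x=x^k$. For the matching lower bound — which is the crucial direction — take any $H_k\in\mathcal{H}^k$ and note that
$$\dim H_k+\dim\mathrm{span}(x^k,\ldots,x^n)=k+(n-k+1)=n+1>n,$$
so $H_k\cap\mathrm{span}(x^k,\ldots,x^n)$ contains a nonzero vector $x$. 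For this $x$ only the components $c_k,\ldots,c_n$ survive, hence $R(x)\geq\lambda_k$ and therefore $\max_{x\in H_k\setminus\{0\}}R(x)\geq \lambda_k$.

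The same blueprint yields the other two formulas. For the $\max$-$\min$ formula $\lambda_k=\max_{H\in\mathcal{H}^{k-1}}\min\{R(x):x\neq 0,\,x\perp H\}$, the distinguished subspace is $H^\ast=\mathrm{span}(x^1,\ldots,x^{k-1})$, on whose orthogonal complement $\mathrm{span}(x^k,\ldots,x^n)$ the Rayleigh quotient is at least $\lambda_k$, attained at $x^k$; for an arbitrary $H\in\mathcal{H}^{k-1}$ the intersection $H^\perp\cap\mathrm{span}(x^1,\ldots,x^k)$ is nonzero by the same dimension count $(n-k+1)+k>n$, producing a witness $x$ with $R(x)\leq\lambda_k$. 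The third formula, $\lambda_k=\max_{H\in\mathcal{H}^{n-k+1}}\min_{x\in H\setminus\{0\}}R(x)$, is handled identically by taking $H^\ast=\mathrm{span}(x^k,\ldots,x^n)$ and pairing an arbitrary $H\in\mathcal{H}^{n-k+1}$ against $\mathrm{span}(x^1,\ldots,x^k)$.

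There is no substantive obstacle: the only external input is the spectral theorem for real symmetric matrices (standard linear algebra), and everything else reduces to the arithmetic observation that a convex combination of $\lambda_i$'s supported on $\{i\geq k\}$ is bounded below by $\lambda_k$ (and symmetrically, one supported on $\{i\leq k\}$ is bounded above by $\lambda_k$), combined with the intersection principle $\dim U+\dim V>n\Rightarrow U\cap V\neq\{0\}$. The only care needed is bookkeeping to track the correct dimensions in each of the three formulas and to verify that the outer extrema are in fact attained, which follows from compactness of the unit sphere once the eigenvectors are in hand.
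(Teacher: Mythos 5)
Your proposal is correct: it is the standard Courant--Fischer argument (spectral decomposition, Rayleigh quotient as a convex combination of eigenvalues, and the intersection principle $\dim U+\dim V>n\Rightarrow U\cap V\neq\{0\}$), and the dimension counts in all three formulas check out. The paper does not actually prove the lemma --- it simply cites ``the well-known min-max theorem for the symmetric matrix $\Delta$'' --- so your write-up supplies exactly the proof being invoked, with no divergence in approach.
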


\begin{proof}
This is the well-known min-max theorem for the symmetric matrix $\Delta$. Precisely, taking $A=\Delta$ in Theorem \ref{thm:min-max}, we immediately obtain the desired min-max equalities.
\end{proof}

\begin{lm}
    
\begin{enumerate}[(a)]
  \item
Any eigenvector $x$ for some eigenvalue $\lambda \neq 0$ satisfies 
$$
\lambda =\frac{(\Delta x,x)}{(x,x)}.
$$
\item The second smallest eigenvalue of $\Delta$ is given by
$$
\lambda_2=\min_{\sum\limits_{i\in V}  \sqrt{d_i}x_i=0}  
\frac{(\Delta x,x)}{(x,x)} =\min_{ \sum\limits_{i\in V} d_i\,  y_i=0  }  \frac{\sum\limits_{(i,j)\in E(G)}
  (y_i-y_j)^2}{\sum_i d_i\ 
  y_i^2} 
$$
\item The largest eigenvalue of $\Delta$ is given by
 $$
\lambda_n=\max_{x \in \rR^n \setminus \{0\}}  
\frac{(\Delta x,x)}{(x,x)}  . 
$$
\item The eigenvalues satisfy 
$$
0\le \lambda \le 2. 
$$
\end{enumerate}
\end{lm}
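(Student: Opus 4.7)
The plan is to handle the four parts in order, leveraging the min-max characterization from the previous lemma together with the identity $(\Delta x,x)=\sum_{\{i,j\}\in E}(x_i/\sqrt{d_i}-x_j/\sqrt{d_j})^2$ already established. Part (a) is immediate: taking the scalar product of both sides of $\Delta x=\lambda x$ with $x$ yields $(\Delta x,x)=\lambda(x,x)$, and since $\lambda\ne 0$ forces $x\ne 0$ we may divide.

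For part (b) I would instantiate the min-max characterization at $k=2$, so that $H_{k-1}=H_1$ is one-dimensional and the maximum over such $H_1$ is realized by choosing $H_1$ to be the kernel of $\Delta$. By part 3 of the preceding lemma that kernel is spanned by $(\sqrt{d_1},\ldots,\sqrt{d_n})$, so orthogonality to $H_1$ becomes the constraint $\sum_i\sqrt{d_i}\,x_i=0$, giving the first equality. The second equality follows by the substitution $y_i=x_i/\sqrt{d_i}$: the constraint becomes $\sum_i d_i y_i=0$, the denominator becomes $(x,x)=\sum_i d_i y_i^2$, and the numerator becomes $\sum_{\{i,j\}\in E}(y_i-y_j)^2$ by the identity above. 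Part (c) is a direct application of the dual form of the min-max characterization at $k=n$, taking $H_n=\rR^n$.

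The main content is part (d). The lower bound $\lambda\ge 0$ is immediate from non-negativity together with (a) and (c). For the upper bound the plan is to show that $2I-\Delta$ is positive semidefinite, which combined with (c) forces $\lambda_n\le 2$. Using the identity for $(\Delta x,x)$ and the Handshaking-type rewriting $\sum_i x_i^2=\sum_i d_i(x_i/\sqrt{d_i})^2=\sum_{\{i,j\}\in E}(y_i^2+y_j^2)$ where $y_i=x_i/\sqrt{d_i}$, one computes
$$
((2I-\Delta)x,x)=\sum_{\{i,j\}\in E}\bigl[2(y_i^2+y_j^2)-(y_i-y_j)^2\bigr]=\sum_{\{i,j\}\in E}(y_i+y_j)^2\ge 0.
$$
Then (c) gives $\lambda_n=\max_x (\Delta x,x)/(x,x)\le 2$, and monotonicity of the spectrum finishes (d).

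The only real obstacle is spotting the sum-of-squares identity $2(y_i^2+y_j^2)-(y_i-y_j)^2=(y_i+y_j)^2$ that lets $2I-\Delta$ be written as a sum of squares over edges; everything else is bookkeeping. A mild secondary point is being explicit that the substitution $y=D^{-1/2}x$ is a bijection between the orthogonal complement of $(\sqrt{d_1},\ldots,\sqrt{d_n})$ and the hyperplane $\{y:\sum_i d_i y_i=0\}$, so the two minimization problems in (b) really do have the same value.
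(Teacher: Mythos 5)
Your proposal is correct and follows essentially the same route as the paper: (a) by pairing the eigen-equation with $x$, (b) and (c) by the min-max characterization with the kernel spanned by $(\sqrt{d_1},\dots,\sqrt{d_n})$ and the substitution $y_i=x_i/\sqrt{d_i}$, and (d) via the per-edge bound together with $\sum_{\{i,j\}\in E}(y_i^2+y_j^2)=\sum_i d_i y_i^2=\sum_i x_i^2$. The only cosmetic difference is in (d): you exhibit the slack as the sum of squares $\sum_{\{i,j\}\in E}(y_i+y_j)^2$ to show $2I-\Delta$ is positive semidefinite, whereas the paper invokes the equivalent inequality $(a-b)^2\le 2(a^2+b^2)$ edge by edge and concludes through (a) rather than (c).
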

\begin{proof}
    \begin{enumerate}[(a)]
        \item It follows from $\Delta x=\lambda x$ that $$
\frac{(\Delta x,x)}{(x,x)}=\frac{(\lambda  x,x)}{(x,x)}=\lambda \frac{(  x,x)}{(x,x)}=\lambda.
$$
        \item Since $(\sqrt{d_1},\cdots,\sqrt{d_n})$ is an eigenvector corresponding to $\lambda_1=0$, the first equality is due to min-max theorem of Rayleigh quotient in linear algebra. 
The second equality is then followed by taking $x_i=\sqrt{d_i}y_i$.
        \item This is also due to min-max theorem of Rayleigh quotient.
        \item Suppose that $x$ is an eigenvector corresponding to $\lambda$, then by (a), we have 
$$\lambda=\frac{(\Delta x,x)}{(x,x)}=\frac{\sumt_{(i,j)\in E(G)}\left(\frac{x_i}{\sqrt{d_i}}-\frac{x_j}{\sqrt{d_j}}\right)^2}{\sum_{i\in V}x_i^2}.$$
By Cauchy's inequality,
$$\sum_{(i,j)\in E(G)}\left(\frac{x_i}{\sqrt{d_i}}-\frac{x_j}{\sqrt{d_j}}\right)^2\le\sum_{(i,j)\in E(G)}2\left(\frac{x_i^2}{d_i}+\frac{x_j^2}{d_j}\right)=2\sum_{i\in V}\sum_{j\in V:j\sim i}\frac{x_i^2}{d_i}=2\sum_{i\in V}d_i\frac{x_i^2}{d_i} =2\sum_{i\in V}x_i^2$$
implying that $\lambda\le 2$.
    \end{enumerate}
\end{proof}

\begin{thm}
Among all the graphs with $n$ vertices, the complete graph has the largest possible $\lambda_2(\Delta)$ and the smallest possible $\lambda_n(\Delta)$.     
\end{thm}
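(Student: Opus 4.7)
The plan is to compute the spectrum of $\Delta$ for $K_n$ explicitly and then, for any other graph on $n$ vertices, combine the trace identity $\sum_{k=1}^n \lambda_k(\Delta)=n$ with the ordering of the eigenvalues $0=\lambda_1\le \lambda_2\le\cdots\le \lambda_n$ via a simple averaging argument.

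First I would compute $\Delta(K_n)$ directly. Every vertex of $K_n$ has degree $n-1$, so $D=(n-1)I$ and $L=(n-1)I-(J-I)=nI-J$, where $J$ is the $n\times n$ all-ones matrix. Hence
$$\Delta(K_n)=\frac{1}{n-1}L=\frac{n}{n-1}I-\frac{1}{n-1}J.$$
Since $J$ has eigenvalue $n$ with eigenvector $(1,\ldots,1)^\top$ and eigenvalue $0$ on its orthogonal complement (multiplicity $n-1$), the spectrum of $\Delta(K_n)$ is $\lambda_1=0$ together with $\lambda_2=\cdots=\lambda_n=\frac{n}{n-1}$.

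Next, let $G$ be an arbitrary connected graph on $n$ vertices with no isolated vertices (so that $\Delta(G)$ is defined). By the spectral properties established above, $\lambda_1(\Delta(G))=0$, and by the trace property $\sum_{k=1}^n \lambda_k(\Delta(G))=n$. Removing the zero eigenvalue gives
$$\sum_{k=2}^n \lambda_k(\Delta(G))=n,$$
with every summand at least $\lambda_2(\Delta(G))$ and at most $\lambda_n(\Delta(G))$. Hence
$$(n-1)\,\lambda_2(\Delta(G))\le n\le (n-1)\,\lambda_n(\Delta(G)),$$
which yields $\lambda_2(\Delta(G))\le \frac{n}{n-1}=\lambda_2(\Delta(K_n))$ and $\lambda_n(\Delta(G))\ge \frac{n}{n-1}=\lambda_n(\Delta(K_n))$, as desired.

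There is no real obstacle: both extremal values happen to coincide at $n/(n-1)$, and equality in the averaging argument forces $\lambda_2=\cdots=\lambda_n=\frac{n}{n-1}$, which is realized by $K_n$ as shown in the first step. The only point deserving care is the case of disconnected graphs, which can be handled by noting that the presence of $k\ge 2$ zero eigenvalues only sharpens the inequality $(n-k)\lambda_n\ge n$, so the bound on $\lambda_n$ still holds (and $\lambda_2=0<\frac{n}{n-1}$ trivially).
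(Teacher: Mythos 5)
Your proof is correct and rests on the same key idea as the paper's: combining $\lambda_1(\Delta)=0$ with the trace identity $\sum_k\lambda_k(\Delta)=n$ and averaging to get $(n-1)\lambda_2\le n\le(n-1)\lambda_n$, i.e. $\lambda_2\le \frac{n}{n-1}\le\lambda_n$. The only difference is in how attainment is handled: you verify it by computing the spectrum of $K_n$ explicitly (which suffices for the statement as written), whereas the paper instead analyzes the equality case of the averaging bound to show that \emph{only} the complete graph attains it --- a uniqueness refinement that the subsequent remark about spectral characterization of complete graphs relies on.
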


\begin{proof}
In fact, since $\lambda_1(\Delta)=0$ and the trace $\sum_{i=1}^n\lambda_i(\Delta)=n$, we have $(n-1)\lambda_2(\Delta)\le \sum_{i=2}^n\lambda_i(\Delta)=n$ which means $\lambda_2(\Delta)\le \frac n {n-1}$. The equality holds iff. $\lambda_2(\Delta)=\lambda_3(\Delta)=\cdots=\lambda_n(\Delta)=\frac{n}{n-1}$. 
In this case, the eigenvalue $\frac n {n-1}$ has multiplicity $n-1$, and thus any vector $x$ orthogonal to the the constant vector $(1,1,\cdots, 1)$ (which serves an eigenvector corresponding to the smallest eigenvalue $\lambda_1(\Delta)=0$) must be an eigenvector corresponding to  $\lambda_2(\Delta)=\lambda_3(\Delta)=\cdots=\lambda_n(\Delta)=\frac{n}{n-1}$. This means that  every vector $x$ satisfying $\sum_{i=1}^n x_i=0$ must be an eigenvector. In particular, $y = (1,-1,0,\cdots,0)$ is an eigenvector corresponding to $\frac n {n-1}$. 
Thus, $1+\frac{1}{\sqrt{d_1d_2}}= \frac {(\Delta y,y)} {(y,y)}=\lambda= \frac{n}{n-1}$. This derives $d_1d_2=(n-1)^2$, and then together with $d_i\le n-1$, we have $d_1=d_2=n-1$. 
Similarly, we can obtain $d_i=n-1$ for any $i$, which means that $G$ is only possible to be a complete graph.     
\end{proof}

\NB This result indicates that  complete graphs can be uniquely characterized by the spectrum of normalized Laplacian. 
Below, we show a characterization of multi-partite graphs, meaning that in some sense, multi-partite graphs can be partially characterized by the spectrum of normalized Laplacian.

\begin{defin}
    An undirected graph is called \bf{$k$-partite} \ifof one can divide the vertices by $k$ disjoint groups such that any two vertices from the same group are not adjacent.
\end{defin}

In the same way one can define complete $k$-partite graph:

\begin{defin}
    A $k$-partite graph is called \bf{complete $k$-partite} graph if any two vertices from the different groups are adjacent.
\end{defin}

\begin{thm}
A graph with $\lambda_{t+1}(\Delta)>\lambda_{t}(\Delta)=...=\lambda_2(\Delta)=1$ must be a complete $(n-t+1)$-partite graph, where $2\le t\le n-1$.
\end{thm}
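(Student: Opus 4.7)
The plan is to translate the spectral hypothesis into two linear-algebraic facts about $A$ and $\tilde A:=D^{-1/2}AD^{-1/2}$, and then use Cauchy interlacing to rule out an induced $K_2+K_1$ in $G$. Since a graph contains no induced $K_2+K_1$ iff its complement is $P_3$-free iff its complement is a disjoint union of cliques, this forbidden-subgraph characterization identifies $G$ as complete multipartite; the number of parts will fall out of the rank of $A$.

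Set $k:=n-t+1$. Because $\Delta=I-\tilde A$, the eigenvalues of $\tilde A$ are $\mu_i=1-\lambda_i(\Delta)$, so the hypothesis says $\tilde A$ has exactly one positive eigenvalue (equal to $1$), has $0$ as an eigenvalue of multiplicity $t-1$, and has $k-1$ strictly negative eigenvalues. Moreover, $\Delta x=x$ is equivalent, via $y=D^{-1/2}x$, to $Ay=0$; hence $\dim\ker A=t-1$ and $\operatorname{rank} A=k$. The assumption $\lambda_2>0$ forces $G$ to be connected, since the multiplicity of $0$ in the $\Delta$-spectrum equals the number of connected components of $G$.

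Suppose for contradiction there exist distinct $u,v,w$ with $u\sim v$ and $w\not\sim u,v$. Using connectedness, fix a shortest path $w=x_0,x_1,\ldots,x_p=u$; one checks that $p\ge 2$ and $x_1\notin\{u,v,w\}$. According to whether $x_1\sim u$ and whether $x_1\sim v$, the induced subgraph on $\{u,v,w,x_1\}$ is one of: a $P_4$, a ``paw'' (triangle $\{u,v,x_1\}$ with pendant edge $x_1w$), or $K_2+K_2$. Let $M$ be the corresponding $4\times 4$ principal submatrix of $\tilde A$. In every case $M$ has zero diagonal and nonnegative entries, so $\operatorname{tr} M=0$, and a direct computation yields $\det M = a^2 d^2>0$, where $a=\tilde A_{uv}$ and $d=\tilde A_{x_1 w}$ are the weights on the perfect matching $\{uv,\,x_1w\}$ of the four vertices. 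Since the four real eigenvalues of $M$ sum to $0$ and have positive product, an elementary count forces exactly two positive and two negative eigenvalues. Cauchy interlacing then implies $\tilde A$ has at least two positive eigenvalues, contradicting the previous paragraph. Therefore $G$ contains no induced $K_2+K_1$ and is complete multipartite.

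To identify the number of parts, observe that in $K_{n_1,\ldots,n_\ell}$ the row of $A$ at a vertex of the $a$-th part equals $\mathbf{1}_V-\mathbf{1}_{V_a}$; these $\ell$ vectors are linearly independent whenever $\ell\ge 2$, so $\operatorname{rank} A=\ell$. Combined with $\operatorname{rank} A=k=n-t+1$ (and $k\ge 2$ since $t\le n-1$), this gives $\ell=n-t+1$, completing the proof. The main obstacle is the $4\times 4$ signature analysis: although no single case is hard, one must verify the uniform conclusion $(2,2,0)$ across the three induced-subgraph cases, and the cleanest route is to bypass explicit eigenvalues and invoke only the identities $\operatorname{tr} M=0$ and $\det M = a^2d^2>0$, together with the fact that four real numbers with zero sum and positive product must split as $2$--$2$.
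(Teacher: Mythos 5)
Your proof is correct. It shares the paper's overall architecture --- convert the spectral hypothesis into the statement that a symmetrized adjacency matrix has exactly one positive eigenvalue, forbid an induced $K_2+K_1$ by Cauchy interlacing on a four-vertex principal submatrix, then count the parts --- but each step is executed differently, and in each case your route is leaner. By working directly with $\tilde A=D^{-1/2}AD^{-1/2}$, whose eigenvalues are exactly $1-\lambda_i(\Delta)$, you bypass the paper's preliminary lemma matching the signs of $\lambda_{n-k+1}(A)$ with those of $\lambda_k(\Delta)-1$ (proved there by sandwiching Rayleigh quotients between $\frac{1}{\max_v \deg v}$ and $\frac{1}{\min_v \deg v}$ multiples of one another); that lemma is the most delicate part of the paper's argument and you never need it. In the forbidden-subgraph step the paper takes a shortest path from $w$ to the \emph{set} $\{u,v\}$, which leaves only two induced types ($P_4$ and the paw) and asserts that each has positive second adjacency eigenvalue; your path to $u$ alone admits a third type ($K_2+K_2$), but the identity $\det M=a^2d^2>0$ together with $\operatorname{tr}M=0$ settles all three cases at once and, importantly, is insensitive to the positive weights $1/\sqrt{d_id_j}$ appearing in a principal submatrix of $\tilde A$ --- a point the paper's unweighted eigenvalue claim does not have to face because it interlaces inside $A$ itself. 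Finally, you count parts via $\operatorname{rank}A=\ell$ for $K_{n_1,\dots,n_\ell}$ matched against $\dim\ker A=t-1$, where the paper instead computes the full sign pattern of the adjacency spectrum of a complete multipartite graph; your rank argument is shorter and suffices for the stated implication. Two cosmetic remarks: any neighbour $x_1$ of $w$ already works, so the shortest-path apparatus is unnecessary once you accept the $K_2+K_2$ case; and the rank step needs $\ell\ge2$, which you get for free because $G$ is connected on $n\ge t+1\ge3$ vertices and hence has an edge.
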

\begin{proof}
Denote by $\mu_1(A)\le \cdots\le \mu_n(A)$ the eigenvalues of $A$ ordered  non-decreasingly. 
We first prove an elementary observation  that for any given $k=1,\cdots,n$, the following statements hold: 
\begin{itemize}
    \item $\lambda_k(\Delta)> 1$ \ifof $\mu_{n-k+1}(A)< 0$
    \item $\lambda_k(\Delta)< 1$ \ifof  $\mu_{n-k+1}(A)> 0$ 
  \item  $\lambda_k(\Delta)= 1$ \ifof  $\mu_{n-k+1}(A)= 0$
\end{itemize}
In fact, by the relation $\Delta=I-D^{-\frac12}AD^{-\frac12}$, it is clear that if $D^{-\frac12}AD^{-\frac12} x=\eta x$ then $\Delta x=x-D^{-\frac12}AD^{-\frac12} x=(1-\eta)x$. This means that $1-\eta$ is an eigenvalue of $\Delta$, whenever $\eta$ is an eigenvalue of $D^{-\frac12}AD^{-\frac12}$. 
Now we write the eigenvalues of $D^{-\frac12}AD^{-\frac12}$ in non-decreasing order as $\eta_1(D^{-\frac12}AD^{-\frac12})\le \cdots\le \eta_n(D^{-\frac12}AD^{-\frac12})$, and then we have 
$\lambda_k(\Delta)=1-\eta_{n-k+1}(D^{-\frac12}AD^{-\frac12})$, $k=1,\cdots,n$. By the min-max theorem of Rayleigh quotient, it is clear that for any $i=1,\cdots,n$,
\begin{align*}
\eta_{i}(D^{-\frac12}AD^{-\frac12})&=\min_{X \in { \mathcal{H}^{i}}}\max_{x\in X}\frac{(D^{-\frac12}AD^{-\frac12}x,x)}{(x,x)} = 
\\&=\min_{X \in { \mathcal{H}^{i}}}\max_{x\in X}\frac{\sumt_{(u,v)\in E(G)}\frac{x_ux_v}{\sqrt{\deg u\cdot\deg v}}}{x_1^2+\cdots+x_n^2} \stackrel{y_v=\frac{x_v}{\deg v}}{=}
\\&=\min_{X \in { \mathcal{H}^{i}}}\max_{y\in X}\frac{\sumt_{(u,v)\in E(G)}y_uy_v}{\sumt_{v=1}^n\deg v\cdot y_v^2},
\end{align*}
$$\mu_{i}( A)=\min_{X \in { \mathcal{H}^{i}}}\max_{y\in X}\frac{\sumt_{(u,v)\in E(G)}y_uy_v}{\sum_{v=1}^n  y_v^2}.$$
Since for any $y\in \rR^n$, the quantity $$\frac{\sumt_{(u,v)\in E(G)}y_uy_v}{\sumt_{v=1}^n\deg v\cdot y_v^2}$$ lies between $$\frac{1}{\maxt_{v\in V(G)} \deg v} \frac{\sumt_{(u,v)\in E(G)}y_uy_v}{\sum_{v=1}^n  y_v^2}
~~\text{ and }~~ \frac{1}{\mint_{v\in V(G)} \deg v} \frac{\sumt_{(u,v)\in E(G)}y_uy_v}{\sum_{v=1}^n  y_v^2} .$$ From this, we can immediately get that $\mu_{i}( A)>0$ (or $<0$) if and only if $\eta_{i}(D^{-\frac12}AD^{-\frac12})>0$ (or $<0$). 
Indeed, all basic properties for the eigenvalue 0 of an  adjacency matrix can be translated into the language of the eigenvalue 1 of the corresponding normalized Laplacian. 

Now, let $G$ be a connected graph which is not complete. If $G$ is not a
complete multipartite graph, 
then $G$ has three vertices $u,v,w$ such that $u\sim v$ (i.e., $(u,v)\in E(G)$), $(u,w)\not\in E(G)$ and $(v,w)\not\in E(G)$. Consider a shortest path from $w$ to $\{u,v\}$,  we see
that $G$ has two possible types of induced subgraphs (up to symmetry):

\begin{center}
\begin{tikzpicture}[scale=2]
\draw (0,1)--(0,-1)--(1,0)--(2,0);
\draw[dashed] (2,0)--(2.5,-0.5);
\node (1) at (2.5,-0.5) {$\bullet$};
\node (1) at (2.3,-0.5) {$w$};
\node (1) at (0,1) {$\bullet$};
\node (1) at (0,-1) {$\bullet$};
\node (1) at (1,0) {$\bullet$}; 
\node (1) at (2,0) {$\bullet$}; 
\node (1) at (-0.2,1) {$u$};
\node (1) at (-0.2,-1) {$v$};
\node (1) at (1,0.2) {$w'$}; 
\node (1) at (2,0.2) {$w''$}; 
\end{tikzpicture}
~~~~~~~~~~~~~~~~~~
\begin{tikzpicture}[scale=2]
\draw (0,1)--(0,-1)--(1,0)--(2,0);
\draw (0,1)--(1,0);
\draw[dashed](2,0)--(2.5,-0.5);
\node (1) at (2.5,-0.5) {$\bullet$};
\node (1) at (2.3,-0.5) {$w$};
\node (1) at (0,1) {$\bullet$};
\node (1) at (0,-1) {$\bullet$};
\node (1) at (1,0) {$\bullet$}; 
\node (1) at (2,0) {$\bullet$}; 
\node (1) at (-0.2,1) {$u$};
\node (1) at (-0.2,-1) {$v$};
\node (1) at (1,0.2) {$w'$}; 
\node (1) at (2,0.2) {$w''$}; 
\end{tikzpicture}    
\end{center}

For the graph on the left hand side (i.e., the path graph on  the four vertices  $u,v,w',w''$), the adjacency matrix is 
$$B_1:=\left(\begin{array}{cccc}
   0  & 1& 0 &0 \\
   1  & 0& 1 &0 \\
     0  & 1& 0 &1 \\
        0  & 0& 1 &0 \\
\end{array}\right).$$
The characteristic polynomial for this matrix is $\mu^4-3\mu^2+1 = (\mu^2+\mu-1)(\mu^2-\mu-1)$, thus the second largest eigenvalue $\mu_3$ of the above adjacency matrix is $\mu_3= \frac{\sqrt{5}-1}{2}\approx 0.618$.  While for the graph on the right hand side (i.e., the graph induced on the four vertices  $u,v,w',w''$), the adjacency matrix is
$$B_2:=\left(\begin{array}{cccc}
   0  & 1& 1 &0 \\
   1  & 0& 1 &0 \\
     1  & 1& 0 &1 \\
        0  & 0& 1 &0 \\
\end{array}\right).$$
The characteristic polynomial for this matrix is $\mu^4-4\mu^2-2\mu+1 = (\mu+1) (\mu^3-\mu^2-3\mu+1)$, thus the second largest eigenvalue $\mu_3$ of the adjacency matrix satisfies $\mu_3\approx  0.3111$\footnote{It can be calculated using online program, e.g. \url{https://eigenvalues-calculator.bchrt.com/}}.  Since 
both of these two 
graphs on the four vertices  $u,v,w',w''$ have the second largest eigenvalue larger than 0, and either $B_1$ or $B_2$ is a principle submatrix of $A$, we have $$0<\min\{\mu_3(B_1),\mu_3(B_2)\}\le \mu_{3+n-4}(A)= \mu_{n-1}(A)$$ by interlacing (Theorem \ref{thm:Cauchy-Interlacing}). Thus, $\lambda_2(\Delta)<1$ which contradicts with the assumption. 
Hence, $G$ is a complete multipartite graph.

Suppose that $G$ is a complete multipartite graph on $k$ parts, where $k\ge 2$. Let's denote them by $V_1,\cdots,V_k$.  Then we find that $Ax=0$ for any $x$ with the property that $\sum_{j\in V_i}x_j=0$, $\forall i=1,\cdots,k$. Let's denote by $n_i=|V_i|$. For each $V_i$ the submatrix $A(V_j)$ has 0 eigenvalue with multiplicity equal to $n_j-1$. Thus, 0 is an eigenvalue of $A(G)$ and its multiplicity is $n-k$. 

Now let's prove that there is only one positive eigenvalue of $A$. Suppose that $\lambda>0$ and $Ax=\lambda x$. Then $\forall j\in V_i,\, i=1,\cdots,k$
\begin{equation}\label{eq:ck-partite}
(Ax)_j = \sum_{v\not\in V_i}x_v=\lambda x_j.
\end{equation}
Clearly, $x_j=x_{j'}$ for any $j,j'\in V_i$, and thus for simplicity, we write $y_i=x_j$ for $j\in V_i$. It follows from \eqref{eq:ck-partite} that
\begin{equation}\label{eq:ck-partite2}\sum_{i=1}^k n_iy_i=\sum_{v\in V}x_v=(n_i+\lambda)x_j=(n_i+\lambda)y_i,\,\forall j\in V_i,\, i=1,\cdots,k.
\end{equation}
Hence, $(n_1+\lambda)y_1=\cdots=(n_k+\lambda)y_k$. Since $n_i+\lambda>0$, and there exists $i\in\{1,\cdots,k\}$ such that $y_i\ne0$, we have $(n_1+\lambda)y_1=\cdots=(n_k+\lambda)y_k\ne 0$. 
Without loss of generality, we assume $(n_i+\lambda)y_i=c\ne 0$, $\forall i$. Then $y_i=c/(n_i+\lambda)$ and according to \eqref{eq:ck-partite2}, we obtain
$$ \sum_{i=1}^k n_iy_i=c\sum_{i=1}^k \frac{n_i}{n_i+\lambda}=(n_i+\lambda)y_i=c $$
implying that $f(\lambda)=1$ where 
$$f(t)=\sum_{i=1}^k \frac{n_i}{n_i+t},\; t\in[0,+\infty).$$
Clearly, 
$$f'(t)=-\sum_{i=1}^k\frac{n_i}{(n_i+t)^2}<0,\;\forall t\ge 0,$$
Since $f$ is continuous and  strictly decreasing on $[0,+\infty)$, $f(0)=k\ge 2$ and $\lim_{t\to+\infty}f(t)=0$, there is a unique point $\lambda\in (0,+\infty)$ satisfying $f(\lambda)=1$. (The root of the equation $f(\lambda)=1$ can be viewed as some kinds of Loomis's lemma). 
Therefore, combining Perron-Frobenius theorem (see Theorem \ref{thm:Perron-Frobenius}) and Statement \ref{st:connected-Perron}, $A$ has exactly one positive eigenvalue $\lambda$ whose multiplicity is 1.  
Then, we have that $\lambda_{n-1}(A)=\cdots=\lambda_{n-(n-k)}(A)=0>\lambda_{k-1}(A)$, which derives that $\lambda_{n+1-(k-1)}(\Delta)>\lambda_{n+1-k}(\Delta)=\cdots=\lambda_2(\Delta)=1$. Taking $k=n-t+1$, we obtain the desired relation in the theorem.
\end{proof}

\subsection{Cheeger inequality}\label{sc:cheeger} 

We consider a subset $S$ of the vertex set $V(G)$, and its complement
$\overline{S}=V(G)\setminus S$. We define the \emph{volume} of $S$ as
$$\vol (S)=  \sum_{v\in S} d_v,$$
and the \emph{boundary measure} of $S$ as
$$|\partial S|=\bigl|E(S,\overline{S})\bigr|=\Bigl\|\bigl\{(i,j)\in E(G):i\in S,j\not\in S\bigr\}\Bigr\|.$$
The reason that we call the set $\partial S:=\{(i,j)\in E(G);i\in S,j\not\in S\}$ the \emph{boundary} of $S$ is that each edge $(i,j)$ from $\partial S$ cross both $S$ and $\overline{S}$. This property is in the same spirit of boundary in topology -- a point is called a boundary point of a set $S$ if every (closed) neighbourhood of such point intersects both $S$ and the complement $\overline{S}$.

In fact, when we consider a graph $G$ as a one-dimensional simplicial complex \cite{Forman} (i.e., a set-family $\mathcal{K}$ of $V:=\{1,\cdots,n\}$ such that each $A\in \mathcal{K}$ is  a singleton $\{i\}$ or a set $\{i,j\}$ of cardinality 2 satisfying if $A\in \mathcal{K}$ and $A_1\subset A$, then $A_1\in \mathcal{K}$), the smallest closed star neighbourhood of 
an edge is the collection of its two end-vertices and such edge itself (i.e., for any edge $(i,j)$, its closed star neighbourhood can be expressed as $\{\{i\},\{j\},\{i,j\}\}$ and since we can identity $\{i,j\}$ with $(i,j)$, and identity $i$ with $\{i\}$, the neighbourhood will be simply written as $\{i,j,(i,j)\}\}$ under  the graph language). So, every neighbourhood of such edge $(i,j)\in\partial S$ intersects both  $S$ and the complement $\overline S$. 

\begin{center}
\begin{tikzpicture}[scale=2]
\draw[dashed,fill=orange!30](0,0) circle(0.3cm);
\draw (0,1)--(0,-1);
\node (1) at (0,0) {\color{orange}$\bullet$}; 
\node (1) at (-0.5,0.4) {$S$}; 
\node (1) at (0.5,0.4) {$\overline S$}; 
\end{tikzpicture}~~~~~~~~~~~~~~~~~~~~~~~~~~~~~~~~
\begin{tikzpicture}[scale=2]
\draw[very thick,orange!70] (-0.3,0)--(0.3,0);
\draw (0,1)--(0,-1);
\node (1) at (-0.3,0) {\color{orange}$\circ$}; 
\node (1) at (0.3,0) {\color{orange}$\circ$}; 
\node (1) at (-0.5,0.4) {$S$}; 
\node (1) at (0.5,0.4) {$\overline S$}; 
\end{tikzpicture}
\end{center}
\NB Note that in the left picture, in the original topological setting, any small neighbourhood (marked in orange) of a boundary point intersects the sets $S$ and $\overline{S}$. 
In the same spirit, on the right picture for the setting of graph, the smallest closed  neighbourhood (marked in orange) of any boundary edge (marked in orange) has one endpoint in  $S$ and the other in  $\overline{S}$.

\vspace{0.2cm}

In the case of normalized Laplacian we can give another estimation for the second eigenvalue $\l_2(\Delta)$. Let's 
introduce the \textbf{\emph{
Cheeger constant}}
$$
h:=  \min_{S\text{ nonempty proper subset of } V}\frac{\bigl|E(S,\overline{S})\bigr|}{\min \bigl\{\mathrm{vol}(S),\mathrm{vol}(\overline{S})\bigr\}}
$$
Now let us state a strong version of the  Cheeger inequality \cite{Chung}:
\begin{thm}\label{th:Cheeger}
$$
1-\sqrt{1- h^2}\le  \lambda_2(\Delta) \le 2 h.
$$
\end{thm}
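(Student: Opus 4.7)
The plan is to prove the two inequalities separately. For the easy upper bound $\lambda_2(\Delta)\le 2h$, I will invoke the variational formula from the preceding lemma,
$$\lambda_2 = \min\left\{\frac{\sum_{\{i,j\}\in E}(y_i-y_j)^2}{\sum_i d_i y_i^2} : \sum_i d_i y_i = 0\right\},$$
and plug in a two-valued test function. Given an optimal subset $S$ for the Cheeger constant, I set $y_i=\mathrm{vol}(\overline S)$ for $i\in S$ and $y_i=-\mathrm{vol}(S)$ for $i\in\overline S$. The orthogonality $\sum_i d_i y_i=0$ is immediate, the numerator collapses to $|E(S,\overline S)|\,\mathrm{vol}(V)^2$, and the denominator to $\mathrm{vol}(S)\,\mathrm{vol}(\overline S)\,\mathrm{vol}(V)$. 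Simplifying gives $\lambda_2\le |E(S,\overline S)|\left(\tfrac{1}{\mathrm{vol}(S)}+\tfrac{1}{\mathrm{vol}(\overline S)}\right)\le 2h$.

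For the harder lower bound $\lambda_2\ge 1-\sqrt{1-h^2}$, I can assume $\lambda_2\le 1$ since otherwise the inequality holds trivially. The approach is the classical Cheeger sweep applied to the eigenfunction. Let $f$ be an eigenvector of $\Delta$ for $\lambda_2$ and set $g=D^{-1/2}f$, so that $Lg=\lambda_2 Dg$ and $\sum_i d_i g_i=0$. Split $g=g_+-g_-$ into positive and negative parts on disjoint supports; by flipping sign if necessary I may assume $\mathrm{vol}(\mathrm{supp}(g_+))\le\mathrm{vol}(V)/2$. Writing $\phi=g_+$, a key preliminary inequality is
$$\sum_{\{i,j\}\in E}(\phi_i-\phi_j)^2\le \lambda_2\sum_i d_i\phi_i^2,$$
which I would derive from the edgewise estimate $(\phi_i-\phi_j)^2\le (g_i-g_j)(\phi_i-\phi_j)$ (a short algebraic check using that $g_+,g_-$ have disjoint supports) and then summing against $\phi$ via the eigenvalue equation. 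In other words, the Rayleigh quotient $R(\phi)$ satisfies $R(\phi)\le\lambda_2$.

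The core of the argument is a level-set bound. For each $s>0$ let $S_s=\{i:\phi_i^2\ge s\}$. Because every such $S_s$ is contained in $\mathrm{supp}(\phi)$, whose volume is at most $\mathrm{vol}(V)/2$, the definition of $h$ yields $|E(S_s,\overline{S_s})|\ge h\cdot\mathrm{vol}(S_s)$. Integrating in $s$ via the layer-cake identities $\sum_i d_i\phi_i^2=\int_0^\infty\mathrm{vol}(S_s)\,ds$ and $\sum_{\{i,j\}\in E}|\phi_i^2-\phi_j^2|=\int_0^\infty |E(S_s,\overline{S_s})|\,ds$, I obtain
$$\sum_{\{i,j\}\in E}|\phi_i^2-\phi_j^2|\ge h\sum_i d_i\phi_i^2.$$
Cauchy-Schwarz applied to $|\phi_i^2-\phi_j^2|=|\phi_i-\phi_j|\cdot|\phi_i+\phi_j|$, together with the identity
$$\sum_{\{i,j\}\in E}(\phi_i-\phi_j)^2+\sum_{\{i,j\}\in E}(\phi_i+\phi_j)^2=2\sum_i d_i\phi_i^2,$$
lets me bound the two factors by $R(\phi)\sum_i d_i\phi_i^2$ and $(2-R(\phi))\sum_i d_i\phi_i^2$ respectively, giving
$$h^2\le R(\phi)\bigl(2-R(\phi)\bigr)\le \lambda_2(2-\lambda_2),$$
where the last step uses monotonicity of $x\mapsto x(2-x)$ on $[0,1]$ together with $R(\phi)\le\lambda_2\le 1$. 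Rearranging yields $(1-\lambda_2)^2\le 1-h^2$, hence $\lambda_2\ge 1-\sqrt{1-h^2}$.

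The main obstacles I anticipate are the edgewise positivity-part inequality (requiring a small case analysis on edges whose two endpoints lie in $\mathrm{supp}(g_+)$, $\mathrm{supp}(g_-)$, or one of each) and, more crucially, the insistence on the sharp constant $2-R(\phi)$ rather than the crude $2$: replacing the finer identity above by $(\phi_i+\phi_j)^2\le 2(\phi_i^2+\phi_j^2)$ would only yield the weaker bound $\lambda_2\ge h^2/2$, whereas the refined bookkeeping is precisely what upgrades the conclusion to the stronger Chung form $1-\sqrt{1-h^2}$.
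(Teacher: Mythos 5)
Your proposal is correct and follows essentially the same route as the paper's proof: a two-valued test function in the Rayleigh quotient for the upper bound, and for the lower bound the restriction to the positive part of the eigenfunction (with $R(\phi)\le\lambda_2$), the layer-cake/co-area bound $\sum_{\{i,j\}\in E}|\phi_i^2-\phi_j^2|\ge h\sum_i d_i\phi_i^2$, and the Cauchy--Schwarz step yielding $h^2\le R(\phi)(2-R(\phi))$. The only differences are cosmetic (normalization of the test vector, and passing through $\lambda_2(2-\lambda_2)$ rather than solving the quadratic inequality for $R(\phi)$ directly).
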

\begin{proof}
Let the vertex set $V$ be divided  into the two disjoint sets $U,\overline{U}$ of
nodes, and let $U$ be the one with the smaller volume. Let's define $x = (x_i)\in \rR^n$ such as
    $$
        x_i = \begin{cases}
            1, & i\in U,\\
            -\a, & i\in \overline{U},
        \end{cases}
    $$
for positive $\alpha > 0$ such that the equation $\sum_{i\in V}x_i d_i =0$ holds, that is, $$\sum_{i\in U} d_i - \alpha\sum_{i \in \overline{U}} d_i  =0.$$
Since 
$\overline{U}$ is the subset with the larger volume 
$\sum_{i\in\overline{U}} d_i$, we have $\alpha \le 1$. Thus, for
our choice of $x$, the quotient  becomes 
\begin{eqnarray*}
  \lambda_2&=& \min \left\{\left.\frac{\sum\limits_{(i,j)\in E(G)}
  (x_i-x_j)^2}{\sum_i d_i\ 
  x_i^2}\right| \sum_i d_i\  x_i=0 \right\}\le\\
&\le&
\frac{(1+\alpha)^2\bigl|E(U,\overline{U})\bigr|}{\sum_{i \in U} d_i\  +\sum_{i
\in \overline{U}} d_i\ 
      \alpha^2} = \frac{(1+\alpha)^2\bigl|E(U,\overline{U})\bigr|}{\sum_{i \in U} d_i  + \sum_{i\in U} d_i  \, \alpha}=\frac{(1+\alpha)\bigl|E(U,\overline{U})\bigr|}{\sum_{i\in U} d_i}\le\\
&\le&
      2\frac{\bigl|E(U,\overline{U})\bigr|}{\sum_{i\in U}d_i}=
  2\frac{\bigl|E(U,\overline{U})\bigr|}{\mbox{vol}(U)}.
\end{eqnarray*}
Since this holds for
all such splittings of our graph $G$, we obtain 
the upper bound
$$ \lambda_2 \le 2 h.$$
Next, we turn to the hard part. Let $y$ be an eigenvector corresponding to $\lambda_2$. Since $\sum_i y_i=0$ and $y\ne0$, we have $\{i\in V:y_i>0\}\ne\varnothing$ and $\{i\in V:y_i<0\}\ne\varnothing$. 
Without loss of generality, we may assume that $$\sum_{i\in V:y_i>0}d_i\le \frac12\sum_{i\in V}d_i.$$
Let $x$ be the restriction of $y$ onto $\{i\in V:y_i>0\}$, that is, $$x_i=\begin{cases}
    y_i, &\text{ for }i\in V\text{ with }y_i>0\\
    0, &\text{ otherwise}.\\ 
\end{cases}$$
We shall first prove that $$R(x):=\frac{\sum\limits_{(i,j)\in E(G)}
  (x_i-x_j)^2}{\sum_i d_i\ 
  x_i^2}\le \lambda_2.$$
  For $i\in V$ with $x_i>0$, 
$$(L x)_i=\sum_{j\in V:(i,j)\in E}(x_i-x_j)\le\sum_{j\in V:(i,j)\in E}(y_i-y_j) =\lambda_2 d_iy_i=\lambda_2d_i x_i,
$$
meaning that 
$$(L x,x):=\sum_{i\in V} (L x)_ix_i=\sum_{i\in V:x_i>0} (L x)_ix_i\le \sum_{i\in V:x_i>0} \lambda_2 d_ix_ix_i=\lambda_2\sum_{i\in V}d_ix_i^2.$$
Thus, we have $R(x)\le \lambda_2$.

Also, it follows from 
$$\frac{\sum\limits_{(i,j)\in E(G)}
  (x_i+x_j)^2}{\sum_i d_i\ 
  x_i^2}=2-\frac{\sum\limits_{(i,j)\in E(G)}
  (x_i-x_j)^2}{\sum_i d_i\ 
  x_i^2} = 2-R(x)
  $$  
  that
  \begin{equation}\label{eq:R(x)(2-R(x))}
\left(\frac{\sum\limits_{(i,j)\in E(G)}
  \bigl|x_i^2-x_j^2\bigr|}{\sum_i d_i\ 
  x_i^2}\right)^2\le \frac{\sum\limits_{(i,j)\in E(G)}
   (x_i-x_j)^2\sum\limits_{(i,j)\in E(G)}
   (x_i+x_j)^2}{\bigl(\sum_i d_i\ 
  x_i^2\bigr)^2}=R(x)\bigl(2-R(x)\bigr).      
  \end{equation}

  Next we shall use a standard procedure in discrete analysis to show $$\frac{\sum\limits_{(i,j)\in E(G)}
  \bigl|x_i^2-x_j^2\bigr|}{\sum_i d_i\ 
  x_i^2}\ge h.$$
  In fact, taking $V_t=\{i\in V:x_i^2>t\}$ with $t\ge0$, and taking $M=\maxt_{i\in V}x_i^2$, we have 
  $$
  \int_0^M vol(V_t)dt=\int_0^M \sum_{i\in V_t}d_i dt
  =\int_0^M \sum_{i\in V}1_{t<x_i^2}d_i dt=\sum_{i\in V}\int_0^M 1_{t<x_i^2}d_i dt=\sum_{i\in V}x_i^2d_i, $$ where 
  $$1_{x_i^2>t}=\begin{cases}
    1, &\text{ for } t<x_i^2,\\
    0, &\text{ otherwise},\\ 
\end{cases}$$
  
  and
  \begin{align*}
\int_0^M |\partial V_t|dt&=\int_0^M \sum_{(i,j)\in E(G):i\in V_t,j\not\in V_t}1 dt
  =\int_0^M \sum_{(i,j)\in E(G)}1_{x_j^2\le t<x_i^2} dt = \\&=\sum_{(i,j)\in E(G)}\int_0^M 1_{x_j^2\le t<x_i^2} dt=\sum_{(i,j)\in E(G)} \bigl|x_i^2-x_j^2\bigr|.     
  \end{align*}
  Now, take $t_0\in[0,M]$ such that $$\frac{|\partial V_{t_0}|}{vol(V_{t_0})}=\min_{0\le t\le M}\frac{|\partial V_t|}{vol(V_t)}.$$
  Then, $vol(V_{t_0})\le vol(V_{0})=vol\bigl(\{i\in V:x_i>0\}\bigr)\le vol (V)/2$ which implies $vol(V_{t_0})\le vol(V\setminus V_{t_0})$ and thus 
  $$h\le \frac{|\partial V_{t_0}|}{\min\bigl\{vol(V_{t_0}),vol(V\setminus V_{t_0})\bigr\}}=\frac{|\partial V_{t_0}|}{vol(V_{t_0})}\le \frac{|\partial V_t|}{vol(V_t)}$$
   for any $0\le t\le M$. So,  
  $$|\partial V_t|\ge   vol(V_t) h  $$
and hence
$$\int_0^M |\partial V_t|dt\ge h \int_0^M vol(V_t)dt.$$
Therefore, together with all the above facts, we get
$$  \frac{\sum\limits_{(i,j)\in E(G)}
  \bigl|x_i^2-x_j^2\bigr|}{\sum_i d_i\ 
  x_i^2}=\frac{\int_0^M |\partial V_t|dt}{\int_0^M vol(V_t)dt} \ge h. $$
By \eqref{eq:R(x)(2-R(x))}, we have $$R(x)\bigl(2-R(x)\bigr)\ge h^2$$ 
which implies $$1-\sqrt{1-h^2}\le R(x)\le 1+\sqrt{1-h^2}.$$
We finally obtain 
$$\lambda_2\ge R(x)\ge 1-\sqrt{1-h^2}.$$
The whole proof is then completed.
\end{proof}

Note that $0\le h\le1$, and this yields $0\le 1-h^2\le \Bigl(1-\frac{h^2}{2}\Bigr)^2$ which is equivalent to  $h^2/2\le1-\sqrt{1-h^2}$. 
Then, by the strong  Cheeger inequality in Theorem \ref{th:Cheeger}, we immediately obtain the usual Cheeger inequality $h^2/2\le \lambda_2$.

\section{Graph $p$-Laplacian}

To illustrate the $p$-Laplacian on graphs, we need to recall the formulation of the classical $p$-Laplacian $L_p$ acting on a smooth function $f$:
$$L_p f=-\mathrm{div}\bigl(|\nabla f|^{p-2}\nabla f\bigr)$$
where $\nabla$ and $\mathrm{div}$ represent 
the gradient operator and the divergence operator, respectively.

Let  $G=(V,E)$ be a finite, undirected, simple graph with $n$ vertices and $m$ edges. 
Along this spirit, we define the gradient $\nabla:C(V)\to C(E)$ as $$\bigl(\nabla x\bigr) \bigl([i,j]\bigr)=x_j-x_i,$$
where 
$$C(V):=\{\text{real-valued functions on }V\}\cong \rR^n$$ 
and
$$C(E):=\{\text{real-valued functions on }E\}\cong \rR^m.$$
Now we can think of an undirected graph $G$ as a directed graph with two orientations, denoted by $[i,j]$ and $[j,i]$, on each edge $\{i,j\}$. 
And we shall fix an orientation. 
A vector field $\psi$ is a map $\psi:E\to \rR$ with the property $\psi([i,j])=-\psi([j,i])$ for every $i\sim j$. Note that this concept is similar to the flow introduced in Section \ref{sect:flow}.

The divergence $\mathrm{div}:C(E)\to C(V)$ is defined by $$\mathrm{div}\psi(i)=\sum_{j\in V:j\sim i}
\psi([i,j]).$$
We remark here that due to the anti-symmetry of $\psi$, for any nonempty proper subset $S\subset V$, $$\mathrm{div}\psi(S):=\sum_{(i,j)\in E(G):i\in S,j\not\in S}\psi([i,j])$$
indicates the `rate of flow' from $S$ to its complement $V\setminus S$. That is the reason why we call $\mathrm{div}$ the divergence on graphs.

For   $p>1$, the graph $p$-Laplacian $L_p:C(V)\to C(V)$ is defined by 
$$L_px=-\mathrm{div}(|\nabla x|^{p-2}\nabla x).$$
Clearly, if $p=2$, we obtain
 $L_2x=-\mathrm{div}(\nabla x)$ which is indeed the standard Laplacian matrix.

Let $\phi_p:\rR\to \rR$ be defined via $\phi_p(t)=|t|^{p-2}t$. 
 Then 
$$(L_px)_i=\sum_{j\in V:j\sim i}\phi_p(x_i-x_j),\;\forall i\in V.$$

By setting $\delta=\nabla$ and $\delta^*=-\mathrm{div}$, we have  $$L_p=\delta^*\phi_p\delta.$$
It should be noted that $\delta=B$ is actually the incidence matrix of $G$, and the adjoint operator $\delta^*$ is the transport $B^\top$. 

\subsection{Unnormalized version of the $p$-Laplacian  eigenproblem}
We say $({\lambda},{x})\in \rR\times (C(V)\setminus \{0\})$ is an eigenpair, if it satisfies the eigen-equation 
$$L_p {x}={\lambda}\,   
\phi_p(x)$$ i.e., 
\begin{equation}\label{eq:componentwise-pLap}
\sum_{j\in V:j\sim i}\phi_p(x_i-x_j)={\lambda}\,
\phi_p(x_i),\;\; i\in V.    
\end{equation}
The eigenvalues / eigenfunctions (also   called \emph{eigenvectors}) of $L_p$ are the critical values / critical points of the $p$-Rayleigh quotient 
$$
\mathcal{R}_p(x)=\frac{\|\nabla x\|_p^p}{\|x\|_{p}^p}:=\frac{\sumt_{(i,j)\in E(G)} |x_i-x_j|^{ p}}{\sum_{i\in V} |x_i|^{ p}}
$$

For $p=1$, the eigen-equation for $L_1$ reduces to the 
differential inclusion:
$$0\in \partial_x
\sum_{(i,j)\in E(G)}|x_i-x_j|
-\lambda\,\partial_x \sum_{i\in V} |x_i|.$$
And inspired by this inclusion condition, we 
define
$$L_1x:=\partial_x 
\sum_{(i,j)\in E(G)}|x_i-x_j|=\lim\limits_{p\to1, \hat{x}\to x}L_p \hat{x}$$
i.e., the limit points of $L_p \hat{x}$ for $p$ tends to 1, and $\hat{x}$ tends to $x$. 
However, for the sake of brevity in this section, 
we  focus only on $L_p$ when $p>1$.

\begin{st}\label{st:nontrivial-eigenvector}
If $x$ is an eigenvector 
corresponding to a nonzero 
eigenvalue of $L_p$, then 
$$\sum_{i=1}^n
\phi_p(x_i)=0.$$
\end{st}
\begin{proof}
Since $({\lambda},{x})$ is an eigenpair of $L_p$ with $\lambda\ne0$, 
$$\sum_{j\in V:j\sim i}\phi_p(x_i-x_j)={\lambda}\,
\phi_p(x_i),\;\; i=1,\cdots,n.$$
Summing up the above $n$ equations, we get
$$\sum_{i\in V}\sum_{j\in V:j\sim i}\phi_p(x_i-x_j)={\lambda}\,
\sum_{i\in V}\phi_p(x_i).$$
Note that the left hand side equals $$\sum_{(i,j)\in E(G)}\big(\phi_p(x_i-x_j)+\phi_p(x_j-x_i)\big)=0$$
where we used the easy fact that $\phi_p(t)+\phi_p(-t)=0$, $\forall t\in\rR$. 
This implies that the right hand side is zero, and  thus by $\lambda\ne0$, we have
$$\sum_{i\in V}\phi_p(x_i)=0.$$
\end{proof}

\begin{st}
For any eigenvalue $\lambda$ of $L_p$, $\lambda\le 2^{p-1}d$ where $d$ is the maximum degree of the graph. The equality $\lambda= 2^{p-1}d$ holds iff. $G$ has a regular bipartite component containing a vertex realizing the maximum degree.
\end{st}
\begin{proof}
    Since $|x_i-x_j|^p\le 2^{p-1}(|x_i|^p+|x_j|^p)$, we have
$$\sum_{(i,j)\in E(G)} |x_i-x_j|^p\mathop{\le}\limits^{(*)} 2^{p-1}\sum_{(i,j)\in E(G)}(|x_i|^p+|x_j|^p)=2^{p-1}\sum_{i\in V}d_i|x_i|^p\mathop{\le}\limits^{(**)} d2^{p-1}\sum_{i\in V}|x_i|^p.$$

The first inequality (*) holds \ifof $x_i=-x_j$ for any $(i,j)\in E(G)$, \ifof $G$ has a   bipartite connected component. 

The second inequality (**) holds \ifof $d_i=d$ for any $i\in V$ with  $x_i\ne0$. 
Combining the above 
facts on the equality case for (*) and (**), we get that  the whole equality $\sum_{(i,j)\in E(G)} |x_i-x_j|^p=d2^{p-1}\sum_{i\in V}|x_i|^p$ holds \ifof  $G$ has a regular bipartite component that contains vertices attaining 
maximum degree. 
It then follows from $$\lambda=\mathcal{R}_p(x)=\frac{\sumt_{(i,j)\in E(G)} |x_i-x_j|^p}{\sumt_{i\in V}|x_i|^p}$$
that the desired conclusion in the statement is true.
\end{proof}

Finally, we shall compute the $p$-Laplacian eigenvalues of complete graphs. This was done in a paper by Amghibech \cite{Amghibech}, but the original computation  contains a small mistake. Here we present a corrected proof.

\begin{thm}\label{thm:complete}
Let $G=(V,E)$ be a complete graph with $V=\{1,\cdots,n\}$. Then the nonzero eigenvalues of $L_p$ are    $n-\alpha-\beta+(\alpha^{\frac{1}{p-1}}+\beta^{\frac{1}{p-1}})^{p-1}$, where $\alpha,\beta\in\mathbb{Z}_+$ with $\alpha+\beta\le n$.
\end{thm}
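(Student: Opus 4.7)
The plan is to establish the formula in two steps: first, construct an explicit eigenvector for each admissible pair $(\alpha, \beta)$, showing the listed values are eigenvalues; second, argue that every eigenvector associated with a nonzero eigenvalue must, up to permutation of coordinates, have the same three-valued structure.

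For the construction, fix $\alpha, \beta \ge 1$ with $\alpha + \beta \le n$, partition $V = V_+ \sqcup V_- \sqcup V_0$ with $|V_+| = \alpha$, $|V_-| = \beta$, $|V_0| = n - \alpha - \beta$, and set $x_i = a$ on $V_+$, $x_i = -b$ on $V_-$, $x_i = 0$ on $V_0$, choosing $a, b > 0$ so that $\alpha a^{p-1} = \beta b^{p-1}$, i.e.\ $b/a = (\alpha/\beta)^{1/(p-1)}$. Then I verify the three instances of \eqref{eq:componentwise-pLap} in turn: for $i \in V_0$ the equation collapses to $-\alpha a^{p-1} + \beta b^{p-1} = 0$, which holds by the choice of $a, b$; for $i \in V_+$ the left side equals $\beta(a+b)^{p-1} + (n-\alpha-\beta)a^{p-1}$, and dividing by $a^{p-1}$ together with the identity $\beta(1 + b/a)^{p-1} = (\alpha^{1/(p-1)} + \beta^{1/(p-1)})^{p-1}$ (immediate from $b/a = (\alpha/\beta)^{1/(p-1)}$) produces the claimed eigenvalue; the case $i \in V_-$ is symmetric.

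For the converse, given an eigenpair $(\lambda, x)$ with $\lambda \ne 0$, Statement \ref{st:nontrivial-eigenvector} gives $\sum_j \phi_p(x_j) = 0$, and the eigenequation is equivalent to the assertion that every coordinate $x_i$ is a root of the single univariate function $\Phi(t) := \sum_{j=1}^n \phi_p(t - x_j) - \lambda \phi_p(t)$. If $\Phi$ admits at most one positive root and at most one negative root, then $x$ takes at most two nonzero values, say $a$ and $-b$; writing $\alpha$ and $\beta$ for the numbers of coordinates equal to $a$ and $-b$ respectively reproduces the configuration of step one, the relation $\alpha a^{p-1} = \beta b^{p-1}$ coming directly from $\sum_j \phi_p(x_j) = 0$, and $\lambda$ is then forced to match the closed form.

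The main obstacle is therefore the one-positive/one-negative root bound on $\Phi$. The plan is to argue by contradiction: assume two distinct positive roots $a_1 < a_2$, subtract the two corresponding eigenequations, and use the strict monotonicity of $\phi_p$ on $\rR$ together with the structural constraint $\sum_j \phi_p(x_j) = 0$ to force $a_1 = a_2$. A direct computation of this kind can be carried out by hand in small cases such as $K_3$, and the general argument is expected to follow the same pattern, carefully tracking which values $x_j$ fall in the interval $(a_1, a_2)$; this case analysis is where the error in Amghibech's original proof appears and must be repaired.
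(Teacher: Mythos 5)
Your first step (the explicit construction of the eigenvector $a\vec1_{V_+}-b\vec1_{V_-}$ with $\alpha a^{p-1}=\beta b^{p-1}$ and the resulting closed form for $\lambda$) is correct and is essentially identical to the paper's Statement~\ref{st:a1A-b1B}. The reformulation of the eigenequation as ``every coordinate $x_i$ is a root of $\Phi(t)=\sum_j\phi_p(t-x_j)-\lambda\phi_p(t)$'' is a legitimate and rather elegant observation, since $\phi_p(0)=0$ lets you include the $j=i$ term.

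The gap is in the converse, and it is exactly the hard part of the theorem. Your plan to show $\Phi$ has at most one positive root by ``subtracting the two eigenequations and using strict monotonicity of $\phi_p$'' does not work as stated: if $a_1<a_2$ are two positive coordinate values, then subtracting gives $\sum_j\bigl(\phi_p(a_2-x_j)-\phi_p(a_1-x_j)\bigr)=\lambda\bigl(\phi_p(a_2)-\phi_p(a_1)\bigr)$, and monotonicity only tells you that \emph{both} sides are positive, so there is no contradiction to extract without a genuinely quantitative comparison. Worse, the root-counting claim is actually false at $p=2$, where $\Phi$ vanishes identically for $\lambda=n$ and the eigenspace contains vectors that are not of the form $a\vec1_A-b\vec1_B$; so any correct argument must treat $p>2$ and $1<p<2$ separately and cannot rely on a uniform statement about roots of $\Phi$. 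The paper's proof of Statement~\ref{st:eigen-aAbB} supplies the missing work: after ordering $x_1\le\cdots\le x_n$ (using the symmetry of $K_n$) and locating the sign change via Statement~\ref{st:nontrivial-eigenvector}, it sandwiches $\lambda$ between the eigenequations evaluated at $i=1$ and $i=\alpha$ (for $p>2$), respectively at $i=n$ and $i=\alpha$ (for $1<p<2$), using the constraint $\sum_k\phi_p(x_k)=0$ to swap a sum over negative coordinates for a sum over nonnegative ones and the super-/sub-additivity of $t\mapsto t^{p-1}$ to close the chain of inequalities; equality throughout then forces $x_1=\cdots=x_\alpha$, and $x\mapsto -x$ handles the positive block. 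Your proposal explicitly defers this case analysis (``the general argument is expected to follow the same pattern''), so as written it does not constitute a proof of the rigidity half of the theorem.
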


\begin{st}\label{st:a1A-b1B}
Under the above setting, if $A$ and $B$ are two disjoint nonempty subsets of $V$, then there exist $a>0$ and $b>0$ such that $a\vec1_A-b\vec1_B$ is an eigenvector corresponding to the  eigenvalue $n-|A|-|B|+(|A|^{\frac{1}{p-1}}+|B|^{\frac{1}{p-1}})^{p-1}$ of $L_p$.
\end{st}

\begin{proof}
Note that if $a\vec1_A-b\vec1_B$ is an eigenvector of $L_p$, the corresponding  eigenequation \eqref{eq:componentwise-pLap} 
can be written in component-wise form as
$$\begin{cases}
|B|(a+b)^{p-1}+(n-|A|-|B|)a^{p-1}=\lambda a^{p-1}\\
|B|b^{p-1}-|A|a^{p-1}=0\\
|A|(a+b)^{p-1}+(n-|A|-|B|)b^{p-1}=\lambda b^{p-1}.
\end{cases}$$
By solving the equations above, we determine $a,b,\lambda$ as follows:  $a=c|B|^{\frac{1}{p-1}}$, $b=c|A|^{\frac{1}{p-1}}$ and $\lambda=n-|A|-|B|+(|A|^{\frac{1}{p-1}}+|B|^{\frac{1}{p-1}})^{p-1}$, where $c\ne0$. 
\end{proof}

\begin{st}\label{st:eigen-aAbB}
If $x$ is an  
eigenvector corresponding to a positive eigenvalue of $L_p$, then there exist disjoint nonempty subsets $A,B\subset V$, and $a,b>0$, such that $x=a\vec1_A-b\vec1_B$.
\end{st}

\begin{proof}
Let $(\lambda,x)$ be an eigenpair with $\lambda>0$. Without of loss of generality, we may assume  $x_1\le\cdots\le x_n$ due to the complete symmetry of $G$. 

By Statement \ref{st:nontrivial-eigenvector}, there exists $1\le\alpha<\beta\le n$ such that
$$x_1\le\cdots\le x_\alpha<0=x_{\alpha+1}=\cdots=x_{\beta-1}=0<x_{\beta}\le\cdots\le x_n.$$
Then, the eigenequation \eqref{eq:componentwise-pLap} 
becomes
\begin{equation}\label{eq:complete:eigen-system}
\begin{cases}
\lambda=-\sum_{k=1}^i(-1+\frac{x_k}{x_i})^{p-1}+\sum_{k=i}^n(1-\frac{x_k}{x_i})^{p-1}&\text{ for }i=1,\cdots,\alpha\\
\lambda=\sum_{k=1}^i(1-\frac{x_k}{x_i})^{p-1}-\sum_{k=i}^n(-1+\frac{x_k}{x_i})^{p-1}&\text{ for }i=\beta,\cdots,n\\
\sum_{k=1}^\alpha(-x_k)^{p-1}-\sum_{k=\alpha+1}^nx_k^{p-1} =0& 
\end{cases}    
\end{equation}
We shall prove that $x_1=\cdots=x_\alpha$, and we split  the rest of the 
proof into two cases:
\begin{itemize}
 \item $p>2$

 In this case, we use 
 the above equations to express $\lambda$ for $i=\alpha$ and for $i=1$. 
 \begin{align}
\lambda&\xlongequal{i=1\text{ in }\eqref{eq:complete:eigen-system}}\sum_{k=1}^n(1-\frac{x_k}{x_1})^{p-1} \notag
\\&\le \sum_{k=1}^\alpha(\frac{x_k}{x_\alpha}-\frac{x_k}{x_1})^{p-1}+\sum_{k=\alpha+1}^n(1-\frac{x_k}{x_1})^{p-1} \label{eq:alpha-1}\\
&= \sum_{k=\alpha+1}^n(\frac{x_k}{x_1}-\frac{x_k}{x_\alpha})^{p-1}+\sum_{k=\alpha+1}^n(1-\frac{x_k}{x_1})^{p-1} \label{eq:alpha-2}
\\&\le \sum_{k=\alpha+1}^n(1-\frac{x_k}{x_\alpha})^{p-1}\label{eq:alpha-3}
\\&\le -\sum_{k=1}^\alpha(-1+\frac{x_k}{x_\alpha})^{p-1}+\sum_{k=\alpha+1}^n(1-\frac{x_k}{x_\alpha})^{p-1}\xlongequal{i=\alpha\text{ in }\eqref{eq:complete:eigen-system}}\lambda\notag
 \end{align}
 where the inequality  \eqref{eq:alpha-1} is due to the fact that $1\le x_k/x_\alpha$, $k=1,\cdots,\alpha$, the equality \eqref{eq:alpha-2} is because of $\sum_{k=1}^\alpha(-x_k)^{p-1}=\sum_{k=\alpha+1}^nx_k^{p-1}$, the inequality  \eqref{eq:alpha-1} is based on the elementary inequality $a^t+b^t\le (a+b)^t$ whenever $a,b>0$, $t>1$, in which we take $a=\frac{x_k}{x_1}-\frac{x_k}{x_\alpha}$, $b=1-\frac{x_k}{x_1}$ and $t=p-1$. 
 So, all the inequalities are in fact equalities, which yield $$x_1=\cdots=x_\alpha.$$
 \item $1<p<2$

  In this case, we  use the  equations in \eqref{eq:complete:eigen-system} to represent $\lambda$ when $i=\alpha$ and $i=n$. We shall use the elementary inequality $a^t\ge (a+b)^t-b^t$ (or equivalently, $a^t+b^t\ge (a+b)^t$) whenever $a,b>0$, $0<t<1$, in which we shall take $t=p-1$.
\begin{align}
\lambda&\xlongequal{i=n\text{ in }\eqref{eq:complete:eigen-system}}\sum_{k=1}^n(1-\frac{x_k}{x_n})^{p-1}\notag
\\&\ge \sum_{k=1}^\alpha(\frac{x_k}{x_\alpha}-\frac{x_k}{x_n})^{p-1}-\sum_{k=1}^\alpha(-1+\frac{x_k}{x_\alpha})^{p-1}+\sum_{k=\alpha+1}^n(1-\frac{x_k}{x_n})^{p-1}\label{eq:p<2:alpha}
\\&=\sum_{k=\alpha+1}^n(\frac{x_k}{x_n}-\frac{x_k}{x_\alpha})^{p-1}+\sum_{k=\alpha+1}^n(1-\frac{x_k}{x_n})^{p-1}-\sum_{k=1}^\alpha(-1+\frac{x_k}{x_\alpha})^{p-1}\label{eq:p<2:alpha2}
\\&\ge \sum_{k=\alpha+1}^n(1-\frac{x_k}{x_\alpha})^{p-1}-\sum_{k=1}^\alpha(-1+\frac{x_k}{x_\alpha})^{p-1}\xlongequal{i=\alpha\text{ in }\eqref{eq:complete:eigen-system}}\lambda\label{eq:p<2:alpha3}
\end{align}
where the inequality \eqref{eq:p<2:alpha} is by $(1-\frac{x_k}{x_n})^{p-1}\ge (\frac{x_k}{x_\alpha}-\frac{x_k}{x_n})^{p-1}-(-1+\frac{x_k}{x_\alpha})^{p-1}$ for $k=1,\cdots,\alpha$, the equality \eqref{eq:p<2:alpha2} is because of $\sum_{k=1}^\alpha(-x_k)^{p-1}=\sum_{k=\alpha+1}^nx_k^{p-1}$, the inequality \eqref{eq:p<2:alpha3} follows from $(\frac{x_k}{x_n}-\frac{x_k}{x_\alpha})^{p-1}+(1-\frac{x_k}{x_n})^{p-1}\ge (1-\frac{x_k}{x_\alpha})^{p-1}$. 
Again, all the inequalities are actually 
 equalities, which imply  $x_1=\cdots=x_\alpha.$
\end{itemize}

If we change $x$ to $-x$, we obtain that $x$ is constant on $\{\beta,\beta+1,\cdots,n\}$; this achieves the proof of the remainder, that is, we have proved that $x=a\vec1_A-b\vec1_B$ for some $a,b>0$, where  $A=\{1,\cdots,\alpha\}$ and $B=\{\beta,\beta+1,\cdots,n\}$. 
\end{proof}

Combining Statements \ref{st:a1A-b1B} and \ref{st:eigen-aAbB}, we complete the proof of Theorem \ref{thm:complete}.
 It is clear that  when $p=2$, all the positive eigenvalues are equal to $n$.

\subsection{Normalized  $p$-Laplacian eigenvalue problem }

The normalized $p$-Laplacian is simply defined by $$\Delta_p=D^{-1}L_p$$ and the relevant eigenproblem is defined as
$$\Delta_p {x}={\lambda}\, 
\phi_p(x),$$
which can be written in the component-wise form like \eqref{eq:componentwise-pLap} as 
$$\frac{1}{d_i}\sum_{j\in V:j\sim i}\phi_p(x_i-x_j)={\lambda}\,
\phi_p(x_i),\;\; i\in V.    $$

Note that the definition of $\Delta_p$ involves the inverse of diagonal matrix formed by degrees. This fact implies that we have to work on  graphs with no isolated vertices. That is, we assume that every vertex is adjacent to some other vertices. 

The spectrum of $\Delta_p$ satisfies the following properties.
\begin{enumerate}
\item If $(\lambda, x)$ is an eigenpair of $\Delta_p$, then 
$$ R_p(x):= \frac{\sumt_{(i,j)\in E(G)} |x_i-x_j|^p}{\sumt_{i\in V}d_i|x_i|^p} =\lambda $$
\begin{proof}
Multiplying $x_i$ by 
$$\sum_{j\in V:j\sim i}\phi_p(x_i-x_j)={\lambda}\,
d_i\phi_p(x_i)$$
and summing up the $n$ equalities over $i=1,2,\cdots,n$, we get
$$ \sum_{i\in V}\sum_{j\in V:j\sim i}\phi_p(x_i-x_j)x_i={\lambda}\sum_{i\in V}
d_i\phi_p(x_i)x_i.$$
Note that the left hand side is 
$$ \sum_{(i,j)\in E(G)}(\phi_p(x_i-x_j)x_i+\phi_p(x_j-x_i)x_j)=\sum_{(i,j)\in E(G)}\phi_p(x_i-x_j)(x_i-x_j)=\sum_{(i,j)\in E(G)}|x_i-x_j|^p$$
while the right  hand side is 
$$\lambda\sum_{i\in V}
d_i\phi_p(x_i)x_i=\lambda\sum_{i\in V}
d_i|x_i|^p$$
due to the fact that $\phi_p(t)t=|t|^p$. 
The proof is then completed.
\end{proof}
\item The eigenvalues of $\Delta_p$ lie in the interval $[0,2^{p-1}]$. 
\begin{proof}
For any eigenvalue $\lambda$ of $\Delta_p$, and for any eigenvector $x$ corresponding to $\lambda$, we have 
$$0\le \lambda= \frac{\sum_{(i,j)\in E(G)} |x_i-x_j|^p}{\sum_{i\in V}d_i|x_i|^p} \mathop{\le}\limits^{(*)} \frac{2^{p-1}\sum_{(i,j)\in E(G)}(|x_i|^p+|x_j|^p)}{\sum_{i\in V}d_i|x_i|^p} =  \frac{2^{p-1}\sum_{i\in V}d_i|x_i|^p}{\sum_{i\in V}d_i|x_i|^p}=2^{p-1}.$$
\end{proof}
\item The largest eigenvalue is $2^{p-1}$ if and only if the graph has a  bipartite connected component.
\begin{proof}
The first inequality (*) reduces to an equality  \ifof $x_i=-x_j$ for any $(i,j)\in E(G)$, \ifof  $G$ has a   bipartite connected component. 
\end{proof}


\end{enumerate}

\newpage
\phantomsection
\addcontentsline{toc}{section}{References}

\end{document}